\begin{document}


\newcommand{\ket}[1]{\left| #1\right\rangle}        
\newcommand{\bra}[1]{\left\langle #1\right|}        
\newcommand{\kets}[1]{| #1 \rangle}        
\newcommand{\bras}[1]{\langle #1 |}        
\newcommand{\braket}[2]{\langle #1 | #2 \rangle} 
\newcommand{\bigket}[1]{\left| #1\right\rangle}        
\newcommand{\bigbra}[1]{\left| #1\right\rangle}        
\newcommand{\scalar}[2]{\left\langle #1 | #2 \right\rangle} 

\newcommand{\ii}{\mathbb{I}}		
\newcommand{\norm}[1]{\left\| #1\right\|}        
\newcommand{\norms}[1]{\| #1\|}        
\newcommand{\normp}[1]{\left\| #1\right\|_p} 
\newcommand{\normtwo}[1]{\left\| #1\right\|_2} 

\newcommand{\combine}[2]{\genfrac{(}{)}{0pt}{}{#1}{#2}}

\newcommand{\ep}{\epsilon}        
\newcommand{\txs}{\:\textrm{\textvisiblespace}\:} 

\newcommand{\tr}{\mathrm{tr}}
\newcommand{\lcm}{\mathrm{lcm}}
\newcommand{\<}{\langle}
\renewcommand{\>}{\rangle}

\newtheorem{definition}{Definition}
\newtheorem{lemma}{Lemma}
\newtheorem{theorem}{Theorem}

\newcommand{\cH}{{\mathcal H}} 
\newcommand{\cC}{{\mathbb{C}}} 
\newcommand{\onemat}{{\bf 1}}  
\newcommand{\goes}{\:\longrightarrow\:}		
\newcommand{\goesU}{\:\stackrel{U_k}{\longrightarrow}\:}

\newcommand{\gA}{\lhd}
\newcommand{\gB}{\blacktriangleright}

\newcommand{\half}{\frac{1}{2}}
\newcommand{\deriv}[2]{\frac{\textrm{d}#1}{\textrm{d}#2}}        
\newcommand{\bigabs}[1]{\Big|\!#1\!\Big|}        
\newcommand{\bignorm}[1]{\Big\|#1\Big\|}        

\newtheorem*{lem1}{Lemma 1}
\newtheorem*{lem1x}{Lemma 1'}
\newtheorem*{lem2}{Lemma 2}
\newtheorem*{lem2x}{Lemma 2'}
\newtheorem*{lem3}{Lemma 3}
\newtheorem*{lem4}{Lemma 4}
\newtheorem*{lem4x}{Lemma 4'}
\newtheorem*{lem5}{Lemma 5}
\newtheorem*{lem5x}{Lemma 5'}

\newcommand{\tav}[1]{       
	\frac{1}{\tau_{10}}\int^{\tau_{10}}_0 
	#1 \,d\tau
}

\newcommand{\tur}{\,\circlearrowleft\,}   
\newcommand{\gat}{\,\blacktriangleright}  
\newcommand{\mov}{\,\vartriangleright}    
\newcommand{\bul}{\:\:\centerdot\:}       
\newcommand{\iga}{\:I\,}                  
\newcommand{\ici}{{\bigcirc \mkern-13mu   
	\mbox{\scriptsize{\textit{I}}}\mkern5mu\,}}
\newcommand{\wga}{W}						
\newcommand{\wci}{{\bigcirc \mkern-17mu 	
	\mbox{\scriptsize{\textit{W}}}\mkern1mu\,}}
\newcommand{\sga}{\:S\,}					
\newcommand{\sci}{{\bigcirc \mkern-14mu 	
	\mbox{\scriptsize{\textit{S}}}\mkern3mu\,}}
\newcommand{\aga}{\,A\,}					
\newcommand{\aci}{{\bigcirc \mkern-14mu 	
	\mbox{\scriptsize{\textit{A}}}\mkern1mu\,}}
\newcommand{\bga}{\,B\,}					
\newcommand{\bci}{{\bigcirc \mkern-15mu 	
	\mbox{\scriptsize{\textit{B}}}\mkern4mu\,}}
\newcommand{\tri}{\,\vartriangleleft}     
\newcommand{\ooo}{\,\Box}     				
\newcommand{\ppp}{\,\blacksquare}			

\newcommand{\goR}{{\bigcirc \mkern-11mu \cdot \mkern7mu}}
\newcommand{\goD}{{\bigcirc \mkern-16mu \times \mkern3mu}}
\newcommand{\goM}{{\bigcirc \mkern-20mu \vartriangleright \mkern2mu}}
\newcommand{\goA}{{\bigcirc \mkern-20mu \blacktriangleright \mkern2mu}}
\newcommand{\goT}{{\mkern1mu \circlearrowleft \mkern4mu}}
\newcommand{\goB}{{\mkern3mu \vartriangleleft \mkern2mu}}
\newcommand{\goF}{{\mkern3mu \vartriangleright \mkern2mu}}
\newcommand{\goO}{{\mkern3mu \boxdot \mkern2mu}}
\newcommand{\goX}{{\mkern3mu \boxtimes \mkern2mu}}
\newcommand{\goo}{{\mkern3mu \boxminus \mkern2mu}}
\newcommand{\gox}{{\mkern3mu \boxplus \mkern2mu}}
\newcommand{\goSX}{\goX\goX\goX\goX\goX\goX}
\newcommand{\goSO}{\goO\goO\goO\goO\goO\goO}
\newcommand{\hA}{\blacktriangleright}
\newcommand{\hO}{\cdot}
\newcommand{\hX}{\times}

\newcommand{\rO}{{\mkern3mu \boxdot \mkern2mu}}
\newcommand{\rX}{{\mkern3mu \boxtimes \mkern2mu}}
\newcommand{\rA}{{\mkern3mu \blacksquare \mkern2mu}}
\newcommand{\rx}{{\mkern3mu \times \mkern2mu}}
\newcommand{\ro}{{\mkern5mu \circ \mkern5mu}}
\newcommand{\ra}{{\mkern3mu \blacktriangleleft \mkern2mu}}
\newcommand{\rb}{{\mkern3mu \blacktriangle \mkern3mu}}
\newcommand{\rc}{{\mkern3mu \blacktriangleright \mkern2mu}}

\newcommand{\mcont}[4]{		
	\left[
	\begin{array}{cccc}
	 1  & 0  & 0  & 0\\
	 0  & 1  & 0  & 0\\
	 0  & 0 & #1 & #2\\
	 0 & 0 & #3 & #4
	\end{array}
	\right]}
\newcommand{\mfour}[4]{		
	\left[
	\begin{array}{cc}
	 #1 & #2 \\
	 #3 & #4 
	\end{array}
	\right]}
\newcommand{\mtwo}[2]{		
	\left[
	\begin{array}{c}
	 #1 \\
	 #2
	\end{array}
	\right]}

\newcommand{\band}[2]{		
	\begin{array}{|r|r|}
	\hline #1 & #2 \\
	\hline
	\end{array}}
\newcommand{\triUR}[3]{ 		
	\begin{array}{r@{}c|c|}
	\cline{2-3} \vline & \:\: #1 & #2 \\ 
	\cline{2-3} & & #3 \\ 
	\cline{3-3} 
	\end{array}\:\:}
\newcommand{\triUL}[3]{ 		
	\:\:\begin{array}{|c|c@{}l}
	\cline{1-2} #1 & #2 \:\: & \vline \\ 
	\cline{1-2} #3 & & \\ 
	\cline{1-1}
	\end{array}}
\newcommand{\four}[4]{ 		
	\begin{array}{|c|c|}
	\hline #1 & #2 \\
	\hline #3 & #4 \\
	\hline
	\end{array}}
\newcommand{\flour}[3]{ 		
	\begin{array}{|c|c|}
	\hline #1 & #2 \\
	\hline \multicolumn{2}{|c|}{#3} \\
	\hline
	\end{array}}
\newcommand{\six}[6]{ 		
	\begin{array}{|c|c|}
	\hline #1 & #2 \\
	\hline #3 & #4 \\
	\hline #5 & #6 \\
	\hline
	\end{array}}
\newcommand{\slix}[5]{ 		
	\begin{array}{|c|c|}
	\hline #1 & #2 \\
	\hline #3 & #4 \\
	\hline \multicolumn{2}{|c|}{#5} \\
	\hline
	\end{array}}

\newlength{\onebox}
\setlength{\onebox}{20pt}
\newcommand\raiseonebox{\raisebox{-.5\onebox} 
  {\rule{0pt}{\onebox}}}

\newcommand{\sve}{\vec{s}\,}
\newcommand{\svp}{\vec{s}\,'}
\newcommand{\Mlim}{M\rightarrow \infty}
\newcommand{\E}{{\cal E}}
\newcommand{\means}[1]{\langle #1 \rangle}        
\newcommand{\corrf}{\means{ \sigma^{(i)}_z \sigma^{(j)}_z } 
- \means{ \sigma^{(i)}_z }\means{ \sigma^{(j)}_z }}
\newcommand{\corrfO}{\means{ O^{(i)} O^{(j)} } - \means{ O^{(i)} }\means{ O^{(j)} }}

\newcommand{\mypaper}[3]{
\vskip6pt
\centerline{\textbf{#1}}
\vskip3pt
\centerline{\small \textsl{#2}}
\vskip6pt
\begin{center}
\begin{minipage}{4.5in}
{\small #3}
\end{minipage}
\end{center}
}
\newcommand{\mypapertwoline}[4]{
\vskip6pt
\centerline{\textbf{#1}}
\centerline{\textbf{#2}}
\vskip3pt
\centerline{\small \textsl{#3}}
\vskip6pt
\begin{center}
\begin{minipage}{4.5in}
{\small #4}
\end{minipage}
\end{center}
}

\title{Local Hamiltonians in Quantum Computation}

\author{Daniel Nagaj}
\department{Department of Physics}
\degree{Doctor of Philosophy in Physics}
\degreemonth{June}
\degreeyear{2008}
\thesisdate{May 16, 2008}

\supervisor{Edward H. Farhi}{Professor of Physics; 
Director, Center for Theoretical Physics}

\chairman{Thomas J. Greytak}{Lester Wolfe Professor of Physics
\\
Associate Department Head for Education}

\maketitle

\cleardoublepage
\pagestyle{empty}
\setcounter{savepage}{\thepage}
\begin{abstractpage}
In this thesis, I investigate aspects of local Hamiltonians in quantum computing. First, I focus on the Adiabatic Quantum Computing model, based on evolution with a time-dependent Hamiltonian. I show that to succeed using AQC, the Hamiltonian involved must have local structure, which leads to a result about eigenvalue gaps from information theory. I also improve results about simulating quantum circuits with AQC. Second, I look at classically simulating time evolution with local Hamiltonians and finding their ground state properties. I give a numerical method for finding the ground state of translationally invariant Hamiltonians on an infinite tree. This method is based on imaginary time evolution within the Matrix Product State ansatz, and uses a new method for bringing the state back to the ansatz after each imaginary time step. I then use it to investigate the phase transition in the transverse field Ising model on the Bethe lattice. Third, I focus on locally constrained quantum problems Local Hamiltonian and Quantum Satisfiability and prove several new results about their complexity. Finally, I define a Hamiltonian Quantum Cellular Automaton, a continuous-time model of computation which doesn't require control during the computation process, only preparation of product initial states. I construct two of these, showing that time evolution with a simple, local, translationally invariant and time-independent Hamiltonian can be used to simulate quantum circuits.

\end{abstractpage}


\cleardoublepage

\section*{Acknowledgments}

I gratefully acknowledge financial support from the MIT Presidential Fellowship, the National Security Agency (NSA) and Advanced Research and Development Activity (ARDA) under Army Research Office (ARO) contract W911NF-04-1-0216, and the W. M. Keck Foundation Center for Extreme Quantum Information Theory during my five years at MIT.

I wasn't alone on my journey to finishing this thesis. First, my thanks go to God for making this world a joy to explore. My parents gave me their love, support and freedom. My undergraduate advisor, Vlado Bu\v{z}ek, introduced me to exciting research. Eddie Farhi, my Ph.D. thesis advisor, taught me to be patient and to ask more questions. He gave me not only good direction, but also plenty of freedom in research. During my time at MIT, I have been honored to coauthor with Jeffrey Goldstone, Sam Gutman, Peter Shor, Iordanis Kerenidis, Shay Mozes, Pawel Wocjan and Igor Sylvester. Besides them, I learned much from discussions about research with Seth Lloyd, Ike Chuang, Guifre Vidal, Peter Love, Oded Regev, Stephen Jordan, Jake Taylor and an anonymous referee. I also benefited from being a teaching assistant for Gunther Roland, Young Lee, John Negele and Bruce Knuteson. Ashdown was a great place to live in because of Noel, Abhi, Jon, Ji-Eun, Marcus, Junsik and the housemasters, Terry and Ann Orlando. Problem sets were do-able with the help of Ambar, Mark, Stephen, Pouyan and Seungeun. Jon taught me to throw a frisbee and juggle, Dave taught me to drive and Ambar taught me to cook Indian food. Helen, Kim and Rachel helped me to grow in photography. Through the MIT Graduate Christian Fellowship, I found friends in Jon, Colin, Jeff, Wouter, Mary, Kevin, John, Ken, Bethany, Cynthia, Eric, Kyle, Stephen, Stephen and others. Looking back at all this today, I am thankful and my joy is great, for I share it with Pavla, the best of wives.


\pagestyle{plain}

\tableofcontents
\newpage
\listoffigures
\newpage
\listoftables

\setcounter{chapter}{-1}
\chapter{Introduction}\label{ch0summary}

Today's computers, working in binary, churn terabytes of classical information for us every day, solving computational problems. Nevertheless, for some of these problems, the required resources for even the best algorithms available today scale unfavorably with the problem size. Such computational difficulty can be turned into a resource of its own, motivating cryptographic schemes. Today's favorite, RSA, is based on the hardness of factoring large numbers. It can withstand attacks by even bigger and faster computers, as making the input size of the problem larger increases the required resources exponentially. Is there then a chance of ever solving hard problems? 
For many optimization problems like 3-SAT, the expected answer is no. 
There are problems thought to be easier than 3-SAT, such as graph isomorphism, the problem to determine whether two graphs are isomorphic or not. Many scientists have devoted many hours to obtain an efficient algorithm for it and did not succeed, but the jury is still out on whether one exists. Another problem thought hard classically is factoring. In 1994, Shor found an algorithm which runs in time polynomial, rather than exponential, in the number of digits of the number to be factored. However, Shor's algorithm requires a {\em quantum computer}, the target of considerable effort by many research groups, with nuclear spins, optical lattices, quantum dots, superconducting circuits or trapped ions as their building blocks. In fact, a small, single-purpose, NMR-based quantum computer was built by Isaac Chuang's group, ran Shor's algorithm, and factored 15. Needing to go much farther than this proof-of-principle quantum computer, today's effort focuses on developing and building scalable and fault-tolerant quantum computers.

The world is quantum-mechanical. It is interesting to investigate what the laws of physics would allow us to compute, if we could control a quantum system and use it as a computer. We would like to know the power and the limits of computers that would use quantum instead of classical information. From understanding the details of chemical reactions to finding ground state properties of quantum spin glass models, there are many problems whose nature is quantum mechanical. It is possible that a quantum computer could help solve these. Not only that, there are many classical problems where algorithmic speedups for quantum computers over their classical counterparts have been found, and there is hope that many are yet to be discovered.

Quantum mechanical systems behave according to the Schr\"odinger equation, governed by the Hamiltonian operator of the system, which describes its interactions. In our experience, the interactions between particles involve only a few particles at a time. A Hamiltonian describing such a system is called {\em local}. More specifically, if the particles are also spatially close, I will call the Hamiltonian {\em spatially local}, and emphasize this distinction throughout the text. In this thesis, I ask---and answer---questions about the power of local Hamiltonians, whether how to simulate them, what could we do with them if we had them, or how hard it is to find their properties in the first place.

\section{Outline and Summary of Results}

After reviewing classical and quantum computing and introducing local Hamiltonians, illustrated by Feynman's Hamiltonian Computer and Kitaev's Local Hamiltonian problem in Chapter \ref{ch1intro}, I present the following new results in the next four chapters:

\vskip6pt\noindent{\bf Chapter \ref{ch2adiabatic}: Adiabatic Quantum Computation} 
investigates the AQC model of quantum computation, which is based on time-evolving a system with a time-dependent Hamiltonian from a
specific initial ground state. I define a general Hamiltonian Computer model, show that it is universal for quantum computing, and show that it also implies the universality of AQC for quantum computation. I improve the estimates on the runtime of AQC simulations of the quantum circuit model using the results I obtain later in Chapter \ref{chQsat}. Then, based on \cite{AQC:fail}, I prove two theorems about what not to do when designing quantum adiabatic algorithms, in essence saying that the Hamiltonians one uses must be local, if one hopes to succeed.

\vskip6pt\noindent{\bf Chapter \ref{ch3mps}: Matrix Product States on Infinite Trees} 
is based on \cite{MPS:infitrees} and contains a new numerical method for investigating the ground state properties of translationally invariant Hamiltonians on an infinite tree (Bethe lattice). This method is based on imaginary time evolution within the Matrix Product State ansatz. I use it to analyze two specific interaction models, first of which is the Ising model in transverse field on the Bethe lattice.

\vskip6pt\noindent{\bf Chapter \ref{chQsat}: Quantum Satisfiability} 
analyzes the complexity of the quantum analogue of classical Satisfiability. I give several novel clock constructions useful for proofs of QMA completeness of Hamiltonian problems and for proofs of universality of Hamiltonian computers. Based on \cite{lh:new3local}, I show that the 3-local Hamiltonian problem stays QMA-complete even without high-norm penalty terms required in previous constructions and that Quantum (3,2,2)-SAT is QMA$_1$ complete. I then present two previously unpublished results. First, Quantum 2-SAT on a line with 11-dimensional particles is QMA$_1$ complete. Second, the Quantum 3-SAT Hamiltonian is universal for quantum computing, when used in my Hamiltonian Computer model.

\vskip6pt\noindent{\bf Chapter \ref{ch5hqca}: Hamiltonian Quantum Cellular Automata} 
is based on \cite{CA:d10}. I define HQCA, a model of quantum computation with a local, time-independent, translationaly invariant and problem-independent Hamiltonian. I present two such automata in 1D geometry, one built from particles with dimension $d=10$, and the other from particles with dimension $d=20$. These results are then related back to the universality of adiabatic quantum computation.

\vskip6pt
\noindent Finally, I summarize the thesis in Chapter \ref{conclusions}, and discuss further research directions. Additional material can be found at the end of the thesis. Appendix \ref{appProp} investigates Kitaev's propagation Hamiltonian,
Appendix \ref{appDyson} works out the details for a Trotter-like formula for classically simulating an adiabatic algorithm, 
Appendix \ref{d20proof} is an analysis of the continuous time quantum walk on a line and cycle,
Appendix \ref{d10proof} concerns the diffusion of free fermions in 1D, and 
Appendix \ref{appTwolevel} looks at transitions in a two-level system.
\chapter{Preliminaries}\label{ch1intro}

This preliminary Chapter is aimed at a reader with little background in classical computing and complexity, and only basic knowledge of quantum mechanics. Kitaev's Classical and Quantum Computation \cite{KitaevBook} and Nielsen \& Chuang's Quantum Information and Computation \cite{NCbook} are great starting points to investigate the topics mentioned herein in much more detail. The advanced reader should start straight with Section \ref{ch1:qlocal} on Local Hamiltonians, where I review the basic building blocks for many results in this thesis: Feynman's Hamiltonian computer and Kitaev's Local Hamiltonian problem.

This Chapter is organized as follows. I start with the classical model of computation in Section \ref{ch1:clas}, review classical complexity in Section \ref{ch1:clascomplex}, and then look at classical locally constrained problems, $k$-Satisfiability and Max-$k$-Satisfiability, in Section \ref{ch1:local}. 

In Section \ref{ch1:qc}, I introduce quantum computation, discussing the quantum circuit model and universality in Section \ref{ch1:qcircuit}, 
quantum complexity in Section \ref{ch1:qcomplex}, and 
the strengths and limits of quantum computers in \ref{ch1:grover}. 

Finally, in Section \ref{ch1:qlocal} I turn my attention to the main topic of this thesis, local Hamiltonians in quantum computing, introducing Feynman's Hamiltonian computer model in Section \ref{ch1:FeynmanHC} and Kitaev's QMA-complete Local Hamiltonian problem in Section \ref{ch1:Kitaev5}.


\section{Classical Computation}\label{ch1:prelim}

\subsection{Introduction to Computation}\label{ch1:clas}

I have a computer that I am typing this on. What can I do (compute) with it? I can surely add and multiply numbers really fast, but could I also use it to factor large numbers? It can effortlessly beat me in chess, but I am up to the challenge in go\footnote{though not against the very recent Monte Carlo programs like MoGo}. Could I have programmed it to write this thesis? How big would the program have to be? How long would it take to execute? The question of computability and effectiveness of computation needs to be looked at in a particular model of computing. In 1937, Alan Turing introduced his remarkable Turing machine (TM) model \cite{Turing1,Turing2}.
\begin{figure}
	\begin{center}
	\includegraphics[width=2.5in]{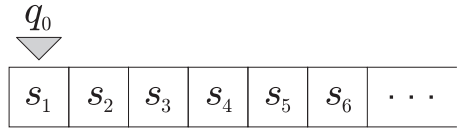}
	\end{center}
	\caption{The Turing Machine.
	}
	\label{ch1:figturing}
\end{figure}
Consider a semi-infinite tape with symbols $s_i \in \{0,1,\txs\}$ on it (see Figure \ref{ch1:figturing}), a read/write head with a finite memory (internal state space) positioned on the left end of the tape, and a finite set of instructions. The head is initialized in the state $q_0$ and positioned on the left of the tape, reading the symbol $s_1$. The computation is then performed 
in sequential steps according to the instruction set. These instructions have the form 
\begin{eqnarray}
	[ current\,\,state ], [symbol\,\,being\,\,read]
	\goes 
	[next\,\,state], [symbol\,to\,write], [where\,to\,move].
	\nonumber
\end{eqnarray}
Depending on the current state of the machine and the symbol that is under the read/write head, the machine can change its state, write a symbol to the tape and move one step to the left (denoted by $\vartriangleleft$), one step to the right ($\vartriangleright$) or stay at its current position ($\:\cdot\:$). It is straightforward to run the machine according to the instruction set. The execution stops only when the internal state of the head changes to a special state \textsc{halt}. The output of the TM is then what is written on the tape (or just the very last symbol written, right under the head).

Here is a simple example of a Turing Machine, with the head starting in the state $q_0$ on the left of the tape: 
\begin{eqnarray}
	q_0,\,\:0 &\goes& \:\:\: q_0\:\:\:,\:0,\:\:\vartriangleright, \label{mov01}\\
	q_1,\,\:0 &\goes& \:\:\: q_1\:\:\:,\:0,\:\:\vartriangleright, \label{mov02}\\
	q_0,\,\:1 &\goes& \:\:\: q_1\:\:\:,\:1,\:\:\vartriangleright, \label{mov11}\\
	q_1,\,\:1 &\goes& \:\:\: q_0\:\:\:,\:1,\:\:\vartriangleright, \label{mov12}\\
	q_0,\txs &\goes& \textsc{halt},\:0,\:\:\:\cdot\:, \label{halt1}\\
	q_1,\txs &\goes& \textsc{halt},\:1,\:\:\:\cdot\:. \label{halt2}
\end{eqnarray}
The first two rules \eqref{mov01}-\eqref{mov02} say that if the head reads a zero, it should keep its state and move to the right. Rules \eqref{mov11}-\eqref{mov12} apply when the head reads a 1. The internal state of the machine then flips ($q_0 \leftrightarrow q_1$), and the head again moves to the right. Finally, when the head reads $\txs$, it assumes it has reached the end of the string, writes down 0 or 1 according to its internal state, and halts. This TM computes the parity of the string such as
\begin{eqnarray}
	\underbrace{\mathtt{10010100110}}_{n} \txs \txs \txs \txs \txs \txs \dots \label{tminput}
\end{eqnarray}
initially written on the tape. The {\em input string size} $n$ is the number of symbols on the tape after which the rest of the tape is blank. Observe that this TM takes at most $n$ steps before it halts on any input string of size $n$.

When a TM halts, it produces an output. In the above example, the output was the parity of the input string. However, a general TM does not have to halt on each possible input, leaving its output undefined. A function on a set $S$ whose domain is not the whole $S$ is called a {\em partial} function. A TM is thus computing some partial function $\phi_M:S\rightarrow S$ on $S$, the set of strings over the alphabet $\{0,1,\txs\}$. 
Taking it from the other end, does a TM computing a partial function $F$ exist for any $F:S \rightarrow S$? There is an uncountable number of such functions, while the set of possible Turing Machines is countable. This means not all functions can be computable, motivating the following definition:
A partial function $F: S \rightarrow S$ is {\em computable}, if there exists a Turing machine $M$ such that $\phi_M = F$.
An example of an uncomputable function is the {\em halting problem}: 
Does a given TM halt on all inputs? \cite{SipserBook} A hypothetical halting problem-solving TM would then have to take the description of the TM whose halting properties it wants to determine as input.

On the other hand, for functions that are computable, it is interesting to ask how hard it is to compute them. For this, let me first ask how long does a specific Turing Machine run before it stops. This question must be related to the input string size $n$. The {\em running time} $T(n)$ of a TM is then defined as the largest number of steps it takes for the TM to stop on any input string of size $n$. There can be several different Turing Machines computing the same function, corresponding to different approaches to computing $F$. Some of them can be more efficient than others, with a better running time $T(n)$. An efficient TM is one whose running time $T(n)$ is small. The most common notion of an {\em efficient} TM is one whose running time scales only like a polynomial in the input size $n$ (i.e. like $T(n)\propto n^3$ and not like $T(n) \propto 2^n$), and I adapt it in this thesis. On the other hand, functions $F$ can be classified into {\em  complexity classes} depending on what would the running time be for the best possible Turing Machine. I return to this topic in Section \ref{ch1:clascomplex}.

\begin{figure}
	\begin{center}
	\includegraphics[width=4.5in]{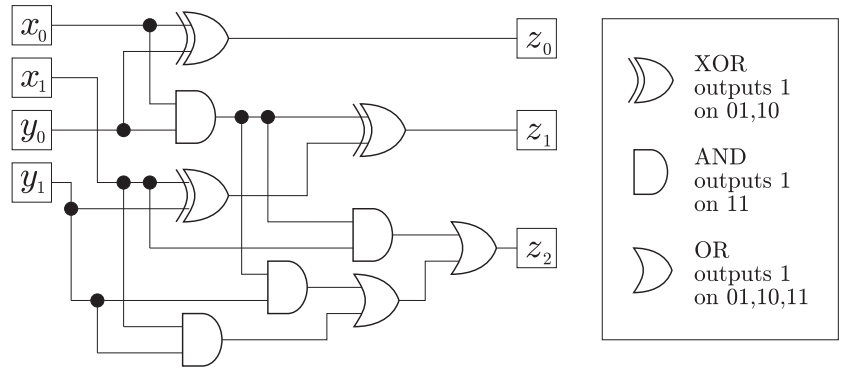}
	\end{center}
	\caption{A binary addition circuit for two two-bit numbers, $z_2 z_1 z_0 = x_1 x_0 \oplus y_1 y_0$.
	}
	\label{ch1:figcircuitadd}
\end{figure}
The notion of an {\em algorithm}, a prescription of what to do to compute something, is dependent on the computational model one uses. Turing himself defined what an algorithm is by his {\em Turing thesis}: 
\begin{quote}
	``Any algorithm can be realized by a Turing machine.''
\end{quote}
Suprisingly, the alternative models of computation proposed in the early days of computer science turned out to be equivalent to the TM model \cite{SipserBook,KitaevBook}. 
Today's computers do not have long tapes with read/write heads inside them. The conventional {\em circuit model} of digital computers (see Figure \ref{ch1:figcircuitadd}), equivalent to the TM model, consists of wires carrying bits of information and logical gates like
\begin{eqnarray}
\begin{tabular}{|c|c|}
\hline
$x$ & NOT $x$ \\
\hline
0 & 1 \\
\hline
1 & 0 \\
\hline
\end{tabular}
\end{eqnarray}
and two-bit gates
\begin{eqnarray}
\begin{tabular}{|c|c|c|c|c|c|}
\hline
$x$ & $y$ & $x$ AND $y$ &  $x$ OR $y$ &  $x$ NAND $y$ &  $x$ XOR $y$ \\
\hline
0 & 0 & 0 & 0 & 1 & 0 \\
\hline
0 & 1 & 0 & 1 & 1 & 1 \\
\hline
1 & 0 & 0 & 1 & 1 & 1 \\
\hline
1 & 1 & 1 & 1 & 0 & 0 \\
\hline
\end{tabular}
\end{eqnarray}
that are applied to them. What is the smallest set of gates one needs to simulate a Turing Machine (compute a computable function $F$)? It turns out that the NAND gate is {\em universal} in the sense that any circuit can be transformed into another built only from wires and NAND gates. 

The single-bit gate NOT is reversible, as its input can be recovered from its output, while the gates AND, OR, XOR and NAND are all non-reversible. It turns out that reversible circuits are as powerful as their irreversible counterparts. The reversible 3-bit Toffoli gate, which flips the third bit if the first two bits are both $1$, is universal \cite{classicalToffoli}. This is important, because quantum computation is a generalization of classical reversible computation (see Section \ref{ch1:qcircuit}).

In the 1930's, analog computers like the {\em differential analyzer} 
for solving differential equations had their programs built in, as their components had to be physically rearranged for solving different problems. 
Similarly, for the TM mentioned so far, a different TM was required for computing each partial function $F$. However, Turing made a big breakthrough, paving the way for digital computation, by proving that it is possible to have a single {\em universal} TM, which can produce the output of any other TM. This is done by having its `program' (the description of which particular computation to perform) given as part of the input. The universal TM then processes the `data' part of the tape according to the `program'. It is interesting to look for a simplest possible universal TM, and people have dedicated quite some effort to it \cite{smallTM,WolframBook}. In Chapter \ref{ch5hqca}, I will return to the idea of a classical universal reversible TM, and bring it to the world of quantum computing. I then investigate simple universal translationally and time-independent Hamiltonians which could be used to simulate any quantum computation in a 1D system.


\subsection{Classical Complexity}
\label{ch1:clascomplex}

Classical complexity examines how hard it is to compute a function $F$ on strings $S$, or how hard it is to answer yes/no questions (decision problems) about strings $S$. A simple yes/no question about a string is: ``Is the input string $S$ a palindrome\footnote{A palindrome is a string that reads the same backwards as it does normally, such as ``livedondecaffacednodevil".}?'' 

For comparing the hardness of problems, it is important to have a notion of reducibility. A {\em reduction} is a way of converting one problem to another in such a way that the solution to the second problem can be used to solve the first problem \cite{SipserBook}. That multiplication can be reduced to addition, using addition as a subroutine polynomially many times, is a simple example of a polynomial reduction. I will frequently use reductions in this thesis, building on previous results about hardness of certain problems. 

The complexity of computing a function or a decision problem can be classified by
\begin{itemize}
\item reducing the problem into another that is already classified,
\item showing that an already classified problem reduces to the one at hand,
\item showing a lower bound $l(n) \leq T(n)$,
\item finding an algorithm for it that runs in some $T(n)$.
\end{itemize}

To prove a lower bound $l(n)$ on the complexity of computing a function $F$, one must show that no algorithm computing $F$ can possibly have running time less than $l(n)$. On the other hand, finding efficient algorithms is hard. We can solve many problems efficiently, but there are plenty others for which today's best known algorithm scales exponentially with the problem size (see Section \ref{ch1:local}). However, we do not know what tomorrow's algorithm designers will come up with. The best situation comes when a problem has a lower bound on its complexity, and an optimal algorithm that matches it. A trivial example is the problem of computing the {\em parity} of a string. The Turing Machine described by \eqref{mov01} is in fact optimal, matching the lower bound of $n$ steps for a string of length $n$. 
A more interesting example is the {\em Ordered search} problem, where one needs to determine whether a specific number appears in an ordered list of $n$ numbers. The list of numbers can be written on the one-dimensional tape of a Turing Machine which in the worst case would have to go through it all. On the other hand, I can view this problem differently. Consider the list of numbers to be hidden in a black box which the TM can query by writing a number $m$ on a separate `query' tape. Upon a query, the black box subroutine writes the $m$-th number from the list on the tape. The TM can then compare this number to the one it searches for, do some computation, and run another query. The complexity of the algorithm is then determined by the number of queries to the black box the TM needs to ask before it determines whether a number is in the black box list or not. The complexity of such black-box problems is called {\em query complexity}. The best classical algorithm for ordered search matches the corresponding lower bound, requiring to query $\log_2 n$ numbers in the list. In Section \ref{ch1:grover}, I look at {\em Unstructured search}, another problem for which we have both a lower bound on the query complexity and an optimal algorithm, in the classical and in the quantum case as well.

I now introduce a few important classical complexity classes, all related to {\em decision problems}, which are simply yes/no questions.
\begin{definition}[Complexity class P]
The class of decision problems solvable in polynomial time by a Turing machine.
\end{definition}
This is the basic class of {\em efficiently solvable} problems. It contains easy problems like determining whether two numbers are coprime, but also the problem of determining whether a number is prime \cite{primesinp}. On the other hand, it is not known whether it contains {\em factoring}, the problem of determining whether a number $p$ has a divisor $d$ in some range $a\leq d \leq b$. Today's best known classical algorithm for factoring has runtime $2^\frac{n}{3}$, where $n$ is the number of bits of $p$. On the other hand, in 1994, Shor \cite{ShorFact} discovered a breakthrough {\em quantum} algorithm
with running time $T(n) \propto n^3$, whose use of the Fourier transform resulted in many of today's quantum algorithms. I refer the reader to Section \ref{ch1:grover} for more details about efficient quantum algorithms. 

All the problems in P are contained in another class, NP, described by Scott Aaronson in the Complexity ZOO \cite{complexityZoo} as `the class of of dashed hopes and idle dreams'. Determining whether NP is in fact bigger than P or equal to it is a long standing open question in Computer Science with a million dollar prize from the Clay foundation awaiting the solver. The definition of NP is
\begin{definition}[Complexity class NP (Non-deterministic, Polynomial time)]
NP is the class of decision problems such that, if the answer is `yes', then there is a proof of this fact, of length polynomial in the size of the input, that can be verified by a deterministic polynomial-time algorithm. On the other hand, if the answer is `no', then the algorithm must declare invalid any purported proof that the answer is `yes'.
\end{definition}

It is possible to expand the original Turing Machine model by allowing the TM to use an additional random generator, making it a {\em probabilistic Turing Machine}. This can be done by giving the TM access to a second tape with a random string written on it. The question whether this model is equivalent to the original, deterministic TM model is open, but believed to be true. 
The probabilistic Turing Machine model gives rise to the class BPP:
\begin{definition}[Complexity class BPP (Bounded-error, Probabilistic, Polynomial time)]
The class of decision problems solvable in polynomial time by a randomized Turing machine.
\end{definition}
Great progress has been made in randomized algorithms recently, and it is a challenge for researchers to derandomize them. I give an example of a randomized algorithm in Section \ref{ch1:local} (the WALKSAT algorithm for solving 3-SAT). Moreover, quantum computing, the topic of this thesis, is inherently randomized.

Analogously to generalizing P to BPP, the class NP motivates the class MA. Consider an interactive protocol between a powerful wizard Merlin and a logic, rational verifier Arthur, whose role is to verify purported proofs prepared by Merlin. The difference from NP is that Arthur can use a randomized algorithm, rather than verify the proof deterministically\footnote{If all random algorithms can be made deterministic (BPP=P), MA equals NP \cite{complexityZoo}.}:
\begin{definition}[Complexity class MA (Merlin-Arthur)]
The class of decision problems solvable by a Merlin-Arthur protocol, which goes as follows. Merlin, who has unbounded computational resources, sends Arthur a polynomial-size purported proof that the answer to the problem is yes. Arthur must verify the proof in BPP (i.e. probabilistic polynomial-time), so that if the answer to the decision problem is
\begin{enumerate}
\item `yes', then there exists a proof which Arthur accepts with probability at least 2/3.
\item `no', then Arthur accepts any ``proof'' with probability at most 1/3.
\end{enumerate}
\end{definition}
An alternative definition requires that if the answer is `yes', then there exists a proof such that Arthur accepts with certainty. However, the definitions with one-sided and two-sided error are equivalent \cite{SipserBook}. Moreover, the probabilities $2/3$ and $1/3$ in the definition of MA can be equivalently substituted by $1-\ep$ and $\ep$ with arbitrarily small $\ep$.

I introduce the quantum analogues of these classes in Section \ref{ch1:qcomplex}.


\subsection{Classical locally constrained problems}\label{ch1:local}

Whether deciding on the best class schedule, finding a good route to spend a day touring a city or pairing people up at a dance class, our lives are full of local optimization problems. Some of them are simple, such as picking a single box of cereal with the lowest price/taste ratio from the shelf in the supermarket, some of them harder, like deciding which lenses to pack into my camera bag for the next trip. The latter is a question concerning a cost function involving only a few elements at a time (how useful will a particular lens in my bag be, considering that I already bagged some others) and a global constraint (the size of the bag and my willingness to carry it all). In fact, this is an example of the {\em Knapsack} problem. Another locally constrained problem, {\em Satisfiability}, and its quantum analogue, are the focus of a large part of this thesis. 

A boolean {\em clause} is an expression of the form $s_1 \vee s_2 \vee \dots \vee s_k$, where $s_i$ are boolean variables or their negations. This clause is true for all bit strings $s_1 s_2 \dots s_k$ except for $\{00\dots0\}$. Consequently, the clause $s_1 \vee s_2'$, where $s_2'$ is the negation of bit $s_2$, is true for all assignments $s_1 s_2$ except for $\{01\}$. Satisfiability is then defined as
\begin{definition}[$k$-SATISFIABILITY, $k$-SAT]
Consider a collection $C=\{c_1,c_2,\dots,c_m\}$ of boolean clauses for a finite binary string $S = s_1 s_2 \cdots s_n$, such that each $c_i$ involves $k$ of the string variables. Does a string $S$ satisfying all the clauses in $C$ exist?
\label{defSAT}
\end{definition}
\begin{figure}
	\begin{center}
	\includegraphics[width=2in]{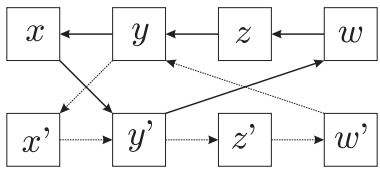}
	\end{center}
	\caption{The implication graph for a 2-SAT instance 
	$(x \vee y') \wedge 
	(x' \vee y') \wedge 
	(y \vee w) \wedge 
	(y \vee z') \wedge 
	(z \vee w')$.
	There is a directed loop containing both $y$ and $y'$, so
	the instance is not satisfiable.
	}
	\label{ch1:fig2sat}
\end{figure}
This is an instance of 2-SAT on four bits:
\begin{eqnarray}
	c_1 &=& x\: \vee y', \nonumber\\
	c_2 &=& x' \vee y', \nonumber\\
	c_3 &=& y\: \vee w, \\
	c_4 &=& y\: \vee z', \nonumber\\
	c_5 &=& z\: \vee w'.\nonumber
\end{eqnarray}
Here $y'$ labels the negation of variable $y$, and the first clause, $c_1$, is false if $x=0$ and $y=1$. 
The complexity of $k-$SAT varies with $k$. For 2-SAT, we have an algorithm with complexity scaling as $O(n^2)$. Draw a directed graph of implications between the bits and their negations as in Figure \ref{ch1:fig2sat}, where for example clause $c_4$ implies the connections $z\implies y$ and $y' \implies z'$. If a loop containing both a variable and its negation exists (it is easy to check this), the instance of 2-SAT is unsatisfiable. 2-SAT is thus in P. However, no known polynomial algorithm for 3-SAT exists. The Cook-Levin theorem
says that 3-SAT is NP-complete. A problem $\Pi$ is {\em complete} for a complexity class, when 
\begin{itemize}
\item
the problem $\Pi$ is contained in the class, and 
\item
every other problem in the class can be reduced to $\Pi$ (solved by using the solution to $\Pi$ as a subroutine polynomially many times). 
\end{itemize}
It is straightforward to see that 3-SAT is in NP. The basic idea for the second direction of the NP-completeness proof is to encode the verification circuit in the definition of NP using a 3-SAT instance. It is instructive to see how this type of proof works in the classical case, before I introduce its quantum analogue in Section \ref{ch1:Kitaev5}. 
\begin{figure}
	\begin{center}
	\includegraphics[width=2.5in]{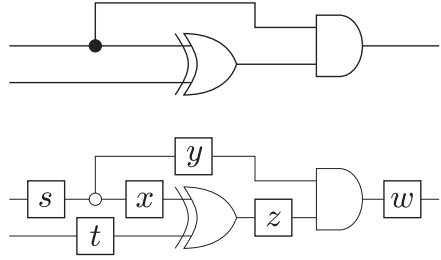}
	\end{center}
	\caption{Illustration for the Cook-Levin theorem. Transforming a verifier circuit into a 3-SAT instance.
	}
	\label{ch1:figcooklevin}
\end{figure}
An instance $x$ of a problem $\Pi$ in NP is a yes/no question, denoted as $x\in (L_{yes} \cup L_{no})$, where $L_{yes}$ is the set of instances with the answer `yes'. Each problem instance has a verifier circuit $V$ associated with it. When $V$ checks a purported proof that $x \in L_{yes}$, it accepts only a valid proof, never outputting 1 if the ``proof'' is wrong. This verifying procedure can be reduced to finding a satisfying assignment for a certain 3-SAT instance. Without loss of generality, circuit $V$ consists of wires and gates involving at most 3 bits (two inputs, one output). Assign a boolean variable to each wire of the circuit as in Figure \ref{ch1:figcooklevin}. The 3-SAT clauses then check the proper evaluation of the circuit. For example, the boolean clauses for $y,z,w$ in Figure \ref{ch1:figcooklevin} 
(where $w$ is the output of the AND gate on $y$ and $z$) read 
\begin{eqnarray}
c_1 &=& y\phantom{'} \vee z\phantom{'} \vee w', \\
c_2 &=& y\phantom{'} \vee z' \vee w', \\
c_3 &=& y' \vee z\phantom{'} \vee w', \\
c_4 &=& y' \vee z' \vee w,
\end{eqnarray}
ruling out the assignments $yzw \in \{001,011,101,110\}$.
The last ingredient in the proof is a clause verifying that the circuit outputs 1:
\begin{eqnarray}
c_{out} = w.
\end{eqnarray}
If a proof of $x \in L_{yes}$ which the circuit $V$ accepts exists, an assignments of the bits corresponding to the computation satisfying the 3-SAT instance exists as well. On the other hand, if there is no ``proof'' that $V$ accepts, there is no satisfying assignment to all the clauses in the 3-SAT instance. Therefore, if one could solve 3-SAT, one could solve any problem in NP. This is also true for $k$-SAT with $k\geq 3$.

\begin{figure}
	\begin{center}
	\includegraphics[width=6in]{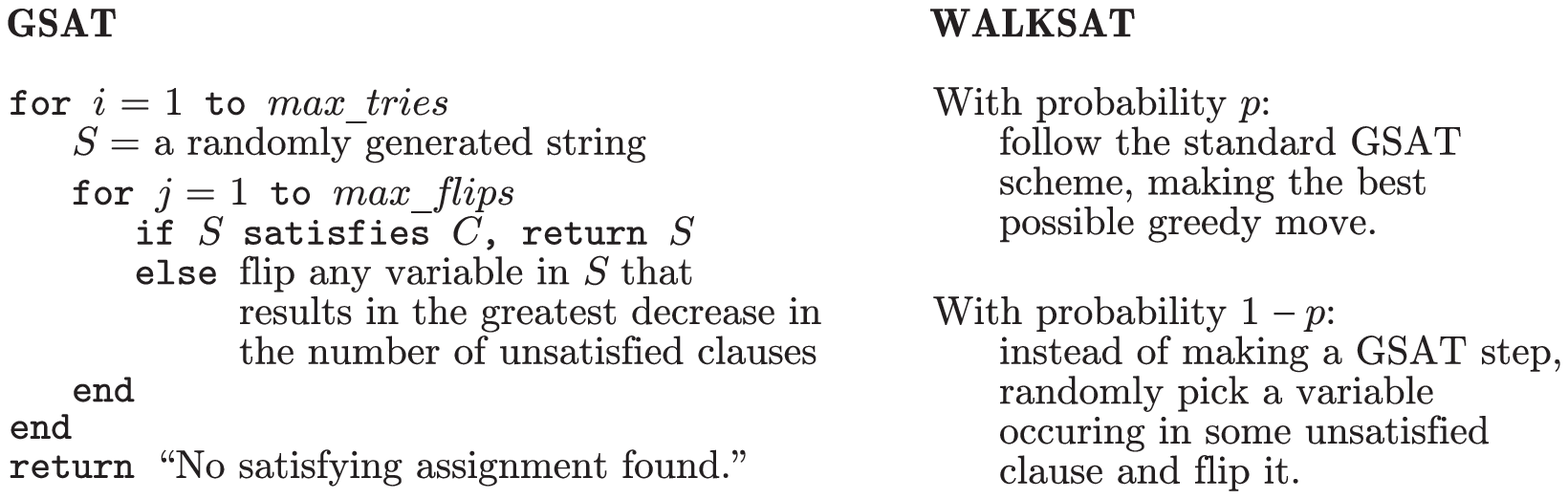} 
	\end{center}
	\caption{The GSAT and WALKSAT algorithms.
	\label{ch1:figurewalksat}}
\end{figure}
To illustrate the idea of randomized algorithms mentioned in Section \ref{ch1:clascomplex}, let me present two randomized algorithms for the SAT problem. The procedures are given in Figure \ref{ch1:figurewalksat}. The greedy algorithm GSAT \cite{gsat} starts with a random bit string and proceeds to flip bits which give the greatest decrease in the number of broken clauses. It works great for 2-SAT, but not for 3-SAT, as it gets stuck in local minima. The second algorithm, WALKSAT \cite{walksat}, goes around this problem by randomly choosing between performing a step of GSAT or flipping a random bit in some unsatisfied clause. This randomized algorithm performs surprisingly well on many real-life instances of 3-SAT.

\subsubsection{SAT for non-bit strings}
\label{ch1:sat23}
\begin{figure}
	\begin{center}
	\includegraphics[width=5in]{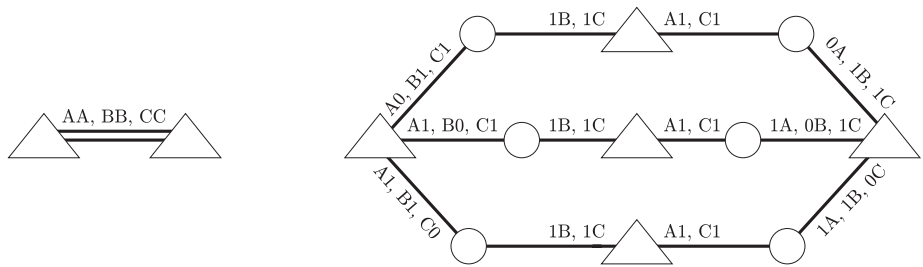} 
	\end{center}
	\caption{A reduction of Graph 3-colorability to 2-Satisfiability for trit-bit pairs, also called (3,2)-SAT. The trits with possible values \{A,B,C\} are denoted by triangles, while the bits are denoted by circles. Each clause is represented by the pair of symbols which it rules out. For Graph 3-colorability, the excluded assignments are AA, BB and CC, ruling out the same color for two neighboring vertices.
	\label{ch1:figure32sat}}
\end{figure}
SAT can be generalized to non-bit strings. The simplest variant of this is (3,2)-SAT, with each clause involving a trit (a three-valued letter) and a bit. Although it doesn't seem much more complicated than regular 2-SAT, already this variant is NP-complete. It belongs to NP, because it can be reduced to 3-SAT by encoding the trit into two bits. On the other hand, it is also NP-hard, as the NP-complete {\em Graph 3-colorability} problem can be reduced to (3,2)-SAT as depicted in Figure \ref{ch1:figure32sat}.

\subsubsection{MAX-$k$-SAT}
The last variant of Satisfiability I introduce is MAX-$k$-SAT. In regular $k$-SAT, the question is whether an assignment satisfying all clauses exists or not. Sometimes, showing that no such assignment exists can be easy. However, the MAX-$k$-SAT problem poses a harder question: does an assignment breaking fewer than $a$ clauses exist? The clauses in MAX-$k$-SAT define a cost function, and solving MAX-$k$-SAT for different $a$ would determine its minimum value. Although regular 2-SAT is in P, MAX-2-SAT is NP-complete, so the existence of a polynomial algorithm for it is highly unlikely.

In Table \ref{SATtable}, I summarize the known results about classical satisfiability problems. The goal of  Chapter \ref{chQsat} is to classify the complexity of the quantum analogues of these. I refer the reader interested in NP-complete problems and other variants of SAT to the book of Garey and Johnson \cite{Intractability}.
\begin{table}
\begin{center}
\begin{tabular}{|r|l|l|}
\hline 
Classical & \ bit strings & \ general strings \\
\hline 
$k$-SAT & \ $k=2$ : in P	& \ $(3,2)$-SAT : NP-complete  \\
& \ $k\geq 3$ : NP-complete & \  \ \\
\hline	
MAX-$k$-SAT & \ $k\geq 2$ : NP-complete &  \\
\hline
\end{tabular}
\caption{Known complexity for classical satisfiability problems. $(3,2)$-SAT in the general strings column is the problem where each clause involves one trit (letter with three values) and one bit.}
\label{SATtable}
\end{center}
\end{table}


\section{Quantum Computing}\label{ch1:qc}

Up until now, I have considered the model of computing governed by the laws of classical physics. One needs to ask what is the computing power of systems governed by quantum mechanics. Alternatively, how hard is it to simulate quantum-mechanical systems with classical computers? Already Feynman stated these questions in \cite{lh:Feynman}. 
I assume the reader is familiar with quantum mechanics, so the following exposition will be brief, introducing only those topics which I later use.

Classical data can be easily measured, copied and erased. If someone gave me a binary string like  \textsc{1001001}, I could easily read it (measure it) and make many copies for myself. However, the world is quantum-mechanical. The state space of the simplest quantum system -- a spin-$\half$ particle, a {\em qubit}, is much bigger than the two possible values $\{0,1\}$ of a classical bit. The first postulate of quantum mechanics is
\begin{itemize}
\item Associated to any isolated physical system is a complex vector space with inner product (i.e. a Hilbert space) known as the {\em state space} of the system. The system is completely described by its {\em state vector}, which is a unit vector in the system's state space.
\end{itemize}
The state space of a qubit is a $2$-dimensional complex vector space $\cH=\cC^2$. The state of a qubit is given by
\begin{eqnarray}
	\ket{\psi} &=& c_0 \mtwo{1}{0} + c_1 \mtwo{0}{1} = 
	c_0 \ket{0} + c_1 \ket{1},
	\label{ch1:qubit}
\end{eqnarray}
where $\ket{0}$ and $\ket{1}$ are basis vectors in the usual Dirac notation and $c_0,c_1$ are complex numbers for which $|c_0|^2 + |c_1|^2 = 1$. 

How can a quantum computer be built from a system of qubits? In a classical computer, one can apply irreversible gates like \textsc{NAND} or reversible operations like the Toffoli gate. Besides measurements which I discuss below, the available operations for a quantum system are unitary transformations. Quantum mechanical systems obey the time-dependent Schr\"odinger equation. The next postulate of quantum mechanics is
\begin{itemize}
\item The time evolution of the state of a quantum system is described by the time-dependent {\em Schr\"odinger equation}
\begin{eqnarray}
	i \hbar \deriv{}{t} \ket{\psi(t)} &=& H(t) \ket{\psi(t)},
	\label{ch1:schrodinger}
\end{eqnarray}
where $H(t)$ is a Hermitian operator called the Hamiltonian.
Note that the value of the constant $\hbar$ can be absorbed into the definition of $H$.
\end{itemize}
Because the Hamiltonian is Hermitian, the time evolved state $\ket{\psi(t)}$ and the initial state $\ket{\psi(0)}$ are related by a unitary transformation $U_{t,0}$:
\begin{eqnarray}
	\ket{\psi(t)} = U_{t,0} \ket{\psi(0)}.
\end{eqnarray}
When the Hamiltonian $H$ is time-independent, this unitary transformation is
\begin{eqnarray}
	U_{t,0} = e^{-iHt}.
\end{eqnarray}
Unitarity of $U$ means that $U_{t,0}^\dagger = U_{t,0}^{-1}$, and implies that time evolution according to the Schr\"odinger equation is reversible. The transformations one can perform on a quantum computer are thus unitary. 

The final ingredient for quantum computation are measurements, used to read out the result. They could be used throughout the computation, but in the next Section I argue why it is sufficient to use measurements only at its end. Quantum measurements work in the following fashion: 
\begin{itemize}
\item Consider a quantum observable $\hat{M}$, a Hermitian operator acting on the state space of the system being measured, expressed in terms of its eigenvectors as
\begin{eqnarray}
	\hat{M} = \sum_i \lambda_i \kets{\phi_i^M}\bras{\phi_i^M}.
\end{eqnarray}
If the state of the quantum system is $\ket{\psi}$ immediately before the measurement then the probability to obtain the result $\lambda_i$ is given by
\begin{eqnarray}
	p(i) = \left|\braket{\phi_i^M}{\psi}\right|^2,
\end{eqnarray}
and the state of the system after the measurement is the corresponding eigenstate $\kets{\phi_i^M}$.
\end{itemize}
Imagine now that I want to determine what state the qubit \eqref{ch1:qubit} is in. As an example, let me choose to measure $\hat{\sigma}_z = \ket{0}\bra{0} - \ket{1}\bra{1}$, the $z$-component of the spin. I will get the result $+1$ with probability $|c_0|^2$, and the result $-1$ with probability $|c_1|^2$. It takes many copies of $\ket{\psi}$ and many measurements to determine $c_0$ and $c_1$ in general. Even if I made different measurements, distinguishing nonorthogonal states perfectly is impossible \cite{NCbook}.


\subsection{The Quantum Circuit model}\label{ch1:qcircuit}

Quantum circuits are a direct generalization of classical reversible circuits, using unitary transformations instead of classical reversible gates.  
A few basic single-qubit unitary gates, expressed as $2\times2$ matrices acting on column vectors in $\cC^2$ are the Pauli matrices
\begin{eqnarray}
	\sigma_x = \mfour{0}{1}{1}{0}, \qquad 
	\sigma_y = \mfour{0}{-i}{i}{0}, \qquad
	\sigma_z = \mfour{1}{0}{0}{-1},
\end{eqnarray}
and the Hadamard gate $H$, the phase gate $S$ and the $\frac{\pi}{8}$ gate $T$:
\begin{eqnarray}
	H = \frac{1}{\sqrt{2}} \mfour{1}{1}{1}{-1}, \qquad
	S = \mfour{1}{0}{0}{i}, \qquad
	T = \mfour{e^{-i\frac{\pi}{8}}}{0}{0}{e^{i\frac{\pi}{8}}}.
\end{eqnarray}
A general single qubit unitary gate can be expressed using the Pauli matrices as
\begin{eqnarray}
	U = e^{-i\frac{\phi}{2} \hat{n}\cdot \vec{\sigma}},
\end{eqnarray}
where $\phi$ is a real number $\hat{n} = (n_x,n_y,n_z)$ is a unit vector and $\hat{n}\cdot \vec{\sigma} = n_x \sigma_x + n_y \sigma_y + n_x \sigma_z$

At least one two-qubit gate is required for quantum computation. The following two gates apply $\sigma_x$ or $\sigma_z$ to the target qubit, if the control qubit is in the state $\ket{1}$:
\begin{eqnarray}
	\mathrm{CNOT} = \mcont{0}{1}{1}{0}, \qquad
	\mathrm{C-Z} = \mcont{1}{0}{0}{-1}.
\end{eqnarray}
The elements of a quantum circuit are usually depicted as in 
Figure \ref{ch1:figurecontrol}.
\begin{figure}
	\begin{center}
	\includegraphics[width=4.5in]{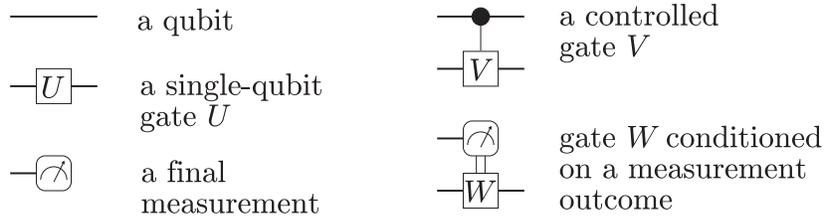}
	\end{center}
	\caption{The components of a quantum circuit.
	}
	\label{ch1:figurecontrol}
\end{figure}
At the end of the computation, the output of a quantum circuit is read out by performing a few final measurements. Although I could perform measurements throughout the computation and apply gates conditioned on the outcomes, it is possible to defer measurements until the very end of the computation by using controlled gates, as shown in Figure \ref{ch1:figureqcirc}. 
\begin{figure}
	\begin{center}
	\includegraphics[width=4in]{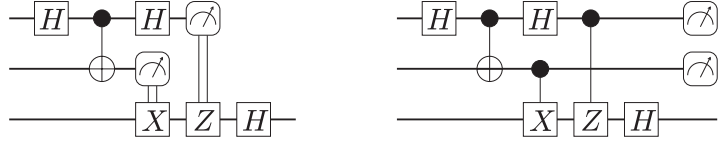}
	\end{center}
	\caption{A quantum circuit with measurements performed during the computation and its version with measurements deferred until the end of the computation.
	}
	\label{ch1:figureqcirc}
\end{figure}

\subsubsection{Universality}
\label{ch1:universal}
The NAND gate is universal for classical circuits and the Toffoli gate is universal for classical reversible circuits as mentioned in Section \ref{ch1:clas}. On the other hand, can every unitary transformation on $n$ qubits be decomposed into many applications of a small finite set of gates? Indeed, universal gate sets for quantum computing exist in the sense that they can approximate any unitary operation arbitrarily well. A universal gate set must be dense in the group $SU(n)$, and the approximations must converge fast. I point the interested reader to Appendix 3 of \cite{NCbook} for a thorough discussion of the Solovay-Kitaev theorem concerning the convergence rate.
Some well known gate sets universal for quantum computing are:
\begin{itemize}
\item{single qubit gates, CNOT} \cite{NCbook},
\item{single qubit gates, C\,--\,Z} \cite{NCbook},
\item{$H$, $S$, $T$, CNOT} \cite{NCbook}.
\end{itemize}
Another universal gate set uses the quantum Toffoli gate:
\begin{itemize}
\item{$H$, Toffoli \cite{CA:Shi:02}}
\end{itemize}
It is intriguing, because it uses only real matrices. It is possible to do quantum mechanics with real numbers, as superpositions and minus signs, not imaginary numbers are essential. 
Consider the state \eqref{ch1:qubit} of a qubit,
and add to the system another qubit labeled $r$. This qubit can be used to work around the requirement of imaginary amplitudes. 
\begin{eqnarray}
	\ket{\Psi} &=& \textrm{Re}(c_0) \ket{0}\ket{0}_r + \textrm{Im}(c_0) \ket{0}\ket{1}_r \\
			&+& \textrm{Re}(c_1) \ket{1}\ket{0}_r +
			\textrm{Im}(c_1) \ket{1}\ket{1}_r
\end{eqnarray}
is a state of the larger system with only real amplitudes, and has all the information contained in the original state $\ket{\psi}$ \eqref{ch1:qubit}. Arbitrary unitary transformations on $\ket{\psi}$ can be then mapped onto real unitary transformations on the larger system \cite{CA:Shi:02}.
Finally, a single two-qubit gate 
\begin{itemize}
\item{$W$,}
\end{itemize}
the controlled $\frac{\pi}{2}$-rotation about the $y$-axis is universal for quantum computation \cite{CA:Shi:02}. In Chapter \ref{ch5hqca}, I use this single universal gate in my Hamiltonian Quantum Cellular Automaton constructions.


\subsection{Quantum Complexity}\label{ch1:qcomplex}
In recent years, quantum complexity classes have been defined and studied in an attempt to understand the capacity and limitations of quantum computers and quantum algorithms and their relation to classical complexity classes. The basic quantum complexity class is BQP, the analogue of BPP.
\begin{definition}[Complexity class BQP (Bounded error, Quantum, Polynomial time]
\label{defbqp}
Consider a decision problem $\Pi = (L_{yes}\cup L_{no})$, where $L_{yes}$ is the set of all instances $x$ of the problem $\Pi$ with the answer `yes'. The problem $\Pi$ is in BQP, if there exists a uniform family of circuits\footnote{Here {\em uniform family of circuits} means that there is one circuit for each problem size (and not for each problem instance), with a number of gates polynomial in the problem size.} $U$, with the following properties. For a problem instance $x$ with the answer `yes' ($x\in L_{yes}$), the probability of the circuit $U$ outputting `yes' is greater than $\frac{2}{3}$. On the other hand, when $x\in L_{no}$, the probability of the circuit outputting `yes' is less than $\frac{1}{3}$.
\end{definition}
In short, BQP is the class of problems one can solve efficiently using a quantum computer. The complexity class QMA, also known as BQNP, studied and defined in \cite{lh:Kni96} and \cite{KitaevBook}, is the quantum analogue of the classical complexity class MA, the probabilistic setting of NP. 
\begin{definition}[Complexity class QMA]
\label{ch1:qma}
A decision problem $\Pi = (L_{yes}\cup L_{no})$ of size $n$ is in the class QMA if there exists a polynomial time quantum verifier circuit $V$ such that for every instance $x$ of the problem
\begin{enumerate}
	\item $\forall x\in L_{yes}$ : there exists a witness state $\ket{\varphi(x)}$ such that 
		the computation $V\ket{x} \otimes \ket{\varphi(x)} \otimes \ket{0\dots0}_{ancilla}$
		yields the answer 1 with probability at least $p$ ;
	\item $\forall x\in L_{no}$ : for any witness state $\ket{\varphi(x)}$, the computation 
		$V\ket{x} \otimes \ket{\varphi(x)} \otimes \ket{0\dots0}_{ancilla}$
		yields the answer 1 with probability at most $p-\epsilon$,
\end{enumerate}
where $p>0$ and $\epsilon = \Omega(1/\textrm{poly}(n))$.
\end{definition}
The first known QMA-complete problem, local Hamiltonian, is a quantum analogue of classical MAX-k-SAT. Building on ideas that go back to Feynman \cite{lh:Feynman}, Kitaev \cite{KitaevBook} has shown that 5-local Hamiltonian is QMA complete and I present this result in Section \ref{ch1:Kitaev5}. A special case of Local Hamiltonian, Quantum $k$-SAT, is the topic of Chapter \ref{chQsat}. There I show that several of its variants are complete for QMA$_1$, the class QMA with one-sided error:
\begin{definition}[Complexity class QMA$_1$]
\label{ch1:qma1}
The class QMA$_1$ is the class QMA with single sided error, i.e. with $p=1$ in the above definition.
\end{definition}
In the classical case, MA with one-sided error is equivalent to regular MA. It is not known whether QMA=QMA$_1$.


\subsection{The Strengths and Limits of Quantum Computing}\label{ch1:grover}
There are quite a few problems on which quantum computers would algorithmically outperform classical ones. In quantum computing, we can use resources unknown in classical computing: superpositions,  interference and entanglement. Much has been discovered since Shor's factoring algorithm in 1994 \cite{ShorFact}. Stephen Jordan has a thorough and up-to-date summary of the algorithmic successes of quantum computing in his thesis \cite{AQC:JorThesis}. However, the speedups are rarely exponential. Not only that, solving problems such as {\em Parity}\footnote{{\em Parity}: determine the sum modulo 2 of $n$ bits.} cannot be sped up on a quantum computer at all \cite{parity}. I refer the reader interested in the limitations of quantum computers to \cite{lowbound1,lowbound2,lowbound3,lowbound4,BBBV}.

Algorithm designers bound the complexity of a problem from above by finding better algorithms, while lower bound enthusiasts limit it from below. The quantum variant of the black-box problem {\em Unstructured search} is an example of a problem for which these two approaches meet. The lower bound on its query complexity is known, and an optimal quantum algorithm, Grover's search, exists. 
\begin{definition}[Unstructured Search]
\label{defGrover}
Consider a quantum black-box (oracle) 
\begin{eqnarray}
	\hat{O} \ket{s} =
	\ii - 2\ket{w}\bra{w}. 
\end{eqnarray}
which flips the phase of a single, unknown to you, canonical basis quantum state $\ket{w}$, and acts trivially on any other canonical basis state. Find the state $\ket{w}$.
\end{definition}

Looking at a single state of $n$ bits unsuccessfully does not give me any information about where to look next, making this problem unstructured. Classically, the required number of queries scales like the number of possible $n$-bit strings, $2^n$, and is matched by the simplest randomized algorithm one can think of---check the strings randomly until you find the marked one. 
The query complexity of a quantum algorithm for unstructured search is the number of necessary calls to $\hat{O}$ in the worst case. In 1996, Grover discovered his famous algorithm requiring only $\sqrt{2^n}$ queries, getting a square root speedup over the best possible classical algorithm. Also around that time, Bennett, Bernstein, Brassard and Vazirani found a matching lower bound $l(n) \propto \sqrt{2^n}$ \cite{BBBV}. Therefore, Grover's algorithm is the best possible quantum algorithm for unstructured search. Unstructured search can be also recast in a continuous setting, where quantum computation is done by continuous-time Hamiltonian evolution instead of application of discrete unitary gates. Instead of a quantum black box, in \cite{AQC:FG98analog} Farhi and Gutmann use an oracle Hamiltonian  
\begin{eqnarray}
	H_{w} = - E \ket{w}\bra{w},
\end{eqnarray}
where $E$ determines an energy scale. This oracle Hamiltonian is always on in the system. What one can do to find $\ket{w}$ in this Hamiltonian setting is to prepare the system in some initial state,  and let the system evolve according to a Hamiltonian 
\begin{eqnarray}
	H(t) = H_{D}(t) + H_{w},
\end{eqnarray}
where $H_D(t)$ is an additional arbitrary time-dependent Hamiltonian. If measuring the system after some time produces $\ket{w}$, the algorithm is successful. Farhi and Gutmann show that the required running time for any such Hamiltonian algorithm (any $H_D(t)$ one uses) necessarily scales like $\sqrt{2^n}/E$. In Chapter \ref{ch2adiabatic}, I use this lower bound to show what not to do when designing quantum adiabatic algorithms.



\section{Local Hamiltonians\\ in Quantum Computation}\label{ch1:qlocal}

Finding the properties of relatively simple quantum systems is the topic of much condensed matter physics. Even though all the interactions are local, involving only a few neighboring particles, the resulting possibilities are amazingly rich. In this thesis I focus on two properties of systems governed by Local Hamiltonians. First,
a (quantum) computer can be constructed from a system governed by a simple Hamiltonian. Second, the ground states of local Hamiltonians can be hard to find, as they can encode solutions to interesting optimization problems. 

Hamiltonian time evolution according to the Schr\"odinger equation is the underlying principle of all quantum computing, producing the unitary operations required the quantum circuit model. Controlling the Hamiltonian of a quantum system means changing the interaction of the particles with each other and their surroundings. For example, in addition to the natural interaction of the particles in a quantum system, we can apply magnetic field to spins on a lattice, laser pulses to ions in an ion trap or electric fields to electrons in quantum dots. External control can allow us to perform the desired unitary transformations in the system, making it into a computer. Feynman \cite{lh:Feynman} thought of how to make a Hamiltonian Computer for simulating other quantum systems long before people invented quantum circuits. I introduce his idea in Section \ref{ch1:FeynmanHC}, and build on it throughout this thesis, especially in Chapters \ref{chQsat} and \ref{ch5hqca}. 

Besides investigating the time evolution of a quantum system, analyzing the spectrum of its eigenvalues is another source of interesting problems. It is long known that finding ground states of certain Hamiltonians is hard. If we could find ground states of a system governed by the Ising model 3D, we could solve  NP-complete problems \cite{Barahona}. Finding minima of locally constrained optimization problems can be recast into finding the ground state energy of a local Hamiltonian. 
I am interested in showing how hard it is to find it for some Hamiltonians. In one case, in Chapter \ref{ch3mps}, I give a numerical method for finding the ground state energy of the translationally invariant Ising model in transverse field on a Bethe lattice, based on the Matrix Product State ansatz. On the other hand, building on Kitaev's Local Hamiltonian problem I introduce in Section \ref{ch1:Kitaev5}, I classify the complexity of finding the ground state of several Hamiltonians in Chapter \ref{chQsat}.


\subsection{Feynman's Hamiltonian Quantum Computer}
\label{ch1:FeynmanHC}
Already in 1985, Feynman \cite{lh:Feynman} proposed the following interpretation/implementation of a computer using a quantum-mechanical systems. What he originally had in mind was to simulate a sequence of reversible classical gates in a quantum mechanical system. However, his construction applies equally well to implementing a quantum circuit. Take a sequence of $L$ unitary transformations $U_t$ on $n$ work qubits 
\begin{eqnarray}
	U=U_L U_{L-1} \dots U_2 U_1.
	\label{ch1:Ucircuit}
\end{eqnarray}
Consider now a system with two registers, 
\begin{eqnarray}
	\cH = \cH_{work} \otimes \cH_{clock}.
	\label{ch1:workclock}
\end{eqnarray}
The first one holds the $n$ work qubits, 
which I label $q_1, \dots, q_n$ throughout the thesis, while the second register
holds a pointer particle hopping on a line $0,1,\dots,L$, serving as a {\em clock} for the computation. I can realize the clock register using $L+1$ qubits labeled $c_0,\dots,c_L$, and initialize it in the state
\begin{eqnarray}
	\ket{0}_c = \ket{1_{c_0}0_{c_1}0_{c_2}\dots 0_{c_L}}.
\end{eqnarray}
The state of the clock register corresponding to ``time'' $t$ is
\begin{eqnarray}
	\ket{t}_c = \ket{0_{c_0}\dots 0_{c_{t-1}}1_{c_t}0_{c_{t+1}}\dots 0_{c_L}}.
	\label{ch1:pulse}
\end{eqnarray}
A simple time-independent Hamiltonian can facilitate the evaluation of $U$ on the work register of this system.
\begin{figure}
	\begin{center}
	\includegraphics[width=4in]{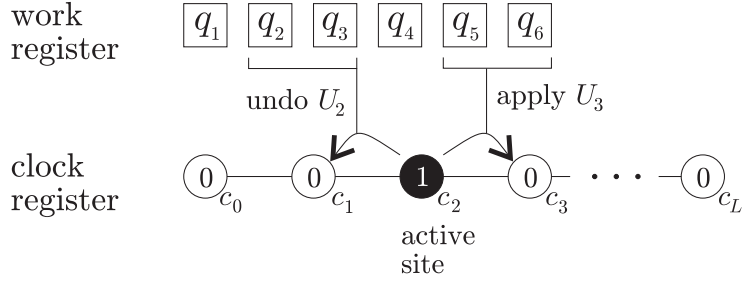}
	\end{center}
	\caption{Feynman's computer with a hopping pointer particle (a single spin up). When the active site moves from $c_2$ to $c_3$, gate $U_3$ gets applied to work qubits $q_5$ and $q_6$. When the active spin jumps back from $c_2$ to $c_1$, gate $U_2$ is uncomputed on qubits $q_2$ and $q_3$.
	}
	\label{ch1:figfeynman}
\end{figure}
As the pointer particle hops on the line
$0,1,\dots,L$, performing a quantum walk, the corresponding unitary operations are applied (or uncomputed) on the work qubits as depicted in Figure \ref{ch1:figfeynman}. The Hamiltonian with the desired dynamics is
\begin{eqnarray}
	H_{Fnmn} = \sum_{t=1}^{L}
		   U_t \otimes \sigma_{(t)}^{+} \sigma_{(t-1)}^{-}
		+  U_t^\dagger \otimes \sigma_{(t)}^{-} \sigma_{(t-1)}^{-},
	\label{ch1:HFeynman1}
\end{eqnarray}
where $\sigma^{+}_{(t)} = \ket{1}\bra{0}_{c_t}$ and $\sigma^{-}_{(t)} = \ket{0}\bra{1}_{c_t}$ are the raising and lowering operators for the clock register spin $c_t$ and $U_t$ acts on the corresponding work qubits.
When the system is initialized in the state 
\begin{eqnarray}
	\ket{\Psi_0} = \ket{\psi_0} \otimes \ket{0}_c 
	= \ket{\psi_0} \otimes \ket{10\dots 0}_c,
	\label{ch1:initstate}
\end{eqnarray}
where $\ket{\psi_0}$ is some initial state of the work qubits, time evolution with $H_{Fnmn}$ according to the Schr\"odinger equation brings the state into a superposition of states 
\begin{eqnarray}
	\ket{\Psi_t} = \ket{\psi_t} \otimes \ket{t}_c,
\end{eqnarray}
where $\ket{\psi_t}$ is the state of the work qubits after the first $t$ gates of the circuit:
\begin{eqnarray}
	\ket{\psi_t} = U_t U_{t-1} \dots U_1 \ket{\psi_0}.
	\label{ch1:psit}
\end{eqnarray}
After some time $\tau$, when I measure the clock register of $\ket{\Psi}$ and obtain $L$, the state of the system after the measurement becomes $\ket{\psi_L} \otimes \ket{L}_c$. The work register of this state contains the desired output of the circuit $U$. 

I can obtain $\ket{\psi_L}$ with high probability using a slightly modified system. First, I pad the sequence $U$ \eqref{ch1:Ucircuit} with $2L$ extra identity operations
\begin{eqnarray}
	U' = \ii_{3L} \ii_{3L-1} \dots \ii_{L+1} U,
\end{eqnarray}
and then expand the clock register to $3L+1$ qubits accordingly.
All the states $\ket{\psi_t}$ \eqref{ch1:psit} for $t\geq L$ now are the same, equal to the desired output state $\ket{\psi_L}$.
Starting from \eqref{ch1:initstate}, I let the system evolve for a time $\tau$ chosen uniformly at random between $0$ and $O(L\log L)$. Because the quantum walk on a line is rapidly mixing, the probability to measure a state with the clock register $\ket{t\geq L}_c$ at time $\tau$ is close to $\frac{2}{3}$, as shown in Appendix \ref{d20proof}. This yields the desired state $\ket{\psi_L}$ of the work qubits. If a measurement of the clock register results in $\ket{\psi_t}\otimes\ket{t<L}_c$, repeat the experiment.

The clock register representation of the states $\ket{t}_c$ given in \eqref{ch1:pulse} is only one of the possibilities. I describe a different one in the next Section, and several others in Chapter \ref{chQsat}. Considering this, Feynman's Hamiltonian can be rewritten in a general form, encompassing different implementations of $\ket{t}_c$, as
\begin{eqnarray}
	H_{Fnmn} &=& \sum_{t=1}^{T} H_{F}^{t}, \\
	H_{F}^{t}   &=& U_t \otimes X_{t,t-1}
		   +
		   \left(U_t \otimes X_{t,t-1} \right)^\dagger,
	\label{ch1:HFeynman}
\end{eqnarray}
where $U_t$ acts on the corresponding work register qubits, and the operator $X_{t,t-1}$ acts on the clock register as 
\begin{eqnarray}
	X_{t,t-1} &=& \ket{t}\bra{t-1}_c, 
	\label{ch1:XopsLOG}
\end{eqnarray}
increasing the time from $t-1$ to $t$. In Feynman's original formulation, this corresponds to moving the position of the single up spin to the right as $\ket{\dots 0100 \dots}_c \rightarrow \ket{\dots 0010 \dots}_c$, and the operator $X_{t,t-1}$ can be written as $X_{t,t-1} = \ket{01}\bra{10}_{c_{t-1},c_t}$, where $c_{t-1}$ and $c_t$ are two of the clock register qubits. Analogously, the Hermitian conjugate of $X_{t,t-1}$ moves the time register backwards as
\begin{eqnarray}
	\left(X_{t,t-1}\right)^\dagger &=& \ket{t-1}\bra{t}_c.
	\label{ch1:XopsLOG2}
\end{eqnarray}
This is my preferred formulation of Feynman's Hamiltonian throughout this thesis. It shows up in Chapter \ref{ch2adiabatic} where I use it to prove results about the universality of adiabatic quantum computing, while in Chapter \ref{chQsat} it serves a basic building block for my results about the complexity of Quantum Satisfiability. Finally, in Chapter \ref{ch5hqca} it inspires my Hamiltonian Quantum Cellular Automaton model. 


\subsection{Kitaev's Local Hamiltonian (LH) Problem}
\label{ch1:Kitaev5}
In this section I introduce Kitaev's {\em Local Hamiltonian} and sketch the proof that it is QMA complete. 
In \cite{KitaevBook} Kitaev defined the following promise problem:
\begin{definition}[Local Hamiltonian]
\label{defLH}
Take a $k$-local Hamiltonian $H = \sum_{j=1}^r H_j$ acting on $n$ qubits, composed of $r=\textrm{poly}(n)$ terms, each of those acting nontrivially on a constant number $k$ of qubits. Given two numbers $a$, $b$, where $b>a$ and the separation $b-a$ is greater than $n^{-\alpha}$ for some constant $\alpha$, determine whether 
	\begin{enumerate}
	\item $H$ has an eigenvalue not exceeding $a$, or 
	\item all eigenvalues of $H$ are greater than $b$.
	\end{enumerate}
\end{definition}

The two numbers $b$ and $a$ are separated by at least an inverse polynomial in the number of qubits in the system, $n$. LH is a promise problem, therefore we know that either $1$ or $2$ are true. Kitaev proved that
\begin{theorem}[LH is QMA-complete]
5-Local Hamiltonian is QMA-complete.
\end{theorem}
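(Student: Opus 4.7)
The plan is to establish QMA-completeness in two directions: containment in QMA and QMA-hardness. Containment is the easier direction. Given a purported low-energy eigenstate $\ket{\psi}$ as the witness, Arthur picks $j \in \{1,\dots,r\}$ uniformly, performs the local measurement that diagonalizes $H_j$ on the at most five qubits it touches, and outputs a suitably rescaled outcome as an estimator of $\bra{\psi}H\ket{\psi}/r$. Since $r = \mathrm{poly}(n)$, each $H_j$ has bounded norm, and $b-a \geq n^{-\alpha}$, running this subroutine $\mathrm{poly}(n)$ times on fresh copies of $\ket{\psi}$ and averaging yields an estimate of $\bra{\psi}H\ket{\psi}$ to precision $(b-a)/4$, which by a Chernoff bound separates the two cases with high probability.

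The nontrivial direction is QMA-hardness. Given an arbitrary QMA verifier circuit $V_x = U_L U_{L-1}\cdots U_1$ of length $L = \mathrm{poly}(n)$ built from one- and two-qubit gates and acting on a witness register $\ket{\xi}$ together with ancillas initialized to $\ket{0}$, I would build a $5$-local Hamiltonian whose least eigenvalue is small iff $x$ is a yes instance. Using a unary clock $\ket{t}_c = \ket{\underbrace{1\cdots 1}_t\underbrace{0\cdots 0}_{L-t}}$ on $L$ qubits, I take
\begin{align}
H = H_{\mathrm{in}} + H_{\mathrm{prop}} + H_{\mathrm{clock}} + H_{\mathrm{out}},
\end{align}
where $H_{\mathrm{prop}}$ is Feynman's Hamiltonian from equation \eqref{ch1:HFeynman} with clock hops $X_{t,t-1} = \ket{110}\bra{100}_{c_{t-1}c_tc_{t+1}}$, the term $H_{\mathrm{in}} = \sum_{i\in\mathrm{anc}}\ket{1}\bra{1}_i \otimes \ket{0}\bra{0}_{c_1}$ penalizes a wrong ancilla initialization at $t=0$, $H_{\mathrm{out}} = \ket{0}\bra{0}_{\mathrm{out}} \otimes \ket{1}\bra{1}_{c_L}$ penalizes verifier rejection at $t=L$, and $H_{\mathrm{clock}} = J \sum_i \ket{01}\bra{01}_{c_i c_{i+1}}$ energetically forbids non-unary clock strings. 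Each propagation term acts on at most two work qubits together with three clock qubits, giving the advertised $5$-locality.

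Completeness follows from the \emph{history state} construction: if $\ket{\xi}$ is an accepting witness, then
\begin{align}
\ket{\eta} = \frac{1}{\sqrt{L+1}} \sum_{t=0}^{L} \bigl( U_t \cdots U_1 \ket{\xi}\ket{0\cdots 0}_{\mathrm{anc}} \bigr) \otimes \ket{t}_c
\end{align}
is annihilated by $H_{\mathrm{in}}$, $H_{\mathrm{prop}}$, and $H_{\mathrm{clock}}$, while the remaining term contributes $\bra{\eta}H_{\mathrm{out}}\ket{\eta} \leq \varepsilon/(L+1)$, where $\varepsilon$ is the maximum rejection probability allowed on yes instances. This places the ground energy below $a := \varepsilon/(L+1)$.

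The hard step is soundness: I must show that on a no instance every state has energy at least $b = a + \Omega(1/\mathrm{poly}(n))$. The strategy is to choose $J$ large enough that Kitaev's \emph{projection lemma} confines low-energy states to the unary-clock subspace up to an inverse-polynomial correction. On that subspace the unitary change of basis $W = \sum_t (U_t \cdots U_1) \otimes \ket{t}\bra{t}_c$ conjugates $H_{\mathrm{prop}}$ into the standard nearest-neighbor Laplacian on the clock line, whose zero eigenspace is spanned by states uniformly superposed over $t$; simultaneously, $H_{\mathrm{in}} + H_{\mathrm{out}}$ is mapped to two projectors whose combined action, by the soundness of $V_x$, cannot be simultaneously near-annihilated by any clock-uniform state. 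Applying the \emph{geometrical lemma} to the pair $(H_{\mathrm{prop}},\, H_{\mathrm{in}}+H_{\mathrm{out}})$ then yields an $\Omega(1/L^3)$ lower bound on the ground energy. The main obstacle is precisely this last step: converting the constant soundness gap of the verifier into an inverse-polynomial eigenvalue gap of $H$ requires tight control of all three cross-interactions among clock, propagation, and input/output penalties, losing no more than polynomial factors, which is the combinatorial core of Kitaev's argument.
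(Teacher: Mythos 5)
Your proposal follows the same overall strategy as the paper: containment via estimating $\bra{\psi}H\ket{\psi}$ with local measurements, and hardness via the Feynman--Kitaev history-state Hamiltonian with a unary domain-wall clock, 3-local hop operators $\ket{110}\bra{100}$ (hence 5-locality), completeness from the history state, and soundness from the geometrical lemma giving an $\Omega(L^{-3})$ bound. The one place you genuinely depart from the paper is the treatment of illegal clock states: you attach a large coefficient $J$ to $H_{\mathrm{clock}}$ and invoke the projection lemma to confine low-energy states to the unary subspace. Kitaev's 5-local construction needs neither, and this is exactly what the paper highlights: since $\ket{110}\bra{100}_{c_{t-1}c_tc_{t+1}}$ never maps a legal clock string to an illegal one or vice versa, the legal clock subspace and its complement are invariant under every term of $H$, so one analyzes the two blocks separately --- any state in the illegal block pays at least a constant from unit-strength clock penalties, while the legal block reduces to the abstract clock analysis. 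Your route does work, but it costs polynomially large Hamiltonian terms and is really the machinery required later for the 3-local construction of Kempe and Regev, where legal-to-illegal transitions genuinely occur; keeping all terms of constant norm is one of the selling points of the 5-local proof. Two smaller remarks: the repetition over fresh witness copies in your containment argument is unnecessary (the QMA definition here tolerates an inverse-polynomial acceptance gap, so a single measurement accepting with probability $1-\bra{\eta}H\ket{\eta}/r$ suffices, and multi-copy repetition would additionally force you to address entangled cheating witnesses); and for your stated $a=\varepsilon/(L+1)$ to sit below the $\Omega\bigl((1-\sqrt{\varepsilon})L^{-3}\bigr)$ soundness floor you must first amplify the verifier so that $\varepsilon$ is well below $1/L^{2}$, a step worth making explicit.
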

One of the directions of the proof is to show that LH is in QMA. 
The verifier circuit $U$, saying whether Arthur accepts a given state or not, is constructed as follows. Add an `answer' qubit to the system and for each term $H_j = \sum \lambda_s \ket{\psi_s}\bra{\psi_s}$, construct a measurement operator 
\begin{eqnarray}
	W_j : \ket{\psi_s}\otimes \ket{0}_{ans} 
		\rightarrow 
		\ket{\psi_s} \otimes \left(
			\sqrt{\lambda_s} \ket{0}
		+			
			\sqrt{1-\lambda_s} \ket{1}
		\right)_{ans}. 
\end{eqnarray}
Taking a general state $\ket{\eta} = \sum_s y_s \ket{\psi_s}$, 
the probability the circuit accepts it (i.e. that I measure 1 on the output of $U$) is
\begin{eqnarray}
	p_j(1) &=& \bra{\eta}\otimes \bra{0}_{ans}
		W_j^\dagger \left( \ii \otimes \ket{1}\bra{1}_{ans} \right) W_j
	\ket{\eta}\otimes\ket{0}_{ans} \\
	&=& \sum_{s} (1-\lambda_s) y_s^* y_s = 1-\bra{\eta}H_j \ket{\eta}.
\end{eqnarray}
Add now another `operator' register to the system. The general circuit for checking all of the terms in the Hamiltonian combines the checking operators $W_j$ into
\begin{eqnarray}
	W = \sum_{j=1}^{r} \ket{j}\bra{j}_{op} \otimes W_j,
\end{eqnarray}
where $r$ is the number of terms in the Hamiltonian.
After applying $W$ to the system in the state
\begin{eqnarray}
	\left(\sum_{j=1}^r \ket{j}_{op}\right) \otimes \ket{\eta} \otimes \ket{0}_{ans},
\end{eqnarray}
one measures the answer qubit. The probability of getting the outcome 1 is then $p_1 = 1- \frac{\bra{\eta}H\ket{\eta}}{r}$.
This is enough to show that LH is in QMA.

For the other direction, one needs to construct a corresponding instance of the Hamiltonian problem for every instance of a problem in QMA. Each of those has a verifier circuit $U$ with the required accept/decline properties associated with it. Given a quantum circuit $U$, Kitaev showed how to construct $H$ with the properties required in the definition of LH (Definition \ref{defLH}). If the circuit $U$ accepts the state $\ket{\psi}$, this Hamiltonian's ground state can be constructed from it, and the corresponding expectation value in this state will be smaller than $a$. On the other hand, if the circuit does not accept any state with high probability, the ground state of $H$ will have energy higher than $b$.

\subsubsection{Kitaev's Hamiltonian}
Consider a quantum system consisting of a work register with $n$ qubits $q_1,\dots,q_n$,
and a clock register. 
\begin{eqnarray}
	\cH = 
	\cH_{work}
		 \otimes \cH_{clock}.
\end{eqnarray}
Kitaev's Hamiltonian is a sum of three terms,
\begin{eqnarray}
	H_{Kitaev} = H_{prop} + H_{out} + H_{input}.
	\label{ch1:HKitaevLOG}
\end{eqnarray}

The first term checks the proper progression of the circuit $U$. Taking the idea from Feynman's Hamiltonian computer (see Section \ref{ch1:FeynmanHC}), Kitaev used \eqref{ch1:HFeynman} to construct a Hamiltonian whose ground state is the uniform superposition over the computation called the {\em history state}:
\begin{eqnarray}
	\ket{\phi}_{history} &=& \frac{1}{\sqrt{L+1}}
			\sum_{t=0}^{L} \ket{\psi_t} \otimes \ket{t}_c, 
	\label{ch1:historystate}
\end{eqnarray}
where 
\begin{eqnarray}
	\ket{\psi_t} &=& U_t U_{t-1} \dots U_2 U_1 \ket{\psi_0},
\end{eqnarray}
and $\ket{\psi_0}$ is any initial state of the work qubits.
This Hamiltonian is
\begin{eqnarray}
	H_{prop} &=& \frac{1}{2} \sum_{t=1}^{L}
		H_{prop}^{t},
		\label{ch1:Hprop}
\end{eqnarray}
where each term $H_{prop}^{t}$ is a projector constructed from Feynman's Hamiltonian \eqref{ch1:HFeynman} as
\begin{eqnarray}
	H_{prop}^t &=&
			\ii \otimes \left( P_{t-1} + P_{t} \right) 
			-  \underbrace{\left(U_t \otimes X_{t,t-1} 
			+  \left(U_t \otimes X_{t,t-1}\right)^\dagger
			\right)}_{H_{F}^{t}}
	,
\end{eqnarray}
with $P_{t} = \ket{t}\bra{t}_c$ the projector onto the state $\ket{t}_c$ of the clock register, and the time-increasing operator $X_{t,t-1}$ changing the clock from $t-1$ to $t$ given by \eqref{ch1:XopsLOG}. The only states that are not obviously annihilated by $H_{prop}^t$ live in the subspace of states with the clock register in the states $\ket{t-1}_c$ or $\ket{t}_c$. Within that subspace, the states with zero energy have the form
\begin{eqnarray}
	\ket{\phi} &=& \ket{\alpha}\otimes \ket{t-1}_c 
			+ U_t \ket{\alpha} \otimes \ket{t}_c,
\end{eqnarray}
making it easy to check that the history state \eqref{ch1:historystate} is annihilated by the non-negative Hamiltonian $H_{prop}$. It can be thought of as checking the correct propagation of the computation, as the expectation value of $H_{prop}$ in any state that is not a history state is greater than zero. In fact, the second lowest eigenvalue can be bounded by $\lambda_1 \geq \frac{c_1}{L^2}$ for some constant $c_1$ (see Appendix \ref{appProp} for a detailed analysis of the Hamiltonian \eqref{ch1:Hprop}).

The last of the work register qubits, $q_n$, is the designated output qubit for the circuit $U$. The second term $H_{out}$ in \eqref{ch1:HKitaevLOG} adds an energy penalty to states 
whose $q_n$ at time $L$ is $\ket{0}$, meaning that the circuit did not accept them. 
\begin{eqnarray}
	H_{out} = \ket{0}\bra{0}_{q_n} \otimes \ket{L}\bra{L}_c.
\end{eqnarray}
This term ensures that $H_{Kitaev}$ \eqref{ch1:HKitaevLOG} can have a low eigenvalue only if the circuit $U$ outputs $1$ on some input state $\ket{\psi_0}$ with high probability.

The work register has $n$ qubits $q_1,\dots,q_n$. Out of these, at time $t=0$, some are initialized in a purported proof state, while others are ancilla qubits required for the verification circuit. The last part of $H_{Kitaev}$,  
\begin{eqnarray}
	H_{input} = \sum_{k\in ancilla} \ket{1}\bra{1}_{q_k}
		\otimes \ket{0}\bra{0}_c,
	\label{ch1:Hinput}
\end{eqnarray}
checks the correct initialization of these ancilla qubits by adding an energy penalty to states whose ancillae at time $t=0$ (when the clock register is in the state $\ket{0}_c$) are not in the state $\ket{0}$.

These three terms together make up Kitaev's Hamiltonian \eqref{ch1:HKitaevLOG}. Let me now sketch the highlights of the proof that \eqref{ch1:HKitaevLOG} has the properties described in Definition \ref{defLH}. First, the expectation value of $H_{Kitaev}$ in the history state \eqref{ch1:historystate} corresponding to an initial state $\ket{\psi_{yes}}$ that the circuit $U$ accepts can be shown to be upper bounded by $\lambda_0^{yes} \leq \frac{\ep}{L}$ for some $\ep$.
Second, one needs to lower bound the lowest eigenvalue of $H_{Kitaev}$ in the case the circuit $U$ doesn't accept any state with high probability, and show that it is separated from $\lambda_0^{yes}$ at least as an inverse polynomial in $L$. All three terms in \eqref{ch1:HKitaevLOG} are non-negative. 
The proof uses a geometrical argument, showing that the angle between the null spaces of $H_{prop}$ and $(H_{input} + H_{out})$ is lower bounded by
$\theta$, an inverse polynomial in $L$. The smallest eigenvalue of $H_{Kitaev}$ can then be bounded from below by the smallest nonzero eigenvalue of $H_{prop}$ multiplied by a geometric factor coming from $\theta$. The resulting lower bound on the lowest eigenvalue of \eqref{ch1:HKitaevLOG} is then $\lambda_0^{no} \geq c(1-\sqrt{\ep})L^{-3}$. $H_{Kitaev}$ thus has the desired properties.

The final step in the proof that 5-local Hamiltonian is QMA complete is to implement the clock register. I describe several clock register constructions in detail in Section \ref{ch4:clocks}. One of these is Kitaev's {\em domain wall} clock. Similarly to Feynman \eqref{ch1:pulse}, he encoded his clock register in unary using $L+1$ clock qubits $c_t$. However, the states $\ket{t}_c$ are now encoded as
\begin{eqnarray}
	\ket{t}_c	&=&	\ket{1_{c_0} 1_{c_1} \dots 1_{c_t} 0_{c_{t+1}} \dots 0_{c_{L-1}}0_{c_L}}
\label{ch1:unary}
\end{eqnarray} 
The Hilbert space of the $L+1$ clock qubits $c_0,\dots,c_L$ is much bigger than the space spanned by the proper clock states $\ket{t}_c$ given by \eqref{ch1:unary}, which I call the {\em legal clock subspace}, $\cH_{legal}$.
By adding a clock-checking Hamiltonian 
\begin{eqnarray}
	H_{clock} &=& \sum_{t=0}^{L-1} \ket{01}\bra{01}_{c_{t},\,c_{t+1}}
	\label{ch1:Hclock5}
\end{eqnarray}
to $H_{Kitaev}$ \eqref{ch1:HKitaevLOG}, one can ensure that the low-energy eigenstates of $H$ are close to the subspace of states with a proper unary-encoded clock, i.e.
whose clock register states are close to the legal clock subspace $\cH_{legal}$. The final form of the Hamiltonian is then
\begin{eqnarray}
	H_{Kitaev5} &=& H_{prop} + H_{out} + H_{input} + H_{clock}.
	\label{ch1:HKitaev5}
\end{eqnarray}
In this clock implementation, $P_{t}$, the projector onto the state $\ket{t}$, is only a 2-local operator
\begin{eqnarray}
	P_t^c = \ket{t}\bra{t}_c = \ket{10}\bra{10}_{c_{t},c_{t+1}},
\end{eqnarray}
with a special case $P_{L}^c = \ket{1}\bra{1}_{c_L}$, acting trivially on the rest of the qubits. On the other hand, the operator $X_{t,t-1}$ \eqref{ch1:XopsLOG} in this encoding is 3-local (acting nontrivially on three clock qubits):
\begin{eqnarray}
	X_{t,t-1} &=& \ket{t}\bra{t-1}_c = \ket{110}\bra{100}_{c_{t-1},c_{t},c_{t+1}}, 
\end{eqnarray}
with a special case at the right end of the clock register,
\begin{eqnarray}
 X_{L,L-1} = \ket{L}\bra{L-1}_c = \ket{11}\bra{10}_{c_{L-1},c_{L}}. \end{eqnarray} 
Recalling that 2-qubit gates are universal for quantum computation, the terms in $H_{prop}$ \eqref{ch1:Hprop}, and thus also $H_{Kitaev}$, become at most 5-local. It remains to show that in the case the circuit $U$ rarely outputs $1$, the lowest eigenvalue of \eqref{ch1:HKitaev5} is also bounded from below by an inverse polynomial in $n$. This proof follows from the geometric arguments outlined above without much complication. Therefore, 5-local Hamiltonian is QMA-complete.

In Chapter \ref{chQsat}, I return to this problem, focusing on a special case called Quantum Satisfiability, defined by Bravyi in \cite{lh:BravyiQ2SAT06}. When all the terms in $H$ are projectors (non-negative), one can ask whether there exists a state with energy exactly 0, or whether the ground state energy of $H$ is greater than $b$. I investigate the complexity of several variants of Q-SAT and utilize the techniques used in Kitaev's proof I just outlined.

\chapter{Adiabatic Quantum Computation}
\label{ch2adiabatic}

Farhi et al. \cite{AQC:FGGS00} introduced Adiabatic Quantum Computing (AQC) as a method to find a minimum of a classical cost function by using a quantum system with a time-dependent, local Hamiltonian. I describe this model in Section \ref{ch2:aqcintro}, and then review the proof of Aharonov et al. \cite{AQC:AvDKLLR05}, who showed that AQC is equivalent to the quantum circuit model, in the light a previously unpublished result of Seth Lloyd \cite{AQC:Lloydunpublished}
in Section \ref{ch2:aqcbqp}. This motivates my definition of a Hamiltonian Computer model in Section \ref{ch2:hc}, which I later use to prove universality of the Quantum 3-SAT Hamiltonian in Section \ref{ch4:train} and of two Hamiltonian Quantum Cellular Automata in Chapter \ref{ch5hqca}.

There is hope that there may be combinatorial search problems, defined on $n$ bits so that $N=2^n$, where for certain ``interesting'' subsets of the instances the run time of the Quantum Adiabatic Algorithm grows sub-exponentially in $n$. A positive result of this kind would greatly expand the known power of quantum computers. At the same time it is worthwhile to understand the circumstances under which the algorithm is doomed to fail. Section \ref{ch2:local}, based on the paper \cite{AQC:fail},

\mypaper{How to make the Quantum Adiabatic Algorithm Fail}{Edward Farhi, Jeffrey Goldstone, Sam Gutmann and Daniel Nagaj}{ 
The quantum adiabatic algorithm is a Hamiltonian based quantum algorithm designed to find the minimum of a classical cost function whose domain has size $N$. We show that poor choices for the Hamiltonian can guarantee that the algorithm will not find the minimum if the run time grows more slowly than square root of $N$. These poor choices are nonlocal and wash out any structure in the cost function to be minimized and the best that can be hoped for is Grover speedup. These failures tell us what not to do when designing quantum adiabatic algorithms.
}
contains two results about the necessity of local structure for successful AQC algorithms. First, in Section \ref{ch2:projector} I show a connection between a bad choice of the starting Hamiltonian in an adiabatic algorithm and an information theoretical lower bound on the running time of any quantum algorithm for unstructured search (see Section \ref{ch1:grover}). Second, I show in Section \ref{ch2:scrambled} that knowing the spectrum of the final Hamiltonian is not enough to determine whether finding its ground state via an AQC algorithm is feasible. The local structure determining how the final spectrum arises is essential.

\section{Introduction to AQC}
\label{ch2:aqcintro}

Consider a classical optimization problem with cost function $h(z)$ on $n$ bits, with $z=0,\dots,2^n-1$. I can construct a Hamiltonian for $n$ qubits, diagonal in the computational ($z$) basis, according to this cost function as
\begin{eqnarray}
	H_P = \sum_{z=0}^{N-1} h(z) \ket{z}. \label{problemHam}
\end{eqnarray}
The goal is now to find the ground state of $H_P$, which encodes the solution to the classical optimization problem with cost function $h(z)$. If I could prepare the ground state of $H_P$, a measurement of the spins in the $z$-basis would produce the desired optimal assignment $z$ (or one of them, if there are several). 

To prepare the ground state of $H_P$, a `beginning' Hamiltonian $H_B$ is 
introduced with a known and easy to construct ground state $\ket{g_B}$. The quantum computer is a system governed by the time dependent Hamiltonian 
\begin{eqnarray}
	H(t) &=& \left(1-\frac{t}{T}\right) H_B + \left(\frac{t}{T}\right) H_P
		= (1-s)H_B + s H_P, \label{adiabaticHam}
\end{eqnarray}
where the time $t$ runs from $0$ to $T$. It is also sometimes useful to denote $s=t/T$, with $0\leq s \leq 1$. This defines 
a path in Hamiltonian space between $H_B$ and $H_P$,
with the rate of change of $H(t)$ controlled by $T$.
The Schr\"odinger equation for a state $\ket{\psi(t)}$ reads
\begin{eqnarray}
	i \deriv{}{t} \ket{\psi(t)} = H(t) \ket{\psi(t)}. \label{Schrodinger1}
\end{eqnarray}
I now choose the state at $t=0$ to be the ground state of $H_B$
\begin{eqnarray}
	\ket{\psi(0)} = \ket{g_B}
\end{eqnarray}
and run the algorithm for time $T$ (let the system evolve according to $H(t)$). 
The power of the algorithm comes from the Adiabatic theorem.
Suited for our investigation, in this work I choose to refer to its formulation due to Jeffrey Goldstone \cite{AQC:Goldstone}:
\begin{theorem}[Adiabatic Theorem]\label{thm:adiabatic}
	Consider a time-dependent Hamiltonian $H(t)$, and the time evolution of the state $\ket{\phi_0}$, the ground state of $H(0)$, from time $t=0$ to $t=T$. 
Then
\begin{eqnarray}
	\ep(T) = \norm{ \ket{U(T,0)\phi_0(0)} - \ket{\phi_0(T)}} 
	= O \left( \frac{1}{T} \max_t \frac{M(t)^2}{g(t)^{3}}\right), 
\end{eqnarray}
where $U(T,0)$ is the time evolution operator from time $0$ to $T$ corresponding to $H(t)$, while $M(t)= T \norm{\frac{dH}{dt}}$ and $g(t)=E_1(t)-E_0(t)$ is the energy gap of $H(t)$.
\end{theorem}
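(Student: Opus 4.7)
The plan is to rescale to $s=t/T$ and work in the instantaneous eigenbasis of $H(s)$, extracting the factor $1/T$ via integration by parts against the rapidly oscillating dynamical phase. Let $\{\ket{\phi_k(s)}\}$ be a smoothly chosen instantaneous eigenbasis with eigenvalues $E_k(s)$, and expand the evolved state as $\ket{\psi(s)} = \sum_k c_k(s)\, e^{-iT\int_0^s E_k(s')\,ds'}\,\ket{\phi_k(s)}$, with initial data $c_0(0)=1$ and $c_{k\neq 0}(0)=0$. Substituting into the rescaled Schr\"odinger equation $i\,d\ket{\psi}/ds = T H(s)\ket{\psi}$ and projecting onto $\bra{\phi_n(s)}$ yields
\[ \dot c_n(s) = -\langle\phi_n|\dot\phi_n\rangle\, c_n - \sum_{k\neq n} \langle\phi_n|\dot\phi_k\rangle\, c_k(s)\, e^{iT\int_0^s(E_n-E_k)\,ds'}. \]
The diagonal term contributes only a Berry-type phase absorbable into $c_n$, so the substantive content is the off-diagonal sum, and $\ep(T)^2$ is controlled by $\sum_{n\neq 0}|c_n(1)|^2$.

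First I would establish the key identity: differentiating $H\ket{\phi_k}=E_k\ket{\phi_k}$ in $s$ and pairing with $\bra{\phi_n}$ for $n\neq k$ gives $\langle\phi_n|\dot\phi_k\rangle = \langle\phi_n|\dot H|\phi_k\rangle/(E_k - E_n)$, so each off-diagonal coupling is bounded by $\norms{\dot H(s)}/g(s) \leq M(s)/g(s)$ when $n\neq 0$ and $k=0$. Then I would integrate by parts using $\frac{d}{ds}e^{iT\int_0^s(E_n-E_0)ds'} = iT(E_n-E_0)\,e^{(\cdots)}$, which trades the $T$ from the oscillating phase for a derivative of a slowly varying amplitude and produces the crucial $1/T$ gain. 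The boundary terms are of size $M(s)/(T g(s)^2)$ at $s=0,1$; the remaining integral carries a derivative of $\langle\phi_n|\dot H|\phi_0\rangle/(E_n-E_0)^2$, which after crude bounds contributes an integrand of order $M(s)^2/g(s)^3$. Squaring, summing on $n\neq 0$ with $\sum_n|\langle\phi_n|\dot H|\phi_0\rangle|^2 \leq \norms{\dot H}^2$, and taking square roots yields the claimed $\ep(T) = O\!\left((1/T)\max_s M(s)^2/g(s)^3\right)$.

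The main obstacle will be carrying out the integration-by-parts step cleanly, since differentiating $\langle\phi_n|\dot H|\phi_0\rangle/(E_n-E_0)^2$ introduces $\ddot H$ and $\dot g$, neither of which is controlled by the hypotheses as stated. The cleanest way around this is either to specialize to the linear schedule $H(s)=(1-s)H_B+sH_P$ used in AQC (for which $\ddot H\equiv 0$, so only $\dot g$ requires attention and is bounded in terms of $M$ and the resolvent of $H$), or to bypass eigenvector derivatives entirely via Kato's projector method: writing $P_0(s)=\ket{\phi_0(s)}\bra{\phi_0(s)}$ and introducing the adiabatic generator $X(s)$ with $[X(s),H(s)]=i\dot P_0(s)$, one compares the true evolution to the unitary generated by $X$, with error terms controlled purely by resolvent norms $\lesssim 1/g$ and derivatives of $H$. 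Either route reproduces the $M(t)^2/g(t)^3$ scaling asserted in the theorem.
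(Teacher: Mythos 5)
You should first be aware that the thesis does not actually prove this theorem: it is quoted as a formulation due to Jeffrey Goldstone and cited as \cite{AQC:Goldstone}, then used as a black box throughout Chapter 2. The closest the thesis comes to a proof is Appendix \ref{appTwolevel}, which carries out essentially your first two steps for a two-level system — expansion in the instantaneous eigenbasis, the coupled equations for $c_0,c_1$, and a single integration by parts against $e^{iT\theta}$ — and concludes only that the transition probability is $O(T^{-2})$ when the gap never closes, without extracting the $M^2/g^3$ dependence. So there is no in-paper proof to compare against, and your sketch has to stand on its own; its mechanism does parallel the two-level calculation the thesis does contain.

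Judged on its own terms, the route is the standard one and the central idea (trading the rapidly oscillating dynamical phase for a factor $1/T$ by integration by parts) is correct, but as written there is a genuine gap. The estimate you describe is first order only: you treat each excited amplitude $c_n$ as driven solely by $c_0$ with $|c_0|\approx 1$, whereas the $c_k$ with $k\neq 0$ couple to one another and feed back into $c_0$; controlling these higher-order contributions is exactly what the rigorous treatments (iterated integration by parts, or the Kato intertwiner/projector construction you mention only as an escape hatch) are for, so what you have is a plan rather than a proof. Moreover, even at first order the integration by parts leaves boundary terms of size $M/(Tg^2)$ and a bulk term containing $\ddot H$; neither is dominated by $\frac{1}{T}\max_t M^2/g^3$ without extra assumptions such as a linear interpolation (so $\ddot H\equiv 0$) and $g\lesssim M$ at the endpoints — conditions that hold in the AQC setting of the chapter but are not part of the hypotheses as stated. (The $\dot g$ term you worry about is actually harmless, since $|\dot E_i|\leq\norms{\dot H}$ gives $|\dot g|\leq 2M$.) Finally, $\ep(T)$ compares vectors rather than rays, so you must fix the phase convention for $\ket{\phi_0(T)}$ (dynamical plus Berry phase) before asserting that $\ep(T)^2$ is controlled by $\sum_{n\neq 0}|c_n(1)|^2$. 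You flag most of these issues yourself, and they are also glossed over in the theorem as the thesis quotes it — which is precisely why the thesis defers to Goldstone's formulation and notes that the rigorous literature it cites only establishes scalings worse than $\Delta^{-2}$ — but a complete proof would have to carry out the Kato-style argument rather than defer to it.
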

Thus, by the adiabatic theorem, if the gap is large enough, $T$ is large enough, and the norm of the derivative of $H(t)$ is bounded, $\ket{\psi(T)}$ will have a large component in the ground state subspace of $H_P$. A measurement of $z$ can then be used to find the minimum of the classical cost function $h(z)$ I wanted to find. The algorithm is useful if the required run time $T$ does not grow exponentially with the number of qubits $n$.

Rescaling the Hamiltonian by a factor $E$ would result in a shorter required running time. However, this is a known tradeoff between energy and time. It is thus usual to think the required resources of an AQC algorithm as $T\cdot \norm{H}$, or as the required running time for a rescaled Hamiltonian whose norm is $\norm{H}=1$.

Note that if the eigenvalue $\lambda_0(t)$ is degenerate, i.e. when we have a subspace $\mathcal{L}_0(t)$ instead of a single state $\ket{\phi_0(t)}$ corresponding to the eigenvalue $\lambda_0(s)$, the formulation of the theorem can be generalized to involve the projection onto the subspace $\mathcal{L}_0(T)$ instead of the overlap with the single eigenstate $\ket{\phi_0(T)}$.

It used to be common to state the adiabatic theorem with $\ep(T)=const.$ for $T \propto \Delta^{-2}$, where $\Delta = \min_t g(t)$ is the minimum energy gap. However, several authors have recently rigorously investigated the sufficient conditions for the adiabatic theorem and only proved results in which the scaling of $T$ with the gap was worse than $\Delta^{-2}$ \cite{AQC:RuskaiGap,AQC:Ruskai2,AQC:AmbainisRegev}.
In this work, I choose to use Goldstone's formulation which for the particular Hamiltonian I investigate \eqref{adiabaticHam} implies constant overlap of the time evolved state with the ground state of $H_P$ if the control time depends on the gap like $T \propto \Delta^{-3}$.

Also note that other time-dependent paths in Hamiltonian space from $H_B$ to $H_P$ are possible, each of them defining a specific Quantum Adiabatic Algorithm. In fact, it has been shown \cite{AQC:pathchange}, that a path change in the Hamiltonian space can be used to beat certain counterexamples in which the standard, linear-path QAC requires exponential time to run. 

\subsection{The power of AQC}

Although AQC was introduced with the hope of solving local optimization problems, the verdict on the practical use of AQC for this purpose is so far 
`not discouraging'. Several numerical studies provided results that did not show exponential behavior in the required running time \cite{AQC:FarhiScience,AQC:Hogg}. Nevertheless, because of computer resource limitations, these could not be done for bit numbers greater than $n=24$, while behavior change might be expected for much larger values of $n$ \cite{surveyPropagation}. Only very recently, Young et al. \cite{AQC:SmelyanskiyEC} utilized a different numerical method, investigating the gap in a QAC algorithm for the NP-complete {\em Exact Cover} problem using a Quantum Monte Carlo method, and found polynomial behavior of the gap for up to $n=128$ qubits.

On the other hand, soon after AQC was introduced, examples of cases in which a simple linear adiabatic algorithm failed were constructed by van Dam et al. \cite{AQC:failureDMV}. However, all of these problems were circumvented in Farhi et al. \cite{AQC:pathchange} by changing the path in Hamiltonian space by adding a random term $H_C$ to the Hamiltonian as
\begin{eqnarray}
	H(t) = (1-s) H_B + s H_P + s(1-s) H_C.
	\label{ch2:pathchange}
\end{eqnarray}
In the cleverly constructed example of \cite{AQC:failureDMV}, the gap above the ground state is exponentially small only at a single point on the interpolation path in $s$. Adding the term $H_C$ avoids the bad point on the Hamiltonian path where energy levels cross, making the scaling of the required runtime favorable again.

Recently, more papers investigating cases where certain adiabatic algorithms fail appeared \cite{AQC:Znidaric06,AQC:WeiYing06}. Nevertheless, they share a common design flaw, a bad choice of the starting Hamiltonian. In Section \ref{ch2:local}, I show that poor choices for the Hamiltonian can guarantee that the algorithm will not find the minimum if the run time grows more slowly than $\sqrt{N}$. These poor choices are nonlocal and wash out any structure in the cost function to be minimized and the best that can be hoped for is Grover speedup. These failures tell us what not to do when designing quantum adiabatic algorithms.

AQC is interesting on its own as an implementation of a quantum computer, as Aharonov et al. in \cite{AQC:AvDKLLR05} (see Section \ref{ch2:aqcbqp}) proved that its computational power is equivalent to that of the conventional quantum circuit model. Because of the energy gap, the computation in the AQC model is inherently protected against noise. The robustness and fault tolerance of the AQC has been recently investigated by Childs et al. \cite{AQC:Childs}, Jordan et al. \cite{AQC:codes}, Lidar \cite{AQC:Lidar} and Lloyd \cite{AQC:Lloydunpublished}. The challenge today is to find more new, entirely adiabatic algorithms, such as the adiabatic state preparation of Aharonov and Ta-Shma \cite{AQC:AharonovPrepare}.


\section{AQC is equivalent to BQP}
\label{ch2:aqcbqp}
Aharonov et al. proved the universality of the AQC model in \cite{AQC:AvDKLLR05}:
\begin{theorem}[Equivalence of AQC and the quantum circuit model]
The model of adiabatic computation is polynomially equivalent to the standard model of quantum computation.
\end{theorem}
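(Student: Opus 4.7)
The plan is to establish both inclusions: (i) AQC $\subseteq$ BQP, which is the easier direction and amounts to simulating the continuous-time evolution $U(T,0)$ generated by the local Hamiltonian $H(t)$ of \eqref{adiabaticHam} on a standard quantum circuit via a Trotter/Suzuki decomposition of $e^{-iH(t)\delta t}$ into polynomially many elementary gates (with step size $\delta t = \text{poly}(1/T, 1/n)$ chosen to control the Trotter error); and (ii) BQP $\subseteq$ AQC, which is the substantive direction, and which I would carry out by adiabatically preparing Kitaev's history state.

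For the hard direction, given a quantum circuit $U = U_L U_{L-1}\cdots U_1$ on $n$ work qubits with $L = \text{poly}(n)$, I would work on the joint space $\cH_{work} \otimes \cH_{clock}$ introduced in Section \ref{ch1:FeynmanHC}. The problem Hamiltonian is $H_P = H_{prop} + H_{input} + H_{clock}$ in the notation of \eqref{ch1:Hprop}, \eqref{ch1:Hinput} and \eqref{ch1:Hclock5}, using Kitaev's domain-wall unary clock so that $H_P$ is $O(1)$-local. By Kitaev's analysis, $H_P$ has a unique zero-energy ground state equal to the history state $\ket{\phi}_{history}$ of \eqref{ch1:historystate} built from the canonical input $\ket{0\cdots 0}_{work}$, whose amplitude on the clock state $\ket{L}_c$ is $1/\sqrt{L+1}$. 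The beginning Hamiltonian $H_B$ is chosen to be a sum of local projectors with the easily prepared product ground state $\ket{0\cdots 0}_{work}\otimes \ket{0\cdots 0}_{c}$; for convenience I add the same $H_{clock}$ to $H_B$ so that the interpolation stays within the legal clock subspace $\cH_{legal}$. Interpolating linearly as in \eqref{adiabaticHam}, the output is read off by measuring the clock register: outcome $\ket{L}_c$ occurs with probability $\Omega(1/L)$ and leaves the work register in $U\ket{0\cdots 0}$, and $O(L)$ repetitions of the whole adiabatic sweep amplify this to constant success probability.

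The principal obstacle, and the heart of the proof, is to show that the spectral gap $g(s)$ of $H(s) = (1-s)H_B + sH_P$ is at least $1/\text{poly}(L)$ uniformly in $s \in [0,1]$. Inside $\cH_{legal}$, $H_{prop}$ is unitarily equivalent to a discrete Laplacian on a path of length $L+1$ (as worked out in Appendix \ref{appProp}), with spectral gap $\Theta(1/L^2)$; the additive $H_{input}$ lifts the degeneracy through the geometric-angle argument of Kitaev. For interior $s$, I would disentangle the computation by conjugating with the unitary $W = \sum_t U_t U_{t-1}\cdots U_1 \otimes \ket{t}\bra{t}_c$, which maps $H_P$ onto a circuit-free hopping Hamiltonian $\widetilde H_P$ on the clock line tensored with a projector on the work register. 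Since $H_B$ is already diagonal in the work basis, $W^\dagger H_B W$ differs from $H_B$ only by a trivial identity on the clock, so the interpolated Hamiltonian reduces blockwise to a family of nearest-neighbour one-dimensional problems on $L+1$ sites to which one can apply either a Rayleigh-quotient or transfer-matrix argument to obtain a gap of $\Omega(1/L^2)$. Combined with the obvious bound $\|dH/ds\| = O(\text{poly}(L))$, Goldstone's formulation of the adiabatic theorem (Theorem \ref{thm:adiabatic}) then guarantees constant overlap with the target history state for a total running time $T = \text{poly}(L) = \text{poly}(n)$, which establishes BQP $\subseteq$ AQC. Lloyd's variant would be incorporated at the end as an alternative construction that trades the unitary disentangling step for a direct quantum-walk analysis of a Hamiltonian-computer propagator, but leaves the essential gap estimate unchanged.
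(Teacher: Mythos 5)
Your route is the same as the paper's (which reviews Aharonov et al.\ \cite{AQC:AvDKLLR05}): Trotterization for the easy inclusion, and, for BQP $\subseteq$ AQC, a linear interpolation from an initial Hamiltonian with a unique product ground state to $H_{final}=H_{prop}+H_{input}+H_{clock}$ whose ground state is the history state, analyzed inside the legal clock subspace after the rotation $W=\sum_t (U_t\cdots U_1)\otimes\ket{t}\bra{t}_c$. One step, however, fails as written. The claim that ``since $H_B$ is already diagonal in the work basis, $W^\dagger H_B W$ differs from $H_B$ only by a trivial identity on the clock'' is not true for a generic such $H_B$: conjugation by $W$ replaces a work-register operator supported at clock time $t$ by its conjugate under $U_t\cdots U_1$, so $H_B$ is left invariant only if its work-dependent part acts in coincidence with the projector $\ket{0}\bra{0}_c$, where no gates have yet been applied. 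If, as your description suggests, $H_B$ pins every work qubit to $\ket{0}$ at all clock times, then $W^\dagger H(s) W$ is no longer block diagonal with respect to the work basis, the span of the states $\ket{\Psi_t}$ is not invariant for $0<s<1$, and the advertised reduction to a family of nearest-neighbour one-dimensional problems does not exist. The repair is exactly the paper's choice $H_{init}=H_{input}+H_{clock}+H_{clockinit}$: the work qubits are penalized only through $H_{input}$ \eqref{ch1:Hinput}, i.e.\ only in coincidence with $\ket{0}\bra{0}_c$, the clock is pinned by $H_{clockinit}$ acting on the clock register alone, and $H_{input}$, $H_{clock}$ appear at both endpoints of the path so they carry coefficient one throughout, which is also what keeps the wrong-input blocks uniformly penalized near $s=0$.

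Second, once the reduction is repaired, the crux of the hard direction is the uniform-in-$s$ gap bound for the resulting $(L+1)\times(L+1)$ matrices of the form $(1-s)\,\mathrm{diag}(0,1,\dots,1)+sA$, with $A$ the modified path Laplacian; Appendix \ref{appProp} only covers the endpoint $s=1$ (the spectrum of $H_{prop}$ itself). You assert $\Omega(1/L^2)$ ``by a Rayleigh-quotient or transfer-matrix argument'' without supplying it, whereas the paper's proof leans on the classical Markov-chain (conductance) analysis of \cite{AQC:AvDKLLR05}, later sharpened by Deift, Ruskai and Spitzer \cite{AQC:RuskaiGap}, precisely for this interior-$s$ estimate. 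The bound you claim is in fact correct, so this is an omission rather than an error, but it is the step on which the entire direction rests and must be argued, not asserted. (Your readout, measuring the clock and repeating the sweep $O(L)$ times, is fine and differs only cosmetically from padding the circuit with identity gates.)
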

One of the proof directions is straightforward. One can simulate time evolution with an AQC Hamiltonian on a standard quantum computer, dividing the time into small intervals and approximating the time evolution in each slice. The usual procedure utilizes the finite $n$ approximation of the Trotter-Suzuki formula \cite{Trotter1,Trotter2,Trotter3,Trotter4,Trotter5} 
\begin{eqnarray}
	e^{i(A+B)\Delta t} = \lim_{n\rightarrow \infty} \left(e^{iA\Delta t/n} e^{iB\Delta t/n}\right)^n,
\end{eqnarray}
and the resulting unitary transformations are applied using a circuit-based quantum computer. The precision of this approximation can be improved, as the AQC Hamiltonian is changing only linearly in time by integrating the Dyson series for the time-dependent Hamiltonian $H(t)$. Assume that $H(t)$ consists of two types of terms, the first containing only terms built from $\sigma^{x}_i$ and the second containing only terms built from $\sigma^z_i$. The Hamiltonian can then be written as 
\begin{eqnarray}
	H\left(t\right) = \left(1-\frac{t}{T}\right) H_x + \left(\frac{t}{T}\right) H_z,
\end{eqnarray}
where 
$H_x$ and $H_z$ contain only terms built from the respective type of operator.
As shown in Appendix \ref{appDyson}, the approximation of the time evolution operator $U(t+\Delta t,t)$ in the form of a product of exponentials, correct to order $(\Delta t)^2$, is 
\begin{eqnarray}
	W_2 &=& e^{-i a_2 H_x} e^{-i b_2 H_z} e^{-i c_2 H_x},
\end{eqnarray}
with
\begin{eqnarray}
	a_2 = c_2 &=& \left(1-\frac{t}{T} - \frac{\Delta t}{2T}\right) \frac{\Delta t}{2}, \\
	b_2 &=& \left(\frac{t}{T} + \frac{\Delta t}{2T}\right) \Delta t.
\end{eqnarray}

On the other hand, one can encode the evaluation of any quantum circuit $U$ into time evolution with a sufficiently slowly changing time-dependent AQC Hamiltonian. The proof in \cite{AQC:AvDKLLR05} uses ideas from Kitaev's construction I described in Section \ref{ch1:Kitaev5}. Consider a system with two registers, work and clock \eqref{ch1:workclock}, with the work register holding $n$ qubits $q_1,\dots, q_n$ and with the clock register consisting of $L+1$ qubits $c_0,\dots,c_L$, where $L$ is the number of gates in the circuit $U$. Encode the clock states states $\ket{t}_c$ in unary as in \eqref{ch1:unary}.
Aharonov et. al. choose the initial Hamiltonian
\begin{eqnarray}
	H_{init} = H_{input} + H_{clock} + H_{clockinit}.	
\end{eqnarray}
The term $H_{input}$ \eqref{ch1:Hinput} checks that when the clock is in the state $\ket{0}_c = \ket{1_{c_0} 0_{c_1} \dots 0_{c_L}}$, the state of the work qubits is initialized in $\ket{00\dots 0}$, by adding an energy penalty to all states for which this is not true. Second, $H_{clock}$ \eqref{ch1:Hclock5} checks whether the clock register states are proper unary-clock states. Finally, in the unary clock encoding, the term
\begin{eqnarray}
	H_{clockinit} = \ket{1}\bra{1}_{c_1}	
\end{eqnarray}
is a projector onto all states whose clock register is not in the state $\ket{0}_c$, the only legal clock state whose second clock qubit ($c_1$) is in the state $\ket{0}$. Because the three terms $H_{init}$ are non-negative, the ground state of $H_{init}$ is the state annihilated by all of them:
\begin{eqnarray}
	\ket{\phi_{init}} = \ket{\phi_0(0)} = \ket{00\dots 0} \otimes \ket{1_{c_1} 0_{c_2} \dots 0_{c_T}}.	
	\label{ch2:init}
\end{eqnarray}
The final Hamiltonian is chosen to be a modification of \eqref{ch1:HKitaev5}
\begin{eqnarray}
	H_{final} = H_{prop} + H_{input} + H_{clock},	
	\label{ch2:Hcomputer}
\end{eqnarray}
omitting the term $H_{out}$.
The term $H_{prop}$ \eqref{ch1:Hprop} has a degenerate ground state subspace containing all the possible history states \eqref{ch1:historystate} for a given quantum circuit $U$. 
The ground state of the complete $H_{final}$ is the specific history state \eqref{ch1:historystate} for the quantum circuit $U$ starting in the initial state \eqref{ch2:init}. Using classical Markov chain methods, the authors then show that the gap above the ground state of the AQC Hamiltonian 
\begin{eqnarray}
	H(\tau) = \left(1-\frac{\tau}{T}\right) H_{init} + \frac{\tau}{T} H_{final}
	\label{ch2:Haqcbqp}
\end{eqnarray}
is bounded from below by an inverse polynomial in $L$ (the length of the computation $U$). This is possible, because it is enough to look for this gap in the subspace of legal unary clock states. Moreover, we know that the eigenvalue gap between the ground state subspace and the first excited state of $H_{prop}$ is polynomially large (see Appendix \ref{appProp} for details).

The authors then encode the clock register in unary, using the 3-local construction of  \cite{lh:KR03}, and prove that $3$-local AQC is universal for quantum computation. However, the cost of this translation from a quantum circuits to AQC is steep. First, the required runtime for the adiabatic algorithm is proportional to at least the inverse gap squared. Second, the gap of the final Hamiltonian is small, because of the penalty terms necessary for the 3-local clock construction of \cite{lh:KR03}. 
The required runtime, for  a rescaled Hamiltonian with $\norm{H}=O(1)$, is then claimed to be
$T  \propto L^{14}$. However, the authors in \cite{AQC:AvDKLLR05} obtained this using a formulation of the Adiabatic theorem which was later found to be wrong, with a $\Delta^{-2}$ dependence of the runtime. If, in fact, I use the adiabatic theorem with $\Delta^{-{2+\delta}}$, their result becomes $T \propto L^{14+3\delta}$. The proper adiabatic theorem I stated above has $\delta=1$, estimating the running time for the simulation of quantum circuits by AQC as $T \propto L^{17}$. Deift, Ruskai and Spitzer \cite{AQC:RuskaiGap} later showed that one of the gap estimates in the proof (not in the adiabatic theorem) of \cite{AQC:AvDKLLR05} can be made better, resulting in the total required runtime scaling $T \propto L^{12+3\delta}$ (or, with the adiabatic theorem I use, $L^{15}$). It is possible that the adiabatic theorem with $\delta<1$ in the condition $T\propto \Delta^{-(2+\delta)}$ is provable \cite{AQC:Ruskai2}. However, for now, I stay on the safe side and use the adiabatic theorem with $\delta = 1$, as proved by Goldstone \cite{AQC:Goldstone}.

Later, Mizel et al. \cite{AQC:Mizel} showed that the required running time for simulating quantum circuits using AQC scales like $n^2 L^2$, where $L$ is the number of gates in the circuit, and $n$ is the number of work qubits. However, this running time is for a Hamiltonian whose norm is $\norm{H}=L$, so that the real running time for a rescaled Hamiltonian scales like $n^2 L^3$. 
Their construction is built on 2-particle, 4-site interactions of electrons hopping through an array of quantum dots. If we think of their model as implemented using qubits instead of hopping particles, it becomes 4-local. Also, they use the knowledge about where in their algorithm the gap is small, running the algorithm slower around that spot, allowing them to have their runtime scale as good as $T\propto \Delta^{-1}$. 

When I want to keep the interactions 3-local (for qubits), much improvement is necessary. For this, I suggest using my new 3-local Hamiltonian construction of Section \ref{ch4:q3} instead of \cite{lh:KR03}. The required running time for the AQC algorithm then becomes $T \propto L^{5+2\delta} = L^7$. This scales worse than Mizel's required runtime, but my Hamiltonian is only 3-local for qubits. Also, when implementing my static-qubit construction using real hopping particles like Mizel \cite{AQC:Mizel}, the required interactions couple 2 particles, one of which hops between two sites, giving a 2-particle, 3-site interaction. The improvement over the previous 3-local constructions is possible because my clock register encoding uses only constant norm penalty terms for illegal clock states, as opposed to penalty terms scaling like high polynomials in $L$ as in \cite{lh:KR03,lh:KKR04}. Moreover, the Hamiltonians used can be further restricted using the results of Sections \ref{ch4:triangleXZ} and \ref{ch4:train}, as explained therein. 

This, however, is not the strongest result I want to present here. Using an unpublished result of Seth Lloyd \cite{AQC:Lloydunpublished}, in the next Section I show that the resources necessary to perform universal quantum computation using 3-local AQC scale only as $
\tau \cdot \norm{H} = O(L^2\log^2 L)$, where $\tau$ is the algorithm runtime. However, in this case it is not the adiabatic theorem that gives the computational model its power, it is rather the underlying structure of the final Hamiltonian whose dynamics is a quantum walk on a line. 


\subsection{The Hamiltonian Computer model and universality}
\label{ch2:hc}
In this section, I look at the dynamics of the Hamiltonian \eqref{ch2:Haqcbqp} in more detail. The result presented here was pointed out to me by Seth Lloyd during my Part III doctoral exam\footnote{I was made to work it out on the board, and was quite puzzled by it at the time, thinking I must have gotten it wrong.} and I thank him for the access to the write-up of his unpublished work \cite{AQC:Lloydunpublished}.
He showed that even when the control time $T$ for the change in the Hamiltonian is short, the result of the quantum circuit $U$ can still be obtained from the state of the system with high probability after waiting for time $\tau$ upper bounded by a polynomial in $L$, the number of gates in the circuit $U$. 

What can happen to the state $\ket{\phi_{init}}$ \eqref{ch2:init} when we change the Hamiltonian from $H_{init}$ to $H_{final}$ faster than the required conditions in the statement of the adiabatic theorem? Let me consider an extreme scenario, jumping from $H_{init}$ to $H_{final}$ instantly.
The system will not be in the ground state of $H_{final}$, but in a superposition of some of its eigenstates, those which it could transition to under the action of the Hamiltonian \eqref{ch2:Haqcbqp}. To see what these are, let me first analyze the time evolution of $\ket{\phi_{init}}$ directly with $H_{final}$ \eqref{ch2:Hcomputer}. The term determining the dynamics of the system is $H_{prop}$, as the terms $H_{input}$ and $H_{clock}$ annihilate the initial state and all the states $H_{final}$ brings it to. 
Consider now the subspace of states which one can transition to from $\ket{s}\otimes\ket{0}_c$ using arbitrary powers of $H_{prop}$. There is one such subspace for each work register basis state $\ket{s}$, and these subspaces are not connected by $H_{prop}$. As described in much more detail in Appendix \ref{appProp}, after a unitary basis change
\begin{eqnarray}
	\ket{s} \otimes \ket{t}_c 
	\qquad \rightarrow \qquad 
	\ket{s_t} \otimes \ket{t}_c = (U_t U_{t-1} \dots U_2 U_1) \ket{s} \otimes \ket{t}_c,
\end{eqnarray}
the Hamiltonian $H_{prop}$ becomes block diagonal. The matrix form of $H_{prop}$ in each of these blocks is 
\begin{eqnarray}
	H_{prop}' = \ii - \half B - \half P_{0} - \half P_{L}, 
\end{eqnarray}
where the matrix $B$ (see Appendix \ref{d20proof}) is the adjacency matrix for a line of length $L+1$ and $P_{0}$ and $P_{L}$ are projectors onto the endpoints of this line.
This gives the system the dynamics of a quantum walk on a line.

Because the blocks for different work register basis states $\ket{s}$ are decoupled,
the specific initial state $\ket{\phi_{init}}= \ket{0}\otimes\ket{0}_c$ time evolved for some time $\tau$ with $H_{prop}$ can thus end up only in some superposition of the states 
\begin{eqnarray}
	\ket{\Psi_t} = \ket{\psi_t}\otimes \ket{t}_c = (U_t U_{t-1} \dots U_2 U_1) \ket{00\dots 0} \otimes \ket{t}_c,
	\label{ch2:psit}
\end{eqnarray}
where $\ket{\psi_t}$ is the state of the work qubits after after $t$ gates of the quantum circuit $U$ have been applied to an initial all-zero state $\ket{00\dots0}$. 
The span of all these states $\ket{\Psi_t}$ defines the subspace $\cH_{0}$.
The state of the system $\ket{\phi(\tau)}$ after time $\tau$ has no overlap with states outside of $\cH_{0}$, because the Hamiltonian $H_{prop}$ is constructed so that it doesn't couple $\cH_{0}$ to the subspaces generated by $H_{prop}$ from some different state $\ket{s\neq 0}\otimes \ket{0}_c$.

Note that the state $\ket{\psi_L}$ contains the result of the quantum circuit $U$ in its work register. I would like to find the system with the computation already done, so I boost the probability by padding the initial circuit as in Figure \ref{ch2:figurepadcircuit}.
\begin{figure}
	\begin{center}
	\includegraphics[width=3.5in]{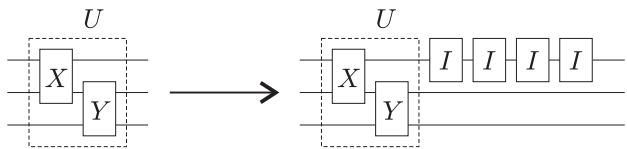}
	\end{center}
	\caption{Padding the circuit $U$ with $L$ gates by $2L$ extra identity gates. The state of the system after $L$ steps stays unchanged.}
	\label{ch2:figurepadcircuit}
\end{figure}
After the original $L$ gates of the circuit, let me add $2L$ extra identity gates, taking $L\rightarrow 3L$. Now, all the states $\ket{\Psi_t}$ with $t>L/3$ contain the output of the circuit $U$ in their work register, as the work register of $\ket{\Psi_t}$ is $U \ket{00\dots 0}$ and stays unchanged for all $t>L/3$ because I padded the circuit with identity gates.

The time evolution of $\ket{\phi_{init}}$ with $H_{prop}$ corresponds to a continuous-time quantum walk on a line with specific boundary conditions. I can get rid of the boundary conditions by wrapping the walk around on a circle, which simplifies the analysis, given in detail in Appendix \ref{d20proof}. There I show that this walk mixes so fast, that when I choose
a time $\tau$ uniformly at random between $0$ and $\tau_1 \propto L \log^2 L$  the probability to find the system in a state $\ket{t}_c$ with $t>L/3$ is close to $2/3$. Therefore, when I let the state $\ket{\phi_{init}}$ evolve for a time chosen uniformly at random between $0$ and $O(L\log^2 L)$ and measure the clock qubit $c_{L/3}$, I will obtain the result $1$ with probability close to 2/3. This signifies `success', as it implies that the clock register of the state I measured was one of the states with $t>L/3$. The work qubits then contain the result of the quantum computation $U$.

To conclude, when I change $H_{init}$ to $H_{final}$ quickly, the resulting state of the system will be some superposition of states $\ket{\Psi_t}$, close to $\ket{\Psi_0}$. When I then let the system evolve for extra time $\tau < O(L\log^2 L)$ with the Hamiltonian $H_{final}$, I can measure the clock register, and with high probability I will obtain a result that tells me that the work register now contains the result of the quantum circuit $U$. The overall resources necessary for this procedure then scale like $\tau \cdot \norm{H} = O(L^2\log^2 L)$, as the Hamiltonian I use has norm $O(L)$.

Let me now formalize the model of computation I just described. 
\begin{definition}[Hamiltonian Computer (HC) model]
A Hamiltonian Computer aiming to simulate a quantum circuit $U$ with $L$ gates is defined by a system consisting of two registers, work and clock, a $k$-local Hamiltonian $H$, a simple to prepare initial state $\ket{\psi_0}$, and a bound on the required evolution time. The clock register states must be constructed in such a way that a measurement distinguishing $\ket{t\leq L/3}_c$ (failure) and $\ket{t> L/3}_c$ (success) is simple. The computation procedure goes as follows:
\begin{enumerate}
\item Prepare the system in the initial state $\ket{\psi_0}$, the ground state of a $k$-local Hamiltonian $H_0$.
\item Turn off the initial Hamiltonian $H_{0}$, and let the system evolve with a time independent Hamiltonian $H$ for a time $\tau$ chosen uniformly at random between $0$ and $poly(L)$. 
\item Make a measurement determining whether the clock register 
is in a state $\ket{t> L/3}$, which signifies success. The probability to obtain such state is greater than $p=\half$. The result of the quantum circuit $U$ is then in the work register of the system. Restart otherwise. 
\end{enumerate}
\end{definition}

Feynman's Hamiltonian Computer with a unary clock $\ket{t}_c = \ket{0\dots 010\dots 0})_c$ (see Section \ref{ch1:FeynmanHC}) is one of the possible implementations of a HC. The initial Hamiltonian $H_0$ is one whose ground state is $\ket{\psi_0}=\ket{0_{q_1}\dots 0_{q_n}}\otimes \ket{0_{c_0}\dots 0_{c_L}}$, and the time independent Hamiltonian $H$ is given by \eqref{ch1:HFeynman1}. To determine success/failure, I need to measure the first third of the clock qubits. When I don't find the spin up there, it means I have succeeded. In the case of Feynman's Hamiltonian, the Hamiltonian itself is 4-local, and Lloyd \cite{AQC:Lloydunpublished} bounds the required runtime from above by $L^2$. As seen in the previous Section, Aharonov et al. \cite{AQC:AvDKLLR05} use the 3-local Hamiltonian $H_{final}$ of Kempe and Regev \cite{lh:KR03} to simulate quantum circuits using AQC, and require resources of the order $L^{12+3\delta}$. However, when I encode the clock register and the clock state transitions as in Section \ref{ch4:train}, the gap of $H_{final}$ whose norm is $O(L)$ scales like $L^{-2}$. The mixing of the quantum walk is then such that the required running time becomes only $O(L \log^2 L)$. When I rescale my Hamiltonian to $\norm{H}=1$, the resources necessary for this HC model are $O(L^2 \log^2 L)$. For more detail on this construction, see Section \ref{ch4:train}.

What remains is the question how is one protected against noise in such Hamiltonian Computer models. In AQC, it was the energy gap. However, as I have shown, the energy gap computed by Aharonov et al. \cite{AQC:AvDKLLR05} is not relevant, because the transitions to higher momenta eigenstates of the quantum walk still contain the result of the computation. The relevant energy barrier is the energy cost to transition from the subspace $\cH_{legal}$ to $\cH_{legal}^\perp$, in which case we would lose the computation. 
For this purpose, I use a clock-checking Hamiltonian like \eqref{ch1:Hclock5}, whose norm is $O(L)$. Lloyd \cite{AQC:Lloydunpublished} showed, that the energy gap for such Hamiltonian scales like $L^{-1}$. In summary, when simulating quantum circuits in a HC model, one requires resources $\tau \cdot \norm{H} = O(L^2\log^2 L)$, and is protected by an energy gap scaling like $L^{-1}$, where $L$ is the number of gates in the circuit. This energy barrier to decoherence can be increased, at the cost of increasing the locality of interactions, by using error correcting codes like the ones developed for AQC by Jordan et al. \cite{AQC:codes}.



\section{How to make the Quantum Adiabatic Algorithm Fail}
\label{ch2:local}

After looking at the power of AQC in general and showing how it can be used to simulate any quantum circuit effectively, let me now return to its original intended purpose: solving locally constrained optimization problems. 

Recently, a paper announcing the failure of AQC in certain cases appeared \cite{AQC:Znidaric06}, putting considerable effort into showing that the gap in a specific AQC algorithm is exponentially small. However, the authors make a bad choice of the starting Hamiltonian $H_B$ in \eqref{adiabaticHam}. In this section I prove some general results which show that with certain choices of $H_B$ or $H_P$ the algorithm will not succeed if $T$ is $o(\sqrt{N})$, that is $T/\sqrt{N}\rightarrow 0$ as $N\rightarrow\infty$, so that improvement beyond Grover speedup is impossible. I view these failures as due to poor choices for $H_B$ and $H_P$, which teach us what not to do when looking for good algorithms. I guarantee failure by removing any structure which might exist in $h(z)$ from either $H_B$ or $H_P$. By structure I mean that $z$ is written as a bit string and both $H_B$ and $H_P$ are sums of terms involving only a few of the corresponding qubits. After the preprint \cite{AQC:fail} was posted on the quant-ph archive, two similar papers \cite{AQC:WeiYing06,AQC:MoscaIoannou07} concerning this topic also appeared.

In Section \ref{ch2:projector} I show that regardless of the form of $h(z)$ if $H_B$ is a one dimensional projector onto the uniform superposition of all the basis states $\ket{z}$, then the quantum adiabatic algorithm fails. Here all the $\ket{z}$ states are treated identically by $H_B$ so any structure contained in $h(z)$ is lost in $H_B$. In Section \ref{ch2:scrambled} I consider a scrambled $H_P$ that I get by replacing the cost function $h(z)$ by $h(\pi(z))$ where $\pi$ is a permutation of $0$ to $N-1$. Here the values of $h(z)$ and 
$h(\pi(z))$ are the same but the relationship between input and output is scrambled by the permutation. This effectively destroys any structure in $h(z)$ and typically results in algorithmic failure. 

The quantum adiabatic algorithm is a special case of Hamiltonian based continuous time quantum algorithms, where the quantum state obeys (\ref{Schrodinger1}) and the algorithm consists of specifying $H(t)$, the initial state $\ket{\psi(0)}$, a run time $T$ and the operators to be measured at the end of the run. In the Hamiltonian language, the Grover problem can be recast as the problem of finding the ground state of
\begin{eqnarray}
	H_w = E(\ii-\ket{w}\bra{w}), \label{groverHam}
\end{eqnarray}
where $w$ lies between $0$ and $N-1$. The algorithm designer can apply $H_w$, but in this oracular setting, $w$ is not known. In 
\cite{AQC:FG98analog}
 the following result was proved. Let 
\begin{eqnarray}
	H(t) = H_D(t) + H_w, \label{analogHam}
\end{eqnarray}
where $H_D$ is any time dependent ``driver'' Hamiltonian independent of $w$. Assume also that the initial state $\ket{\psi(0)}$ is independent of $w$. For each $w$ we want the algorithm to be successful, that is $\ket{\psi(T)}=\ket{w}$. It then follows that
\begin{eqnarray}
	T \geq \frac{\sqrt{N}}{2E}. \label{Groverscaling}
\end{eqnarray}
The proof of this result is a continuous-time version of the BBBV oracular proof \cite{BBBV}. My proof techniques in this paper are similar to the methods used to prove the result just stated.


\subsection{General search starting with a one-dimensional projector}
\label{ch2:projector}

In this section I consider a completely general cost function $h(z)$ with $z=0,\dots,N-1$. The goal is to use the quantum adiabatic algorithm to find the ground state of $H_P$ given by (\ref{problemHam}) with $H(t)$ given by (\ref{adiabaticHam}). Let
\begin{eqnarray}
	\ket{s} = \frac{1}{\sqrt{N}} \sum_{z=0}^{N-1} \ket{z}
\end{eqnarray}
be the uniform superposition over all possible values $z$. If I pick
\begin{eqnarray}
	H_B = E (\ii- \ket{s}\bra{s}) \label{groverstart}
\end{eqnarray}
and $\ket{\psi(0)}=\ket{s}$, then the adiabatic algorithm fails in the following sense: 
\begin{theorem}
\label{ch2:t1}
Let $H_P$ be diagonal in the $z$ basis with a ground state subspace of dimension $k$. Let
\[
	H(t) = (1-t/T) E \left( \ii - \ket{s}\bra{s}\right) + (t/T) H_P.
\]
Let $P$ be the projector onto the ground state subspace of $H_P$ and let $b>0$ be the success
probability, that is, $b=\bra{\psi(T)}P\ket{\psi(T)}$. Then
\[
	T \geq \frac{b}{E}\sqrt{\frac{N}{k}} - \frac{2\sqrt{b}}{E}.
\]
\end{theorem}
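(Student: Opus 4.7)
The plan is to exploit the extremely restrictive structure of $H(t)$: after shifting $H_P$ by its ground-state energy $\lambda_0$ (a change that only multiplies $\ket{\psi(t)}$ by a $z$-independent global phase and hence leaves every probability invariant), I may assume $h(z)=0$ for every $z$ in the ground-state subspace of $H_P$. The key observation is then that, in the $z$-basis, the Hamiltonian $H(t)=(1-t/T)E\,\ii-(1-t/T)E\ket{s}\bra{s}+(t/T)H_P$ is diagonal apart from a single flat rank-one perturbation: every off-diagonal entry equals $-(1-t/T)E/N$. Consequently all coupling between different basis amplitudes $c_z(t)\equiv\braket{z}{\psi(t)}$ goes through one common source $S(t)\equiv\sum_z c_z(t)=\sqrt{N}\braket{s}{\psi(t)}$, and the $N$-dimensional Schr\"odinger system decouples into $N$ independent scalar driven linear ODEs.

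Each component then satisfies
\[
\dot c_z(t)=-i\bigl[(1-t/T)E+(t/T)h(z)\bigr]c_z(t)+i(1-t/T)\frac{E}{N}S(t),
\]
which I solve with the integrating factor $e^{i\mu(z,t)}$ where $\mu(z,t)=E(t-t^2/(2T))+h(z)t^2/(2T)$. With $c_z(0)=1/\sqrt{N}$, this yields the explicit formula
\[
c_z(T)=e^{-i\mu(z,T)}\!\left(\frac{1}{\sqrt{N}}+i\frac{E}{N}\int_0^T e^{i\mu(z,t)}(1-t/T)S(t)\,dt\right).
\]
Because $h(z)=0$ throughout the ground-state subspace, $\mu(z,t)$ collapses there to one $z$-independent function $\mu_0(t)$, so all ground-state amplitudes coincide with a single common value $c_*(T)$; the hypothesis $b=\bra{\psi(T)}P\ket{\psi(T)}=k|c_*(T)|^2$ then forces $|c_*(T)|=\sqrt{b/k}$.

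To finish I bound the integral $\eta\equiv i(E/N)\int_0^T e^{i\mu_0(t)}(1-t/T)S(t)\,dt$ using only unitarity: since $|S(t)|=\sqrt{N}|\braket{s}{\psi(t)}|\le\sqrt{N}$, one obtains $|\eta|\le ET/(2\sqrt{N})$. Combined with the triangle inequality applied to $|c_*(T)|=|1/\sqrt{N}+\eta|=\sqrt{b/k}$, this rearranges into a lower bound on $T$ of the stated form. The conceptual point, and the reason $\|H_P\|$ never appears, is that on the ground states $h(z)$ vanishes in both the integrating-factor phase and the diagonal coefficient, while the driving quantity $S(t)$ is controlled entirely by unitarity of the full evolution; the progress toward the ground-state subspace is therefore rate-limited purely by the scale $E$ of the ``agnostic'' driver $H_B$. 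The main technical step is just recognizing that the rank-one structure of $H_B$ converts the would-be $N$-dimensional coupled Schr\"odinger problem into $N$ decoupled scalar ODEs driven by the single unitarity-bounded source $S(t)$.
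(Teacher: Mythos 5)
Your argument is correct, but it is a genuinely different route from the paper's. The paper proves Theorem \ref{ch2:t1} by a hiding argument in the spirit of the continuous-time BBBV bound: it introduces the family of beginning Hamiltonians $H_x=V_xH_0V_x^{\dagger}$ obtained from diagonal phase unitaries, notes that the success probability is $x$-independent because $V_x$ commutes with $P$ and $H_P$, and then argues that running the evolution backwards from the ground-state subspace would identify $x$ (an unstructured-search task); the quantitative step compares each backward evolution to a common reference evolution generated by $H_R(t)=(1-t/T)E+(t/T)H_P$, bounds the derivative of the summed deviation $S(t)$, and extracts the $\sqrt{Nk}$ factors via Cauchy--Schwarz. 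You instead integrate the Schr\"odinger equation directly, exploiting that $H_B$ is a flat rank-one perturbation of a diagonal operator: each amplitude $c_z$ satisfies a scalar ODE driven by the single source $S(t)=\sqrt{N}\,\braket{s}{\psi(t)}$, the integrating-factor phase collapses on the (energy-shifted) ground-state subspace so that all ground-state amplitudes are equal, and unitarity bounds the source. Your calculation is more elementary, needs no auxiliary ensemble or reference Hamiltonian, and actually yields the stronger inequality $T\geq \frac{2}{E}\left(\sqrt{bN/k}-1\right)$, i.e.\ leading coefficient $2\sqrt{b}$ rather than $b$; the only thing to add is the one-line comparison showing this implies the stated bound, namely that $\frac{2\sqrt{b}}{E}\sqrt{N/k}-\frac{2}{E}\geq \frac{b}{E}\sqrt{N/k}-\frac{2\sqrt{b}}{E}$ whenever the right-hand side is positive (when it is not, the theorem is vacuous since $T\geq 0$), because ``rearranges into the stated form'' glosses over the differing constants. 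What the paper's approach buys is conceptual and methodological mileage: it makes the failure manifestly a Grover/oracle lower bound run in reverse, and the same hiding-plus-deviation technique is reused for Theorem \ref{ch2:t2} on scrambled problem Hamiltonians, whereas your computation leans on the specific structure (rank-one $H_B$ built from $\ket{s}$, $H_P$ diagonal in the same basis in which $\ket{s}$ is unbiased) and would not obviously survive such generalizations; what your approach buys is a shorter, self-contained proof with an explicit integral representation and a tighter constant.
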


\begin{proof}
Keeping $H_P$ fixed, introduce $N-1$ additional beginning Hamiltonians as follows.
For $x=0,\dots,N-1$ let $V_x$ be a unitary operator diagonal in the $z$ basis with
\[
	\bra{z}V_x\ket{z} = e^{2\pi i z x /N}
\]
and let
\[
	\ket{x} = V_x \ket{s} = \frac{1}{\sqrt{N}} \sum_{z=0}^{N-1} e^{2\pi i z x /N}\ket{z}
\]
so that the $\{\ket{x}\}$ form an orthonormal basis. Note also that
\[
	\ket{x=0}=\ket{s}.
\]
We now define 
\[
  H_x(t) = ( 1-t/T ) E (\ii-\ket{x}\bra{x}) + (t/T) H_P,
\]
with corresponding evolution operator $U_x (t_2,t_1)$. 
Note that $H(t)$ above is $H_0(t)$ with the corresponding evolution operator $U_0$. For each $x$ let the system 
evolve with $H_x(t)$ from the ground state of $H_x(0)$, which is $\ket{x}$. Note that 
$H_x=V_x H_0 V_x^{\dagger}$ and $U_x=V_x U_0 V_x^{\dagger}$. Let $\ket{f_x}=U_x(T,0)\ket{x}$. For
each $x$ the success probability is $\bra{f_x}P\ket{f_x}$, which is equal to $b$ since $P$ commutes
with $V_x$.
The key point is that if I run the Hamiltonian evolution with $H_x$ backwards in time, I would then be finding $x$, that is, solving the Grover problem. However, this should not be possible unless the run time $T$ is of order $\sqrt{N}$. 

Let $U_R$ be the evolution operator corresponding to an $x$-independent reference Hamiltonian
\[
  H_R(t) = (1-t/T) E + (t/T) H_P.
\]
Let $\ket{g_x} = \frac{1}{\sqrt{b}}P\ket{f_x}$ be the normalized component of $\ket{f_x}$ in the
ground state subspace of $H_P$. Consider the difference in backward evolution from $\ket{g_x}$
with Hamiltonians $H_x$ and $H_R$, and sum on $x$,
\[
  S(t) = \sum_x \norm{U_x^{\dagger}(T,t)\ket{g_x}-U_R^{\dagger}(T,t)\ket{g_x}}^2.
\]
Clearly $S(T)=0$, and
\begin{eqnarray*}
  S(0) &=& \sum_x \norm{ U_x^{\dagger}(T,0)\ket{g_x} - U_R^{\dagger}(T,0)\ket{g_x} }^2.
\end{eqnarray*}
Now $\ket{g_x}=\sqrt{b}\ket{f_x}+\sqrt{1-b}\ket{f_x^{\perp}}$ where $\ket{f_x^{\perp}}$ is orthogonal
to $\ket{f_x}$. Since $U_x^{\dagger}(T,0)\ket{f_x}=\ket{x}$,
\begin{eqnarray*}
       S(0) = \sum_x \norm{ \sqrt{b}\ket{x} + \sqrt{1-b}\ket{x^\perp} - \ket{i_x} }^2,
\end{eqnarray*}
where for each $x$, $\ket{x^\perp}$ and $\ket{i_x}$ are normalized states with $\ket{x^\perp}$ orthogonal 
to $\ket{x}$. Since $H_R$ commutes with $H_P$, $\ket{i_x}=U_R^{\dagger}(T,0)\ket{g_x}$ 
is an element of the $k$-dimensional ground state subspace of $H_P$. Then
\begin{eqnarray*}
  S(0) &=& 2N - \sum_x \left[ \sqrt{b} \scalar{x}{i_x} + \sqrt{1-b}\langle x^\perp | i_x\rangle+c.c.\right] \\
	&\geq& 2N - 2\sqrt{b} \sum_x \bigabs{\scalar{x}{i_x}} - 2N\sqrt{1-b}.
\end{eqnarray*}
Choosing a basis $\{\ket{G_j}\}$ for the $k$ dimensional ground state subspace of $H_P$ and writing 
$\ket{i_x} = a_{x1}\ket{G_1} + \cdots + a_{xk}\ket{G_k}$ gives
\begin{eqnarray}
  \sum_x \bigabs{\scalar{x}{i_x}} 
  &\leq& \sum_{x,j} \left|a_{xj}\right| \cdot \bigabs{\scalar{x}{G_j}} \label{boundix} \\
  &\leq& \sqrt{ \sum_{x,j} \left|a_{xj}\right|^2 \sum_{x',j'} \bigabs{\scalar{x'}{G_{j'}}}^2 } 
	= \sqrt{Nk}. \nonumber
\end{eqnarray}
Thus
\begin{eqnarray}
  S(0)\geq 2N(1-\sqrt{1-b})- 2\sqrt{b}\sqrt{Nk}. \label{s0eqn}
\end{eqnarray}
Using the Schr\"{o}dinger equation, one can find the time derivative of $S(t)$:
\begin{eqnarray*}
  \deriv{}{t}S(t) &=& -\sum_x \deriv{}{t} \left[ \bra{g_x}U_x(T,t)U_R^{\dagger}(T,t) \ket{g_x} + c.c. \right]  \\
	&=& -i \sum_x \bra{g_x}U_x(T,t)[H_x(t)-H_R(t)]U_R^{\dagger}(T,t) \ket{g_x} + c.c. \\
	&=& -2\,\textrm{Im} \sum_x (1-t/T)E \bra{g_x}U_x(T,t) \ket{x}\bra{x} U_R^{\dagger}(T,t) \ket{g_x}.
\end{eqnarray*}
Now 
\begin{eqnarray*}
	\left| \deriv{}{t} S(t) \right| &\leq& 2E (1-t/T) \sum_x
		\left| \bra{g_x}U_x(T,t) \ket{x}\bra{x} U_R^{\dagger}(T,t) \ket{g_x} \right| \\
	&\leq& 2E (1-t/T) \sum_x
		\left| \bra{x} U_R^{\dagger}(T,t) \ket{g_x} \right|.
\end{eqnarray*}
The same technique as in (\ref{boundix}) gives
\begin{eqnarray*}
	\left| \deriv{}{t} S(t) \right| &\leq& 2E (1-t/T) \sqrt{Nk}.
\end{eqnarray*}
Therefore
\begin{eqnarray*}
	\int^T_0 \left| \deriv{}{t} S(t) \right| \textrm{d}t \leq ET \sqrt{Nk}.
\end{eqnarray*}
Now $S(0)\leq S(T) + \int^T_0 \left| \deriv{}{t} S(t) \right| \textrm{d}t$ and $S(T)=0$ so 
\[
	S(0)\leq ET\sqrt{Nk}.
\]
Combining this with (\ref{s0eqn}) gives
\[
   ET\sqrt{Nk} \geq 2N (1-\sqrt{1-b}) - 2\sqrt{b} \sqrt{Nk},
\]
which implies what I wanted to prove:
\[
   T \geq \frac{b}{E}\sqrt{\frac{N}{k}} - \frac{2\sqrt{b}}{E}.
\]
\end{proof}

How to interpret Theorem \ref{ch2:t1}? The goal is to find the minimum of the cost function $h(z)$ using the quantum adiabatic algorithm. It is natural to pick for $H_B$ a Hamiltonian whose ground state is $\ket{s}$, the uniform superposition of all $\ket{z}$ states. However
if I pick $H_B$ to be the one dimensional projector $E(\ii-\ket{s}\bra{s})$ the algorithm will not find the ground state if $T/\sqrt{N}$ goes to $0$ as $N$ goes to infinity. The problem is that $H_B$ has no structure and makes no reference to $h(z)$. My hope is that the algorithm might be useful for interesting computational problems if $H_B$ has structure that reflects the form of $h(z)$. 

Note that Theorem \ref{ch2:t1} explains the algorithmic failure discovered by 
\v{Z}nidari\v{c} and Horvat \cite{AQC:Znidaric06} for a particular set of $h(z)$.

For a simple but convincing example of the importance of the choice of $H_B$, suppose I take a decoupled $n$ bit problem which consists of $n$ clauses each acting on one bit, say for each bit $j$ 
\begin{eqnarray*}
  h_j(z)=\left\{
    \begin{array}{rl}
	0 & \quad \textrm{if} \, z_{j}=0, \\
	1 & \quad \textrm{if} \, z_{j}=1,
    \end{array}
  \right. 
\end{eqnarray*}
so 
\begin{eqnarray}
	h(z)=z_1+z_2+\dots+z_n  \label{decoupledcost}.
\end{eqnarray}
Let me pick a beginning Hamiltonian reflecting the bit structure of the problem,
\begin{eqnarray}
	H_B = \sum_{j=1}^{n} \frac{1}{2}\left(1-\sigma_x^{(j)}\right). \label{sumSX}
\end{eqnarray}
The ground state of $H_B$ is $\ket{s}$,
The quantum adiabatic algorithm acts on each bit independently, producing a success probability of
\[
	p = \left(1-q(T)\right)^n,
\]
where $q(T)\rightarrow 0$ as $T\rightarrow\infty$ is the transition probability between the ground state and the excited state of a single qubit. As long as $n q(T)\rightarrow const.$ I have a constant probability of success. This can be achieved for $T$ of order $\sqrt{n}$, because for a two level system with a nonzero gap, the probability of a transition is $q(T)=O(T^{-2})$. (For details, see Appendix \ref{appTwolevel}.) However, I know from Theorem \ref{ch2:t1} that a poor choice of $H_B$ would make the quantum adiabatic algorithm fail on this simple decoupled $n$ bit problem by destroying the bit structure. 

Next, suppose the satisfiability problem I am trying to solve has clauses involving say 3 bits. If clause $c$ involves bits $i_c$, $j_c$ and $k_c$ I may define the clause cost function
\begin{eqnarray*}
  h_c(z)=\left\{
    \begin{array}{rl}
	0 & \quad \textrm{if} \,\, z_{i_c},z_{j_c},z_{k_c} \, \textrm{satisfy clause} \, c, \\
	1 & \quad \textrm{otherwise}.
    \end{array}
  \right.
\end{eqnarray*}
The total cost function is then
\begin{eqnarray*}
  h(z) = \sum_c h_c(z).
\end{eqnarray*}
To get $H_B$ to reflect the bit and clause structure I may pick
\begin{eqnarray*}
  H_{B,c} = \frac{1}{2}\left[(1-\sigma_x^{(i_c)}) + (1-\sigma_x^{(j_c)}) + (1-\sigma_x^{(k_c)}) \right]
\end{eqnarray*}
with
\begin{eqnarray}
  H_B = \sum_c H_{B,c}. \label{clauseham}
\end{eqnarray}
In this case the ground state of $H_B$ is again $\ket{s}$. With this setup, Theorem \ref{ch2:t1} does not apply.

\begin{figure}
	\begin{center}
	\includegraphics[width=4.5in]{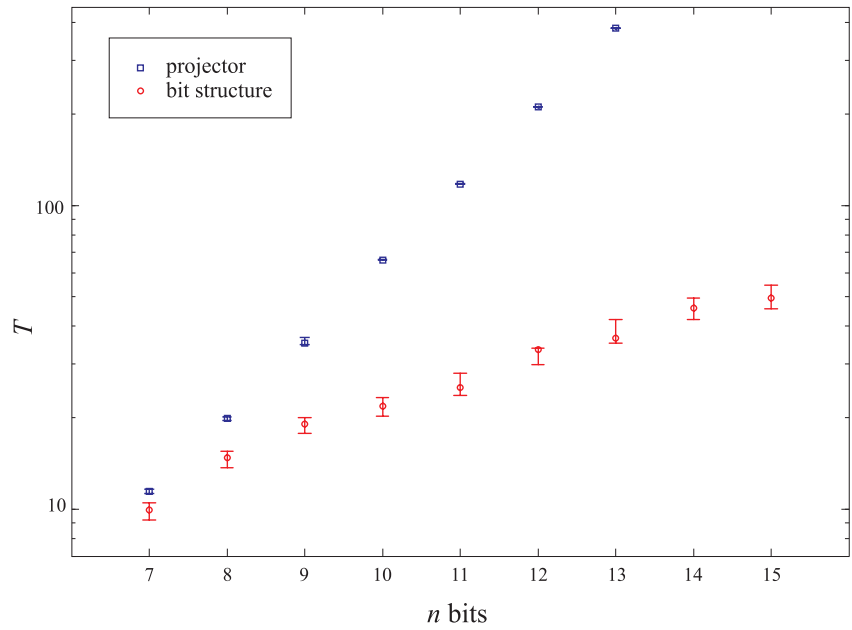}
	\end{center}
	\caption{Median required run time $T$ versus bit number. At each bit number there are 50 random instances of Exact Cover with a single satisfying assignment. I choose the required run time to be the value of $T$ for which quantum adiabatic algorithm has success probability between 0.2 and 0.21.	 For the projector beginning Hamiltonian I use \eqref{groverstart} with $E=n/2$.	 The plot is log-linear. The error bars show the 95\% confidence interval for the true medians.}
	\label{loglin}
\end{figure}

I did a numerical study of a particular satisfiability problem, Exact Cover. For this problem if clause $c$ involves bits $i_c$, $j_c$ and $k_c$, the cost function is 
\begin{eqnarray*}
  h_c(z)=\left\{
    \begin{array}{rl}
	0 & \quad \textrm{if} \,\, z_{i_c}+z_{j_c}+z_{k_c}=1, \\
	1 & \quad \textrm{otherwise}.
    \end{array}
  \right.
\end{eqnarray*}
Some data is presented in FIG. 1. Here we see that with a structured beginning Hamiltonian the required run times are substantially lower than with the projector $H_B$.


\subsection{Search with a scrambled problem hamiltonian}
\label{ch2:scrambled}

In the previous section I showed that removing all structure from $H_B$ dooms the quantum adiabatic algorithm to failure. In this section I remove structure from the problem to be solved ($H_P$) and show that this leads to algorithmic failure. Let $h(z)$ be a cost function whose minimum I seek. Let $\pi$ be a permutation of $0,1,\dots,N-1$ and let
\[
	h^{[\pi]}(z)=h\left(\pi^{-1}(z)\right).
\]
I will show that no continuous time quantum algorithm (of a very general form) can find the minimum
of $h^{[\pi]}$ for even a small fraction of all $\pi$ if $T$ is $o(\sqrt{N})$. Classically, this problem
takes order $N$ calls to an oracle.

Without loss of generality let $h(0)=0$, and $h(1),h(2),\dots,h(N-1)$ all be positive. For any permutation $\pi$ of $0,1,\dots,N-1$, define a problem Hamiltonian $H_{P,\pi}$, diagonal in the $z$ basis, as 
\[
	H_{P,\pi} = \sum_{z=0}^{N-1} h^{[\pi]}(z) \ket{z}\bra{z} 
		=\sum_{z=0}^{N-1} h(z) \ket{\pi(z)}\bra{\pi(z)}.
\]
Now consider the Hamiltonian 
\begin{eqnarray}
	H_{\pi}(t)=H_D(t)+c(t) H_{P,\pi} \label{continuoustimealgorithm}
\end{eqnarray}
for an arbitrary $\pi$-independent driving Hamiltonian
$H_D(t)$ with $|c(t)|\leq1$ for all $t$. Using this composite Hamiltonian, 
evolve the $\pi$-independent starting state $\ket{\psi(0)}$ for time $T$, reaching the state $\ket{\psi_\pi(T)}$. This setup is more general than the quantum adiabatic algorithm since I do not require $H_D(t)$ or $c(t)$ to be slowly varying. Success is achieved if the overlap of $\ket{\psi_{\pi}(T)}$ 
with $\ket{\pi(0)}$ is large.

I first show
\begin{lemma}
\begin{eqnarray}
	\sum_{\pi,\pi'} \Big\| \ket{\psi_\pi(T)}-\ket{\psi_{\pi'}(T)} \Big\|^2 \leq 4h^* T N!\sqrt{N-1}, \label{permutationlemma}
\end{eqnarray}
where the sum is over all pairs of permutations $\pi,\pi'$ that differ by a single transposition 
involving $\pi(0)$, and $h^*=\sqrt{\sum h(z)^2 / (N-1)}$. 
\end{lemma}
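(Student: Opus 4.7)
The plan is to combine the Duhamel (Dyson) identity with Cauchy--Schwarz on the sum over pairs. Write $\delta_{\pi,\pi'}(t) = \ket{\psi_\pi(t)}-\ket{\psi_{\pi'}(t)}$ and let $U_\pi(t,s)$ be the evolution operator from $s$ to $t$ under $H_\pi$. The Duhamel identity gives
\[
	\delta_{\pi,\pi'}(T) = -i\int_0^T c(s)\,U_\pi(T,s)\bigl[H_{P,\pi}-H_{P,\pi'}\bigr]\ket{\psi_{\pi'}(s)}\,ds.
\]
The key algebraic observation is that if $\pi,\pi'$ agree except for swapping the values at positions $0$ and $k$ (so $\pi(0)=\pi'(k)$ and $\pi(k)=\pi'(0)$), the hypothesis $h(0)=0$ kills all other contributions, leaving the rank-two difference
\[
	H_{P,\pi}-H_{P,\pi'} = h(k)\bigl[\kets{\pi(k)}\bras{\pi(k)} - \kets{\pi(0)}\bras{\pi(0)}\bigr].
\]
Unitarity of $U_\pi$, the bound $|c(s)|\leq 1$, and orthogonality of the two rank-one projectors yield
\[
	\norm{\delta_{\pi,\pi'}(T)} \leq \int_0^T h(k)\sqrt{|\scalar{\pi(0)}{\psi_{\pi'}(s)}|^2 + |\scalar{\pi(k)}{\psi_{\pi'}(s)}|^2}\,ds.
\]

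Next, since $\norm{\delta_{\pi,\pi'}(T)}\leq 2$, I use the elementary inequality $\norm{\delta}^2 \leq 2\norm{\delta}$ to pass from a bound on $\norm{\delta}$ to one on $\norm{\delta}^2$ while preserving linearity in $T$. Squaring the integral directly and applying Cauchy--Schwarz in time would instead introduce an unwanted extra factor of $T$. This step gives
\[
	\sum_{\pi,\pi'}\norm{\delta_{\pi,\pi'}(T)}^2 \leq 2\int_0^T \sum_{\pi,\pi'} h(k)\sqrt{|\scalar{\pi(0)}{\psi_{\pi'}(s)}|^2 + |\scalar{\pi(k)}{\psi_{\pi'}(s)}|^2}\,ds.
\]

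For the final step I parameterize pairs by $(\pi',k)$ with $k\in\{1,\ldots,N-1\}$ (so $\pi=\pi'\circ(0\,k)$, $\pi(0)=\pi'(k)$, $\pi(k)=\pi'(0)$) and apply Cauchy--Schwarz on the sum. The weighted sum $\sum_{(\pi',k)}h(k)^2 = N!(N-1)(h^*)^2 = N!\sum_{z=1}^{N-1}h(z)^2$, while for each fixed $\pi'$ the orthonormal basis $\{\ket{\pi'(k)}\}_{k=0}^{N-1}$ obeys the completeness relation $\sum_{k=0}^{N-1}|\scalar{\pi'(k)}{\psi_{\pi'}(s)}|^2 = 1$. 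Combining these with Cauchy--Schwarz and integrating over $s\in[0,T]$ yields the announced $4h^* T N!\sqrt{N-1}$.

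The main obstacle is the bookkeeping in the last Cauchy--Schwarz: naively bounding each amplitude $|\scalar{\pi'(k)}{\psi_{\pi'}}|^2$ by $1$ term-by-term, or using $\sqrt{a+b}\leq \sqrt{a}+\sqrt{b}$ and then handling the two pieces independently, produces $N!(N-1)h^*$ instead of $N!\sqrt{N-1}\,h^*$ and forfeits a factor of $\sqrt{N-1}$. The correct organization is to apply Cauchy--Schwarz in the joint index $(\pi',k)$ so that the amplitudes on $\ket{\pi'(k)}$ enjoy the completeness bound $\sum_k |\scalar{\pi'(k)}{\psi_{\pi'}}|^2\leq 1$ rather than being bounded individually, while the $|\scalar{\pi'(0)}{\psi_{\pi'}}|$ contribution uses only the trivial unit-modulus bound. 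This careful matching of the weight $h(k)^2$ against the completeness-controlled amplitudes is what extracts $\sqrt{N-1}$ and in turn gives the Grover-style lower bound on $T$ in the theorem to follow.
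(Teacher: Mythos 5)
Your Duhamel identity and the reduction $\norm{\delta}^2\le 2\norm{\delta}$ are both fine, but the final Cauchy--Schwarz step does not produce the claimed $\sqrt{N-1}$, and this is a genuine gap rather than bookkeeping. Once you take the vector norm of $[H_{P,\pi}-H_{P,\pi'}]\ket{\psi_{\pi'}(s)}$ inside the integral, both surviving amplitudes refer to the \emph{same} state $\ket{\psi_{\pi'}(s)}$: the summand is $h(k)\sqrt{\bigabs{\scalar{\pi'(k)}{\psi_{\pi'}(s)}}^2+\bigabs{\scalar{\pi'(0)}{\psi_{\pi'}(s)}}^2}$. The second amplitude is independent of $k$, so in the joint Cauchy--Schwarz over $(\pi',k)$ it enters the ``completeness'' sum $N-1$ times: $\sum_{\pi',k}\big(\bigabs{\scalar{\pi'(k)}{\psi_{\pi'}}}^2+\bigabs{\scalar{\pi'(0)}{\psi_{\pi'}}}^2\big)\le N!\,\big(1+(N-1)\big)=N!\,N$, and your bound becomes $h^*N!\sqrt{N(N-1)}$, i.e.\ of order $h^*N!\,N$ --- no better than the naive split you were trying to avoid, and a factor $\sim\sqrt{N}$ short of $2h^*N!\sqrt{N-1}$. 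Worse, the intermediate quantity itself can genuinely be of order $h^*N!(N-1)$ (take flat $h$ and all $\ket{\psi_{\pi'}(s)}$ close to $\ket{\pi'(0)}$), so no cleverer treatment of the last sum can rescue the argument: the loss is incurred at the moment you bound $\norm{\delta(T)}$ by the norm of the perturbation applied to $\ket{\psi_{\pi'}}$ alone.

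The paper's proof keeps the cross structure that your norm step destroys. It differentiates $\norm{\ket{\psi_\pi(t)}-\ket{\psi_{\pi'}(t)}}^2$, which produces the matrix element $\bra{\psi_\pi}(H_\pi-H_{\pi'})\ket{\psi_{\pi'}}=c(t)\,h(a)\big(\scalar{\psi_\pi}{\pi(a)}\scalar{\pi(a)}{\psi_{\pi'}}-\scalar{\psi_\pi}{\pi'(a)}\scalar{\pi'(a)}{\psi_{\pi'}}\big)$, and in each product it bounds by $1$ the factor \emph{not} matched to ``its own'' evolved state, keeping $\bigabs{\scalar{\psi_\pi}{\pi(a)}}$ from the first term and $\bigabs{\scalar{\pi'(a)}{\psi_{\pi'}}}$ from the second. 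Then, for each fixed permutation, the kept amplitudes range over an orthonormal family as $a$ varies, so Cauchy--Schwarz plus completeness gives $\sum_a h(a)\bigabs{\scalar{\psi_\pi}{\pi(a)}}\le\sqrt{\sum_a h(a)^2}=h^*\sqrt{N-1}$ per permutation, a uniform derivative bound $4h^*N!\sqrt{N-1}$, and integration over $[0,T]$ finishes. If you want to stay with an integral (Duhamel-style) formulation, apply it to the overlap $\scalar{\psi_\pi(T)}{\psi_{\pi'}(T)}$ (equivalently, integrate the paper's derivative identity) rather than to the difference vector's norm; only then can each projector's amplitude be paired with the evolved state for which completeness over the transposition index is available.
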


\begin{proof}
For two different permutations $\pi$ and $\pi'$ let $\ket{\psi_{\pi}(t)}$ be the state obtained
by evolving from $\ket{\psi(0)}$ with $H_{\pi}$ and let $\ket{\psi_{\pi'}(t)}$ be the state obtained
by evolving from $\ket{\psi(0)}$ with $H_{\pi'}$.

Now
\begin{eqnarray*}
  \deriv{}{t} \Big\| \ket{\psi_\pi(t)}-\ket{\psi_{\pi'}(t)} \Big\|^2 
	&=& -\deriv{}{t} \scalar{\psi_{\pi}(t)}{\psi_{\pi'}(t)} + c.c. \\
	&=& i \bra{\psi_{\pi}(t)}(H_{\pi}(t)-H_{\pi'}(t))\ket{\psi_{\pi'}(t)} + c.c. \\
	&\leq& 2 \Big| \bra{\psi_{\pi}(t)}(H_{\pi}(t)-H_{\pi'}(t))\ket{\psi_{\pi'}(t)} \Big|.
\end{eqnarray*}
Consider the case when $\pi$ and $\pi'$ differ by a single transposition involving $\pi(0)$. 
Specifically, $\pi'=\pi \circ (a\leftrightarrow 0)$ for some $a$. 
Now if $\pi(0) = i$ and $\pi(a) =j$, we have $\pi'(0) =j$ and $\pi'(a) =i$. 
Therefore, since $h(0)=0$, 
\begin{eqnarray*}
  H_{P,\pi}-H_{P,\pi'} = c(t) h(a) \left( \ket{j}\bra{j} - \ket{i}\bra{i} \right)
			= c(t) h(a) \left( \ket{\pi(a)}\bra{\pi(a)} - \ket{\pi'(a)}\bra{\pi'(a)} \right),
\end{eqnarray*}
so that
\begin{eqnarray*}
  \deriv{}{t} \sum_{\pi,\pi'} \Big\| \ket{\psi_\pi(t)}-\ket{\psi_{\pi'}(t)} \Big\|^2 
  &\leq&
	2 |c(t)| \sum_{\pi,\pi'} h(a) \Big| \bra{\psi_{\pi}(t)} 
		\left( \ket{\pi(a)}\bra{\pi(a)} - \ket{\pi'(a)}\bra{\pi'(a)} \right)  
		\ket{\psi_{\pi'}(t)} \Big|.
\end{eqnarray*}
This further simplifies to
\begin{eqnarray*}
  \deriv{}{t} \sum_{\pi,\pi'} \Big\| \ket{\psi_\pi(t)}-\ket{\psi_{\pi'}(t)} \Big\|^2
  &\leq&
	2 \sum_{\pi,\pi'} h(a) \left(
		\bigabs{\scalar{\psi_{\pi}(t)}{\pi(a)}} +
		\bigabs{\scalar{\pi'(a)}{\psi_{\pi'}(t)}} \right) \\
  &=&
	2 \sum_{\pi} \sum_{a\neq 0} h(a) 
		\bigabs{\scalar{\psi_{\pi}(t)}{\pi(a)}} +
	2 \sum_{\pi'} \sum_{a\neq 0} h(a) 
		\bigabs{\scalar{\pi'(a)}{\psi_{\pi'}(t)}}  \\
  &=&
	4 \sum_{\pi} \sum_{a\neq 0} h(a) \bigabs{\scalar{\psi_{\pi}(t)}{\pi(a)}} \\
  &=&
	4 \sum_{\pi} \sum_{a} h(a) \bigabs{\scalar{\psi_{\pi}(t)}{\pi(a)}} \\
  &\leq&
	4 \sum_{\pi} \sqrt{\sum_a h(a)^2} = 4h^* N! \sqrt{N-1} .
\end{eqnarray*}
where I used the Cauchy-Schwartz inequality to obtain the last line. Integrating this inequality
for time $T$, I obtain the result I wanted to prove,
\begin{eqnarray*}
  \sum_{\pi,\pi'} \Big\| \ket{\psi_\pi(T)}-\ket{\psi_{\pi'}(T)} \Big\|^2 &\leq& 4 h^* T N! \sqrt{N-1},
\end{eqnarray*}
where the sum is over $\pi$ and $\pi'$ differing by a single transposition involving $\pi(0)$.
\end{proof}

Next I establish
\begin{lemma}
  Suppose $\ket{1}$, $\ket{2}$, $\ket{L}$ are orthonormal vectors and $\bigabs{\scalar{\psi_i}{i}}^2 \geq b$ for 
normalized vectors $\ket{\psi_i}$, where $i=1,\dots,L$. Then for any normalized $\ket{\varphi}$,
\begin{eqnarray}
   \sum_{i=1}^{L} \bignorm{\ket{\psi_i}-\ket{\varphi}}^2 \geq bL-2\sqrt{L}. \label{vectorlemma}
\end{eqnarray}
\end{lemma}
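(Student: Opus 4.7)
The plan is to expand the squared norms and reduce the claim to an upper bound on $\sum_{i}|\langle\psi_i|\varphi\rangle|$. Since each $\ket{\psi_i}$ and $\ket{\varphi}$ are unit vectors,
\[
\bignorm{\ket{\psi_i}-\ket{\varphi}}^{2} = 2 - 2\,\textrm{Re}\scalar{\psi_i}{\varphi}\geq 2 - 2\bigabs{\scalar{\psi_i}{\varphi}},
\]
so summing on $i$ it suffices to show $\sum_{i}|\scalar{\psi_i}{\varphi}|\leq L\sqrt{1-b}+\sqrt{L}$. The left side then reduces to $\bignorm{\ket{\psi_i}-\ket{\varphi}}^{2}\geq 2L(1-\sqrt{1-b})-2\sqrt{L}$, and the elementary inequality $1-\sqrt{1-b}\geq b/2$ for $b\in[0,1]$ yields the claimed bound $bL-2\sqrt{L}$.

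To obtain the bound on $\sum_{i}|\scalar{\psi_i}{\varphi}|$, I would decompose each $\ket{\psi_i}$ along $\ket{i}$ and its orthogonal complement, writing $\ket{\psi_i}=\alpha_i\ket{i}+\gamma_i\ket{i^{\perp}}$ with $\ket{i^{\perp}}$ a unit vector orthogonal to $\ket{i}$. The hypothesis $|\scalar{\psi_i}{i}|^{2}\geq b$ gives $|\alpha_i|^{2}\geq b$, hence $|\gamma_i|\leq\sqrt{1-b}$. By the triangle inequality and $|\scalar{i^{\perp}}{\varphi}|\leq 1$,
\[
\bigabs{\scalar{\psi_i}{\varphi}} \leq |\alpha_i|\bigabs{\scalar{i}{\varphi}}+|\gamma_i|\bigabs{\scalar{i^{\perp}}{\varphi}} \leq \bigabs{\scalar{i}{\varphi}}+\sqrt{1-b}.
\]

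The final ingredient is Cauchy--Schwarz applied to the orthonormality of the $\ket{i}$: since $\ket{\varphi}$ has unit norm and the $\ket{i}$ are orthonormal, $\sum_{i}|\scalar{i}{\varphi}|^{2}\leq 1$, hence $\sum_{i}|\scalar{i}{\varphi}|\leq\sqrt{L}$. Summing the previous line over $i$ then gives the desired $\sum_{i}|\scalar{\psi_i}{\varphi}|\leq\sqrt{L}+L\sqrt{1-b}$, closing the argument. I do not anticipate a real obstacle; the only place where a little care is needed is in checking $1-\sqrt{1-b}\geq b/2$, which follows immediately from squaring $\sqrt{1-b}\leq 1-b/2$.
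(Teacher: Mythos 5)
Your proof is correct: every step checks out, including the decomposition $\ket{\psi_i}=\alpha_i\ket{i}+\gamma_i\ket{i^\perp}$ with $|\gamma_i|\leq\sqrt{1-b}$, the Bessel/Cauchy--Schwarz bound $\sum_i\bigabs{\scalar{i}{\varphi}}\leq\sqrt{L}$, and the elementary inequality $1-\sqrt{1-b}\geq b/2$. The route is genuinely different from the paper's, which is shorter: there one simply bounds each term below by its $\ket{i}$-coordinate, $\bignorm{\ket{\psi_i}-\ket{\varphi}}^2\geq\bigabs{\scalar{i}{\psi_i}-\scalar{i}{\varphi}}^2\geq\bigabs{\scalar{i}{\psi_i}}^2-2\bigabs{\scalar{i}{\psi_i}}\bigabs{\scalar{i}{\varphi}}$, and then applies Cauchy--Schwarz to the cross term using $\sum_i\bigabs{\scalar{i}{\psi_i}}^2\leq L$ and $\sum_i\bigabs{\scalar{i}{\varphi}}^2\leq1$, landing directly on $bL-2\sqrt{L}$ with no auxiliary scalar inequality. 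You instead keep the exact expansion $2-2\,\textrm{Re}\scalar{\psi_i}{\varphi}$ and control the full overlap $\bigabs{\scalar{\psi_i}{\varphi}}$ by splitting $\ket{\psi_i}$ into its component along $\ket{i}$ and the rest; this costs you the extra step $1-\sqrt{1-b}\geq b/2$ but actually produces the marginally stronger intermediate bound $2L\left(1-\sqrt{1-b}\right)-2\sqrt{L}\geq bL-2\sqrt{L}$. Since the lemma is only used with the stated $bL-2\sqrt{L}$ form in Theorem 2, the two arguments are interchangeable for the purposes of the paper; the paper's has the advantage of brevity, yours makes the geometric content (how much of $\ket{\psi_i}$ can leak off $\ket{i}$) slightly more explicit. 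One cosmetic remark: in your third sentence the displayed quantity should be the sum $\sum_i\bignorm{\ket{\psi_i}-\ket{\varphi}}^2$, not a single term; the meaning is clear from context but the summation sign is missing.
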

\begin{proof}
Write
\begin{eqnarray*}
  \sum_i \bignorm{\ket{\psi_i}-\ket{\varphi}}^2 &\geq& \sum_i \bigabs{\scalar{i}{\psi_i}-\scalar{i}{\varphi}}^2 \\
	&\geq& \sum_i \bigabs{\scalar{i}{\psi_i}}^2 - 2 \sum_i \bigabs{\scalar{i}{\psi_i}} \bigabs{\scalar{i}{\varphi}}
\end{eqnarray*}
and use the Cauchy-Schwartz inequality to obtain
\begin{eqnarray*}
  \sum_i \bignorm{\ket{\psi_i}-\ket{\varphi}}^2 &\geq& b L 
	- 2 \sqrt{\sum_i \bigabs{\scalar{i}{\psi_i}}^2 } \sqrt{\sum_i \bigabs{\scalar{i}{\varphi}}^2} \\
    &\geq& bL - 2 \sqrt{L}.
\end{eqnarray*}
\end{proof}

Let me now state the main result of this Section. 
\begin{theorem}
\label{ch2:t2}
  Suppose that a continuous time algorithm of the form \eqref{continuoustimealgorithm} 
  succeeds with probability at least $b$, i.e. 
$\bigabs{\scalar{\psi_{\pi}(T)}{\pi(0)}}^2\geq b$, for a set of $\epsilon N!$ permutations.
Then
\begin{eqnarray}
   T\geq \frac{\epsilon^2 b}{16 h^*} \sqrt{N-1} - \frac{\epsilon\sqrt{\epsilon/2}}{4h^*}.
\end{eqnarray}
\end{theorem}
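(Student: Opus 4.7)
The plan is to combine Lemma 1 and Lemma 2 via a family decomposition of $S_N$ indexed by swaps at position $0$. For each $\tau \in S_N$ I introduce the swap family $F_\tau = \{\tau \circ (0\leftrightarrow a) : a = 0,1,\dots,N-1\}$, where $a=0$ denotes the trivial transposition, so $|F_\tau| = N$. The key structural observation is that the $0$-images of the elements of $F_\tau$ are $\tau(0),\tau(1),\dots,\tau(N-1)$, all distinct; hence the target basis states $\{\ket{\pi'(0)} : \pi' \in F_\tau\}$ form an orthonormal set of size $N$.

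Setting $k_\tau = |F_\tau \cap S|$, I apply Lemma 2 inside each family, with the orthonormal targets $\{\ket{\pi'(0)} : \pi' \in F_\tau \cap S\}$, the unit vectors $\{\psi_{\pi'}(T) : \pi' \in F_\tau \cap S\}$ (which satisfy $|\braket{\psi_{\pi'}(T)}{\pi'(0)}|^2 \geq b$ by hypothesis), and the reference state $\varphi = \psi_\tau(T)$, obtaining
\[
  \sum_{\pi' \in F_\tau \cap S} \|\psi_{\pi'}(T) - \psi_\tau(T)\|^2 \ \geq\ b\, k_\tau - 2\sqrt{k_\tau}.
\]
Summing over all $\tau \in S_N$, the $\pi' = \tau$ contributions vanish, and each remaining ordered pair $(\tau,\pi')$ satisfies $\pi' = \tau \circ (0 \leftrightarrow a)$ for some $a \neq 0$---that is, an ordered edge of the kind appearing in Lemma 1. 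The left-hand side is therefore a sub-collection of Lemma 1's ordered-edge sum and hence bounded above by $4h^* T N! \sqrt{N-1}$.

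For the right-hand side, a double count gives $\sum_\tau k_\tau = N|S| = \epsilon N \cdot N!$, since each $\pi \in S$ lies in exactly $N$ families, namely $F_{\pi \circ (0 \leftrightarrow a)}$ for $a = 0,\dots,N-1$. Cauchy-Schwartz then bounds $\sum_\tau \sqrt{k_\tau} \leq \sqrt{N! \sum_\tau k_\tau} = N! \sqrt{\epsilon N}$. Combining,
\[
  4 h^* T N! \sqrt{N-1} \ \geq\ N!\bigl( b \epsilon N - 2\sqrt{\epsilon N}\bigr),
\]
and solving for $T$ produces a lower bound that, after rearranging and using $\epsilon \leq 1$ (and $N/\sqrt{N-1} \geq \sqrt{N-1}$), implies the inequality claimed in the theorem.

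The main point requiring care is the edge-identification step: two elements $\tau \circ (0 \leftrightarrow a_1)$ and $\tau \circ (0 \leftrightarrow a_2)$ of $F_\tau$ with $a_1, a_2 \neq 0$ differ by the $3$-cycle $(0,a_2,a_1)$, hence they are \emph{not} edges of the swap graph; only pairs involving $\tau$ itself are edges. Thus summing over $\tau$ reproduces precisely the ordered edges of Lemma 1 (each genuine edge counted at most twice, which only strengthens the upper bound). The remaining work is straightforward algebra to absorb the $\sqrt{\epsilon N}$ term into the correction $-\tfrac{\epsilon\sqrt{\epsilon/2}}{4h^*}$ written in the theorem.
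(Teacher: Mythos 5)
Your argument is correct and rests on the same skeleton as the paper's proof: decompose $S_N$ into swap-at-$0$ families, apply Lemma 2 inside each family with $\psi_\tau(T)$ as the reference state, and bound the resulting pair sum by Lemma 1 (your orthonormality check and the observation that only pairs involving $\tau$ itself are transposition pairs are both sound, and the paper's use of Lemma 1 in this theorem is likewise over such ordered pairs, so no factor-of-two issue arises). Where you genuinely differ is in the aggregation step. The paper counts $\sum_\pi |\mathcal{S}_\pi| = (N-1)\epsilon N!$ and runs a Markov-type threshold argument to extract at least $\tfrac{\epsilon}{2}N!$ families containing at least $\tfrac{\epsilon}{2}(N-1)$ successful partners, applying Lemma 2 only to those; the factors $\tfrac{\epsilon}{2}\cdot\tfrac{\epsilon}{2}$ are the origin of the $\epsilon^2/16$ in the stated bound. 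You instead apply Lemma 2 to every family and control the $-2\sqrt{k_\tau}$ corrections globally via $\sum_\tau k_\tau = \epsilon N\, N!$ and Cauchy--Schwartz, $\sum_\tau \sqrt{k_\tau}\leq N!\sqrt{\epsilon N}$. This is cleaner (no thresholding) and actually yields the sharper intermediate bound $T \geq \bigl(b\epsilon N - 2\sqrt{\epsilon N}\bigr)/\bigl(4h^*\sqrt{N-1}\bigr)$, which scales like $\epsilon$ rather than $\epsilon^2$ in the leading term.

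One caution about your closing sentence: the $\sqrt{\epsilon N}$ term cannot be absorbed term-by-term into the correction $-\tfrac{\epsilon\sqrt{\epsilon/2}}{4h^*}$, since your subtracted term $\tfrac{\sqrt{\epsilon}}{2h^*}\sqrt{N/(N-1)}$ exceeds it for small $\epsilon$. The implication still holds, but by a short case check rather than absorption: if the theorem's right-hand side is positive, then $\epsilon b^2(N-1) > 8$, hence $\epsilon N \geq 64/(9b^2)$, which gives $b\epsilon N - 2\sqrt{\epsilon N} \geq \tfrac{3}{4}b\epsilon N \geq \tfrac{1}{4}\epsilon^2 b (N-1)$ and therefore your bound already dominates the theorem's main term alone; if the right-hand side is nonpositive, the claim is vacuous since $T\geq 0$. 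With that case analysis spelled out, your proof is complete and in fact slightly stronger than the stated theorem in the nontrivial regime.
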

\begin{proof}

For any permutation $\pi$, there are $N-1$ permutations $\pi'_a$ obtained from $\pi$ by first transposing $0$ and $a$. For each $\pi$ let $\mathcal{S}_{\pi}$ be the subset of those $N-1$ permutations on which the algorithm succeeds with probability at least $b$.  Any such permutation appears in exactly $N-1$ of the sets $\mathcal{S}_{\pi}$, therefore
\[
	\sum_\pi \left|\mathcal{S}_\pi\right| = (N-1)\epsilon N!.
\]
Let $M$ be the number of sets $\mathcal{S}_{\pi}$ with $\left|\mathcal{S}_{\pi}\right|\geq\frac{\epsilon}{2}(N-1)$. Now
\begin{eqnarray*}
  \sum_{\pi} \left| \mathcal{S}_{\pi} \right| &=& \sum_{\left| \mathcal{S}_{\pi} \right|\geq \frac{\epsilon}{2}(N-1)} \left| \mathcal{S}_{\pi} \right|
	+ \sum_{\left| \mathcal{S}_{\pi} \right|< \frac{\epsilon}{2}(N-1)} \left| \mathcal{S}_{\pi} \right| \\
    \sum_{\pi} \left| \mathcal{S}_{\pi} \right| &\leq& M(N-1) + (N!-M)\frac{\epsilon}{2}(N-1), \\
  (N-1)\epsilon N! &\leq& M(N-1) + N!\frac{\epsilon}{2}(N-1),
\end{eqnarray*}
so $M\geq\frac{\epsilon}{2}N!$, i.e. at least 
$\frac{\epsilon}{2} N!$ of the sets $\mathcal{S}_{\pi}$ must contain 
at least $\frac{\epsilon}{2}(N-1)$ permutations on which the algorithm succeeds with probability at least $b$. For the corresponding $\pi$, I have
\[
   \sum_{\pi'_a} \bignorm{\ket{\psi_\pi (T)}-\ket{\psi_{\pi'_a} (T)}}^2  
	\geq b \frac{\epsilon}{2}(N-1) - 2\sqrt{\frac{\epsilon}{2}(N-1)}.
\]
by Lemma 2. (Note that the algorithm is not assumed to succeed with probability $b$ on $\pi$.) Since there are at least $\frac{\epsilon}{2}N!$ such $\pi$, 
\begin{eqnarray*}
  \sum_{\pi,\pi'} \bignorm{\ket{\psi_\pi (T)}-\ket{\psi_{\pi'}(T)}}^2 
	\geq \frac{\epsilon}{2}N! \left(b \frac{\epsilon}{2}(N-1) - 2\sqrt{\frac{\epsilon}{2}(N-1)}\right),
\end{eqnarray*}
where the sum is over all permutations $\pi$ and $\pi'$ which differ by a single transposition involving $\pi(0)$. Combining this with Lemma 1 I obtain
\begin{eqnarray*}
   T\geq \frac{\epsilon^2 b}{16 h^*} \sqrt{N-1} - \frac{\epsilon\sqrt{\epsilon/2}}{4 h^*},
\end{eqnarray*}
which is what I wanted to prove.
\end{proof}

What I have just shown is that no continuous time algorithm of the form \eqref{continuoustimealgorithm} can find the minimum of $H_{P,\pi}$ with a constant success probability for even a fraction $\epsilon N!$ of all permutations $\pi$ if $T$ is $o(\sqrt{N})$. 
A typical permutation $\pi$ yields an $H_{P,\pi}$ with no structure relevant to any fixed $H_D$ and the algorithm cannot find the ground state of $H_{P,\pi}$ efficiently. 

\begin{figure}
	\begin{center}
	\includegraphics[width=4.5in]{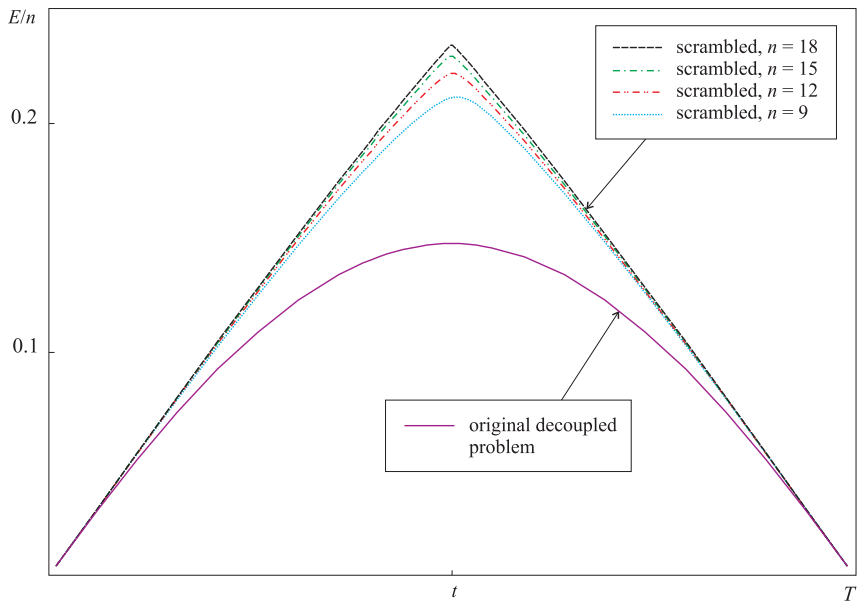}
	\end{center}
	\caption{The scaled ground state energy $E/n$ for a quantum adiabatic algorithm Hamiltonian of a decoupled problem. The lowest curve corresponds to the original decoupled problem. The upper ``triangular'' curves correspond to single instances of the $n$-bit decoupled problem, where the problem Hamiltonian was scrambled.}
	\label{triangle}
\end{figure}

To illustrate the nature of this failure for the quantum adiabatic algorithm for a typical permutation, consider again the decoupled
$n$ bit problem with $h(z)$ given by \eqref{decoupledcost} and $H_B$ given by \eqref{sumSX}. The lowest curve in FIG. 2 shows the ground state energy divided by $n$ as a function of $t$. (Since the system is decoupled this is actually the ground state energy of a single qubit.) 
I then consider the $n$ bit scrambled problem for different values of $n$. At each $n$ I pick a single random permutation $\pi$ of $0,\dots,(2^n-1)$ and apply it to obtain a cost function $h(\pi^{-1}(z))$ while keeping $H_B$ fixed. The ground state energy divided by $n$ is now plotted for $n=9,12,15$ and $18$. From these scrambled problems it is clear that if I let $n$ get large the typical curves will approach a triangle with a discontinuous first derivative at $t=T/2$. For large $n$, the ground state changes dramatically as $t$ passes through $T/2$. In order to keep the quantum system in the ground state we need to go very slowly near $t=T/2$ and this results in a long required run time.


\subsection{Summary}

I have shown two main results about the performance of the quantum adiabatic algorithm when used to find the minimum of a classical cost function $h(z)$ with $z=0,\dots,N-1$. Theorem \ref{ch2:t1} says that for any cost function $h(z)$, if the beginning Hamiltonian is a one dimensional projector onto the uniform superposition of all the $\ket{z}$ basis states, the algorithm will not find the minimum of $h$ if $T$ is less then of order $\sqrt{N}$. This is true regardless of how simple it is to classically find the minimum of $h(z)$.

In Theorem \ref{ch2:t2} I start with any beginning Hamiltonian and classical cost function $h$. Replacing $h(z)$ by a scrambled version, i.e. $h^{[\pi]}(z)=h(\pi(z))$ with $\pi$ a permutation of $0$ to $N-1$, will make it impossible for the algorithm to find the minimum of $h^{[\pi]}$ in time less than order $\sqrt{N}$ for a typical permutation $\pi$. For example suppose we have a cost function $h(z)$ and have chosen $H_B$ so that the quantum algorithm finds the minimum in time of order $\textrm{log}\,N$. Still scrambling the cost function results in algorithmic failure.

These results do not imply anything about the more interesting case where $H_B$ and $H_P$ are structured, i.e., sums of terms each operating only on several qubits.


\chapter{Matrix Product States \\on Infinite Trees}\label{ch3mps}

When can one actually simulate local Hamiltonians or find their ground states efficiently? It is possible for some Hamiltonians in 1D and in tree geometry, by using an approximate method based on Matrix Product States. This is what I show in this Chapter, based on the paper \cite{MPS:infitrees}

\mypapertwoline{The Quantum Transverse Field Ising Model on an Infinite Tree}{from Matrix Product States}{Daniel Nagaj, Edward Farhi, Jeffrey Goldstone, Peter Shor and Igor Sylvester}
{We give a generalization to an infinite tree geometry of Vidal's infinite time-evolving block decimation (iTEBD) algorithm \cite{MPS:Vidal1Dinfinite} for simulating an infinite line of quantum spins. We numerically investigate the quantum Ising model in a transverse field on the Bethe lattice using the Matrix Product State ansatz. We observe a second order phase transition, with certain key differences from the transverse field Ising model on an infinite spin chain. We also investigate a transverse field Ising model with a specific longitudinal field. When the transverse field is turned off, this model has a highly degenerate ground state as opposed to the pure Ising model whose ground state is only doubly degenerate.}

This Chapter is organized as follows. Section \ref{MPSsection} is a review of the MPS ansatz and contains its generalization to the tree geometry. In Section \ref{updatesection}, I review the numerical procedure for unitary updates and give a recipe for applying imaginary time evolution within the MPS ansatz. In Section \ref{TIsection}, I 
adapt Vidal's iTEBD method for simulating translationally invariant one-dimensional systems to systems with tree geometry. Section \ref{NRsection} contains my numerical results for the quantum Ising model in a transverse field for translationally invariant systems. In Section \ref{NRsection1}, I first test my method for the infinite line, and then in Section \ref{NRsection2} I present new results for the infinite tree. I turn to the \textsc{not} 00 model in Section \ref{not00section} and show that my numerics work well for this system even when there is a high ground state degeneracy. In Section \ref{stabilitysection} I investigate the stability of my tree results and conjecture that they may be good approximations to a local description far from the boundary of a large finite tree system.


\section{Introduction}

The matrix product state (MPS) description \cite{MPS:OstlundRommerMPS1}\cite{MPS:OstlundRommerMPS2} has brought a new way of approaching many-body quantum systems. Several methods of investigating spin systems have been developed recently combining state of the art many-body techniques 
such as White's Density Matrix Renormalization Group \cite{MPS:DMRG}\cite{MPS:DMRGreview} (DMRG) with quantum information motivated insights. Vidal's Time Evolving Block Decimation (TEBD) algorithm 
\cite{MPS:VidalMPS} \cite{MPS:VidalMPS2} uses MPS and emphasizes entanglement (as measured by the Schmidt number), directing the computational resources into that bottleneck of the simulation. It provides the ability to simulate time evolution and it was shown that MPS-inspired methods handle periodic boundary conditions well in one dimension \cite{MPS:DMRGandPBC}, areas where the previous use of DMRG was limited. TEBD has been recast into the language of DMRG in \cite{MPS:TEBDandDMRG1}\cite{MPS:TEBDandDMRG2} and adapted to finite systems with tree geometry in \cite{MPS:ShiTrees}. DMRG is especially successful in describing the properties of quantum spin chains, the application of basic DMRG-like methods is limited for quantum systems with higher dimensional geometry. New methods like PEPS \cite{MPS:PEPS1}\cite{MPS:PEPS2} generalize MPS to higher dimensions, opening ways to numerically investigate systems that were previously inaccessible.

Here I am interested in investigating infinite translationally invariant systems. Several numerical methods to investigate these were developed recently. The iTEBD algorithm \cite{MPS:Vidal1Dinfinite} (see also Sec.\ref{TIsection}) is a generalization of TEBD to infinite one-dimensional systems. A combination of PEPS with iTEBD called iPEPS \cite{MPS:iPEPS} provides a possibility of investigating infinite translationally invariant systems in higher dimensions.
\begin{figure}
	\begin{center}
	\includegraphics[width=1.1in]{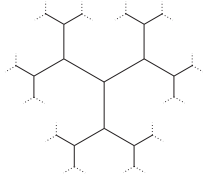}
	\caption{The Bethe lattice (infinite Cayley tree).}
	\label{figinfitree}
	\end{center}
\end{figure}
My contribution is a method to investigate the ground state properties of infinite translationally invariant quantum systems on the Bethe lattice using imaginary time evolution with Matrix Product States. The Bethe lattice is an infinite tree with each node having three neighbors, as depicted in Fig.\ref{figinfitree}. It is translationally invariant in that it looks the same at every vertex.
This geometry is interesting, because of the following connection to large random graphs with fixed valence. Moving out from any vertex in such a random graph, you need to go a distance of order $\textrm{log}\,n$, where $n$ is the number of vertices in the graph, before you detect that you are not on the Bethe lattice, that is, before you see a loop.

I choose to investigate the quantum transverse field Ising model on the Bethe lattice. Note that I work directly on the infinite system, never taking a limit. First I test the iTEBD method on a system with a known exact solution, the infinite line. Then I turn to the Bethe lattice with the new method I provide. In both cases, the Hamiltonian is given by
\begin{eqnarray} 
	H = \frac{J}{2} \sum_{\langle i,j\rangle} (1-\sigma_z^{i} \sigma_z^{j}) 
		+ \frac{h}{2} \sum_i \left(1-\sigma_x^{i}\right), \label{ourH1}
\end{eqnarray}
where the sum over $i$ is over all sites, and the sum over $\langle i,j \rangle$ is over all bonds (nearest neighbors).
I show that imaginary time evolution within the MPS ansatz provides a very good approximation for the exact ground state on an infinite line, resulting in nearly correct critical exponents for the magnetization and correlation length as one approaches the phase transition. I obtain new results for the quantum Ising model in transverse field on the infinite tree. Similarly to the infinite line, I observe a second order phase transition and obtain the critical exponent for the magnetization, $\beta_{T}\approx 0.41$ (different than the mean-field result). However, the correlation length does not diverge at the phase transition for this system and I conjecture that it has the value $1/\ln 2$. 

I also investigate a model where besides an antiferromagnetic interaction of spins I add a specific longitudinal field $\frac{1}{4}\sigma_z^i$ for each spin: 
\begin{eqnarray}
	H_{\textsc{not}\,00} 
	&=& J 
	 \sum_{\langle i,j\rangle} 
		\frac{1}{4}\left( 1 + \sigma_z^i + \sigma_z^j + \sigma_z^i \sigma_z^j  \right)
		+ \frac{h}{2} \sum_i \left(1-\sigma_x^{i}\right).
		\label{ourH00}
\end{eqnarray}
I choose the longitudinal field in such a way that the interaction term in the computational basis takes a simple form, $\ket{00}\bra{00}_{ij}$, giving an energy penalty to the $\ket{00}$ state of neighboring spins. (I follow the usual convention that spin up in the $z$-direction is called 0.) I call it the \textsc{not} 00 model accordingly. This model is interesting from a computational viewpoint. 
The degeneracy of the ground state of $H_{\textsc{not}\,00}$ at $h=0$ is high for both infinite line and infinite tree geometry of interactions. I am interested in how my numerical method deals with this case, as opposed to the double degeneracy of the ground state of \eqref{ourH1} at $h=0$. I do not see a phase transition in this system as I vary $J$ and $h$. 


\section{Matrix Product States} 
\label{MPSsection}

If one's goal is to numerically investigate a system governed by a local Hamiltonian, it is convenient to find a local description and update rules for the system. A Matrix Product State description is particularly suited to spin systems for which the connections do not form any loops. Given a state of this system, I will first show how to obtain its MPS description, and then how to utilize this description in a numerical method for obtaining the time evolution and approximating the ground state (using imaginary time evolution). I begin with matrix product states on a line (a spin chain), and then generalize the description to a tree geometry. In \ref{updatesection}, I give a numerical method of updating the MPS description for both real and imaginary time simulations.

\subsection{MPS for a spin chain}

Given a state $\ket{\psi}$ of a chain of $n$ spins
\begin{eqnarray}
  \ket{\psi} &=& \sum_{\dots s_i s_{i+1} \dots}
  c_{\dots,s_i,s_{i+1},\dots}
  \ket{s_1}_1 \dots \ket{s_i}_i \ket{s_{i+1}}_{i+1} \dots \ket{s_n}_n, 
  \label{psiMPS}
\end{eqnarray}
I wish to rewrite the coefficients $c_{s_1,\dots,s_n}$ as a matrix product (see \cite{MPS:MPSreview} for a review of MPS)
\begin{eqnarray}
  c_{\dots,s_i,s_{i+1},\dots} &=& 
  \sum_{\dots a b c \dots} \dots \lambda^{(i-1)}_a 
  \Gamma^{(i),s_i}_{a,b} \lambda^{(i)}_b 
  \Gamma^{(i+1),s_{i+1}}_{b,c} \lambda^{(i+1)}_c
  \dots 
  \label{cMPS}
\end{eqnarray}
using $n$ tensors $\Gamma^{(i)}$ and $n-1$ vectors $\lambda^{(i)}$. 
The range of the indices $a,b,\dots$ will be addressed later. 
After decomposing the chain into two subsystems, one can rewrite the state of the whole system
in terms of orthonormal bases of the subsystems. 
$\lambda^{(i)}$ is the vector of Schmidt coefficients 
for the decomposition of the state of the chain onto the subsystems $1\dots i$ and $i+1 \dots n$. 

In order to obtain the $\lambda$'s and the $\Gamma$'s for a given state $\ket{\psi}$, one has to perform the following steps. First, perform the Schmidt decomposition of the chain between sites $i-1$ and $i$ as
\begin{eqnarray}
  \ket{\psi} = \sum_{a=1}^{\chi_{i-1}} \ket{\phi_a}_{1,\dots,i-1} 
  \lambda^{(i-1)}_a \ket{\phi_a}_{i,\dots,n},
\end{eqnarray}
where the states on the left and on the right of the division form
orthonormal bases required to describe the respective subsystems 
of the state $\ket{\psi}$.
The number $\chi_{i-1}$ (the Schmidt number) is the minimum number of terms 
required in this decomposition. 
\begin{figure}
	\begin{center}
	\includegraphics[width=3.3in]{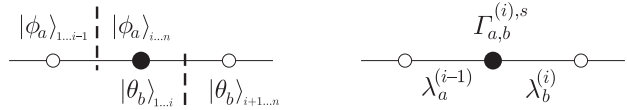}
	\caption{Two successive Schmidt decompositions on a line allow me to find the $\Gamma$
	tensor for the marked site and the two $\lambda$ vectors for the bonds coming out of it.}
	\label{figlineschmidt}
	\end{center}
\end{figure}
The Schmidt decomposition for a split between sites $i$ and $i+1$ gives
\begin{eqnarray}
  \ket{\psi} = \sum_{b=1}^{\chi_{i}} \ket{\theta_b}_{1,\dots,i} 
  \lambda^{(i)}_b \ket{\theta_b}_{i+1,\dots,n}.
  \label{division2}
\end{eqnarray}
These two decompositions (see FIG.\ref{figlineschmidt}) describe the same state, allowing me to combine them to express 
the basis of the subsystem $i,\dots,n$ using the spin at site $i$ and the basis of the subsystem $i+1,\dots,n$ as
\begin{eqnarray}
   \ket{\phi_a}_{i,\dots,n} = \sum_{s=0,1} \sum_{b=1}^{\chi_{i}} \Gamma^{(i),s}_{a,b} 
   \lambda^{(i)}_b \ket{s}_i \ket{\theta_b}_{i+1,\dots,n}, \label{combineSchmidt}
\end{eqnarray}
where I inserted the $\lambda_b^{(i)}$ for convenience. 
This gives me the tensor $\Gamma^{(i)}$. 
It carries an index 
$s$ corresponding to the state $\ket{s}$ of the $i$-th spin, and indices $a$ and $b$, corresponding
to the two consecutive divisions of the system (see FIG.\ref{figlineschmidt}). 
Because $\ket{\phi_a}$ (and $\ket{\theta_b}$) are orthonormal states,
the vectors $\lambda$ and tensors $\Gamma$ obey the following normalization conditions. From \eqref{division2} I have 
\begin{eqnarray}
	\sum_{b=1}^{\chi_i} \lambda_b^{(i)2} = 1,
	\label{normal0}
\end{eqnarray}
while
\eqref{combineSchmidt} implies 
\begin{eqnarray}
	\braket{\phi_{a'}}{\phi_a}_{i,\dots,n} = \sum_{s=0,1} \sum_{b=1}^{\chi_i}  \Gamma^{(i),s *}_{a',b} \lambda_b^{(i)}\Gamma^{(i),s}_{a,b}  \lambda_b^{(i)} = \delta_{a,a'} \, ,
	\label{normal1}
\end{eqnarray}
and 
\begin{eqnarray}
	\braket{\theta_{b'}}{\theta_b}_{1,\dots,i} =
	\sum_{s=0,1} \sum_{a=1}^{\chi_{i-1}} \lambda_a^{(i-1)} \Gamma^{(i),s *}_{a,b'}
	\lambda_a^{(i-1)} \Gamma^{(i),s}_{a,b} = \delta_{b,b'} \,  .
	\label{normal2}
\end{eqnarray}


\subsection{MPS on Trees} 
\label{MPStreeSubsection}

Matrix Product States are natural not just on chains, but also on trees, because these can also be split into two subsystems by cutting a single bond, allowing for the Schmidt-decomposition interpretation
as described in the previous section. The Matrix Product State description of a state of a spin system on a tree, i.e. such that the bonds do not form loops, is a generalization of the above procedure. 
Tree-tensor-network descriptions such as the one given here have been previously described in \cite{MPS:ShiTrees}.

Specifically, for the Bethe lattice with 3 neighbors per spin,
I introduce a vector $\lambda^{(k)}_{a_k}$ for each bond $k$ and a four-index (one for spin, three for bonds) tensor $\Gamma^{(i),s_i}_{a_k,a_l,a_m}$ for each site $i$. I can then rewrite the state $\ket{\psi}$ analogously to \eqref{psiMPS},\eqref{cMPS} as
\begin{eqnarray}
	\ket{\psi} = 
		\Bigg(
			\prod_{k\in\textrm{bonds}} 
			\sum_{a_k = 1}^{\chi_k}
			\lambda^{(k)}_{a_k} 
		\Bigg)
		\Bigg(
			\prod_{i\in \textrm{sites}} 
			\sum_{s_i} 
		\Gamma^{(i),s_i}_{a_l, a_m, a_n} 
		\Bigg)
	\ket{\dots}\ket{s_i}\ket{\dots},
\end{eqnarray}
where $a_l,a_m,a_n$ are indices corresponding to the three bonds $l, m$ and $n$ coming out of site $i$. Each index $a_l$ appears in two $\Gamma$ tensors and one $\lambda$ vector. To obtain this description, one needs to perform a Schmidt decomposition across each bond. This produces 
the vectors $\lambda^{(l)}$.
\begin{figure}
	\begin{center}
	\includegraphics[width=3.5in]{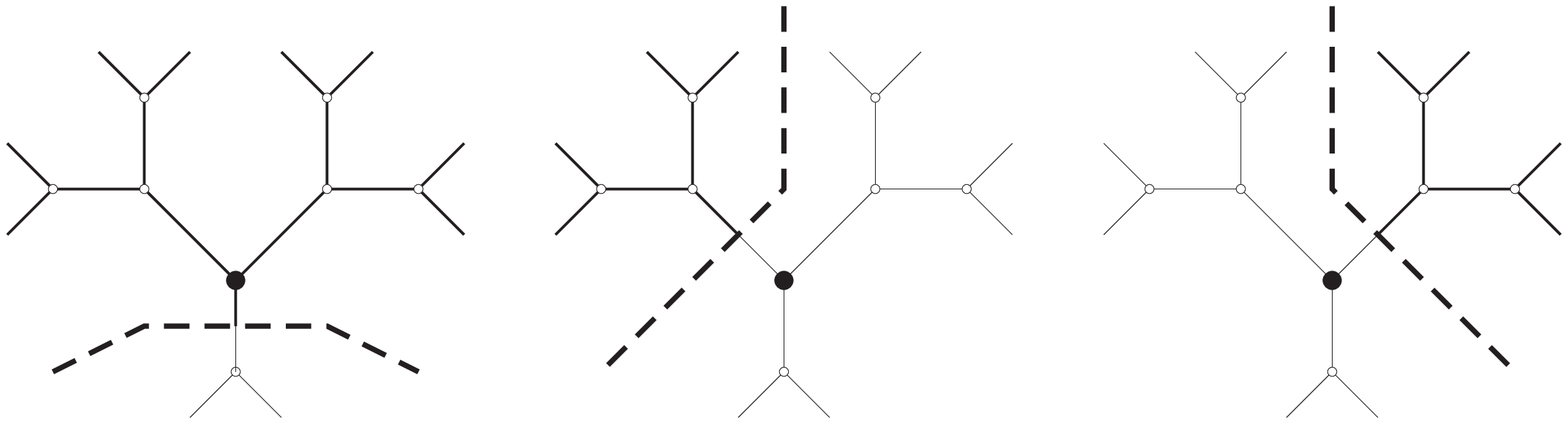}
	\caption{The three Schmidt decompositions on a tree required to obtain the $\Gamma$ tensor
	for the marked site and the three $\lambda$ vectors for the bonds emanating from it.}
	\label{figtreeschmidt}
	\end{center}
\end{figure}
To obtain the tensor $\Gamma^{(i)}$ for site $i$, 
one needs to combine the three decompositions corresponding to the bonds of site $i$
as depicted in Fig.\ref{figtreeschmidt}.
Analogously to \eqref{combineSchmidt}, expressing the orthonormal basis
for the first subsystem marked in Fig.\ref{figtreeschmidt} in terms of the state of the spin $\ket{s_i}$ and the orthonormal bases for the latter two subsystems in Fig.\ref{figtreeschmidt}, one obtains the tensor $\Gamma^{(i),s}_{a_l,a_m,a_n}$ for site $i$. 

The normalization conditions for a MPS description of a state on a tree are analogous to \eqref{normal0}-\eqref{normal2}.
I have
\begin{eqnarray}
	\sum_{a_k} \lambda^{(k)2}_{a_k} = 1, 
		\label{normalT0}
\end{eqnarray}
\begin{eqnarray}
	\sum_{s=0,1} \sum_{a_k = 1}^{\chi_k} \sum_{a_l = 1}^{\chi_l} \Gamma^{(i),s *}_{a_k,a_l,a_{m'}} \lambda^{(k)2}_{a_k}\lambda^{(l)2}_{a_l} \Gamma^{(i),s}_{a_k,a_l,a_m} = \delta_{a_m,a_{m'}} \, , 
	\label{normalT1}
\end{eqnarray}
and two other variations of \eqref{normalT1} with $k,l$ and $m$ interchanged.


\section{Simulating Quantum Systems with MPS}
\label{updatesection}

I choose to first describe the numerical procedures for a chain of spins. Then, at the end of the respective subsections, I note how to generalize these to tree geometry.


\subsection{Unitary Update Rules}

The strength of the MPS description of the state lies in the efficient 
application of local unitary update rules such as $U = e^{-i A \Delta t}$ (where $A$ is an operator acting only on a few qubits). 
First, I describe the numerical procedure in some detail, and then, in the next Section,  discuss how to modify the procedure to also implement imaginary time evolution.

Given a state $\ket{\psi}$ as a Matrix Product State, I want to know what happens after 
an application of a local unitary. In particular, for a 1-local $U$ acting on the $i$-th spin, 
it suffices to update the local tensor
\begin{eqnarray}
	\Gamma^{(i),s}_{a,b} 
	\buildrel{U}\over\longrightarrow U^{s}_{s'} \Gamma^{(i),s'}_{a,b}.
\end{eqnarray} 
The update rule for an application of a 2-local unitary $V$ acting on neighboring spins $i$ and $i+1$,
requires several steps. First, using
a larger tensor 
\begin{eqnarray}
	\Theta^{s,t}_{a,c} = \lambda^{(i-1)}_a 
	  \sum_b \left( \Gamma^{(i),s}_{a,b}
	  \lambda^{(i)}_b \Gamma^{(i+1),t}_{b,c} \right) \lambda^{(i+1)}_c,
	  \label{Thetatensor}
\end{eqnarray}
I rewrite the state $\ket{\psi}$ as
\begin{eqnarray}
  	\ket{\psi} &=& \sum_{a,c} \sum_{s,t} \Theta^{s,t}_{a,c} \ket{\phi_a}_{1\dots i-1}
	  \ket{s}_{i} \ket{t}_{i+1} \ket{\phi_c}_{i+1 \dots n}.
	  \label{Thetatensor2}
\end{eqnarray}
After the application of $V$, the tensor $\Theta$ in the description of $\ket{\psi}$ 
changes as
\begin{eqnarray}
	\Theta^{s,t}_{a,c} \buildrel{V}\over\longrightarrow 
	\sum_{s' t'} V^{s,t}_{s',t'} \Theta^{s',t'}_{a,c}.
	\label{Thetaupdate}
\end{eqnarray}
One now needs to decompose the updated tensor $\Theta$ to obtain the updated 
tensors $\Gamma^{(i)}$, $\Gamma^{(i+1)}$ and the vector $\lambda^{(i)}$.
I use the indices $a,s$ and $c,t$ of $\Theta$
to introduce combined indices $(as)$ and $(ct)$ and form a matrix $T_{(as),(ct)}$ 
with dimensions $2\chi_{i-1} \times 2\chi_{i+1}$ as
\begin{eqnarray}
	T_{(as),(ct)} =  \Theta_{a,c}^{s,t}.
	\label{Tmatrix1}
\end{eqnarray}
Using the singular value decomposition (SVD), this matrix can be decomposed into $T = Q \Lambda W$, where
$Q$ and $W$ are unitary and $\Lambda$ is a diagonal matrix. 
In terms of matrix elements, this reads
\begin{eqnarray}
	T_{(as),(ct)} = 
  \sum_b Q_{(as),b} 
  D_{b,b} W_{b,(ct)}.
  \label{Tmatrix}
\end{eqnarray}
The diagonal matrix $D=\textrm{diag}(\lambda^{(i)})$
gives me the updated Schmidt vector 
$\lambda^{(i)}$. 
The updated tensors $\Gamma^{(i)}$ and $\Gamma^{(i+1)}$ can be obtained from the 
matrices $Q,W$ and the definition of $\Theta$ \eqref{Thetatensor} 
using the old vectors $\lambda^{(i-1)}$ and $\lambda^{(i+1)}$ which do 
not change with the application of the local unitary $V$. 
After these update procedures, the conditions \eqref{normal0}-\eqref{normal2}
are maintained.

The usefulness/succintness of this description depends crucially on the amount of entanglement 
across the bipartite divisions of the system as measured by the Schmidt numbers $\chi_i$.
To exactly describe a general quantum state $\ket{\psi}$ of a chain of $n$ spins, the Schmidt 
number for the split through the middle of the chain is necessarily $\chi_{n/2} = 2^{n/2}$.
Suppose I start my numerical simulation in a state that is exactly described by a MPS with only low $\chi_i$'s.
The update step described above involves an interaction
of two sites, and thus could generate more entanglement across the $i,i+1$ division.
After the update, the index $b$ in $\lambda^{(i)}_b$ 
would need to run from $1$ to $2\chi_i$ to keep 
the description exact (unless $\chi_i$ already is at its maximum required value 
$\chi_i=2^{\textrm{min}\{i,n-i\}}$).
This makes the number of parameters in the MPS description grow exponentially 
with the number of update steps.

So far, this description and update rules have been exact. 
Let me now make the description an approximate one (use a block-decimation step) instead. 
First, introduce the parameter $\chi$, which is the maximum number of Schmidt terms I keep after each update step.
If the amount of entanglement in the system is low, the Schmidt coefficients 
$\lambda^{(i)}_b$ decrease rapidly with $b$ (I always take the elements of $\lambda$ sorted in decreasing order).
A MPS ansatz with restricted $\chi_i = \chi$ will hopefully be a good approximation to 
the exact state $\ket{\psi}$. 
However, I also need to keep the restricted $\chi$ throughout the simulation.
After a two-local unitary update step, the vector $\lambda^{(i)}$ can have $2\chi$ entries. However, if the $b>\chi$ entries in $\lambda^{(i)}_b$ after the update are small,
I am justified to truncate $\lambda^{(i)}$ to have only $\chi$ entries and multiply it 
by a number so that it satisfies \eqref{normal0}.
I also truncate the $\Gamma$ tensors so that they keep dimensions $2\times\chi\times\chi$. The normalization condition \eqref{normal2} for $\Gamma^{(i)}$ will be still satisfied exactly, while the error in the normalization condition \eqref{normal1} will be small. This normalization error can be corrected as discussed in the next section.
This procedure keeps the state within the MPS ansatz with restricted $\chi$.

The procedure described above allows me to efficiently approximately implement local unitary evolution. To simulate time evolution 
\begin{eqnarray}
	\ket{\psi(t)} = e^{-iHt}\ket{\psi(0)}, \label{realtime}
\end{eqnarray}
with a local Hamiltonian like \eqref{ourH1}, I first divide the time $t$ into small slices $\Delta t$ and split the Hamiltonian into two groups of commuting terms $H_k^{(x)}$ and $H_m^{(z)}$. Each time evolution step $e^{-iH\Delta t}$ can then be implemented as a product of local unitaries using the second order Trotter-Suzuki formula
\begin{eqnarray}
	U_2 = \left(\prod_{k} e^{-i H^{(x)}_k \frac{\Delta t}{2}}\right)
		\left(\prod_{m} e^{-i H^{(z)}_m \Delta t}\right)
		\left(\prod_{k} e^{-i H^{(x)}_k \frac{\Delta t}{2}}\right).
		\label{trotter}
\end{eqnarray}
The application of the product of the local unitaries within each group can be done almost in parallel (in two steps, as described in Section \ref{TIsection}), as they commute with each other.

These update rules allow me to efficiently approximately simulate the real 
time evolution \eqref{realtime}
with a local Hamiltonian $H$ 
for a state $\ket{\psi}$ within the MPS ansatz with parameter $\chi$. 
The number of parameters in this MPS description with restricted $\chi$  
is then $n (2\chi^2)$ for the tensors $\Gamma^{(i)}$ and $(n-1) \chi$ for the vectors 
$\lambda^{(i)}$.
The simulation cost of each local update step scales like $O(\chi^3)$, 
coming from the SVD decomposition of the matrix $\Theta$.
For a system of $n$ spins, I thus need to store $O(2n\chi^2+n\chi)$ numbers and each update will take $O(n\chi^3)$ steps.

The update procedure generalizes to tree geometry by taking the tensors $\Gamma$ with dimensions $2\times\chi\times\chi\times \chi$ as in Section \ref{MPStreeSubsection}. 
For a local update (on two neighboring spins $i$ and $i+1$ with bonds labeled by $l,m,n$ and $n , o, p$) I rewrite the state $\ket{\psi}$ analogously to \eqref{Thetatensor2} as
\begin{eqnarray}
  	\ket{\psi} &=& \sum_{a_k,a_l,a_o,a_p} \sum_{s,t} \Theta^{s,t}_{(a_k a_l),(a_o a_p)} 
	\ket{\phi_{a_k}} \ket{\phi_{a_l}}
	  \ket{s}_{i} \ket{t}_{i+1} \ket{\phi_{a_o}}\ket{\phi_{a_p}}.
\end{eqnarray}
using the tensor 
\begin{eqnarray}
	\Theta^{s,t}_{(a_k a_l),(a_o a_p)} = \lambda^{(k)}_{a_k} \lambda^{(l)}_{a_l} 
	  \sum_{a_m} \left( \Gamma^{(A),s}_{a_k,a_l,a_m}
	  \lambda^{(m)}_{a_m} \Gamma^{(i+1),t}_{a_m,a_o,a_p} \right) \lambda^{(o)}_{a_o} \lambda^{(p)}_{a_p},
\end{eqnarray}
with combined indices $(a_k a_l)$ and $(a_o a_p)$.
One then needs to update the tensor $\Theta$ as described above \eqref{Thetaupdate}-\eqref{Tmatrix}. The decomposition procedure to get the updated vector $\lambda^{(m)}$ and the new tensors $\Gamma^{(i)}$ and $\Gamma^{(i+1)}$ now requires 
$O(\chi^6)$ computational steps. The cost of a simulation on $n$ spins thus scales like $O(n\chi^6)$.


\subsection{Imaginary Time Evolution}
\label{imaginarysection}

Using the MPS ansatz, I can also use imaginary time evolution with $e^{-Ht}$ instead of \eqref{realtime} to look for the ground state of systems governed by local Hamiltonians. One needs to replace each unitary term
$e^{-i A \Delta t}$ 
in the Trotter expansion \eqref{trotter} of the time evolution with
$e^{- A \Delta t}$ followed by a normalization procedure. However, the usual normalization procedure for imaginary time evolution (multiplying the state by a number to keep $\braket{\psi}{\psi}=1$) is now not enough to satisfy the MPS normalization conditions \eqref{normal0}-\eqref{normal2} for the tensors $\Gamma$ and vectors $\lambda$ I use to describe the state $\ket{\psi}$.

The unitarity of the real time evolution automatically implied
that the normalization conditions \eqref{normal0},\eqref{normal2} were satisfied after an
exact unitary update. While there already was an error in \eqref{normal1} introduced by the truncation of the $\chi+1 \dots 2\chi$ entries in $\Gamma^{(i)}$, the non-unitarity of imaginary time evolution update steps introduces further normalization errors. It is thus important to properly normalize the state after every application of terms like $e^{-A\Delta t}$ to keep it within the MPS ansatz. 

In \cite{MPS:Vidal1Dinfinite}, Vidal dealt with this problem by taking progressively shorter and shorter steps $\Delta t$ during the imaginary time evolution. This procedure results in a properly normalized state only at the end of the evolution, after the time step decreases to zero (and not necessarily during the evolution). 
I propose a different scheme in which I follow each local update $e^{-A \Delta t}$ by a normalization procedure (based on Vidal's observation) to bring the state back to the MPS ansatz at all times. 
The simulation I run (evolution for time $t$) thus consists of many short time step updates $e^{-H \Delta t}$, each of which is implemented using a Trotter expansion as a product of local updates $e^{-A \Delta t}$. Each of these local updates is followed by my normalization procedure.
 
I now describe the iterative normalization procedure in detail for the case of an infinite chain, where it can be applied efficiently, as the description of the state $\ket{\psi}$ requires only two different tensors $\Gamma$ (see Section \ref{TIline}). One needs to apply the following steps over and over, until the normalization conditions are met with chosen accuracy. 

First, for each nearest neighbor pair $i,i+1$ with even $i$, combine the MPS description of these two spins \eqref{Thetatensor}-\eqref{Thetatensor2}, forming the matrix $T$ \eqref{Tmatrix1}. Do a SVD decomposition of $T$ \eqref{Tmatrix} to obtain a new vector $\lambda^{(i)}$. The decomposition does not increase the number of nonzero elements of $\lambda^{(i)}$, as the rank of the $2\chi\times2\chi$ matrix $T$ \eqref{Tmatrix} was only $\chi$ (coming from \eqref{Thetatensor}).
I thus take only the first $\chi$ values of $\lambda^{(i)}$
and rescale the vector to obey $\sum_{a=1}^{\chi} \lambda^{(i)2}_a = 1$.
using this new $\lambda^{(i)}$, I obtain tensors $\Gamma^{(i)}$, $\Gamma^{(i+1)}$ from \eqref{Tmatrix}, and truncate them to have dimensions $\chi\times\chi\times2$.  
Second, I repeat the previous steps for all nearest neighbor pairs of spins $i,i+1$ with $i$ odd. 

I observe that repeating the above steps over and over results in exponential decrease in the error in the normalization of the $\Gamma$ tensors. Note though, that the rate of decrease in normalization errors becomes much slower near the phase transition for the transverse field Ising model on an infinite line (see Section \ref{NRsection1}).

In practice, I apply this normalization procedure by using the same subroutine for the local updates $e^{-A\Delta t}$, except that I skip the step 
\eqref{Thetaupdate}, which is equivalent to applying the local update with $\Delta t=0$.
The normalization procedure is thus equivalent to evolving the state repeatedly with zero time step (composing two tensors $\Gamma$ and decomposing them again) and imposing the normalization condition on the vectors $\lambda$. Note though, following from the definition of the SVD, that each decomposition assures that one of the conditions \eqref{normal1},\eqref{normal2} is retained exactly for the updated tensors $\Gamma$. The errors in the other normalization condition for the $\Gamma$ tensors 
are decreased in each iteration step.

The numerical update rules for a system with tree geometry are a simple analogue of the update rules for MPS on spin chains. Every interaction couples two sites, 
with tensors $\Gamma^{(A),s}_{a,b,c}$ and $\Gamma^{(B),t}_{c,d,e}$, with the three lower indices corresponding to the bonds emanating from the sites. 
One only needs to reshape the tensors into $\Gamma^{(A),s}_{(ab),c}$ 
and $^{(B),t}_{c,(de)}$ and proceed as described in \eqref{Thetatensor}
and below.


\section{MPS and Translationally Invariant Systems} 
\label{TIsection}

\subsection{An Infinite Line} 
\label{TIline}

For systems with translational symmetry such as an infinite line 
all the sites are equivalent. 
I assume that the ground state is translationally invariant, and
furthermore pick the tensors $\Gamma^{(i)}$ and vectors $\lambda^{(i)}$ 
to be site independent.
For fixed $\chi$ the number of complex parameters in the translationally 
invariant MPS ansatz on the infinite line scales as $2\chi^2$. 

When using imaginary time evolution to look for the ground state of this system, within this ansatz, it is technically hard to keep the translational symmetry and the normalization conditions after each update. Numerical instabilities plagued my efforts to impose the symmetry in the procedures described above.
In \cite{MPS:Vidal1Dinfinite}, Vidal devised a method to deal with this problem. Let us
break the translational symmetry of the ansatz by labeling the sites $A$ and $B$ as in FIG.\ref{figlineAB}.
This doubles the number of parameters in the ansatz.
\begin{figure}
	\begin{center}
	\includegraphics[width=2.6in]{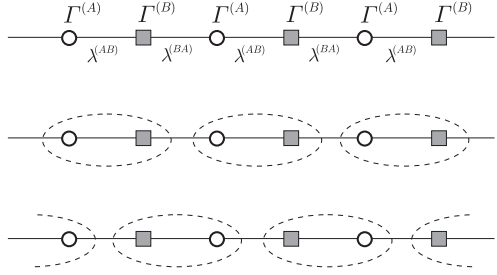}
	\caption{The parametrization and update rules for the infinite line.}
	\label{figlineAB}
	\end{center}
\end{figure}
The state update now proceeds in two steps. Let the site pairs $AB$ 
interact and update the tensors $\Gamma^{(A)}$, $\Gamma^{(B)}$ and the vector $\lambda^{(AB)}$. 
Then let the neighbor pairs $BA$ interact, after which I update the tensors 
$\Gamma^{(B)}$, $\Gamma^{(A)}$ and the vector $\lambda^{(BA)}$.
What I observe is that after many state updates the elements of the resulting $\Gamma^{(A)}$ and the $\Gamma^{(B)}$ tensors differ at a level which is way below my numerical accuracy (governed by the normalization errors) and I am indeed obtaining a translationally invariant description of the system.

One of the systems easily investigated with this method (iTEBD) is the Ising model in a transverse field \eqref{ourH} on an infinite line.
Vidal's numerical results for the real time evolution and imaginary time evolution \cite{MPS:Vidal1Dinfinite} of this system show remarkable agreement with the exact solution. I take a step further and also numerically obtain the critical exponents for this system.
Further details can be found in Section \ref{NRsection}, where I compare these results for the infinite line to the results I obtain for the Ising model in transverse field on the Bethe lattice.


\subsection{An Infinite Tree} 
\label{infitreesection}

For the infinite Bethe lattice, My approach is a modification of the above procedure introduced by Vidal. In order to avoid the numerical instabilities associated with imposing site-independent $\Gamma$ and $\lambda$ after the update steps, 
I break the translational symmetry by labeling the ``layers'' of the tree $A$ and $B$ 
(denoted by half-circles and triangles), as in FIG.\ref{figtreeAB}. 
\begin{figure}
	\begin{center}
	\includegraphics[width=2.5in]{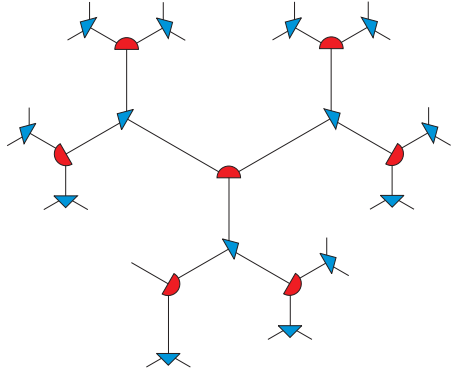}
	\caption{The two-layer, directed labeling of the tree.}
	\label{figtreeAB}
	\end{center}
\end{figure}

The Bethe lattice is also symmetric under the 
permutation of directions. Tensors $\Gamma$ with full directional symmetry obey
$\Gamma_{a,b,c} = \Gamma_{b,c,a}=\Gamma_{c,a,b}=\Gamma_{c,b,a}=\Gamma_{b,a,c}=\Gamma_{a,c,b}$.
However, for the purpose of simple organization of interactions, I will also partially break this symmetry by consistently labeling an `inward' bond for each node,
as denoted by the flat sides of the semi-circles and the longer edges of the triangles in Fig.\ref{figtreeAB}.
This makes the first of the three indices of $\Gamma_{a,b,c}$ special. 
However, I keep the residual symmetry $\Gamma_{a,b,c}=\Gamma_{a,c,b}$.
I can enforce this by interacting a spin with both of the spins from the next layer at the same time. The update procedure for the interaction between the spins now splits into two steps,
interacting the layers in the $AB$ order first, 
and then in the $BA$ order as in Fig.\ref{figtreeABinteract}.
\begin{figure}
	\begin{center}
	\includegraphics[width=5in]{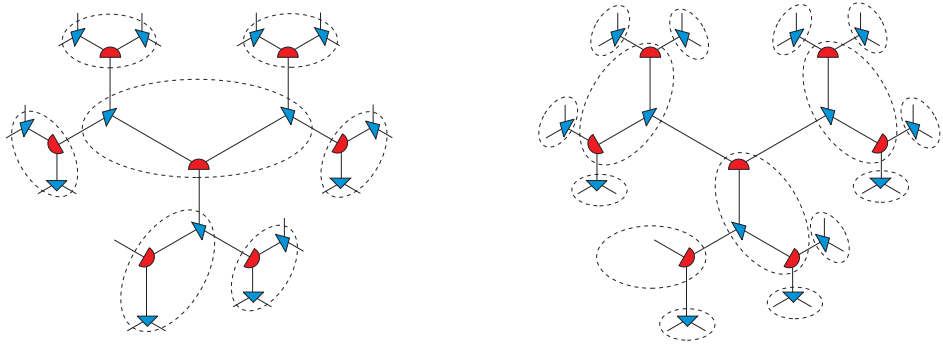}
	\caption{The two-step interactions for the infinite tree.}
	\label{figtreeABinteract}
	\end{center}
\end{figure}
Similarly to what I discovered for the line, the differences 
in the elements of the final $\Gamma^{(A)}$ and $\Gamma^{(B)}$ are
well below the numerical accuracy of my procedure.

The scaling of this procedure is more demanding than the $O(\chi^3)$ simulation for a line.
The number of entries in the matrix $\Theta$ used in each update step 
is $2\chi^2 \times 4\chi^4$, therefore the SVD decomposition requires  $O(\chi^8)$
steps. The scaling of my numerical method is thus $O(\chi^8)$ for each update step.


\subsection{Expectation Values}
A nice property of the MPS state description is that it allows efficient computation of expectation values of local operators. 
First, for a translationally invariant system on a line (with only one tensor $\Gamma$ and one vector $\lambda$), I have
for an operator $O^{(i)}$ acting only on the $i$-th spin 
\begin{eqnarray}
	\bra{\psi}O^{(i)} \ket{\psi} = 
	\sum_{s_i,s_i'=0,1} O_{s_i,s_i'}^{(i)}
	\sum_{a=1}^{\chi} 
	\sum_{b=1}^{\chi} 
	(\lambda_{a}
	\Gamma^{s_i' *}_{a,b}
	\lambda_{b} )
	(\lambda_{a}
	\Gamma^{s_i}_{a,b} 
	\lambda_{b} )
	,
	\label{expect1}
\end{eqnarray}
where $O^{(i)}_{s_i,s_i'} = \bras{s_i'}O^{(i)}\kets{s_i}$.
Similarly, for the expectation values of $O^{(i)}O^{(j)}$ (assuming $j>i$), 
\begin{eqnarray}
	\bra{\psi}O^{(i)}O^{(j)} \ket{\psi} 
	&=& 
	\sum_{s_i,s'_i,\dots,s_j,s'_j}
	  \sum_{a,e} \sum_{b,b', \dots} 
	 O_{s_i,s_i'}^{(i)}O_{s_j, s_j'}^{(j)} \label{twoexpect} \\
	&&\times \, \big(\lambda_{a}
			\Gamma^{s_i'*}_{a,b'} 
			\lambda_{b'}
			\Gamma^{s_{i+1}*}_{b',c'} 
			\lambda_{c'}
			\cdots
			\lambda_{d'}
			\Gamma^{s_j'*}_{d',e} 
			\lambda_{e}\big) \nonumber\\
	 &&\times \, 
		\big(\lambda_{a}
			\Gamma^{s_i}_{a,b} 
			\lambda_{b}
			\Gamma^{s_{i+1}}_{b,c} 
			\lambda_{c}
			\cdots
			\lambda_{d}
			\Gamma^{s_j}_{d,e} 
			\lambda_{e}\big). 
	\nonumber
\end{eqnarray}
Defining a $\chi^2 \times \chi^2$ matrix $B$ (where one should think of $(bb')$ as one combined index ranging from $1$ to $\chi^2$) as 
\begin{eqnarray}
	B_{(bb'), (cc')} = 
	\sum_{s} 
			\Gamma^{s}_{b,c}
			\Gamma^{s *}_{b',c'}
			\lambda_{c}
			\lambda_{c'},
\end{eqnarray}
and
vectors $v$ and $w$ with elements again denoted by a combined index $(bb')=1\dots\chi^2$ as
\begin{eqnarray}
	v_{(bb')} &=& 
		\sum_{s_i,s_i'} O_{s_i,s_i'}^{(i)}
	\sum_{a} 
			(\lambda_{a})^2
			\Gamma^{s_i}_{a,b} 
			\Gamma^{s_i' *}_{a,b'} 
			\lambda_{b}
			\lambda_{b'}
			, \\
	w_{(dd')} &=& 
		\sum_{s_j,s_j'} O_{s_j,s_j'}^{(j)}
	\sum_{e} 
			\Gamma^{s_j}_{d,e} 
			\Gamma^{s_j' *}_{d',e} 
			(\lambda_{e})^2
			,
\end{eqnarray}
I can rewrite \eqref{twoexpect} as
\begin{eqnarray}
	\bra{\psi}O^{(i)}O^{(j)} \ket{\psi} = v^T \underbrace{B B \cdots B}_{j-i-1} w,
	\label{Bcorrfunction}
\end{eqnarray}

There is a relationship between the eigenvalues of the matrix $B$ and the correlation function $\corrfO$. One of the eigenvalues of $B$ is $\mu_1=1$, with the corresponding right eigenvector  
\begin{eqnarray}
	\beta^{(1R)}_{(cc')} =   
	\sum_{s} 
	\sum_{c}
			\Gamma^{s}_{b,c}
			\Gamma^{s *}_{b',c}
			(\lambda_{c})^2,
\end{eqnarray}
and left eigenvector
\begin{eqnarray}
	\beta^{(1L)}_{(bb')} =   
		\delta_{b,b'} \lambda_{b}^2,
\end{eqnarray}
which can be verified using the normalization conditions \eqref{normal1} and \eqref{normal2}.
I numerically observe that $\mu_1=1$ is also the largest eigenvalue.  
(Note that $|\mu_k|>1$ would result in correlations unphysically growing with distance.)
Denote the second largest eigenvalue of $B$ as $\mu_2$. 
Using the eigenvectors of $B$, I can express $B^{j-i-1}$ in \eqref{Bcorrfunction}, as
\begin{eqnarray}
	B^{j-i-1} = \beta^{(1L)} \beta^{(1R)T} + \mu_2^{j-i-1} \beta^{(2L)} \beta^{(2R)T} + \dots.
	\label{eigexpand}
\end{eqnarray}
When computing the correlation function, 
the term that gets subtracted exactly cancels the leading term involving $\mu_1=1$. 
Therefore, if $|\mu_2|$ is less than 1, \eqref{eigexpand} implies
\begin{eqnarray}
	\corrfO \propto \mu_2^{|j-i|}. \label{secondeig}
\end{eqnarray} 
The correlation function necessarily falls of exponentially in this case,
and the correlation length $\xi$ is related to $\mu_2$ as $\xi = -1/\ln \mu_2$.

The computation of expectation values for a MPS state on a system with a tree geometry can be again done efficiently. For single-site operators $O^{(i)}$, the formula is an analogue of \eqref{expect1} with three $\lambda$ vectors for each $\Gamma$ tensors which now have three lower indices. For two-site operators, the terms in \eqref{Bcorrfunction} now become
\begin{eqnarray}
	B_{(cc'), (dd')} &=& 
	\sum_{s} \sum_{e}
			\Gamma^{s}_{c,e,d}
			\Gamma^{s *}_{c',e,d'}
			(\lambda_{e})^2
			\lambda_{d}
			\lambda_{d'},\\
	v_{(cc')} &=& 
		\sum_{s_i,s_i'} O_{s_i,s_i'}^{(i)}
	\sum_{a,b} 
			(\lambda_{a})^2 (\lambda_{b})^2
			\Gamma^{s_i}_{a,b,c} 
			\Gamma^{s_i' *}_{a,b,c'} 
			\lambda_{c}
			\lambda_{c'}
			, \\
	w_{(dd')} &=& 
		\sum_{s_j,s_j'} O_{s_j,s_j'}^{(j)}
	\sum_{e,f} 
			\Gamma^{s_j}_{d,e,f} 
			\Gamma^{s_j' *}_{d',e,f} 
			(\lambda_{e})^2
			(\lambda_{f})^2
			.
\end{eqnarray}
The correlation length is again related to the second eigenvalue of the $B$ matrix as
in \eqref{secondeig}.


\section{Quantum Transverse Field Ising Model} 
\label{NRsection}

My goal is to investigate the phase transition for the Ising model in
transverse magnetic field \eqref{ourH1} on the infinite line and on the Bethe lattice. 
I choose to parametrize the Hamiltonian as  
\begin{eqnarray}
	H = \frac{s}{2} \sum_{\langle i,j\rangle} \left(1-\sigma_z^{i} \sigma_z^{j}\right) 
		+ \frac{b(1-s)}{2} \sum_i \left(1-\sigma_x^{i}\right), \label{MPSourH}
\end{eqnarray}
where $0\leq s\leq 1$
and $b$ is the number of bonds for each site ($b=2$ for the line, $b=3$ for the tree). I will investigate the ground state properties of \eqref{MPSourH} as we vary $s$. The point $s=0$ corresponds to a spin system in transverse magnetic field, while $s=1$ corresponds
to a purely ferromagnetic interaction between the spins.


\subsection{The Infinite Line.}
\label{NRsection1}
I present the results for the case of an infinite line and compare
them to exact results obtained via fermionization 
(see e.g. \cite{MPS:Sachdevbook}, Ch.4).
Vidal has shown \cite{MPS:Vidal1Dinfinite} that imaginary time evolution within the MPS ansatz is capable of providing a very accurate approximation for the ground state energy and correlation function. I show that even using $\chi$ smaller than used in \cite{MPS:Vidal1Dinfinite}, I obtain the essential information about the nature of the phase transition in the infinite one-dimensional system. I also obtain the critical exponents for the magnetization and the correlation length.

In FIG.\ref{lineEd}, I show the how the ground state energy obtained using imaginary time evolution with MPS converges to the exact energy as $\chi$ increases.
\begin{figure}
	\begin{center}
	\includegraphics[width=4.5in]{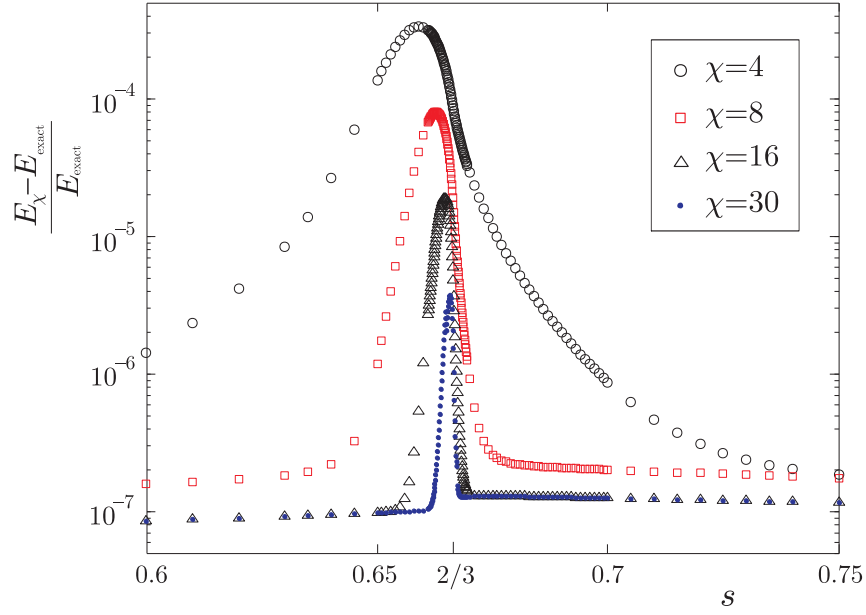}
	\caption{Transverse Ising model on an infinite line. Fractional difference of the ground state energy obtained using MPS and the exact ground state, near the phase transition at $s_L=\frac{2}{3}$. The energy scale is logarithmic.
	}
	\label{lineEd}
	\end{center}
\end{figure}
The exact solution for a line has a second order phase transition 
at the critical value of $s$, $s_L=\frac{2}{3}$, and the ground state energy 
and its first derivative are continuous,
while the second derivative diverges at $s=s_L$.
I plot the first and second derivative of $E$ with respect to $s$ obtained numerically and compare them to the exact values in FIG. \ref{lineE12},
observing the expected behavior already for low $\chi$.
\begin{figure}
	\begin{center}
	\includegraphics[width=6in]{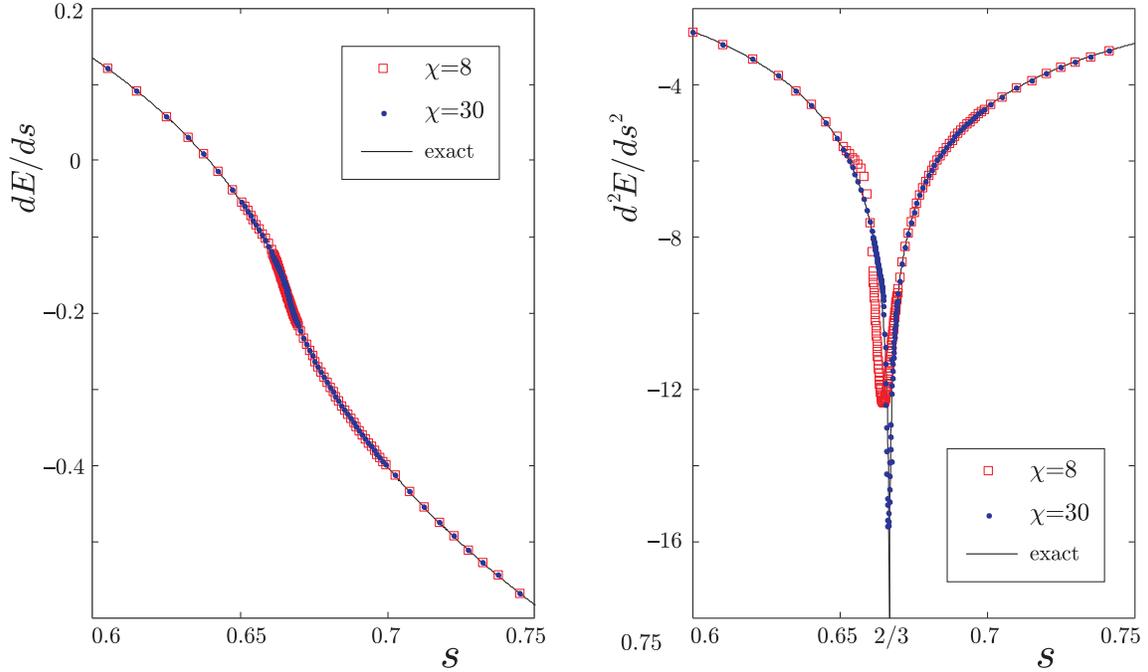}
	\caption{Transverse Ising model on an infinite line. The first and second derivative with respect to $s$ of the
	ground state energy obtained via MPS compared with the exact result.
	}
	\label{lineE12}
	\end{center}
\end{figure}

\begin{figure}
	\begin{center}
	\includegraphics[width=3.5in]{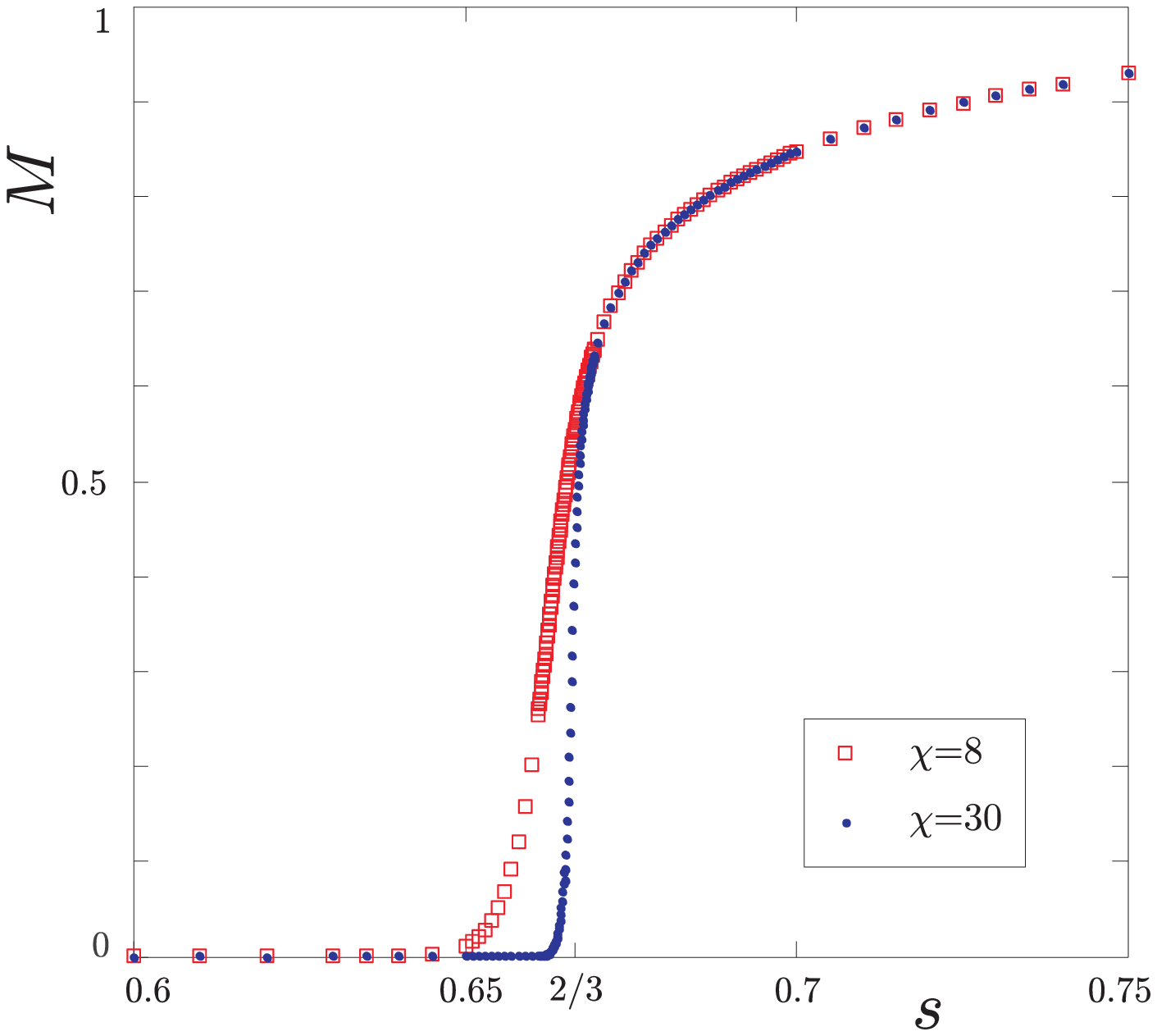}
	\caption{Transverse Ising model on an infinite line. Magnetization obtained using MPS vs $s$, with $B_z=10^{-8}$.
	}
	\label{lineM}
	\end{center}
\end{figure}
The derivative of the exact magnetization $M = \langle \sigma_z \rangle$ is 
discontinuous at $s_L$, with the magnetization starting to rise steeply
from zero as 
\begin{eqnarray}
	M \propto (x-x_L)^{\beta},
	\label{Mexponentline}
\end{eqnarray}
with the critical exponent $\beta_L = \frac{1}{8}$. Here $x$ is the ratio of the ferromagnetic interaction strength
to the transverse field strength in \eqref{MPSourH} and is
\begin{eqnarray}
	x = \frac{s}{2(1-s)},
\end{eqnarray}
with the value $x=x_L=1$ at the phase transition ($s_L=\frac{2}{3}$). 
I plot the magnetization obtained with my method 
in FIG.\ref{lineM}. To obtain the magnetization depicted in the plot, I used a small symmetry breaking longitudinal field with magnitude $B_z = 10^{-8}$.
\begin{figure}
	\begin{center}
	\includegraphics[width=4in]{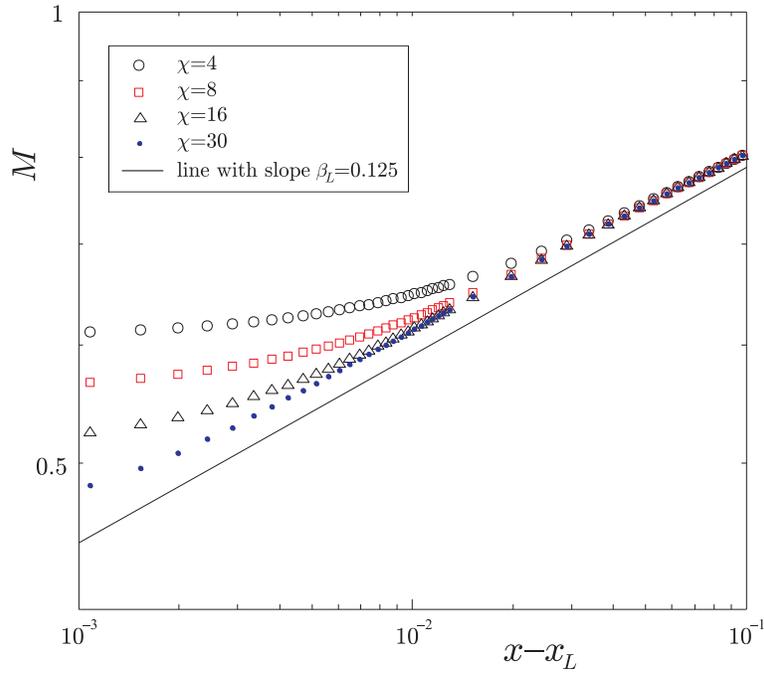}
	\caption{Transverse Ising model on an infinite line. Log-log plot of magnetization vs. $x-x_L$. 
	I also plot a line with slope $\beta_L = 0.125$.
	}
	\label{lineMexp}
	\end{center}
\end{figure}
In FIG.\ref{lineMexp}, I plot $M$ vs. $x-x_L$ on a log-log scale. I also plot a line with slope $0.125$. Observe that as $\chi$ increases, the data is better represented by a line down to smaller values of $x-x_L$. 
For the largest $\chi$ displayed, a straight line fit of the data between $4\times 10^{-3} \leq x-x_L \leq 10^{-1}$ gives me a slope of $0.120$. Note that the mean field value of the critical exponent for magnetization is $0.5$.

\begin{figure}
	\begin{center}
	\includegraphics[width=5in]{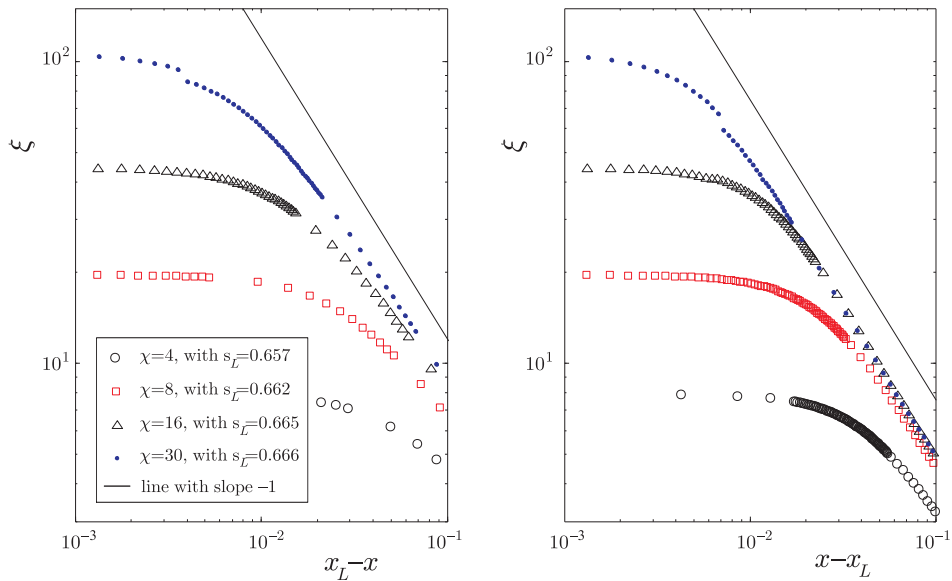}
	\caption{Transverse Ising model on an infinite line. Log-log plot of the correlation length vs. $|x-x_L|$. 
	I also plot a line with slope $-\nu_L = -1$.
	For each $\chi$ in the plot, I choose a numerical value of the critical point $s_L$ 
	as the point at which the correlation length is maximal.
	}
	\label{lineCexp}
	\end{center}
\end{figure}
The correlation function $\corrf$ 
can be computed efficiently using \eqref{twoexpect}.
Away from criticality, it falls off
exponentially as $e^{- |i-j|/\xi}$.
The falloff of the correlation function is necessarily exponential as long as $\mu_2$,
the second eigenvalue of $B$, is less than $1$.
The exact solution for the correlation length $\xi$ near the critical point has the form
\begin{eqnarray}
  \xi \propto |x-x_L|^{-1}
  \label{xidiverge}
\end{eqnarray}
as $x$ approaches $x_L$.
Already at low $\chi$ the iTEBD method captures the divergence of the correlation length. In FIG. \ref{lineCexp}, I plot $\xi$ vs. $|x-x_L|$ on a log-log plot, together with a line with slope $-1$.
Again, as $\chi$ increases, the data is better represented by a line closer to the phase transition.
For the highest $\chi$ displayed, a straight line fit of the data between $2\times 10^{-2} \leq x_L-x \leq 4\times 10^{-1}$ gives me a slope of $-0.92$.
Note that the mean-field value of the critical exponent for the correlation length is $0.5$ (corresponding to slope $-0.5$ in the graph).


\subsection{The Infinite Tree (Bethe Lattice).}
\label{NRsection2}

The computational cost of the tree simulation is more expensive with growing $\chi$ than the line simulation, so I give my results only up to $\chi=8$. I run the imaginary time evolution with 10000 iterations (each iteration followed by several normalization steps) for each point $s$, taking a lower $\chi$ result as the starting point for the procedure. I also add a small symmetry-breaking longitudinal field with magnitude $B_z = 10^{-8}$. 

I see that the energy and its first derivative with respect to $s$ are continuous. However, I now observe a finite discontinuity in the second derivative of the ground state energy (see FIG. \ref{treeEE}), as opposed to the divergence on the infinite line. This happens near $s=s_T \approx 0.5733$.
\begin{figure}
	\begin{center}
	\includegraphics[width=5.5in]{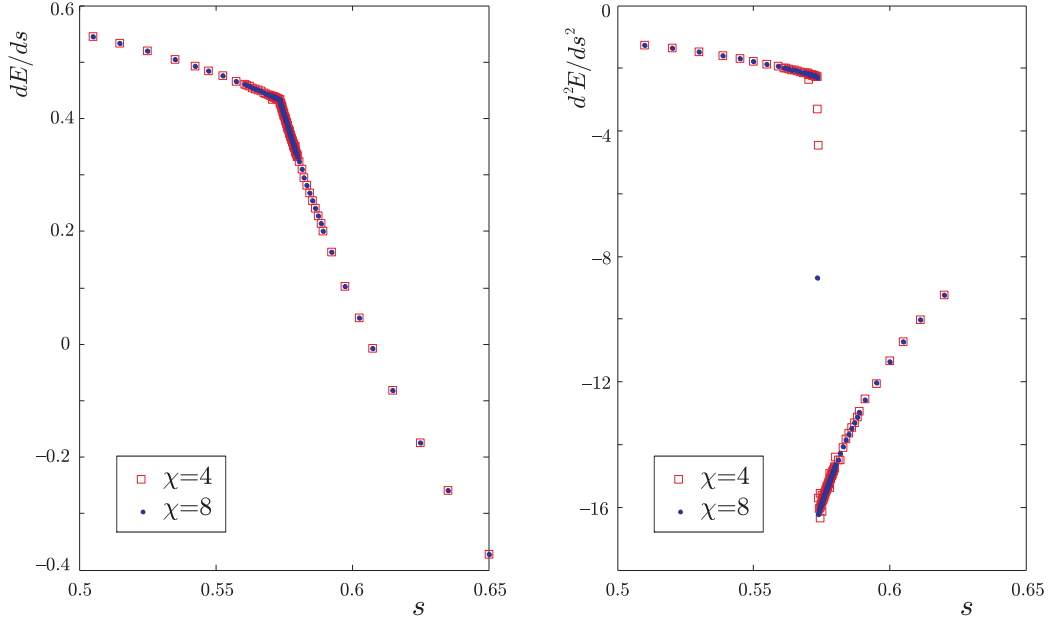}
	\caption{Transverse Ising model on an infinite tree. The first and second derivative with respect to $s$ of the ground state energy obtained via MPS.
	}
	\label{treeEE}
	\end{center}
\end{figure}

\begin{figure}
	\begin{center}
	\includegraphics[width=4.5in]{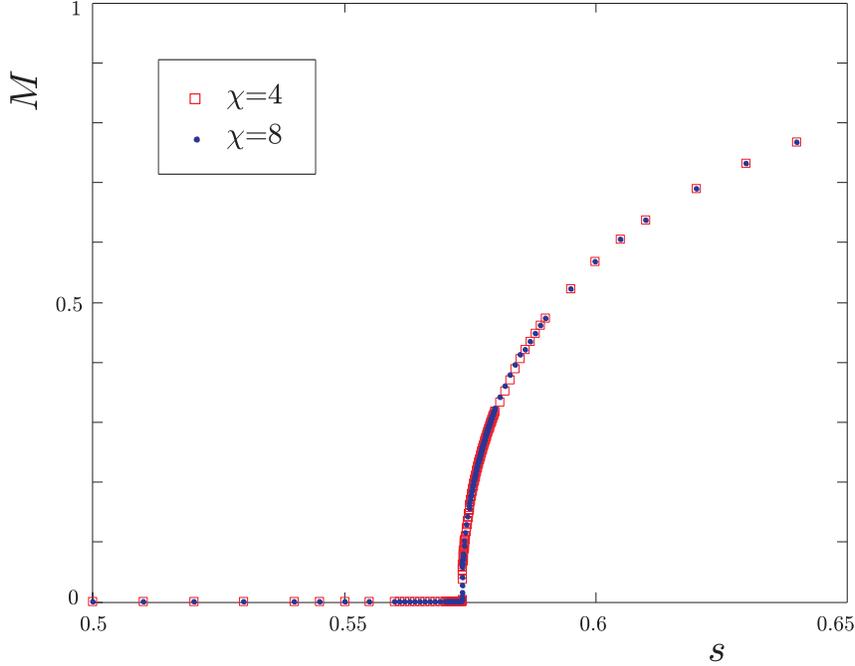}
	\caption{Transverse Ising model on an infinite tree. Magnetization vs. $s$, with $B_z=10^{-8}$.
	}
	\label{treeM}
	\end{center}
\end{figure}
Similarly to the one-dimensional case, the magnetization quickly grows for $s>s_T$,
while it has (nearly) zero value for $s<s_T$ (see FIG.
\ref{treeM}). 
In FIG.\ref{treeMexp}, I plot the magnetization vs. $x-x_T$ on a log-log scale, where
$x$ is 
\begin{eqnarray}
	x = \frac{s}{3(1-s)},
\end{eqnarray}
with the value $x=x_T\approx 0.451$ at the phase transition (where $s=s_T\approx 0.5733$). 
I want to test whether the magnetization behaves like
\begin{eqnarray}
	M\propto (x-x_T)^{\beta}
\end{eqnarray}
for $x$ close to $x_T$, which would appear as a line on the log-log plot. 
As $\chi$ grows, the data is better represented by a straight line closer to the phase transition. 
If I fit the $\chi=8$ data for
$4\times 10^{-4} \leq x-x_L 4\times \leq 10^{-3}$, I get $\beta=0.41$. I add a line
with this slope to the plot.
Note that the mean-field value for the exponent $\beta$ is $0.5$, just as it is for the infinite line.
\begin{figure}
	\begin{center}
	\includegraphics[width=4.5in]{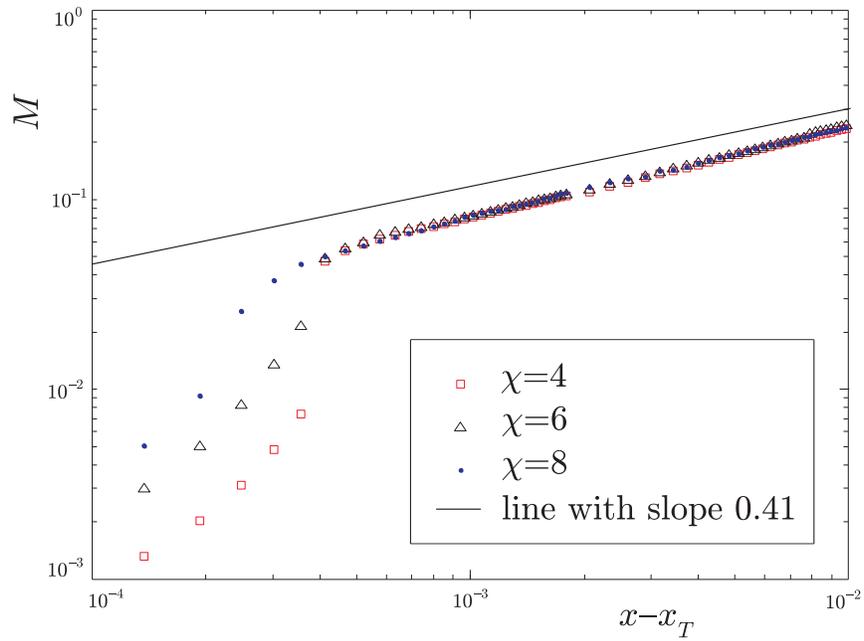}
	\caption{Transverse Ising model on an infinite tree. Log-log plot of magnetization vs. $x-x_T$, with $B_z = 10^{-8}$. 
	I also plot a line with slope $0.41$.
	}
	\label{treeMexp}
	\end{center}
\end{figure}

I observe that the correlation length now rises up only to a finite value (see FIG. \ref{treeCexp}). As I increase $\chi$, the second eigenvalue of the $B$ matrix, $\mu_2$, approaches a maximum value close to $\frac{1}{2}$.
I conjecture that the limiting value of $\mu_2$ is indeed $\frac{1}{2}$, which corresponds to a finite correlation length with value $(\ln 2)^{-1}$. Note that for the infinite line, the second eigenvalue of $B$ approaches $1$, and so the correlation length is seen to diverge at the phase transition.
\begin{figure}
	\begin{center}
	\includegraphics[width=5in]{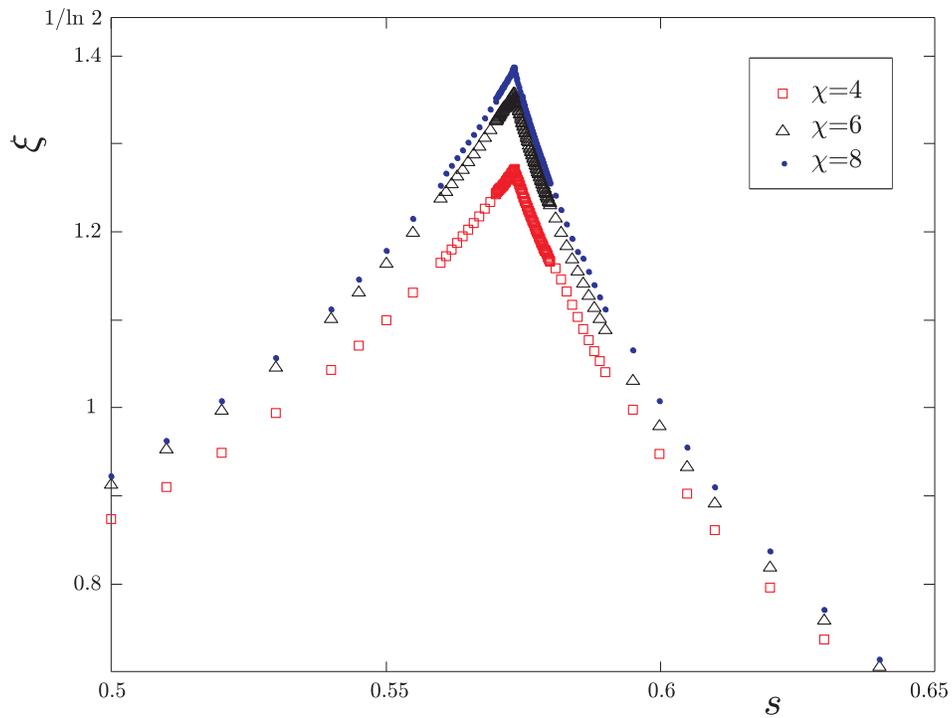}
	\caption{Transverse Ising model on an infinite tree. A linear plot of the correlation length vs. $s$.}
	\label{treeCexp}
	\end{center}
\end{figure}


\section{The Not $00$ Model}
\label{not00section}

I now look at a model with a different interaction term.
Starting with an antiferromagnetic interaction,
I add a specific longitudinal field at each site. As in the previous section, I parametrize the Hamiltonian \eqref{ourH00} with a single parameter $s$:
\begin{eqnarray}
	H_{\textsc{not}\,00} &=& s \sum_{\langle i,j\rangle} 
		\underbrace{\frac{1}{4}\left( 1 + \sigma_z^i + \sigma_z^j + \sigma_z^i \sigma_z^j  \right)}_{H_{ij}}
		+ \frac{b(1-s)}{2} \sum_i \left(1-\sigma_x^{i}\right), \label{Hnot00}
\end{eqnarray}
with $b=2$ on the line and $b=3$ on the tree.
I choose the longitudinal field in such a way that the nearest-neighbor interaction term $H_{ij}$ becomes a projector, expressed in the computational basis as  
\begin{eqnarray}
	H_{ij} = \ket{00}\bra{00}_{ij},
\end{eqnarray}
thus penalizing only the $\ket{00}$ configuration of neighboring spins. Accordingly, I call this model \textsc{not} 00. The ground state of the transverse Ising model \eqref{MPSourH} at $s=1$ has degeneracy 2. For \eqref{Hnot00} on the infinite line or the Bethe lattice, the degeneracy of the ground state at $s=1$ is infinite, as any state that does not have two neighboring spins in state $\ket{0}$ has zero energy.

\subsection{Infinite Line}
I use my numerics to investigate the properties of \eqref{Hnot00} 
on the infinite line as a function of $s$.
\begin{figure}
	\begin{center}
	\includegraphics[width=6in]{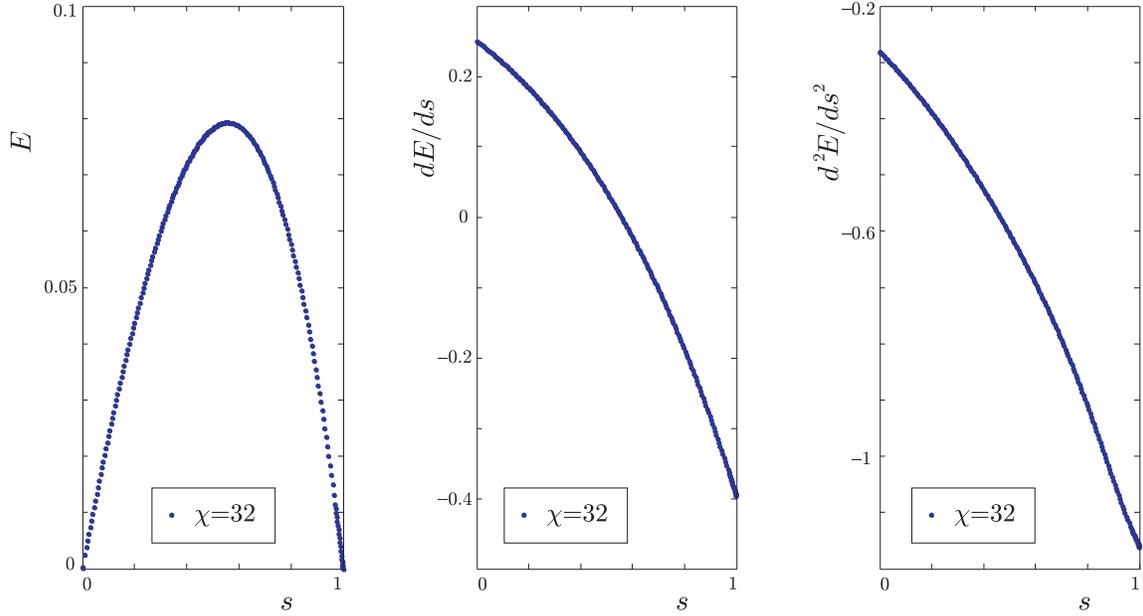}
	\caption{The \textsc{not} $00$ model on an infinite line. The ground state energy and its first two derivatives with respect to $s$.}
	\label{fig00lineE3}
	\end{center}
\end{figure}
\begin{figure}
	\begin{center}
	\includegraphics[width=5in]{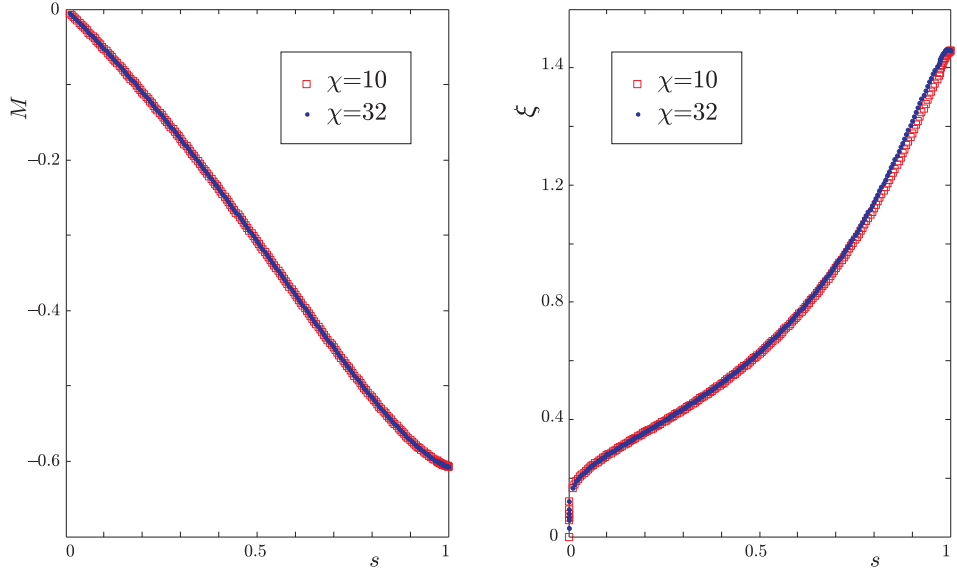}
	\caption{The \textsc{not} $00$ model on an infinite line. Magnetization as a function of $s$ and correlation length as a function of $s$.}
	\label{fig00lineM}
	\end{center}
\end{figure}
My numerical results show continuous first and second derivatives of the energy 
with respect to $s$ (see FIG.\ref{fig00lineE3}). The magnetization $M=\means{\sigma_z}$ decreases 
continuously and monotonically from $0$ at $s=0$ to a final value of $-0.606$ at $s=1$ (see FIG.\ref{fig00lineM}). 
The second eigenvalue of the $B$ matrix \eqref{secondeig} rises continuously from $0$ at $s=0$,
approaching $0.603$ at $s=1$ (see FIG.\ref{fig00lineM}). Because $\mu_2 < 1$, the 
correlation length $\xi$ is finite for all values of $s$ in this case.
These results imply that there is no phase transition for this model as I vary $s$.

As a test of my results, I compute the magnetization at $s=1$ exactly for this model on a finite chain (and ring) of up to $n=17$ spins. I maximize the expectation value of $H_B = \sum_i \sigma_x^i$ within the subspace of all allowed states at $s=1$ (with no two zeros on neighboring spins), thus minimizing the expectation value of the second term in \eqref{Hnot00} for $s$ approaching 1. 
I compute the magnetization $M = \langle \sigma_z^{i}\rangle$
for the middle $i = \lfloor \frac{n}{2} \rfloor$ spin for the ground state of the \textsc{not} 00 model 
exactly
for a finite chain and ring of up to $17$ spins at $s=1$. As I increase $n$, the value of $M$ converges to $-0.603$ (much faster for the ring, as the values of $M$ for $n=14,17$ differ by less than $10^{-4}$). Recall that I obtained $M=-0.606$ from my MPS numerics with $\chi = 16$ for the \textsc{not} 00 model on an infinite line.
I also compare the values of the Schmidt coefficients across the central division of the finite chain ($n=16$) to the elements of the $\lambda$ vector obtained using my MPS numerics with $\chi=32$. I observe very good agreement for the 11 largest values of $\lambda_k$, with the difference that my MPS values keep decreasing (exponentially), while the finite-chain values flatten out at around $\lambda_{k>14} \approx 10^{-9}$ (see FIG.\ref{not00lamfig}). The behavior of the components of $\lambda$ from MPS doesn't change with increasing $\chi$.
\begin{figure}
	\begin{center}
	\includegraphics[width=4in]{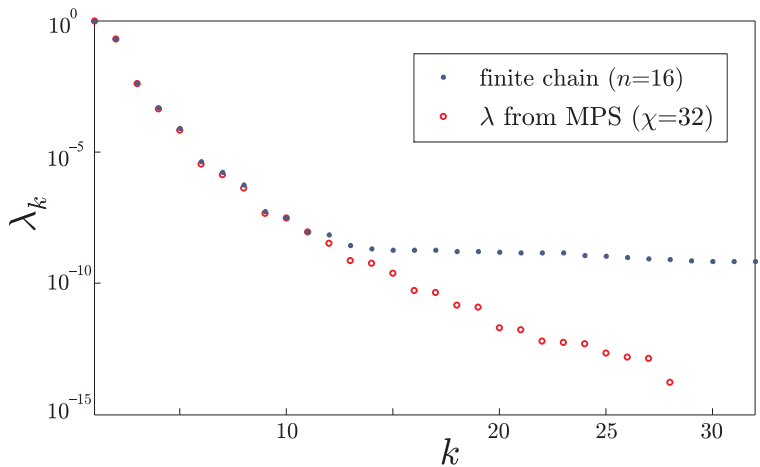}
	\caption{Ground state of the \textsc{not} $00$ model on a line at $s=1$. Comparison of the 	exact Schmidt coefficients for a division across the middle of a finite chain
	 and of the MPS values ($\chi=32$) for an infinite line.}
	\label{not00lamfig}
	\end{center}
\end{figure}

In the ground state of \eqref{Hnot00} at $s=1$, the overlap with the $\ket{00}$ state of any two neighboring spins is exactly $0$. If the $\Gamma$ tensors are the same at every site, the component of the state $\ket{\psi}$ that has overlap with the state $\ket{00}$ on nearest neighbors can be expressed as 
\begin{eqnarray}
	\sum_{a,b,c}   
	\left( \lambda_a \Gamma_{ab}^{0} \lambda_b \Gamma_{bc}^{0} \lambda_c \right) 
	\ket{\phi_a}\ket{00}\ket{\phi_c}.
\end{eqnarray}
Furthermore, when the $\Gamma$ tensors are symmetric, the elements of the $\lambda$ vectors must be allowed to take negative values
to make this expression equal to zero. Note that until now, I used only positive $\lambda$ vectors, knowing that they come from Schmidt decompositions, which give me the freedom to choose the components of $\lambda$ to be positive and decreasing. 

The negative signs in the $\lambda$ vector can be absorbed into every other $\Gamma$ tensor, resulting in a state with two different $\Gamma$ (for the even and odd-numbered sites) and only positive $\lambda$'s. In fact, this is what I observe in my numerics, which assume positive $\lambda$, but allow two different $\Gamma$ tensors (see \ref{TIline}). If I allow the elements of $\lambda$ to take negative values, my numerically obtained $\Gamma$ tensors are identical.

\subsection{Infinite Tree}
Here, I numerically investigate the \textsc{not} $00$ model \eqref{Hnot00} on the Bethe lattice. 
\begin{figure}
	\begin{center}
	\includegraphics[width=6in]{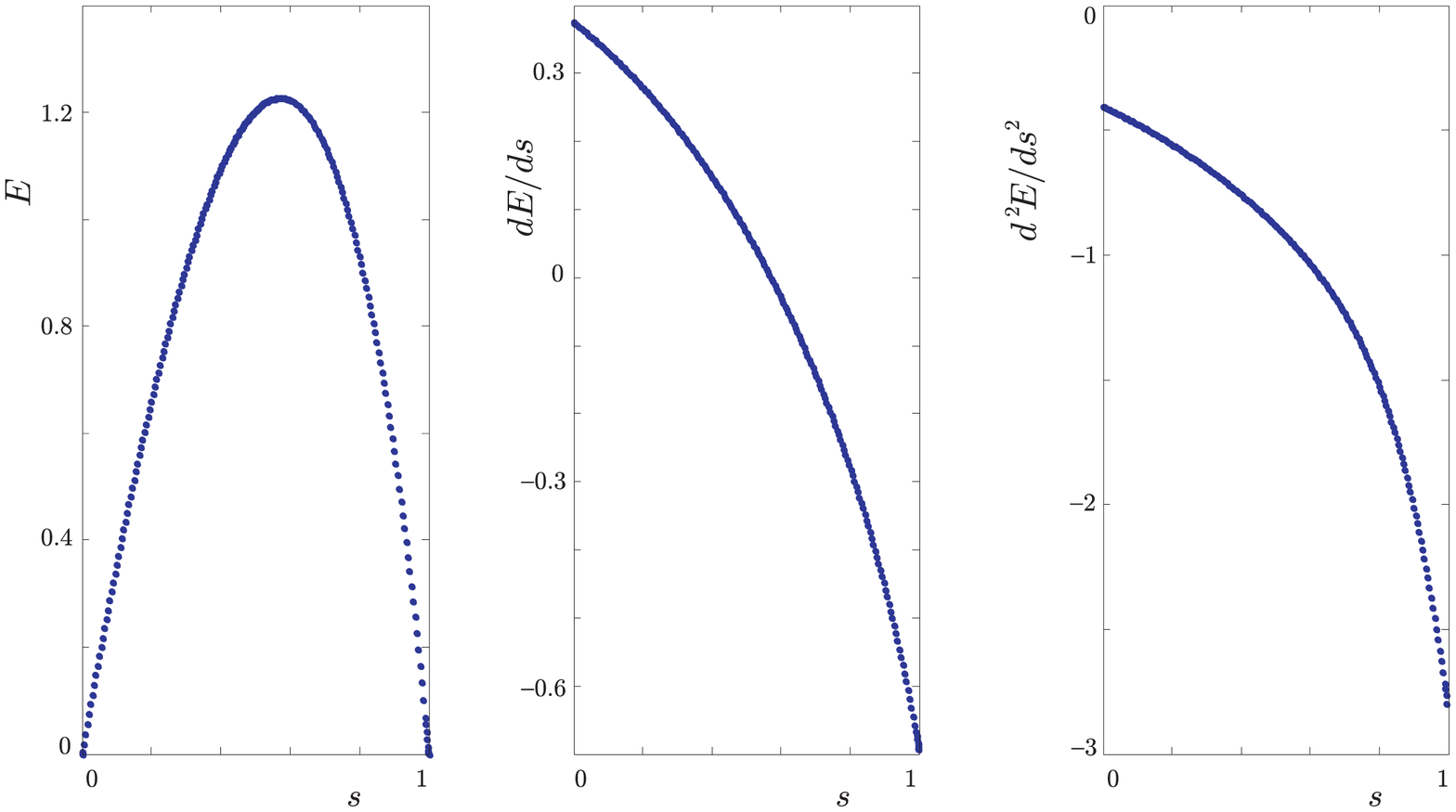}
	\caption{The \textsc{not} $00$ model on an infinite tree. The ground state energy and its first two derivatives with respect to $s$.}
	\label{fig00treeE3}
	\end{center}
\end{figure}
\begin{figure}
	\begin{center}
	\includegraphics[width=5in]{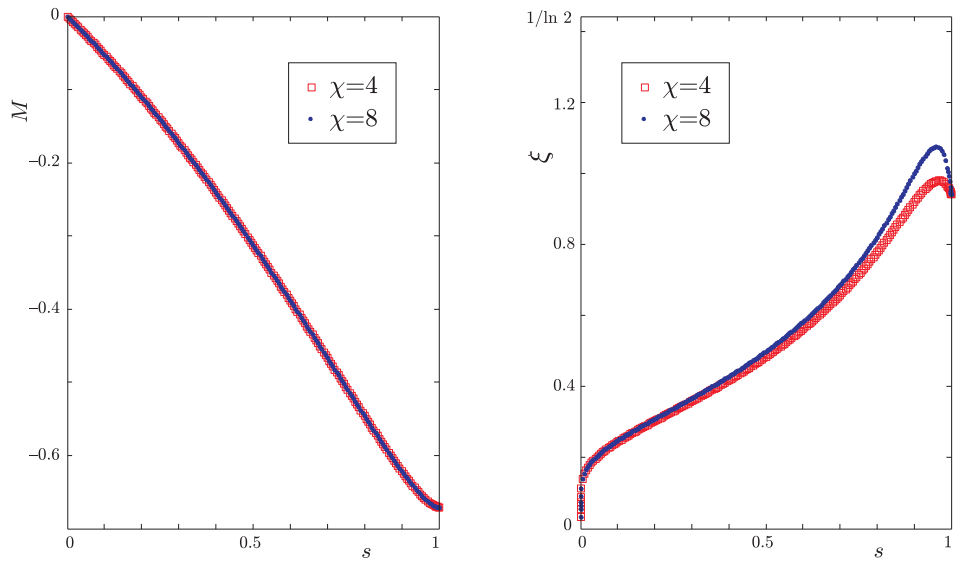}
	\caption{The \textsc{not} $00$ model on an infinite tree. Magnetization as a function of $s$ and correlation length as a function of $s$.}
	\label{fig00treeM}
	\end{center}
\end{figure}
As on the line, the numerics show continuous first and second derivatives of the energy with respect to $s$ (see FIG.\ref{fig00treeE3}) and a continuous decrease in the magnetization from $0$ at $s=0$ to $-0.671$ at $s=1$ (see FIG.\ref{fig00treeM}). The correlation length behaves similarly as on the line, increasing with $s$, but it reaches a maximum at $s=0.96$ for $\chi=8$. The maximum value of $\xi$ is apparently lower than $1/\textrm{ln}\,2$, (see FIG.\ref{fig00treeM}), meaning that 
on the tree, the correlation function $\corrf$ falls off with distance faster than $2^{-|i-j|}$ for all $s$. 



\section{Stability and Correlation Lengths on the Bethe Lattice}
\label{stabilitysection}

I have found that on the Bethe lattice, for both my models, the second eigenvalue $\mu_2$ of the matrix $B$ \eqref{secondeig}, which determines the correlation length, apparently is never greater than $\frac{1}{2}$. In this section I argue that this is a model-independent, and calculation method independent, consequence of assuming that a translation-invariant ground state is the stable limit of a sequence of ground states of finite Cayley trees as the size of the tree grows. 
For a related problem, the stability of recursions for Valence Bond States on Cayley trees has been investigated by Fannes et.al. in \cite{MPS:VBScayley}.

The Hamiltonians \eqref{ourH1} and \eqref{ourH00} each consist of sums of terms 
$H^{(x)}_k$, $H^{(z)}_m$, as in \eqref{trotter}, where each term
$H^{(x)}_k$ depends on a single $\sigma_x$ and each $H^{(z)}_m$ on a neighboring pair of $\sigma_z$. I calculate the quantum partition function
\begin{eqnarray}
	Z(\beta) = \tr e^{-\beta H}
	\label{partf1}
\end{eqnarray}
as the limit of 
\begin{eqnarray}
	Z(N,\Delta t) = \tr \left[ 
		\prod_k e^{-\Delta t H_k^{(x)} }
		\prod_m e^{-\Delta t H_m^{(z)}} \right]^N
	\label{partf2}
\end{eqnarray}
as $\Delta t \rightarrow 0$, $N \rightarrow \infty$ with $N\Delta t = \beta$. To find the properties of the ground state, I take $\beta \rightarrow \infty$ so that I need $Z(N,\Delta t)$ as $\Delta t \rightarrow 0$, $N \rightarrow \infty$ with $N\Delta t \rightarrow \infty$ and $N(\Delta t)^3 \rightarrow 0$ (to make the error in using the Trotter-Suzuki formula go to zero). I interpret \eqref{partf2} as giving the {\em classical} partition function of a system of Ising spins ($s=\pm 1$) on a lattice consisting of $N$ horizontal layers, each of which is a Cayley tree of radius $M$ (i.e with a central node and concentric rings of $3, 3\times2, 3\times 2^2, \dots, 3\times 2^{M-1}$ nodes). 
I can write \eqref{partf2} as
\begin{eqnarray}
	Z(N,\Delta t) = \sum_{\{s\}} \prod A \prod B,
	\label{Zprod}
\end{eqnarray}
where the sum is over all configurations of $N\times(3\times 2^M-2)$ spins $s=\pm 1$
and the products are of a Boltzmann factor $A$ for each horizontal link in the Cayley trees, and a Boltzmann factor $B$ for each vertical link between corresponding nodes in neighboring layers
(see FIG.\ref{figboltzmann})
(layer $N$ is linked to layer $1$ to give the trace).
\begin{figure}
	\begin{center}
	\includegraphics[width=2in]{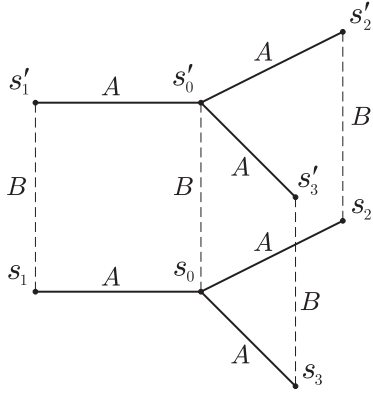}
	\caption{A system of classical spins on a lattice whose layers are Cayley trees. $A$ and $B$ denote the Boltzmann factors.}
	\label{figboltzmann}
	\end{center}
\end{figure}
The factor $A$ for the link between nodes $i,j$ in the same horizontal layer is given by  
\begin{eqnarray}
	A(s_i,s_j) &=& e^{-\Delta t H^{(z)}_{(ij)}(s_i,s_j)}.
	\label{defA} 
\end{eqnarray}
The factor $B$ for the link between nodes $i,i'$ in the same vertical column is given by
\begin{eqnarray}
	B(s_i,s_i') &=& \bra{\sigma_z=s_i'} e^{-\Delta t H^{(x)}_{(i)}} \ket{\sigma_z=s_i}.
	\label{defB}
\end{eqnarray}
When all the terms $H^{(z)}_{(ij)}$ are of the same form, as are all the terms $H^{(x)}_{(i)}$, the form of the factors $A$ and $B$ does not depend on which particular links they belong to.

Each term in the sum, divided by $Z$, can be thought of as the probability of a configuration $\{s\}$. In what follows I will keep $N$ and $\Delta t$ fixed and consider the limit $\Mlim$, i.e. finite Cayley tree $\rightarrow$ Bethe lattice. I will then suppose that my results, which are independent of the form of $A$ and $B$ (provided $A,B > 0$) will also hold after the $N\rightarrow \infty$ limit is taken, i.e. for the quantum ground state.

I will think of the lattice as a single tree, with each node being a vertical column of $N$ spins. (A recent use of this technique to investigate the quantum spin glass on the Bethe lattice is in \cite{MPS:LSSspinglass}.)
Denote by $\vec{s}$ the vector of $N$ values of $s$ along a column. Let $K(\vec{s})$ be the
product of the $N$ factors $B(s,s')$ along a column and let $L(\sve,\svp)$ be the product of 
$N$ factors $A(s,s')$ on the horizontal links between nearest neighbor columns. Let $Z_M$
be the partition function for a tree of radius $M$. 
I can calculate $Z_M$ by a recursion on $M$ as follows:
\begin{eqnarray}
	Z_M &=& \sum_{\sve} K(\sve) [ F_M(\sve) ]^3, \label{Zdefinition} \\
	F_M(\sve) &=& \sum_{\svp} L(\sve,\svp) K(\svp) [ F_{M-1}(\svp) ]^2, 	
			\label{Frecursion} \\
	F_0(\sve) &=& 1.	
			\label{Finit} 
\end{eqnarray}
It is easy to see that this recursion gives the correct $Z_M$ (the case $M=2$ is shown in
 FIG.\ref{figm2tree}).
\begin{figure}
	\begin{center}
	\includegraphics[width=3in]{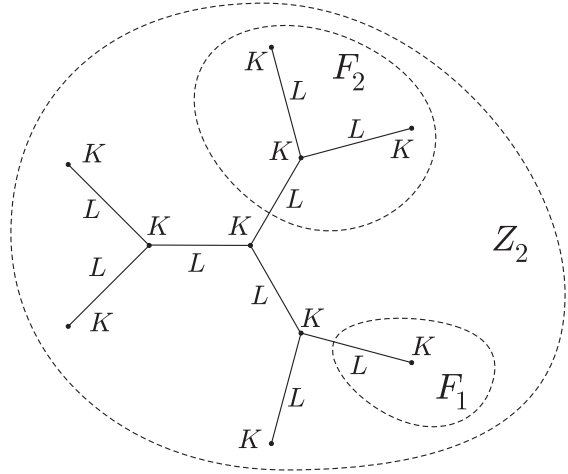}
	\caption{The $M=2$ Cayley tree.}
	\label{figm2tree}
	\end{center}
\end{figure}
I also see that
\begin{eqnarray}
	P_M(\sve) &=& \frac{1}{Z_M} K(\sve) [ F_{M}(\sve) ]^3 
			\label{probability} 
\end{eqnarray}
is the probability of the configuration $\sve$ along the central column.
For the case $N=1$, i.e. classical statistical mechanics on a tree, this is the well-known method to find an exact solution \cite{MPS:classicalexact}.

In order to have a well-defined translationally invariant limit as $\Mlim$ I would like the recursion \eqref{Frecursion} for $F_M$ to have an attractive fixed point $F$ which $F_M$ approaches as $\Mlim$.
`Attractive' means that if I start the recursion with a different $F_0(\sve)$, sufficiently close to $F_0(\sve)=1$, the limiting value of $F_M(\sve)$ will be the same fixed point. This in turn implies that on a Cayley tree with large $M$, small changes in the Hamiltonian on the outer edge will have small effects on the properties of the central region.

First however I need to fix the overall normalization of $F_M(\sve)$, since if $F_M(\sve)$ satisfies \eqref{Frecursion}, so does $a^{2^M} F_M(\sve)$ which rules out an attractive fixed point.

Let
\begin{eqnarray}
	F_M(\sve) &=& Z_M^{\frac{1}{3}} \hat{F}_{M}(\sve),
			\label{Fhat} 
\end{eqnarray}
so that
\begin{eqnarray}
	\sum_{\sve} K(\sve) [ \hat{F}_{M}(\sve) ]^3 &=& 1,
			\label{Fhatnorm} 
\end{eqnarray}
and
\begin{eqnarray}
	P_M(\sve) &=& K(\sve) [ \hat{F}_{M}(\sve) ]^3.
			\label{Phat} 
\end{eqnarray}
The recursion relation becomes
\begin{eqnarray}
	\hat{F}_M(\sve) &=& \lambda_M \sum_{\svp} L(\sve,\svp) K(\svp) 
		[ \hat{F}_{M-1}(\svp) ]^2, 	
			\label{Fhatrecursion}
\end{eqnarray}
with $\lambda_M$ determined by the normalization condition \eqref{Fhatnorm}. 
I can now suppose that
\begin{eqnarray}
	\hat{F}_M(\sve) \rightarrow \hat{F}(\sve) \quad \textrm{as} \quad \Mlim, 	
			\label{Fgoes} 
\end{eqnarray}
with
\begin{eqnarray}
	\hat{F}(\sve) &=& \lambda \sum_{\svp} L(\sve,\svp) K(\svp) [ \hat{F}(\svp) ]^2, 	
			\label{Flimit} 
\end{eqnarray}
and
\begin{eqnarray}
	\sum_{\sve} K(\sve) [ \hat{F}(\sve) ]^3 &=& 1.
			\label{Flimnorm} 
\end{eqnarray}
To determine whether $\hat{F}$ is an {\em attractive} fixed point, let
\begin{eqnarray}
	\hat{F}_M(\sve) &=& \hat{F}(\sve) + f_M(\sve), \label{Fperturb} \\
	\lambda_M &=& \lambda (1 + \ep_M), \label{lperturb}
\end{eqnarray}
with $f_M \rightarrow 0$ and $\ep_M \rightarrow 0$ as $\Mlim$.
To first order in $f_M$, $\ep_M$, \eqref{Fhatrecursion} and \eqref{Fhatnorm} become
\begin{eqnarray}
	f_M(\sve) &=& \ep_M \hat{F}(\sve) + 2 \sum_{\svp} T(\sve,\svp) f_{M-1}(\svp),
		\label{frecursion} \\
	\sum_{\sve} K(\sve) [ \hat{F}(\sve)]^2 f_M(\sve) &=& 0,
		\label{frecursion2}	
\end{eqnarray}
where
\begin{eqnarray}
	T(\sve,\svp) = \lambda L(\sve,\svp) K(\svp) \hat{F}(\svp).
			\label{Tdef} 
\end{eqnarray}
From \eqref{Flimit},
\begin{eqnarray}
	\sum_{\svp}  T(\sve,\svp) \hat{F}(\svp) = \hat{F}(\sve),
	\label{TFhat}
\end{eqnarray}
and since $L(\sve,\svp) = L(\svp,\sve)$,
\begin{eqnarray}
	\sum_{\svp}  K(\svp) [ \hat{F}(\svp) ]^2 T(\svp,\sve) 
		= K(\sve) [\hat{F}(\sve) ]^2,
\end{eqnarray}
i.e. the linear operator $T$ has an eigenvalue one, with right eigenvector $\hat{F}$ and left eigenvector $K\hat{F}^2$ (which from \eqref{Flimnorm} have scalar product one).
\eqref{frecursion} now gives
\begin{eqnarray}
	\sum_{\sve} K(\sve) [ \hat{F}(\sve) ]^2 f_M(\sve) 
		= \ep_M  + 2\sum_{\sve} K(\sve) [ \hat{F}(\sve) ]^2 f_{M-1}(\sve),
\end{eqnarray}
so from \eqref{frecursion2}, $\ep_M = 0$. Let 
\begin{eqnarray}
	T^{\perp}(\sve,\svp) = T(\sve,\svp) - \hat{F}(\sve) K(\svp) [\hat{F}(\svp)]^2,
	\label{Tperp1}
\end{eqnarray}
so that 
\begin{eqnarray}
	\sum_{\svp} T^{\perp}(\sve,\svp) \hat{F}(\svp) &=& 0,
	\label{Tperp2}
\end{eqnarray}
and
\begin{eqnarray}
	\sum_{\svp} K(\svp) [ \hat{F}(\svp) ]^2 T^{\perp}(\svp,\sve) 
		= 0.
	\label{Tperp2b}
\end{eqnarray}
\eqref{frecursion} now becomes 
\begin{eqnarray}
	f_M(\sve) = 2 \sum_{\svp} T^{\perp}(\sve,\svp) f_{M-1}(\svp).
		\label{ffromT}
\end{eqnarray}
\eqref{ffromT} shows that $\hat{F}(\sve)$ is an attractive fixed point if and only if 
\begin{eqnarray}
	\norms{T^{\perp}} < \frac{1}{2}
	\label{Tbound}
\end{eqnarray}
(for a tree with valence $p+1$ at each vertex, $\frac{1}{2}$ is replaced by $\frac{1}{p}$).

I can in fact prove that there does exist an $\hat{F}(\sve)$ satisfying \eqref{Flimit} and \eqref{Flimnorm}, for which the corresponding $T^\perp$ has a maximum eigenvalue less than $\frac{1}{2}$. Define a function $\Phi$ of $\hat{F}(\sve)$ by
\begin{eqnarray}
	\Phi[\hat{F}] = \sum_{\sve,\svp} [\hat{F}(\sve)]^2 K(\sve)
		L(\sve,\svp)K(\svp)[\hat{F}(\svp)]^2.
\end{eqnarray}
I look for a maximum of $\Phi$ with $\hat{F}(\sve)$ restricted to the region
\begin{eqnarray}
	\sum_{\sve} K(\sve)[\hat{F}(\sve)]^3 &=& 1,\\
	\hat{F}(\sve) &\geq& 0.
\end{eqnarray}
A maximum must exist, but it might be on the boundary of the region, i.e. it might have $\hat{F}(\sve)=0$ for some values of $\sve$. Elementary calculations (omitted here) establish that stationary values of $\Phi$ on the boundary cannot be maxima. At stationary points 
in the interior of the region, i.e. with $\hat{F}(\sve)>0$ for all $\sve$, \eqref{Flimit} and \eqref{Flimnorm} must be satisfied. If such a stationary point is a maximum, all the eigenvalues of $1-2T^\perp$ are $\geq 0$, i.e. all the eigenvalues of $T^\perp$ are $\leq \frac{1}{2}$. This is weaker than the attractive fixed point condition, which also requires that no eigenvalue is less than $-\frac{1}{2}$, but does correspond to the observed property of $\mu_2$.

I now examine the joint probability distribution of $\vec{s}_0$ and $\vec{s}_d$, where $0$ denotes the central column and $d$ a column distance $d$ from the center.
\begin{figure}
	\begin{center}
	\includegraphics[width=4in]{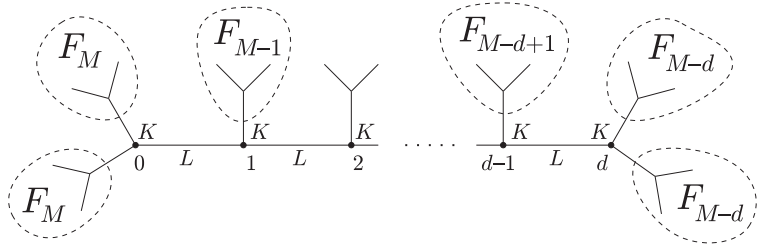}
	\caption{Computing the probability distribution of $\vec{s}_0$ and $\vec{s}_d$.}
	\label{figdistanced}
	\end{center}
\end{figure}
One can see from FIG.\ref{figdistanced} that
\begin{eqnarray}
	P_M(\vec{s}_0,\vec{s}_d) = \frac{1}{Z_M} \sum_{\vec{s}_1,\cdots,\vec{s}_{d-1}}
		[ F_M(\vec{s}_0) ]^2 
			K(\vec{s}_0) L(\vec{s}_0,\vec{s}_1) 
			F_{M-1}(\vec{s}_1)
			\dots && \\
			\dots L(\vec{s}_{d-1},\vec{s}_d) K(\vec{s}_d)
			[ F_{M-d}(\vec{s}_d) ]^2.&& \nonumber
\end{eqnarray}
As $\Mlim$ (with $d$ fixed) this becomes (using \eqref{Tdef})
\begin{eqnarray}
	P(\vec{s}_0,\vec{s}_d) &=& 
		C [ \hat{F}(\vec{s}_0) ]^2 K(\vec{s}_0) 
		T^{d} (\vec{s}_0,\vec{s}_d)
		 \hat{F}(\vec{s}_d),
	 \label{PfromT}
\end{eqnarray}
where the normalization $C$ is determined by 
\begin{eqnarray}
	\sum_{\vec{s}_0,\vec{s}_d} P(\vec{s}_0,\vec{s}_d) = 1.
\end{eqnarray}
Using \eqref{TFhat} I find
\begin{eqnarray}
	P(\vec{s}_0) = \sum_{\vec{s}_d} P(\vec{s}_0,\vec{s}_d) 
		= C K(\vec{s}_0) [ \hat{F}(\vec{s}_0) ]^3,
\end{eqnarray}
so from \eqref{Flimnorm}, $C=1$ (and $P(\vec{s}_0)$ agrees with the limit of \eqref{Phat}). Expressing \eqref{PfromT} in terms of $T^{\perp}$ using \eqref{Tperp1}, \eqref{Tperp2} and \eqref{Tperp2b}, 
\begin{eqnarray}
	P(\vec{s}_0,\vec{s}_d) - P(\vec{s}_0) P(\vec{s}_d)
	 	&=& K(\vec{s}_0) [ \hat{F}(\vec{s}_0) ]^2
			(T^{\perp})^d (\vec{s}_0,\vec{s}_d)  \hat{F}(\vec{s}_d).
\end{eqnarray}
Thus the correlation between $\vec{s}_0$ and $\vec{s}_d$ falls off as $\mu^{d}$, where $\mu$ is the eigenvalue of $T^\perp$ with maximum modulus, and so from \eqref{Tbound}, faster than $1/2^d$.

If this conclusion is correct (and clearly the argument is less than rigorous), it establishes more than my experimental observation that $\mu_2<\frac{1}{2}$. The quantum limit of $P(\vec{s}_0,\vec{s}_d)$ encodes not only the static correlation 
$\bra{\psi_0} s_0 s_d \ket{\psi_0}$ in the ground state, but also the 
imaginary time dependent correlation
$\bra{\psi_0}e^{Ht}s_0 e^{-Ht} s_d \ket{\psi_0}$ which in turn determines the linear response as measured by $s_d$ to a time-dependent perturbation proportional to $s_0$. If this indeed falls off faster than $1/2^d$, then there is some hope that the Bethe lattice can be used as a starting point for investigation of fixed valence random lattices.

\chapter{Quantum Satisfiability}\label{chQsat}

The topic of this chapter is {\em Quantum Satisfiability}, the quantum analog of classical {\em Satisfiability}. I discussed the locally constrained classical problem in Section \ref{ch1:local}, where I also summarized the well known complexity of its many variants (see Table \ref{SATtable} in Section \ref{ch1:clascomplex}). On the other hand, there are still several open questions about Quantum Satifiability, and I choose to investigate some of them here.

First, in Section \ref{ch4:qks} I introduce Quantum Satisfiability and summarize what is presently known about its complexity. Then I review the proof techniques I use throughout this chapter, focusing on several clock constructions in Section \ref{ch4:clocks}. 
Section \ref{ch4:q3} is based on the paper \cite{lh:new3local}

\mypaper{A New Construction for a QMA Complete 3-local Hamiltonian}{Daniel Nagaj, Shay Mozes}{
We present a new way of encoding a quantum computation into a 3-local Hamiltonian. Our construction is novel in that it does not include any terms that induce legal-illegal clock transitions. Therefore, the weights of the terms in the Hamiltonian do not scale with the size of the problem as in previous constructions. This improves the construction by Kempe and Regev \cite{lh:KR03}, who were the first to prove that 3-local Hamiltonian is complete for the complexity class QMA, the quantum analogue of NP. 

Quantum k-SAT, a restricted version of the local Hamiltonian problem using only projector terms, was introduced by Bravyi \cite{lh:BravyiQ2SAT06} as an analogue of the classical k-SAT problem. Bravyi proved that quantum 4-SAT is complete for the class QMA with one-sided error (QMA$_1$) and that quantum 2-SAT is in P. We give an encoding of a quantum circuit into  a quantum 4-SAT Hamiltonian using only 3-local terms.  As an intermediate step to this 3-local construction,  we show that quantum 3-SAT for particles with dimensions $3\times 2\times 2$  (a qutrit and two qubits) is QMA$_1$ complete. The complexity of quantum 3-SAT with qubits remains an open question. 
}
One of the implications of this construction is that Adiabatic Quantum Computation
can be used for simulating quantum circuits much more effectively than 
Aharonov et al. showed in \cite{AQC:AvDKLLR05} (see Section \ref{ch2:aqcbqp}). 

Next, in Section \ref{ch4:q1}, I review the result of Aharonov et al. \cite{CA:GottesmanLine} who proved that Quantum 2-SAT on a line with 12-dimensional particles is QMA$_1$ complete. I decrease the required dimensionality of particles in their construction to $d=11$ in Section \ref{ch4:d11}.

In Section \ref{ch4:q2general}, I present a very recent result by Eldar and Regev, who proved that Quantum 2-SAT for a cinquit-qutrit pair (particles with $d=5$ and $d=3$) is QMA$_1$ complete, using a novel {\em triangle clock} construction. In Section \ref{ch4:triangleXZ}, I use their result to obtain another QMA-complete 3-local Hamiltonian construction built from constant norm, 3-local terms of a restricted type. 

Finally, I take yet another step towards figuring out the complexity of Quantum 3-SAT in Section \ref{ch4:train}. Using my {\em train switch} clock construction, I show that a simple Quantum 3-SAT Hamiltonian is universal for quantum computation in the Adiabatic Quantum Computing model of Chapter \ref{ch2adiabatic} and in the more general Hamiltonian Computer model of Section \ref{ch2:hc}. Moreover, the required running time for this model scales only slightly worse than linearly with the length of the computation.


\section{Introduction}

\subsection{Quantum k-SAT}
\label{ch4:qks}

The Quantum $k$-SAT (Q-$k$-SAT) promise problem was introduced by Bravyi \cite{lh:BravyiQ2SAT06} as an analogue of classical $k$-SAT. In classical {\em Satisfiability}, one needs to determine whether there exists a bit string satisfying all the boolean clauses of a given problem instance. In Q-$k$-SAT, the problem is to determine whether a 
Hamiltonian acting on $n$ qubits has a zero eigenvalue, or whether all its eigenvalues are higher than $\epsilon \geq n^{-\alpha}$ for some constant $\alpha$. Moreover, the Hamiltonian 
\begin{eqnarray}
	H_{\textrm{Q-}k\textrm{-SAT}} = \sum P_i, \label{QKS}
\end{eqnarray}
is composed of $k$-local projector terms
\begin{eqnarray}
	P_i = P_i^2 
		=  \ii^{\otimes(n-k)} \otimes  \ket{\psi_i}\bra{\psi_i}_{\{q^{i}_1\dots q^{i}_k\}},
\end{eqnarray} 
where each $P_i$ acts nontrivially on $k$ qubits $\{q^{i}_1\dots q^{i}_k\}$. 
For a `yes' answer to a problem instance, the ground state of $H_{\textrm{Q-}k\textrm{-SAT}}$ must be {\em exactly zero}, i.e. a state which is annihilated by {\em all} the projectors $P_i$ must exist. This shows the analogy 
to classical SAT, where all the boolean clauses have to be satisfied. 
If the states $\ket{\psi_i}$ are computational basis states and $\epsilon=1$, the problem reduces to classical $k$-SAT. The special case of commuting projectors $P_i$ has been analyzed in \cite{lh:BravyiCommute05}.

Local Hamiltonian, the problem I introduced in Section \ref{ch1:Kitaev5}, is more general than Q-$k$-SAT. It is the quantum analogue of classical MAX-$k$-SAT, where one is interested in the properties of the ground state of the {\em total} Hamiltonian (the {\em sum} of the terms). 
Viewing Quantum $k$-SAT and $k$-local Hamiltonian as quantum analogues of classical $k$-SAT and MAX-$k$-SAT, it is interesting to investigate  and compare their complexities. Known complexity results about the classical problems are summarized in Section \ref{ch1:clascomplex}, Table \ref{SATtable}. In Table \ref{QSATtable}, I present the currently known results about Quantum Satisfiability and Local Hamiltonian problems. In this Chapter, I prove those marked with $^*$.

\begin{table}
\begin{center}
\begin{tabular}{|r|l|l|}
\hline 
Quantum & \ qubits & \ qudits \\
\hline 
Q-$k$-SAT & \ $k=2$ : in P	& \ Q-$(5,3)$-SAT : QMA$_1$-complete \\
	  & 		& \ Q-$(11,11)$-SAT in 1D : \\
	  & 		& \ \qquad \qquad \qquad QMA$_1$-	complete$^*$ \\
& \ $k= 3$ : contains NP, & \ \\
	& \ \qquad \qquad \qquad universal for BQP$^*$ & 
	\ Q-$(3,2,2)$-SAT : QMA$_1$-complete$^*$ \ 
 \\
& \ $k\geq 4$ : QMA$_1$-complete & \ \\
\hline	
$k$-local  & \ $k\geq 2$ : QMA-complete &  \\
Hamiltonian  & \ \qquad \qquad \qquad even on a 2D grid &  \\
 & \ $k=3$ with constant norm terms : &  \\
 & \ \qquad \qquad \qquad QMA-complete$^*$ &  \\
\hline
\end{tabular}
\caption{Known complexity of Quantum Satisfiability and Local Hamiltonian problems. My own results, presented later in this Chapter, are marked with $^ *$.}
\label{QSATtable}
\end{center}
\end{table}

In \cite{lh:BravyiQ2SAT06}, Bravyi gave a classical polynomial algorithm for Quantum $2$-SAT showing that Q-$2$-SAT belongs to P. He then proved that Quantum $k$-SAT is QMA$_1$ complete for $k\geq 4$. The class QMA$_1$ is a special case of QMA, where only one sided error is allowed (see Definition \ref{ch1:qma1} in Section \ref{ch1:qcomplex}). When the answer to the problem instance is `yes', the verifier circuit $U$ can output `yes' on some state with certainty. Thus, the ground state energy of a Quantum $k$-SAT Hamiltonian must be exactly zero in the `yes' case. When I release the certainty requirement on the verifier circuit $U$, it becomes a verifier circuit for a QMA problem instead. Therefore, when I prove that a certain Q-SAT problem is QMA$_1$ complete, the QMA-completeness of the mother Local-Hamiltonian problem immediately follows as a side result. I present a result of this type in Section \ref{ch4:q3}, where I give a new 3-local Hamiltonian construction with constant norm terms.

Bravyi's original definition also required all of the terms in the Hamiltonian to be projectors. However, using $k$-local positive semidefinite operator terms $H^{+}_i$ with zero ground state and constant norm instead of projectors $P_i$ in \eqref{QKS} is an equivalent problem. Quantum $k$-SAT with positive semidefinite operators contains Q-$k$-SAT with projectors. On the other hand, if one is able to solve Q-$k$-SAT with projectors, one can solve Q-$k$-SAT with positive semidefinite operators as well. 
For each positive semidefinite operator $H^{+}_i$, define a projector $P^{+}_i$ with the same ground state subspace. If $H_P = \sum P^{+}_i$ has a zero ground state, so does $H_{+}=\sum H^{+}_i$. If the ground state energy of $H_P = \sum P^{+}_i$ is greater than $\ep$, the ground state energy of $H_{+}$ is greater than $c\ep$, where $c$ is a constant. Therefore, instead of only projector terms, I can equivalently use positive semidefinite operators in Q-$k$-SAT Hamiltonians.


\subsection{Constructing the Hamiltonians}

The two types of results about Quantum k-SAT Hamiltonians that I present here are proofs of QMA (or QMA$_1$) completeness and proofs of universality of using a Hamiltonian for simulating quantum circuits. In both cases, the quest is to prove these results for ever simpler Hamiltonians. Better locality (lower $k$, the number of particles involved in each term), simpler geometry of interactions (2D grid, 1D), lower dimensionality of particles involved, and using only terms of a restricted type 
are what I am aiming at.

To prove that a Hamiltonian problem is QMA-complete, one first needs to show that it belongs to QMA, and then to show that it also contains QMA. I have presented Kitaev's proof of QMA-completeness for the 5-Local Hamiltonian problem in Section \ref{ch1:Kitaev5}, and because a lot of the results I present in the next two Chapters build on it, I now quickly summarize it. 
Before considering the special case of $5$-local Hamiltonian, Kitaev proved that $k$-local Hamiltonian is in QMA for any constant $k$. Therefore, the question whether a Hamiltonian problem belongs to QMA
is usually answered quickly by a reduction to $k$-Local Hamiltonian.
For the other direction, one needs to reduce every instance of a problem in QMA into an instance of the Hamiltonian problem in question. Every instance of a problem in QMA is a `yes/no' question. Moreover, every problem instance has a verifier circuit $U$ which verifies claims (supposed proofs) about the answer. This circuit must have the completeness and soundness properties. Completeness means that if the answer to the problem instance is `yes', there exists a state that the circuit $U$ will accept with high probability. The soundness property means that if the answer to the problem is `no', there is no state that fools the verifier circuit $U$ with high probability. I then need to construct a Hamiltonian whose ground state would have low energy (or exactly zero energy for Quantum $k$-SAT) if the answer is `yes', and whose ground state would have considerably higher energy if the answer is `no'. 

All the results in this Chapter use the same general form of this Hamiltonian, but differ in the implementation of some of the terms. 
Let me first look at what they have in common and focus on the differences in Section \ref{ch4:clocks}.
The verifier quantum circuit $U$ with $L$ gates acts on $n$ qubits. Consider now
a quantum system whose Hilbert space $\cH$ is larger than $\cH_{work} = \cC^{\otimes n}$. A way to do this is to append a clock register to the system as
\begin{eqnarray}
	\cH &=& \cH_{work} \otimes \cH_{clock}.
	\label{ch4:workclock}
\end{eqnarray}
Then, encode the progression of the circuit $U$
\begin{eqnarray}
	\ket{\psi_t} = U_t U_{t-1} \dots U_2 U_1 \ket{\psi_0}, 
	\quad t=0,\dots,L
	\label{ch4:Uprogress}
\end{eqnarray}
into states 
\begin{eqnarray}
	\ket{\Psi_t} = \ket{\psi_t} \otimes \ket{t}_c
\end{eqnarray}
of the larger system. How exactly the states $\ket{t}_c$ and transitions between them are implemented constitutes a `clock construction' and I explain the possibilities in detail in Section \ref{ch4:clocks}.
Following Kitaev, the Hamiltonian I then construct for the larger system is 
\begin{eqnarray}
	H = H_{prop} +  H_{out} + H_{input} + H_{clock} + H_{clockinit}.
	\label{ch4:completeHam}
\end{eqnarray}
It is built in such a way that the history state for the quantum circuit $U$ corresponding to some initial state $\ket{\psi_0}$
\begin{eqnarray}
	\ket{\Psi_{history}} = \frac{1}{\sqrt{L+1}} \sum_{t=0}^{L} \ket{\Psi_t}
		= \frac{1}{\sqrt{L+1}} \sum_{t=0}^{L} \ket{\psi_t} \otimes \ket{t}_c
		\label{ch4:psihistory}
\end{eqnarray}
is close to the ground state of $H$. In fact, for a `yes' instance of Quantum $k$-SAT, it {\em is} the ground state of $H$. 

The first term, $H_{prop}$ is a Hamiltonian whose ground state has the form $\ket{\Psi_{history}}$:
\begin{eqnarray}
	H_{prop} 
	&=& 
		\sum_{t=1}^{L} H_{prop}^t \\
	H_{prop}^t	&=& 
	\half \left(\ii \otimes \left( P_t + P_{t-1} \right)
						- U_t \otimes X_{t,t-1} - \left(U_t \otimes X_{t,t-1}\right)^\dagger \right),
\end{eqnarray}
where $P_t = \ket{t}\bra{t}_c$ is a projector onto the state $\ket{t}_c$ of the clock register, the operator  $X_{t,t-1} = \ket{t}\bra{t-1}_c$ increases the time register from $t-1$ to $t$ and the unitary gate $U_t$ acts on the corresponding work qubits.
Observe that it is built from terms which check whether the amplitude of each $\ket{\Psi_{t-1}}$ and $\ket{\Psi_{t}}$ is the same. Therefore, the history state \eqref{ch4:psihistory} is its eigenvector with eigenvalue 0. 

The last of the $n$ work qubits, $q_n$, is the output qubit for the circuit $U$. The term 
\begin{eqnarray}
	H_{out} = \ket{0}\bra{0}_{q_n} \otimes P_L
\end{eqnarray}
takes care of whether the circuit $U$ accepts some state or not. It gives an energy penalty to any history state built from a state $\ket{\psi_{no}}$ that is not accepted by $U$, i.e. whose output qubit after $L$ gates is not in the state $\ket{1}$. The term 
\begin{eqnarray}
	H_{input} = \sum_{k\in ancilla}\ket{1}\bra{1}_{q_k} \otimes P_0
\end{eqnarray}
takes care of proper initialization of ancilla qubits, in the case they are required for $U$. The final two terms $H_{clock}$ and $H_{clockinit}$ add an energy penalty to all states of the clock register $\cH_{clock}$ that are not in the subspace spanned by the states $\ket{t}_c$. The implementation of these two terms and how the states $\ket{t}_c$ are actually encoded is the topic of the next Section. 

For the other type of result, universality of a Hamiltonian $H$ for simulating a quantum circuit, one needs to show that time evolution (for not too long a time) from a simple initial state with the Hamiltonian $H$ can produce the output of the circuit $U$ with high probability. This type of proof again requires the encoding of the progression of a quantum circuit $U$ into states $\ket{\Psi_t}$ of a larger system. However, proving this type of result is different in that I need to analyze the dynamics of the system, not just the ground state of $H$. However, what simplifies the problem is that I can choose the initial state myself. I present one such result in Section \ref{ch4:train}. Later, the whole Chapter \ref{ch5hqca} is dedicated to showing universality for Hamiltonians which are both time and translation-invariant.


\subsection{Constructing Clocks}\label{ch4:clocks}
In this Section, I review several methods of implementing the clock register in \eqref{ch4:workclock}. The goal of a {\em clock construction} is to encode the progression \eqref{ch4:Uprogress} of a quantum circuit $U$ with $L$ gates on $n$ work qubits into a sequence of orthogonal states $\ket{\Psi_t}$ of a larger system $\cH$. It is desirable for a clock construction to have:
\begin{itemize}
\item Easily implementable unique transitions 
\end{itemize}
One needs to construct a Hamiltonian inducing transitions between the states $\ket{\Psi_t}$. It is important to ensure that each state $\ket{\Psi_t}$ can transition directly only to the states $\ket{\Psi_{t+1}}$ or $\ket{\Psi_{t-1}}$, i.e. 
\begin{eqnarray}
	H \ket{\Psi_t} &=& \alpha_{t-1} \ket{\Psi_{t-1}} 
	+ \beta_t \ket{\Psi_t}
	+ \gamma_{t+1} \ket{\Psi_{t+1}},
\end{eqnarray}
and not into any other state. Undesirable transitions can be suppressed by adding large penalty terms as in the work of Kempe and Regev \cite{lh:KR03}, but such constructions can not be used for proving results about Quantum $k$-SAT. In Sections \ref{ch4:q3}-\ref{ch4:train} I give ways to construct the clock register with unique transition rules, allowing me to prove new results about Q-$k$-SAT and strengthen old results about the Local Hamiltonian problem. 

\begin{itemize}
\item Locally checkable encoding, initialization and final detection
\end{itemize}
The full Hilbert space $\cH$ can be much larger than the subspace $\cH_{legal}$ spanned by the states $\ket{\Psi_t}$. I must be able to detect whether a state 
belongs to the subspace $\cH_{legal}$ using only local terms. In practice, this is done by adding an energy penalty to any state outside $\cH_{legal}$. Also, I need to be able to make a local projection measurement onto $\ket{\Psi_L}$ and $\ket{\Psi_L}^\perp$.

\begin{itemize}
\item Local transition rules involving only a few (clock) particles,
\item Low dimensionality of the (clock) particles,
\end{itemize}
Simple local interactions in the system $\cH$ and low dimensionality of the particles involved are essential for my goal of obtaining stronger results about ever simpler local Hamiltonians.

\begin{itemize}
\item Simple geometry of interactions,
\end{itemize}
Finally, for practical implementation, the $k$-particle interactions in $H$ need to involve only spatially close particles. The 1D construction of \cite{CA:GottesmanLine} presented in Section \ref{ch4:q1} is one example of this.

Let me now present a few of clock register encodings and discuss their advantages and disadvantages.

\subsubsection{Adding a clock register}
The first option is to directly add a clock register to the system as in Feynman's construction in Section \ref{ch1:FeynmanHC} 
\begin{eqnarray}
	\cH = \cH_{work} \otimes \cH_{clock}.
\end{eqnarray}
A set of orthogonal states $\ket{t}_c$ of the clock register is then used to label the progress of the computation, while the contents of the work register holds the state of the work qubits after $t$ gates have been applied to an initial state $\ket{\psi_0}$
\begin{eqnarray}
	\ket{\Psi_t} &=& \underbrace{\left(
		U_t U_{t-1} \dots U_1 \ket{\psi_0}
		\right)}_{\ket{\psi_t}}
	\otimes \ket{t}_c.
	\label{ch4:psit}
\end{eqnarray}
The space spanned by the clock states $\ket{t}_c$ then defines the {\em legal clock subspace} $\cH_{legal}$. An essential ingredient of the clock constructions I present here is a local way of checking whether a state of the clock register belongs to the legal clock subspace. It is usually implemented by projector terms adding an energy penalty to states outside of $\cH_{legal}$.

One could think of implementing the clock register by simply labeling the basis states of $\cH_{clock}$ by $\ket{t}_c$. The minimum required number of qubits in $\cH_{clock}$ would then be $\log L$. However, such a construction is not local, as the transition rules between different clock states necessarily involve $(\log L)$ qubits.

The simplest local implementation of the clock register is the {\em pulse clock} used by Feynman \cite{lh:Feynman}. A state corresponding to time $t$ is the state of $L+1$ qubits $c_0,\dots,c_L$, with a single clock qubit $c_{t}$ in the state $\ket{1}$
\begin{eqnarray}
	\ket{t}_c = \ket{00 \dots 00100 \dots 00}.
	\label{ch4:clockpulse}
\end{eqnarray}
This single up-spin denotes the {\em active site} in the clock register. To transition between clock states
$\ket{t}_c$ and $\ket{t+1}_c$ (and vice versa), I need to look only at 2 clock qubits
and apply the operator
\begin{eqnarray}
	X_{t+1,t} + X_{t+1,t}^\dagger = \ket{01}\bra{10}_{c_{t},c_{t+1}} + \ket{10}\bra{01}_{c_{t},c_{t+1}}.
	\label{ch4:pulsetran}
\end{eqnarray}
This low locality of transitions is a big advantage of the pulse clock. Proper encoding of the clock register is assured by operators $\ket{11}\bra{11}$ which give an energy penalty to incorrect clock states with two active spots (two spin up particles) in the system. These are also 2-local. On the other hand, the big disadvantage of the pulse clock is that it is impossible to check whether the clock is not in the state $\ket{D} = \ket{0\dots0}_c$, without a single spin up, using only local, projector terms. This `dead' state is annihilated by the transition operators \eqref{ch4:pulsetran}, so it does not `move' anywhere. For this reason, the pulse clock is suited only for constructing Hamiltonian computers (see Section \ref{ch2:hc}), where one can pick the initial state of the clock and thus rule out the state $\ket{D}$. It is unusable for proving QMA-completeness results, unless one rules out the state $\ket{D}$ by adding terms that energetically favor a single spin up in the system over the state $\ket{D}$.

I already presented Kitaev's unary {\em domain wall} clock in Section \ref{ch1:Kitaev5}
\begin{eqnarray}
	\ket{t}_c = \ket{11\dots 1100 \dots 00}.
	\label{ch4:clockwall}
\end{eqnarray}
The active spot in the clock register is now determined by the position of the domain wall $10$. The unique transition rules for this clock are now 3-local, $\ket{\dots 100\dots}\leftrightarrow \ket{\dots 110 \dots}$, but the advantage one gains are simple check operators for the clock. Illegal clock states are detected by the projector $\ket{01}\bra{01}$ on consecutive clock qubits, and the dead state $\ket{D}=\ket{0\dots 0}_c$ can be easily ruled out using the projector $\ket{0}\bra{0}_{c_0}$ on the first clock qubit. Also, one can use a 1-local transition rule $\ket{0}\leftrightarrow\ket{1}$ between clock states as in \cite{lh:KR03} or \cite{AQC:AvDKLLR05}, with the caveat that the illegal transitions such as $\ket{11100000} \rightarrow \ket{11100100}$ can occur and need to be taken care of by using large penalty terms. I work around this problem while keeping the locality low in the following constructions.

It is desirable to decrease the locality of transition rules as much as possible. In Bravyi \cite{lh:BravyiQ2SAT06} introduced a clock made out of particles with $d=4$. The upside of his construction is that some of the transition rules involve only one clock particle (albeit 4-dimensional). The interaction required to increment the clock and to apply a two-qubit gate at the same time thus involves particles with dimension $4\times 2 \times 2$, and can be thought of as a 4-qubit interaction. I now present a combined domain wall and pulse clock made out of particles with $d=2$ and $d=3$, building on the idea of \cite{lh:BravyiQ2SAT06}. Consider first a pulse clock made out of particles with dimension $d=3$, with the progression of states
\begin{eqnarray}
	\ket{0}_c &=& \ket{\goF\goO\goO\goO\goO}, \\
	\ket{1}_c &=& \ket{\goB\goO\goO\goO\goO}, \\
	\ket{2}_c &=& \ket{\goO\goF\goO\goO\goO}, \\
	\ket{3}_c &=& \ket{\goO\goB\goO\goO\goO}, \\
	\ket{4}_c &=& \ket{\goO\goO\goF\goO\goO}, \\
	\ket{5}_c &=& \ket{\goO\goO\goB\goO\goO}. 
\end{eqnarray}
Some of the transitions are 1-local ($\goF\leftrightarrow\goB$) while others are 2-local ($\goB\goO \leftrightarrow \goO\goF$). Therefore, one can engineer a 3-local interaction for a qutrit and two qubits which increments the clock and concurrently applies a unitary gate to two work qubits. The initialization problem of this pulse clock is then solved by combining it with a domain wall clock $\ket{111000}$ as follows:
\begin{eqnarray}
	\ket{0}_c &=& \ket{10 \goO 00 \goO 00 \goO 00}, \\
	\ket{1}_c &=& \ket{11 \goF 00 \goO 00 \goO 00}, \\
	\ket{2}_c &=& \ket{11 \goB 00 \goO 00 \goO 00}, \\
	\ket{3}_c &=& \ket{11 \goO 10 \goO 00 \goO 00}, \\
	\ket{4}_c &=& \ket{11 \goO 11 \goB 00 \goO 00}, \\
	\ket{5}_c &=& \ket{11 \goO 11 \goF 00 \goO 00}, \\
	\ket{6}_c &=& \ket{11 \goO 11 \goO 10 \goO 00}.
\end{eqnarray}
The transition rules are 3-local for the transition of the active spot from the domain wall to the pulse clock $10\goO \leftrightarrow 11\goB$, while they stay 1-local for the $\goB\leftrightarrow\goF$ transition. Also, note that 3-local operators suffice to detect whether the clock register state has a single active spot (either $10$ in neighboring domain wall qubits or $1\goB 0$ or $1\goF 0$ when there is a pulse clock particle at the domain wall). I implement this clock using qubits rather than qutrits in Section \ref{ch4:q3}, which results in a new QMA-complete 3-local Hamiltonian construction without the requirement for large norm terms penalizing illegal clock states.

\begin{figure}
	\begin{center}
	\includegraphics[width=4in]{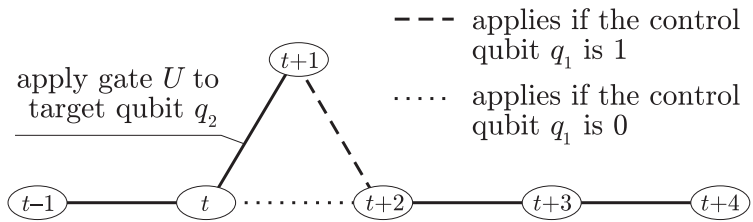} 
	\end{center}
	\caption{The `triangle' clock transition rules allowing to apply a controlled gate to two work qubits using interactions that involve one work qubit at a time.
	\label{ch4:figuretriangle1}}
\end{figure}
Very recently, Eldar and Regev \cite{lh:RegevTriangle} had a revolutionary idea and abandoned the linear progression of states $\ket{\Psi_{t}}$ to create a `triangle' clock as in Figure \ref{ch4:figuretriangle1}. In this construction, the progression of states $\ket{\Psi_{t}}$ can take two different paths, depending on the state of one of the work qubits. Eldar and Regev use it to prove interesting results about Quantum 2-SAT with qudits, shattering the misconception that to apply a two-qubit gate, one has to interact with both qubits in the same step. I review their work in Section \ref{ch4:triangle} and present my own results using this construction in Section \ref{ch4:triangleXZ}. I then take this idea further, creating a more complex, but more symmetric, {\em train switch} clock in Section \ref{ch4:train}. There I use it to prove the final result of this Chapter: one can simulate BQP in the Hamiltonian Computer model of Section \ref{ch2:hc} using a Quantum 3-SAT Hamiltonian.

\subsubsection{Geometric clocks}

Besides directly adding a separate clock register to the system and leaving the work qubits at a particular location in the system, one has the option of having the $n$ work qubits move around by using higher dimensional particles. Consider a simple 2D construction using qutrits, where the state space of each particle is 
\begin{eqnarray}
	\cH_{p} = \cH_{\goR} \oplus \cH_{\goX},
\end{eqnarray}
with $\cH_{\goR}$ a 2-dimensional subspace (can hold the state of a qubit) and $\cH_{\goX}$ a one-dimensional subspace. When a particle is in the state $\ket{\goX}$, it indicates that no qubit is present at the site. The location of the qubits along the strip then determines the states $\ket{\Psi_t}$:
\begin{eqnarray}
\begin{array}{c}
\goX \goX \goX \goR \goSX  \\
\goX \goX \goX \goR \goSX  \\
\goX \goX \goX \goR \goSX  \\
\goX \goX \goX \goR \goSX 
\end{array}
\end{eqnarray}
Observe that the state $\ket{\Psi_t}$ is necessarily orthogonal to any state $\ket{\Psi_{t'}}$ for which the location of the qubits is different. However, synchronization in this na\"{\i}ve geometric clock construction is a problem, resulting in transition rules which are not unique and need to be at least 4-local, if one wants to apply two qubit gates $U_t$ while moving the qubits forward. 
The challenge is to make a geometric clock encoding with unique transition rules which would involve only a few particles with low dimension. In other words, the goal is to get the work qubits to move along in a coordinated fashion and to have the corresponding gates $U_t$ applied to them along the way. 

In the papers of Mizel \cite{AQC:Mizel}, a geometric clock like this is planned to be implemented in a system of electrons hopping between quantum dots. The interactions in \cite{AQC:Mizel} involve 2 particles hopping through 4 sites. 

Another geometric clock construction appeared in \cite{AQC:AvDKLLR05}, where the authors used 6-dimensional particles on a 2D grid. However, it doesn't have unique transition rules and thus requires large penalty terms for illegal clock states. Their progression of states $\ket{\Psi_t}$ is 
\begin{eqnarray}
\left.
\begin{array}{c}
\goX \goD \goO \goO  \\
\goX \goR \goO \goO  \\
\goX \goR \goO \goO  
\end{array}
\right|
\left.
\begin{array}{c}
\goX \goD \goO \goO  \\
\goX \goD \goO \goO  \\
\goX \goR \goO \goO  
\end{array}
\right|
\left.
\begin{array}{c}
\goX \goD \goO \goO  \\
\goX \goD \goO \goO  \\
\goX \goD \goO \goO  
\end{array}
\right|
\left.
\begin{array}{c}
\goX \goD \goO \goO  \\
\goX \goD \goO \goO  \\
\goX \goX \goR \goO  
\end{array}
\right|
\left.
\begin{array}{c}
\goX \goD \goO \goO  \\
\goX \goX \goR \goO  \\
\goX \goX \goR \goO  
\end{array}
\right|
\left.
\begin{array}{c}
\goX \goX \goR \goO  \\
\goX \goX \goR \goO  \\
\goX \goX \goR \goO  
\end{array}
\right|
\begin{array}{c}
\goX \goX \goD \goO  \\
\goX \goX \goR \goO  \\
\goX \goX \goR \goO  
\end{array}
\nonumber
\end{eqnarray}
where the state space of each 6-dimensional particle is a direct sum $\cH_\goD \oplus \cH_\goR \oplus \cH_\goX \oplus \cH_\goO$, with the 2-dimensional subspaces $\cH_\goD$ and $\cH_\goR$ able to hold the state of a qubit. The transitions in this type of clock can be made unique and checkable without large penalty terms if one makes the particles 12-dimensional.

The recent construction of Aharonov, Gottesman, Irani and Kempe \cite{CA:GottesmanLine} shows that a geometric clock can be constructed even in 1D, on a chain of 12-dimensional particles. What is even better, the legal clock states are locally checkable and the transition rules are unique and 2-local. The underlying idea in getting the qubits to move is a `pass the hat forward and then jump over it' technique illustrated in Figure \ref{ch4:figurehats}. I explain this  construction in detail in Section \ref{ch4:q1}, where I also decrease the required dimensionality of particles to $d=11$. 
\begin{figure}
	\begin{center}
	\includegraphics[width=4.5in]{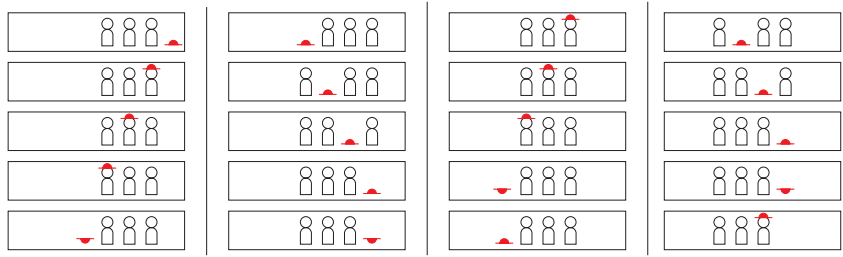} 
	\end{center}
	\caption{The `pass the hat forward and then jump over it' technique for a geometric clock (qubit transport) on a line from \cite{CA:GottesmanLine}. The progression of 20 states is shown.}
	\label{ch4:figurehats}
\end{figure}

Janzing and Wocjan \cite{CA:JW:05} also use a geometric clock in their complicated Hamiltonian Computer construction. Recently, myself and Wocjan \cite{CA:d10} simplified their model greatly, and use a novel geometric clock construction with transition rules reminiscent of diffusion instead of the conventional quantum walk on a line. I present this result in Chapter \ref{ch5hqca} on Hamiltonian Quantum Cellular Automata.


\section{A new 3-local QMA complete Local Hamiltonian \\(Quantum 5-SAT from 3-local terms)}\label{ch4:q3}

Quantum $k$-SAT, the special case where the Hamiltonian is a sum of local projectors was defined and studied by Bravyi \cite{lh:BravyiQ2SAT06} as a natural analogue of classical $k$-SAT. It is in P for $k=2$, and it is QMA$_1$ complete for $k\geq 4$. However, the classification of quantum 3-SAT is still an open question. 
Its mother problem, 3-local Hamiltonian was shown to be QMA complete by Kempe and Regev in \cite{lh:KR03}. This result was further improved, showing that 2-local Hamiltonian is QMA complete in \cite{lh:KKR04,lh:Terhal2D}. These constructions use 
the Hamiltonian of the form 
\eqref{ch4:completeHam}, and encode the clock register using the unary domain wall clock \eqref{ch4:clockwall}. Since the terms in their Hamiltonian are no longer 5-local as in Kitaev's proof in Section \ref{ch1:Kitaev5}, the corresponding terms in $H_{prop}^t$ do not only verify proper application of $U_t$, but also induce transitions from legal clock states into illegal ones. The subspace $\mathcal{H}_{work} \otimes \mathcal{H}_{legal}$ is thus no longer invariant under the action of $H$. To fix this, the penalty associated with illegal clock states is made high (scaling as a high polynomial in $n$), effectively forcing the ground state of the Hamiltonian to reside in the subspace of legal clock states. The Hamiltonian then contains $O(L)$ terms with weights that scale as $O(L^{12})$ and has thus norm $\norm{H}=O(L^{13})$, where $L$ is the number of gates in the computation \cite{lh:KR03}.
This was later improved to weights of order $O(L^{6})$, resulting in 
$\norm{H}=O(L^{7})$ in \cite{lh:KKR04}.

Note that there are two energy scales in this problem. The norm of the Hamiltonian $\norm{H}$, and the energy difference $b-a \geq 1/poly(L)$ in the definition of local-Hamiltonian. It is important to keep track of both of these. One can always rescale the Hamiltonian such that $\norm{H}$ is a constant, but this also shrinks the energy difference $b-a$. The quantity $\frac{b-a}{\norm{H}}$ is an indicator of the strength of the result. Physically speaking, it relates the precision required in obtaining the ground state energy (finding whether it is below $a$ or above $b$) to the strength of the interactions (as given by $\norm{H}$). Proving QMA completeness for the local-Hamiltonian where $\frac{b-a}{\norm{H}}=const.$ would give a quantum analogue of the famous PCP theorem \cite{PCP1,PCP2}.

My goals are much simpler. I show in section \ref{3loc} that the 3-local Hamiltonian problem is QMA complete using a Hamiltonian with $O(L)$ terms with only constant operator norms. This makes the norm of my Hamiltonian scale as $\norm{H}=O(L)$, while I keep the same energy difference $b-a = 1/poly(L)$ as in the previous constructions. The $O(L^{6})$ increase in the ratio of the two energy scales of the problem thus makes my construction physically much more interesting.

In this work, I show a new reduction from a verifier quantum circuit to a 3-local Hamiltonian. The novelty of my construction is that it leaves the space of legal clock-register states invariant. Therefore, the weights of the terms in my Hamiltonian do not scale with the size of of the input problem. Such terms do appear in the constructions of \cite{lh:KR03} and \cite{lh:KKR04}. As an intermediate step in the construction, I prove that quantum 3-SAT for qutrits is QMA$_1$-complete.

The rest of Section \ref{ch4:q3} is organized as follows. After reviewing Bravyi's proof that Quantum $4$-SAT is QMA$_1$-complete in Section \ref{ch4:prel}, I present a qutrit-clock construction in Section \ref{3sat} and show that quantum 3-SAT for particles with dimensions $3 \times 2 \times 2$ (the interaction terms in the Hamiltonian couple one qutrit and two qubits) is QMA$_1$-complete. The existence of such a construction was previously mentioned but not specified by Bravyi and DiVincenzo in \cite{lh:BravyiQ2SAT06} as \cite{lh:DiVincenzo}. In Section \ref{3loc} I show how to encode the qutrit clock particles from Section \ref{3sat} into a pair of qubits in such a way that the Hamiltonian remains 3-local, obtaining a new construction of a QMA complete 3-local Hamiltonian. This Hamiltonian is composed of 4-local positive semidefinite operator terms. However, each of these 4-local operators is composed of only 3-local interaction terms. It is not a Quantum 3-SAT Hamiltonian, since the 3-local terms by themselves are not positive semidefinite operators. I discuss the complexity of Quantum 3-SAT and further directions in Section \ref{ch4:new3conclusions}.

\subsection{Bravyi's Quantum 4-SAT} \label{ch4:prel}

In \cite{lh:BravyiQ2SAT06}, Bravyi proved that quantum $k$-SAT belongs to QMA$_1$ for any constant $k$.
Furthermore, he showed that quantum 4-SAT is QMA$_1$ complete using a new realization of the clock. In his construction Bravyi uses $L+1$ clock particles with 4 states: unborn, active 1 ($a_1$, input for a gate), active 2 ($a_2$, output of a gate), and dead. These 4 states of a clock particle are easily realized by two qubits per clock particle. 
There are $2L$ legal clock states: 
\begin{eqnarray}
	\ket{C_{2k-1}} = \kets{\underbrace{d\dots d}_{k-1} a_1 \underbrace{u \dots u}_{L-k}},
	\qquad \textrm{and} \qquad 	
	\ket{C_{2k}} = \kets{\underbrace{d \dots d}_{k-1} a_2 \underbrace{u \dots u}_{L-k} },
\end{eqnarray}
for $1\leq k \leq L$. A clock Hamiltonian $H_{clock} = H_{clockinit} + \sum_{k=1}^{L-1} H^{(k)}_{clock}$ 
is required to check whether the states of the clock are legal.
\begin{eqnarray}
	H^{(k)}_{clock} &=& \ket{d}\bra{d}_k \otimes \ket{u}\bra{u}_{k+1} \\ 
				&+& \ket{u}\bra{u}_k \otimes \Big(\ket{d}\bra{d}  + 
				  \ket{a_1}\bra{a_1} + \ket{a_2}\bra{a_2} \Big)_{k+1} \nonumber \\
		&+& \Big( \ket{a_1}\bra{a_1} + \ket{a_2}\bra{a_2} \Big)_k \otimes 
					\Big( \ket{a_1}\bra{a_1} + \ket{a_2}\bra{a_2} + \ket{d}\bra{d} \Big)_{k+1}, \nonumber \\
	H_{clockinit} &=& \ket{u}\bra{u}_1 + \ket{d}\bra{d}_L. \label{clockinitbravyi}
\end{eqnarray}
The Hamiltonian checking the correct application of gates is $H_{prop} = \sum_{k=1}^{L} H_{prop}^{(k)}$, with
\begin{eqnarray}
	H_{prop}^{(k)} = \frac{1}{2} \Big( \ii \otimes \ket{a_1}\bra{a_1}_k + \ii \otimes \ket{a_2}\bra{a_2}_k 
	- U_k \otimes \ket{a_2}\bra{a_1}_k - U^{\dagger}_k \otimes \ket{a_1}\bra{a_2}_k \Big). \label{prop4}
\end{eqnarray}
Each such term verifies the correct application of the gate $U_k$ 
between the states $\ket{a_1}$ and $\ket{a_2}$ of the $k$-th clock particle.
This only requires interactions of the $k$-th clock particle (qubit pair) and the two work 
qubits the gate $U_k$ is applied to. Each of the terms is thus a 4-local projector. 

I need another Hamiltonian term to propagate the clock state $\ket{C_{2k}}$ into $\ket{C_{2k+1}}$ 
while leaving the work qubits untouched (that is, for the ground state 
$\ket{\psi_{2k}}_{work} = \ket {\psi_{2k+1}}_{work}$). This is done by the 4-local clock-propagation Hamiltonian 
$H_{clockprop} = \sum_{k=1}^{L-1} H_{clockprop}^{(k)}$, with
\begin{eqnarray}
	H_{clockprop}^{(k)} &=& 
		\frac{1}{2} \Big( \ket{a_2}\bra{a_2}_k \otimes \ket{u}\bra{u}_{k+1} 
					+ \ket{d}\bra{d}_k \otimes \ket{a_1}\bra{a_1}_{k+1} \Big) \\
		&-& \frac{1}{2} \Big( \ket{d}\bra{a_2}_k \otimes \ket{a_1}\bra{u}_{k+1} 
			+ \ket{a_2}\bra{d}_k \otimes \ket{u}\bra{a_1}_{k+1} \Big). \nonumber
\end{eqnarray}
The final ingredients in this construction are
\begin{eqnarray}
	H_{init} &=&  \sum_{n=1}^{N_a}\ket{1}\bra{1}_n \otimes \ket{a_1}\bra{a_1}_1, \\
	H_{out} &=&  \ket{0}\bra{0}_{out} \otimes \ket{a_2}\bra{a_2}_L.
\end{eqnarray}
Applying Kitaev's methods \cite{KitaevBook} to this construction, Bravyi shows that the quantum 4-SAT Hamiltonian (a sum of 4-local projectors)
\begin{eqnarray}
	H &=& H_{clock} + H_{clockprop} + H_{init} + H_{out} + H_{prop} \label{ham4}
\end{eqnarray} 
is QMA$_1$ complete.

Bravyi's original definition required all of the terms in the Hamiltonian to be projectors. However, as I have shown at the end of Section \ref{ch4:qks}, using positive semidefinite operator terms in $H$ instead of just restricting ourselves to projectors is an equivalent problem.


\subsection{A qutrit clock implementation} \label{3sat}

In this section I present a new realization of the clock which builds on Bravyi's quantum 4-SAT realization described above. Using this clock construction, I prove that quantum 3-SAT for qutrits is QMA$_1$-complete. 

First, I need to show that quantum 3-SAT with qutrits is in QMA. I can use Bravyi's proof that quantum $k$-SAT for qubits is in QMA$_1$ for any constant $k$. Given an instance of quantum 3-SAT for qutrits, I convert it into an instance of quantum 6-SAT for qubits by encoding each qutrit in two qubits and projecting out one of the four states. According to Bravyi, this problem is in QMA$_1$ and therefore so is the original
quantum 3-SAT problem with qutrits.

For the other direction in the proof, I need to construct a quantum 3-SAT Hamiltonian for qutrits, corresponding to a given quantum verifier circuit $U$ for a problem in QMA. 
The terms in the Hamiltonian I will construct act on the space of one qutrit and two qubits (particles with dimensions $3\times 2\times 2$). 

\subsection{Clock register construction}

The clock-register construction in the previous section required 4 states for each clock particle: $\ket{u},\ket{a_1},\ket{a_2}$ and $\ket{d}$. Let me first explain why Bravyi's construction requires two ``inactive'' states: $\ket{d}$ and $\ket{u}$. If I only use $\ket{d}$ (i.e., have legal clock states of the form $\ket{d\dots d a_1 d \dots d}$ and $\ket{d\dots d a_2 d \dots d}$), I immediately get a 3-local Hamiltonian for qutrits. However, in Bravyi's construction, the first clock particle is never in the state $\ket{u}$, and the last one is never in the state $\ket{d}$ (see \eqref{clockinitbravyi}). This ensures that at least one clock particle is in an active state. When not using the state $\ket{u}$, I can no longer exclude the state with no active particles $\ket{dd\dots d}$ in a simple local fashion.

\begin{figure}
	\begin{center}
	\includegraphics[width=3.5in]{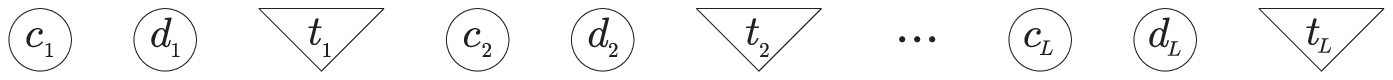} 
	\end{center}
	\caption{Clock register consisting of $2L$ qubits and $L$ qutrits. \label{construct3d}}
\end{figure}

I fix this by modifying the clock register as shown in in Fig.\ref{construct3d}.
The clock register now consists of $2L$ qubits and $L$ qutrits.
The $2L$ qubits $c_1,d_1,\dots,c_k,d_k$ play the role of the 
usual unary $\ket{1\dots 11 00 \dots0}$ clock representation, 
while the $L$ qutrits $t_1,t_2, \dots t_k$ play the role of Bravyi's clock with just three states
($d,a_1,a_2$).

I define the legal clock space $\mathcal{H}_{legal}$ as the space spanned by the $3L$ states $\ket{C_m}$.
These states are defined for $1\leq k \leq L$ as follows:
\begin{eqnarray}
	\ket{C_{3k-2}} &=& \kets{\underbrace{(11d)(11d)\dots(11d)}_{k-1\,\,\textrm{times}} (10d) 
		\underbrace{(00d)(00d)\dots(00d)}_{L-k \,\,\textrm{times}}}, \label{qutritlegal} \\
	\ket{C_{3k-1}} &=& \kets{\underbrace{(11d)(11d)\dots(11d)}_{k-1\,\,\textrm{times}} (11a_1) 
		\underbrace{(00d)(00d)\dots(00d)}_{L-k \,\,\textrm{times}}}, \nonumber \\
	\ket{C_{3k}} &=& \kets{\underbrace{(11d)(11d)\dots(11d)}_{k-1\,\,\textrm{times}} (11a_2) 
		\underbrace{(00d)(00d)\dots(00d)}_{L-k \,\,\textrm{times}}}. \nonumber
\end{eqnarray}
The first state ($\ket{C_{3k-2}}$) corresponds to the time when the
qubits are ``in transport'' from the previous gate to the current ($k$th) gate.
The second one corresponds to the time right before application of gate $U_k$ 
and the third corresponds to the time right after the gate $U_k$ was applied.
The structure of such clock register can be understood as two coupled
``unary'' clocks, the qubit one ($c_k,d_k$) of the $11\dots11100\dots00$ type and the qutrit one 
(the $t_k$'s) of the $00\dots00100\dots00$ type. 
Formally, the legal clock states satisfy the following constraints:
\begin{enumerate}
\item if $d_k$ is 1, then $c_k$ is 1.
\item if $c_{k+1}$ is 1, then $d_k$ is 1.
\item if $t_k$ is active ($a_1/a_2$), then $d_k$ is 1.
\item if $t_k$ is active ($a_1/a_2$), then $c_{k+1}$ is 0.
\item if $d_k$ is 1 and $c_{k+1}$ is 0, then $t_{k}$ is not dead $(d)$.
\item $c_1$ is 1.
\item if $d_L$ is 1, then $t_{L}$ is not dead.
\end{enumerate}
The last two conditions are required to exclude the clock states $\ket{(00d)(00d)\dots (00d)}$ 
and $\ket{(11d)(11d) \dots(11d)}$ that have no active clock terms.
The clock Hamiltonian $H_{clock} = H_{clockinit} + \sum_{k=1}^{L} H_{clock1}^{(k)} + \sum_{k=1}^{L-1} H_{clock2}^{(k)}$ 
verifies the above constraints.
\begin{eqnarray}
		H_{clock1}^{(k)} &=& 
				\ket{01}\bra{01}_{c_k,d_k} 
				+ \ket{0}\bra{0}_{d_k} \otimes \Big( \ket{a_1}\bra{a_1} + \ket{a_2}\bra{a_2} \Big)_{t_k}, \label{clock3} \\
		H_{clock2}^{(k)} &=& 
				\ket{01}\bra{01}_{d_k,c_{k+1}} 
				+ \Big( \ket{a_1}\bra{a_1} + \ket{a_2}\bra{a_2} \Big)_{t_k} \otimes \ket{1}\bra{1}_{c_{k+1}} \nonumber \\
			&+& \ket{1d0}\bra{1d0}_{d_k,t_k,c_{k+1}}, \nonumber \\
		H_{clockinit} &=&
				\ket{0}\bra{0}_{c_1} 
			+ \ket{1}\bra{1}_{d_{L}} \otimes \ket{d}\bra{d}_{t_{L}}. \nonumber
\end{eqnarray}
Only the last term in $H_{clock2}^{(k)}$ is a 3-local projector, 
acting on the space of two qubits and one qutrit. 
The rest of the terms are 2-local projectors on two qubits, or a qubit and a qutrit.
The space of legal clock states $\mathcal{H}_{legal}$
is the kernel of the clock Hamiltonian $H_{clock}$.

\subsection{Checking correct application of gates and clock propagation}

The gate-checking Hamiltonian $H_{prop} = \sum_{k=1}^{L} H_{prop}^{(k)}$ is an analogue of \eqref{prop4}, with
\begin{eqnarray}
	H_{prop}^{(k)} = \frac{1}{2} \Big( \ii_{work} \otimes \Big( \ket{a_1}\bra{a_1}_{t_k} 
		+ \ket{a_2}\bra{a_2}_{t_k} \Big)
	- U_k \otimes \ket{a_2}\bra{a_1}_{t_k} - U^{\dagger}_k \otimes \ket{a_1}\bra{a_2}_{t_k} \Big). \label{prop3}
\end{eqnarray}

\begin{figure}
	\begin{center}
	\includegraphics[width=3.6in]{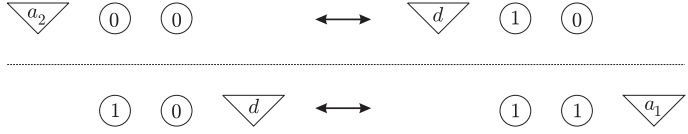} 
	\end{center}
	\caption{Illustration of the two-step clock pointer propagation. \label{clockprop3d}}
\end{figure}
The clock propagation proceeds in two steps. First, the ``active'' spot in the clock register moves from
the state $\ket{a_2}$ of the qutrit $t_k$ to the $\ket{10}$ state of the next two qubits $c_{k+1},d_{k+1}$.
After this, it moves into the state $\ket{a_1}$ of the next qutrit $t_{k+1}$, as in Fig.\ref{clockprop3d}.
The Hamiltonian checking whether this happened, while the work qubits were left untouched, is 
$H_{clockprop} = \sum_{k=1}^{L} H_{clockprop1}^{(k)} + \sum_{k=1}^{L-1} H_{clockprop2}^{(k)}$, with
\begin{eqnarray}
	H_{clockprop1}^{(k)} &=& \frac{1}{2} \Big( 
				\ket{10}\bra{10}_{c_k,d_k} \otimes \ket{d}\bra{d}_{t_{k}} 
			+ \ket{11}\bra{11}_{c_k,d_k} \otimes \ket{a_1}\bra{a_1}_{t_{k}}
				 	\Big) \label{clockprop3} \\
	&-& \frac{1}{2} \Big( 
				\ket{11}\bra{10}_{c_k,d_{k}} \otimes \ket{a_1}\bra{d}_{t_{k}} 
			+ \ket{10}\bra{11}_{c_k,d_{k}} \otimes \ket{d}\bra{a_1}_{t_{k}} 
					\Big), \nonumber \\
	H_{clockprop2}^{(k)} &=& \frac{1}{2} \Big( 
				 \ket{a_2}\bra{a_2}_{t_{k}} \otimes \ket{00}\bra{00}_{c_{k+1},d_{k+1}} 
			+	 \ket{d}\bra{d}_{t_{k}} \otimes \ket{10}\bra{10}_{c_{k+1},d_{k+1}} 
				 	\Big) \nonumber \\
	&-& \frac{1}{2} \Big( 
				 \ket{d}\bra{a_2}_{t_{k}} \otimes \ket{10}\bra{00}_{c_{k+1},d_{k+1}} 
			+	 \ket{a_2}\bra{d}_{t_{k}} \otimes \ket{00}\bra{10}_{c_{k+1},d_{k+1}} 
					\Big). \nonumber
		\end{eqnarray}
The input Hamiltonian checks whether the computation has properly initialized ancilla qubits.
\begin{eqnarray}
	H_{init} &=& \sum_{n=1}^{N_a} \ket{1}\bra{1}_n \otimes \ket{a_1}\bra{a_1}_{t_1}. \label{init3} 
\end{eqnarray}
Finally, the output Hamiltonian checks whether the result of the computation was 1.
\begin{eqnarray}
	H_{out} &=& \ket{0}\bra{0}_{out} \otimes \ket{a_2}\bra{a_2}_{t_{L}}. \label{out3} 
\end{eqnarray}

All of the terms coming from \eqref{clock3} -- \eqref{out3} in the Hamiltonian 
\begin{eqnarray}
	H &=& H_{clock} + H_{clockprop} + H_{init} + H_{out} + H_{prop}. \label{ham3}
\end{eqnarray}
are projectors. Therefore, 
the ground state has energy zero if and only if there exists 
a zero energy eigenstate of all of the terms.
If there exists a witness $\ket{\varphi}$ 
on which the computation $U$ gives the result 1 with probability 1,
I can construct a computational history state \eqref{ch4:psihistory} 
for a modified circuit $\tilde{U}= U_L \cdot \ii \cdot \ii \cdot U_{L-1} \cdot \ii\cdot \ii \cdots U_1 \cdot \ii$,
where the ``identity'' gates correspond to the clock propagation in my construction, with nothing
happening to the work qubits.
This state is a zero eigenvector of all of the terms in the Hamiltonian \eqref{ham3}.

I now need to prove that if no witness exists (the
answer to the problem is ``no''), then the ground state energy of \eqref{ham3} is
lower bounded by $1/poly(L)$. 
Let me decompose the Hilbert space into 
\begin{eqnarray}
	\mathcal{H} = \left( \mathcal{H}_{work} \otimes \mathcal{H}_{legal} \right)
			\oplus ( \mathcal{H}_{work} \otimes  \mathcal{H}^{\perp}_{legal} ).
\end{eqnarray}
where $\mathcal{H}_{legal}$ is the space of legal clock states (on which $H_{clock}\ket{\alpha} =0$).
The Hamiltonian \eqref{ham3} leaves this decomposition invariant, because it does not
induce transitions between legal and illegal clock states.
Since any state in $\mathcal{H}_{legal}^{\perp}$ violates at least one
term in $H_{clock}$, the lowest eigenvalue of the 
restriction of \eqref{ham3} to $\mathcal{H}_{work}\otimes \mathcal{H}_{legal}^{\perp}$
is at least 1.
On the other hand, the restriction of $H$ to the legal clock space 
is identical to the legal clock space restriction of Bravyi's Hamiltonian \eqref{ham4} 
from the previous section. Therefore, his proof using the methods
of Kitaev \cite{KitaevBook} applies to this case as well. He shows that if a no witness state for the quantum circuit $U$ exists, then
the ground state energy of the restriction of \eqref{ham4} to $\mathcal{H}_{work} \otimes \mathcal{H}_{legal}$ lower bounded by $1/poly(L)$. This means that if there is no witness state for the verifier circuit $U$, the ground state of \eqref{ham3} is lower bounded by $1/poly(L)$. This concludes the proof that quantum 3-SAT with qutrits (in fact, quantum 3-SAT on particles with dimensions $3\times 2\times 2$, a qutrit and two qubits) is QMA$_1$ complete.

The existence of another $3 \times 2 \times 2$ construction for quantum 3-SAT
(i.e., a Hamiltonian with terms acting on one qutrit and two qubits)
was already mentioned in \cite{lh:BravyiQ2SAT06} as \cite{lh:DiVincenzo}, though that construction was not specified. 
I choose to write out my result explicitly as it serves as a natural intermediate step towards the new 3-local Hamiltonian construction described in the following section.

\subsection{The new 3-local QMA complete construction (for qubits)} \label{3loc}

In Bravyi's Quantum 4-SAT construction \cite{lh:BravyiQ2SAT06}, the clock particles (qubit pairs) 
can be in 4 states. In the previous section, I required only 3 states of the clock particles and used qutrits as particles with these three states. I start with the clock-register construction (see Fig.\ref{construct3d}) from the previous section, with legal states as in \eqref{qutritlegal}. However, I now encode the three states 
of every clock qutrit $t_k$ using a pair of qubits $r_k, s_k$. The new clock register is depicted 
in Fig.\ref{construct3loc}.
\begin{eqnarray}
	\ket{a_1}_{t_k} &\rightarrow& \frac{1}{\sqrt{2}}\left(\ket{01}-\ket{10}\right)_{r_k,s_k}, \qquad 
			\ket{d}_{t_k} \rightarrow \ket{00}_{r_k,s_k}, \\
	\ket{a_2}_{t_k} &\rightarrow& \frac{1}{\sqrt{2}}\left(\ket{01}+\ket{10}\right)_{r_k,s_k}. \nonumber
\end{eqnarray}
This encoding allows me to obtain a new 3-local Hamiltonian construction (for the QMA-complete 3-local Hamiltonian problem). This Hamiltonian is a quantum 4-SAT Hamiltonian whose 4-local positive semidefinite operator terms consist of just 3-local interactions.

I am looking for 3-local terms that flip between the clock states 
$\ket{a_1} \leftrightarrow \ket{a_2}$, while simultaneously (un)applying a 2-qubit gate $U_k$
on two work qubits. 
I encode the active states of a clock particle into
entangled states, and thus I am able to flip between these clock states with a term like $Z_1$ involving only one of the clock particles. Thus my 2-qubit gate checking Hamiltonian involves only 3-local terms 
(acting on one clock qubit $r_k$ or $s_k$ and the two work qubits on which the gate $U_k$ acts).

\begin{figure}
		\begin{center}
		\includegraphics[width=3.6in]{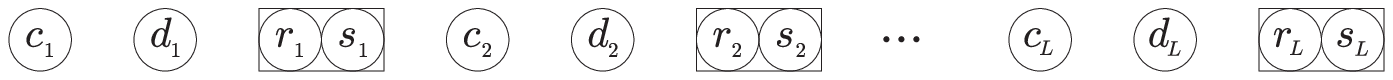}
		\end{center}
	\caption{Clock register construction with $2L+2L$ qubits. \label{construct3loc}}
\end{figure}
First, I define the legal clock space $\mathcal{H}_{legal}$ as the space spanned by the 
$3L$ states $\ket{C_m}$. These states are defined for $1\leq k \leq L$ as follows (compare to \eqref{qutritlegal}):
\begin{eqnarray}
	\ket{C_{3k-2}} &=& \kets{\underbrace{(11)(00)\dots (11)(00)}_{k-1\,\,\textrm{times}}}
		\otimes \ket{10}_{c_k,d_k} \otimes \ket{00}_{r_k, s_k} \otimes 
		\kets{\underbrace{(00)(00)\dots (00)}_{L-k \,\,\textrm{times}}}, \label{threeloclegal} \\
	\ket{C_{3k-1}} &=& 
		\kets{\underbrace{(11)(00)\dots (11)(00)}_{k-1\,\,\textrm{times}}}
		\otimes \kets{11}_{c_k,d_k} \otimes \frac{1}{\sqrt{2}}\left(\ket{01}-\ket{10}\right)_{r_k,s_k}
		\otimes \kets{\underbrace{(00)(00)\dots (00)}_{L-k \,\,\textrm{times}}}, \nonumber \\
	\ket{C_{3k}} &=& 
		\kets{\underbrace{(11)(00)\dots (11)(00)}_{k-1\,\,\textrm{times}}}
		\otimes \kets{11}_{c_k,d_k} \otimes \frac{1}{\sqrt{2}}\left(\ket{01}+\ket{10}\right)_{r_k,s_k}
		\otimes \kets{\underbrace{(00)(00)\dots (00)}_{L-k \,\,\textrm{times}}}. \nonumber
\end{eqnarray}
Similarly to the construction of the previous section, the first state ($\ket{C_{3k-2}}$) corresponds to the time when the
qubits are ``in transport'' from the previous gate to the current ($k$th) gate.
The second one corresponds to the time right before application of gate $U_k$ 
and the third corresponds to the time right after the gate $U_k$ was applied.

Formally, the legal clock states for this construction satisfy the following constraints:
\begin{enumerate}
\item if $d_k$ is 1, then $c_k$ is 1.
\item if $c_{k+1}$ is 1, then $d_k$ is 1.
\item the pair $r_k, s_k$ is not in the state $\ket{11}$.
\item if the pair $r_k, s_k$ is active (in the state $(\ket{01}\pm \ket{10})/\sqrt{2}$), then $d_k$ is 1.
\item if the pair $r_k, s_k$ is active (in the state $(\ket{01}\pm \ket{10})/\sqrt{2}$), then $c_{k+1}$ is 0.
\item if $d_k$ is 1 and $c_{k+1}$ is 0, then the pair $r_k, s_k$ is not dead (in the state $\ket{00}$).
\item $c_1$ is 1.
\item if $d_L$ is 1, then the pair $r_L,s_L$ is not dead (in the state $\ket{00}$).
\end{enumerate}
The last two conditions are required to make the clock states $\ket{(00)(00)\dots (00)(00)}$ 
and $\ket{(11)(00)\dots(11)(00)}$ with no active spots illegal.
The clock Hamiltonian $H_{clock} = H_{clockinit} + \sum_{k=1}^{L} H_{clock1}^{(k)} + \sum_{k=1}^{L-1} H_{clock2}^{(k)}$ 
verifies the above constraints.
\begin{eqnarray}
		H_{clock1}^{(k)} &=& 
				\ket{01}\bra{01}_{c_k,d_k} 
				+ \ket{0}\bra{0}_{d_k} \otimes \big( \ket{1}\bra{1}_{r_k} +\ket{1}\bra{1}_{s_k} \big)
				+ \ket{11}\bra{11}_{r_k,s_k}, \label{clock4} \\
		H_{clock2}^{(k)} &=& 
				\ket{01}\bra{01}_{d_k,c_{k+1}} 
				+ \big( \ket{1}\bra{1}_{r_k} +\ket{1}\bra{1}_{s_k} \big) \otimes \ket{1}\bra{1}_{c_{k+1}} + h_4^{(k)}, \nonumber \\
		h_4^{(k)} &=& 
				\ket{1}\bra{1}_{d_k} \otimes \frac{1}{2}\,(Z_{r_k}+Z_{s_k}) \otimes \ket{0}\bra{0}_{c_{k+1}} 
				+ \ket{11}\bra{11}_{r_k,s_k}, \nonumber \\
		H_{clockinit} &=&
				\ket{0}\bra{0}_{c_1} 
			+ \ket{1}\bra{1}_{d_{L}} \otimes \ket{00}\bra{00}_{r_L,s_L}. \nonumber
\end{eqnarray}
All of the terms involve only 3-local interactions.
All terms in $H_{clock1}^{(k)}$, $H_{clock2}^{(k)}$ and $H_{clockinit}$, are are projectors. 
The term $h_4^{(k)}$ corresponds to the sixth legal state condition.
It is a 4-local projector onto the space spanned by (illegal clock) states $\ket{1_{d_k}(00)_{r_k,s_k}0_{c_{k+1}}}$, 
$\ket{0(11)0}$, $\ket{0(11)1}$ and $\ket{1(11)1}$.
Note that even though $h_4^{(k)}$ is a 4-local projector, it is only constructed of 3-local terms.

\subsection{Checking gate application with 3-local terms}
Let me start by writing out a Hamiltonian that checks the correct application of a single-qubit gate $U_k$.
\begin{eqnarray}
	H_{prop}^{(k),\,one-qubit} = \frac{1}{2}\left( 
		\begin{array}{rr}
			\ii\otimes \ket{01-10}\bra{01-10}_{r_k, s_k} 
				- & U_k \otimes \ket{01+10}\bra{01-10}_{r_k, s_k}\\
			\ii\otimes \ket{01+10}\bra{01+10}_{r_k, s_k}
				- & U_k^{\dagger} \otimes \ket{01-10}\bra{01+10}_{r_k, s_k}
			\end{array}
		\right),
\end{eqnarray}
where $\ket{01\pm 10}$ is a shortcut notation for the normalized entangled states $(\ket{01}\pm \ket{10})/\sqrt{2}$.
This Hamiltonian is a 3-local projector. 
Note that in the case $U_k=\ii$, this Hamiltonian becomes the projector $(\ii-X)/2$
on the space of active clock states $\{\ket{01-10},\ket{01+10}\}$.

For a two-qubit gate $U_k$, the above construction would be 4-local. However, I am be able to 
construct this 4-local projector using
only 3-local terms. To do this, I require the 2-qubit gate to be symmetric"' $U_k=U_k^{\dagger}$. 
This is a universal construction, 
since the symmetric gate \textsc{CNOT} (or \textsc{C}$_{\phi}$) is universal.
Now I can write 
\begin{eqnarray}
	H^{(k),\,two-qubit}_{prop} 
		= \frac{1}{2}\Big( \ii \otimes \frac{1}{2}\,( \ii - Z_{r_k} Z_{s_k} ) -  U_k \otimes  \frac{1}{2}\,(Z_{r_k}-Z_{s_k}) \Big).
\end{eqnarray}
The first term in this Hamiltonian, $( \ii - Z_{r_k} Z_{s_k} )/2$, is a projector onto the space of active clock states,
$\ket{01\pm10}_{r_k,s_k}$, as I needed. The second term contains $(Z_{r_k}-Z_{s_k})/2$, which has zero eigenvalues
for the states $\ket{00}_{r_k,s_k}$ and $\ket{11}_{r_k,s_k}$, and flips between the states 
$\ket{01-10}_{r_k,s_k}\leftrightarrow\ket{01+10}_{r_k,s_k}$.
Altogether, this is a 4-local projector made out of only 3-local terms. 

\subsection{Clock propagation}
\label{clockpropagsection}

After a gate $U_k$ is applied, I need to ``propagate'' the pointer (the active state of the qubit pair
$r_k,s_k$) to the next pair of qubits $r_{k+1},s_{k+1}$.
This is done in two steps, as shown in Fig.\ref{figclockprop}.
\begin{figure}
	\begin{center}
	\includegraphics[width=3.6in]{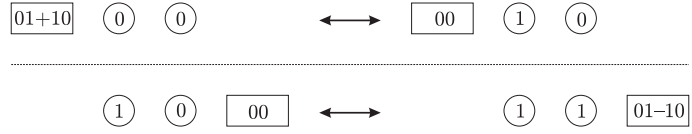}
	\end{center}
	\caption{Illustration of the two-step clock pointer propagation. \label{figclockprop}}
\end{figure}

For each step, I want to write a 3-local positive semidefinite Hamiltonian 
with terms acting on 4 consecutive qubits $r_k,s_k,c_{k+1},d_{k+1}$ 
(for the second step of the clock pointer propagation, the four qubits in play are $c_k,d_k,r_k,s_k$), 
with zero eigenvalue for the legal clock-propagation states, and perhaps
also some illegal clock states, which will be disallowed by other terms in the Hamiltonian ($H_{clock}$).
For the first step, these desired eigenvectors with zero eigenvalues are 
\begin{eqnarray}
	\ket{\alpha_{1}}_{r_k,s_k,c_{k+1},d_{k+1}} &=& \ket{00}_{r_k,s_k} \ket{00}_{c_{k+1},d_{k+1}}, \\
	\ket{\alpha_{2}}_{r_k,s_k,c_{k+1},d_{k+1}} &=& \frac{1}{\sqrt{2}}(\ket{01}-\ket{10})_{r_k,s_k} \ket{00}_{c_{k+1},d_{k+1}}, \nonumber \\
	\ket{\alpha_{3}}_{r_k,s_k,c_{k+1},d_{k+1}} &=& \frac{1}{2}(\ket{01}+\ket{10})_{r_k,s_k} \ket{00}_{c_{k+1},d_k+1}
				 																 + \frac{1}{\sqrt{2}}\ket{00}_{r_k,s_k} \ket{10}_{c_{k+1},d_{k+1}}, \nonumber\\
	\ket{\alpha_{4}}_{r_k,s_k,c_{k+1},d_{k+1}} &=& \ket{00}_{r_k,s_k} \ket{11}_{c_{k+1},d_{k+1}}. \nonumber
\end{eqnarray}
The state that I want to exclude (make it a nonzero eigenvector) is the legal clock state with incorrect pointer propagation:
\begin{eqnarray} \label{exclude1}
	\kets{\alpha^{\perp}}_{r_k,\dots,d_{k+1}} &=& 
			\frac{1}{2}(\ket{01}+\ket{10})_{r_k,s_k} \ket{00}_{c_{k+1},d_k+1}
		- \frac{1}{\sqrt{2}}\ket{00}_{r_k,s_k} \ket{10}_{c_{k+1},d_{k+1}}.
\end{eqnarray}
Let me present the Hamiltonian penalizing this state.
\begin{eqnarray}
	H^{(k)}_{clockprop1} &=& 
				\ket{10} \bra{10}_{c_{k+1},d_{k+1}}  \label{prop41}\\
				&+& \frac{1}{2}\left(\ket{01}+\ket{10}\right)\left(\bra{01}+\bra{10}\right)_{r_k,s_k} \nonumber \\
			  &-& \frac{1}{\sqrt{2}}\Big( \ket{0}\bra{1}_{r_k}+\ket{0}\bra{1}_{s_k} \Big) 
							\otimes \ket{10}\bra{00}_{c_{k+1},d_{k+1}} \nonumber \\
				&-& \frac{1}{\sqrt{2}}\Big( \ket{1}\bra{0}_{r_k}+\ket{1}\bra{0}_{s_k} \Big) 
							\otimes \ket{00}\bra{10}_{c_{k+1},d_{k+1}} \nonumber\\
	&+& 2\ket{11}\bra{11}_{r_k,s_k}. \nonumber
\end{eqnarray}
It is positive semidefinite, with eigenvalues 
0 ($\times 7$), 
1 ($\times 4$), 
2 ($\times 3$) and  
3 ($\times 2$).
Its zero energy eigenvectors are $\ket{\alpha_1}$, $\ket{\alpha_2}$, $\ket{\alpha_3}$, $\ket{\alpha_4}$ expressed above, 
and three illegal clock states, 
$\ket{00}_{r_k,s_k}\ket{01}_{c_{k+1},d_{k+1}}$, 
$(\ket{01}-\ket{10})\ket{01}$ and 
$(\ket{01}-\ket{10})\ket{11}$.
The state $\ket{\alpha^{\perp}}$ is an eigenvector of $H^{(k)}_{clockprop1}$  with eigenvalue 2.
This means that $H^{(k)}_{clockprop1}$ fulfills its job in punishing the legal states of the clock register 
\eqref{exclude1}, which do not correctly propagate the clock. 
This Hamiltonian term is a positive semidefinite operator, while Bravyi's original definition of quantum $k$-SAT
requires the terms in the Hamiltonian to be projectors. However, as I have shown at the end of Section \ref{ch4:qks},
quantum $k$-SAT with positive semidefinite operator terms is equivalent to quantum $k$-SAT with only projector terms.

For the second step, the desired zero energy eigenvectors are
\begin{eqnarray}
	\ket{\beta_{1}}_{c_k,d_k,r_k,s_k} &=& \ket{00}_{c_k,d_k} \ket{00}_{r_k,s_k}, \\
	\ket{\beta_{2}}_{c_k,d_k,r_k,s_k} &=& \frac{1}{\sqrt{2}}\ket{10}_{c_{k},d_{k}} \ket{00}_{r_k,s_k}
				+ \ket{11}_{c_{k},d_k} \frac{1}{2}(\ket{01}-\ket{10})_{r_k,s_k}, \nonumber\\
	\ket{\beta_{3}}_{c_k,d_k,r_k,s_k} &=& \ket{11}_{c_k,d_k} \frac{1}{\sqrt{2}}(\ket{01}+\ket{10})_{r_k,s_k}, \nonumber \\
	\ket{\beta_{4}}_{c_k,d_k,r_k,s_k} &=& \ket{11}_{c_{k},d_{k}} \ket{00}_{r_k,s_k}, \nonumber
\end{eqnarray}
and the state I want to exclude is
\begin{eqnarray}
	\kets{\beta^{\perp}}_{c_k,d_k,r_k,s_k} &=& 
			\frac{1}{\sqrt{2}}\ket{10}_{c_{k},d_{k}} \ket{00}_{r_k,s_k}
				- \ket{11}_{c_{k},d_k} \frac{1}{2}(\ket{01}-\ket{10})_{r_k,s_k}.
\end{eqnarray}
The Hamiltonian with these properties is a simple analogue of \eqref{prop41}:
\begin{eqnarray}
	H^{(k)}_{clockprop2} &=& 
					\ket{10} \bra{10}_{c_{k},d_{k}}   \label{prop42}\\
	&+&  \frac{1}{2}\left(\ket{01}-\ket{10}\right)\left(\bra{01}-\bra{10}\right)_{r_k,s_k} \nonumber\\
				 &-& \ket{11}\bra{10}_{c_{k},d_{k}} \otimes 
						\frac{1}{\sqrt{2}}\Big( -\ket{1}\bra{0}_{r_k}+\ket{1}\bra{0}_{s_k} \Big) \nonumber\\
					&-& \ket{10}\bra{11}_{c_{k},d_{k}} \otimes
						\frac{1}{\sqrt{2}}\Big( -\ket{0}\bra{1}_{r_k}+\ket{0}\bra{1}_{s_k} \Big) \nonumber\\
	&+& 2\ket{11}\bra{11}_{r_k,s_k}. \nonumber
\end{eqnarray}
This is again a positive semidefinite operator with eigenvalues 
0 ($\times 7$), 
1 ($\times 4$), 
2 ($\times 3$) and  
3 ($\times 2$).
Its zero energy eigenvectors are $\ket{\beta_1}$, $\ket{\beta_2}$, $\ket{\beta_3}$, $\ket{\beta_4}$ expressed above, 
and three illegal clock states, 
$\ket{01}_{c_{k},d_{k}}\ket{00}_{r_k,s_k}$, 
$\ket{01}(\ket{01}+\ket{10})$ and 
$\ket{00}(\ket{01}+\ket{10})$, which are penalized by $H_{clock}$.
The state $\ket{\beta^{\perp}}$ is an eigenvector of $H^{(k)}_{clockprop2}$ with eigenvalue 2,
which is what I intended $H^{(k)}_{clockprop2}$ to do.

Just as in the previous section, the total Hamiltonian leaves the decomposition into 
$\mathcal{H}_{legal} \oplus \mathcal{H}_{legal}^\perp$ invariant while all illegal clock states violate at least one term in $H_{clock}$. Again, up to a constant prefactor, the restriction of $H$ to the legal clock space is the same as that of the Hamiltonian in \eqref{ham4}. The proof of the necessary separation between positive and negative instances then follows the proof in the previous section. This concludes the proof that quantum 4-SAT with positive semidefinite operators made out of 3-local terms is QMA$_1$ complete.

\subsection{Discussion and further directions} \label{ch4:new3conclusions}

In the above, I proved that quantum 3-SAT for particles with dimensions $3\times 2 \times 2$ is QMA$_1$ complete. I have shown in Section \ref{ch4:qks} that quantum $k$-SAT with positive semidefinite operator terms 
(not just projectors) is equivalent to quantum $k$-SAT with projector terms. I presented a new 3-local construction of a quantum 4-SAT Hamiltonian with positive semidefinite operator terms, proving that quantum 4-SAT with 3-local interactions is QMA$_1$ complete.

The currently known complexities of classical and quantum satisfiability problems are shown in Table \ref{QSATtable}. 
Quantum 3-SAT contains classical 3-SAT and therefore is NP-hard. 
Unlike classical $k$-SAT, which is known to be NP-complete for $k\geq 3$, quantum $k$-SAT is only known to be QMA$_1$ complete for $k\geq 4$ \cite{lh:BravyiQ2SAT06}. In my opinion, it is unlikely that one can show that quantum 3-SAT ($k=3$) is also complete for QMA$_1$. One indication for this arises in my numerical explorations, where random instances of quantum 3-SAT for a reasonable number of clauses generally have no solutions, unless the clauses exclude non-entangled states. This may suggest that the hardness of quantum 3-SAT actually lies only in the classical instances (3-SAT) and a classical verifier circuit for quantum 3-SAT might exist.

Another reason comes from dimension counting. This argument, however, is only valid for the specific encoding of a circuit into the Hamiltonian I used. I worked with a tensor product space $\mathcal{H}_{work}\otimes \mathcal{H}_{clock}$, encoding the computation in the history state \eqref{ch4:psihistory}. Encoding an interaction of two qubits requires at least an $4+4=8$ dimensional space ($4$ for the two qubits before the interaction and $4$ for the qubits after the interaction).
On a first glance, a three-local projector on the space of two work qubits and one clock qubit ($2\times2\times2=8$) seems to suffice.
However, I must ensure that this interaction only occurs at a
specific clock time. When the two work qubits interact with just a single clock qubit 
(flipping it between states 
before/after interaction) ambiguities and legal-illegal clock state 
transitions are unavoidable. This transforms the problem 
from the SAT-type (determining whether a {\em simultaneous} ground state of {\em all} terms 
in the Hamiltonian exists) 
to the MAX-SAT type problem (determining the
properties of the ground state energy of the {\em sum of terms} in the Hamiltonian). 
A single clock qubit cannot both determine the exact time of an interaction and 
distinguish between the states before and after the interaction.
I managed to overcome this obstacle by using three-dimensional clock particles with 
states $d$, $a_1$ and $a_2$ and a $2\times2\times3=12$ dimensional space for encoding the interactions. 
I believe that this can not be further improved with more clever
clock-register realizations within the usual $\mathcal{H}=\mathcal{H}_{work}\otimes \mathcal{H}_{clock}$ framework. However, a recent novel idea by Eldar and Regev \cite{lh:RegevTriangle} presented in Section \ref{ch4:triangle} could be a step in this direction.
They introduce a novel `triangle' clock construction, and show that Quantum 2-SAT for particles with dimension $5\times 3$ is QMA$_1$-complete.

A different approach to encoding a quantum computation into the ground state of a Hamiltonian is the geometric clock (see Section \ref{ch4:clocks} found in the work of Aharonov et.al \cite{AQC:AvDKLLR05}. Their idea is to lay out the qubits in space in such a way, that the shape of the state (the locations of the work qubits moving around in the system) uniquely corresponds to a clock time. Their motivation was to show that adiabatic quantum computation \cite{AQC:AvDKLLR05} is polynomially equivalent to the circuit model. As a side result, they showed that the nearest-neighbor 2-local Hamiltonian problem with 6-dimensional particles is QMA complete. Actually, as Kempe et al. proved, even 2-local Hamiltonian with 2-dimensional particles (qubits) is QMA complete \cite{lh:KKR04}. I did not succeed to improve or reproduce the construction of quantum 3-SAT for particles with dimensions $3\times2\times2$ using the geometric clock framework. However, one can use the idea of a geometric clock to construct quantum 2-SAT for higher dimensional particles (qudits). I know that classical 2-SAT for particles with dimensions $3\times3$ contains graph coloring, and is thus NP-complete. 
Using a rather straightforward modification of the Aharonov et. al. construction, one can prove that quantum 2-SAT for
12-dimensional particles is QMA complete. Combining the work/clock and the geometric clock constructions, a much tighter result can be shown. Specifically, one can construct quantum 2-SAT for particles 
with dimensions $9\times4$ and prove that 
it is QMA$_1$ complete. However, this is now made obsolete by the result of Eldar and Regev presented in Section \ref{ch4:triangle} which does not use a geometric clock. It remains to be seen what are the minimal dimensions of particles for which quantum 2-SAT is QMA$_1$ complete.


\section{Quantum 2-SAT on a line}\label{ch4:q1}

In 2007, Aharonov, Gottesman, Irani and Kempe \cite{CA:GottesmanLine} proved a surprising result about the power of quantum systems on a line. One might have thought before that finding ground states or simulating time evolution of spin chains is not too hard a task with today's modern DMRG and MPS methods (see Chapter \ref{ch3mps}). However, \cite{CA:GottesmanLine} shows that finding ground states of 2-local Hamiltonians on a line of $d=12$ dimensional particles is QMA complete. Utilizing this construction, recently Schuch, Cirac and Verstraete \cite{CA:CiracLine} proved that even finding ground states of 2-local Hamiltonians on a line whose ground states are known to be Matrix Product States with constant dimension is NP-hard.

In this section I revisit \cite{CA:GottesmanLine} and show that the problem remains QMA complete even for particles with dimension $d=11$. Moreover, \cite{CA:GottesmanLine} shows that one can use a simplified nearest-neighbor Hamiltonian on a line of $d=9$-dimensional particles to perform universal Quantum Adiabatic Computation. This result was recently improved to a system of $8$-dimensional particles by Chase and Landahl \cite{CA:Landahl}. Concurrently with their work, myself with P. Wocjan in \cite{CA:d10} have proved that a translationally invariant Hamiltonian on a line of $10$-dimensional particles can be used for this purpose. When I release the translational invariance requirement, I get a Hamiltonian for a chain of $8$-dimensional particles as well. I present this result later in Section \ref{d10section}.

\subsection{A Line of $d=12$ Dimensional Particles (Quantum (12,12)-SAT)}

As discussed in Section \ref{ch4:clocks}, to show that a Local Hamiltonian (or Quantum-SAT) problem is QMA complete, one usually proceeds as follows. First, encode a quantum circuit into a progression of orthogonal states $\ket{\Psi_t}$ of a larger system, either by adding a clock register to the register holding the work qubits, or by using a geometric clock construction. The corresponding Hamiltonian is then constructed to check the proper transition rules for the progression of states $\ket{\Psi_t}$. What remains to be shown are the properties of the low-lying spectrum of $H$. 

\subsubsection{The $d=12$ Geometric clock on a line}

On a line, only nearest neighbor interactions are allowed. This rules out the use of a separate clock register, which means a geometric clock (see Section \ref{ch4:clocks}) is a necessity. This is also a reason for the dimensionality ($d=12$) of the particles. The state space of each particle consists of four two-dimensional subspaces 
\begin{eqnarray}
	\goA &:& \textrm{a qubit marked as `active'}, \\
	\goM &:& \textrm{a qubit labeled `messenger'}, \\
	\goD &:& \textrm{a qubit to the left of the active site}, \\
	\goR &:& \textrm{a qubit to the right of the active site},
\end{eqnarray}
and four more states, each of them 1-dimensional
\begin{eqnarray}
	\goX &:& \textrm{to the left of the qubits (`done/dead')}, \\ 
	\goO &:& \textrm{to the right of the qubits (`ready/unborn')}, \\
	\goT &:& \textrm{a `turn' state at the qubit sequence boundary}, \\
	\goB &:& \textrm{a `push' state used to move the qubits to the right}.
\end{eqnarray}
Altogether, the state space of each particle is 12-dimensional, with the structure
\begin{eqnarray}
	\cH_{12} = Q \oplus h = (q\otimes l) \oplus h = \big( \bigcirc \otimes \{\blacktriangleright, \vartriangleright, \times, \cdot\} \big) 
	\oplus \{ \goX,\goO,\goT,\goB \},
	\label{ch4:g12statespace}
\end{eqnarray}
where $Q$ is the subspace for the 4 types of qubits with $q$ the internal state space of a qubit and $l$ its four possible labels. The state space for the four other states is labeled $h$.

Without loss of generality, take a quantum circuit on $n$ qubits which consists of $K$ rounds of nearest neighbor gates as in Figure \ref{ch4:figurecircuit12}
\begin{figure}
	\begin{center}
	\includegraphics[width=3in]{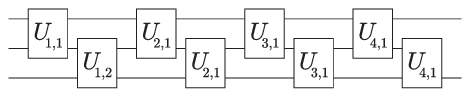} 
	\end{center}
	\caption{A quantum circuit for $n=3$ qubits with $K=4$ rounds of nearest neighbor gates $U_{k,s}$.
	 \label{ch4:figurecircuit12}}
\end{figure}
\begin{eqnarray}
	U = \left(U_{K,n-1}U_{K,n-2}\dots U_{K,1}\right) 
	\dots
		\left(U_{2,n-1}U_{2,n-2}\dots U_{2,1}\right)  
		\left(U_{1,n-1}U_{1,n-2}\dots U_{1,1}\right), 
\end{eqnarray}
where the gate $U_{k,s}$ acts on the pair of qubits $s,s+1$. I now encode the progression of $U$ into a set of states of a line of qudits
with length $nK$. The initial state $\ket{\psi_1}$ has a sequence of $n$ qubits on the left, and the rest of the chain is in the `ready' state $\goO$. The leftmost qubit is active $\goA$. Here I write out the initial state corresponding to a quantum circuit on $n=3$ qubits with $G=4\times 2$ nearest neighbor gates (here $K=4$, as in Figure \ref{ch4:figurecircuit12}).
\begin{eqnarray}
\ket{\psi_{1}} = \goA \goR \goR \goO \goO \goO \goO \goO \goO \goO \goO \goO
\end{eqnarray}
Using a few rules, a progression of states $\ket{\psi_{t}}$ is now constructed. The rules are prepared in such a way that there is always only one possible state $\ket{\psi_{t+1}}$ for a given state $\ket{\psi_t}$.

The first rule says that when the active spot $\goA$ is at position $kn+m$ where $1\leq k\leq K$ and $1 < m \leq n$, it can pass to the right. When this happens, the gate $U_{k,m}$ is first applied to the state of the corresponding qubits $(kn+m,kn+m+1)$, while their labels change from $\goA\goO$ to $\goD\goA$. 
\begin{eqnarray}
	1a\: &:&\: 
		\goA \goR
		\goes
		U_{k,m} \left(\goD \goA\right) \\
	&&\textrm{ on particles }(kn+m,kn+m+1). \nonumber
\end{eqnarray}
On the other hand, the active spot $\goM$ just moves to the right, changing the qubit labels without modifying their internal states.	
\begin{eqnarray}
	1b\: &:&\: 
		\goM \goR
		\goes
		\goR \goM
	\label{L12rule1}
\end{eqnarray}
Here is the corresponding progression of states (only 1$a$ applies here):
\begin{eqnarray}
 \goA \goR \goR \goO \goO \goO \goSO \label{ch4:g12start}\\
 \goD \goA \goR \goO \goO \goO \goSO \\
 \goD \goD \goA \goO \goO \goO \goSO
\end{eqnarray}
After these $n-1$ applications of rule 1$a$, the internal state of the $n$ qubits holds the state of the quantum circuit after the first round of gates.

The second set of rules involves the active spot reaching the front of the chain of qubits. Rule 2$a$ applies to particle pairs $(kn-1,kn)$ for integer $k$, while rule 2$b$ applies everywhere else. The reason for this is to ensure that when working out the progression of states backwards, there is always only one of the rules that applies in a given situation.
\begin{eqnarray}
	2a\: &:&\: 
		\goA \goO
		\goes
		\goD \goT
	\textrm{ on particles }(kn-1,kn), \\
	2b\: &:&\: 
		\goM \goO
		\goes
		\goD \goT
	\textrm{ everywhere else,}
	\label{L12rule2} 
\end{eqnarray}
Thus, after applying rule 2$a$ on particles $(1,2)$, I get
\begin{eqnarray}
\goD \goD \goA \goO \goO \goO \goSO \\
\goD \goD \goD \goT \goO \goO \goSO
\end{eqnarray}
The next three rules facilitate the sending of a message back to the left end of the chain.
\begin{eqnarray}
	3\: : \:
		\goT \goO 
		\goes
		\goB \goO 
\label{L12rule3} \\
	4\: : \:
		\goD \goB
		\goes 
		\goB \goR 
\label{L12rule4} \\
	5\: : \:
		\goX \goB
		\goes
		\goX \goT
\label{L12rule5} 
\end{eqnarray}
Also note that rule $3$ is simply $\goT \goes \goB$ for the rightmost particle of the line and rule $5$ is $\goB \goes \goT$ for the leftmost particle of the line. This produces the state progression
\begin{eqnarray}
\goD \goD \goD \goT \goO \goO \goSO \\
\goD \goD \goD \goB \goO \goO \goSO \\
\goD \goD \goB \goR \goO \goO \goSO \\
\goD \goB \goR \goR \goO \goO \goSO \\
\goB \goR \goR \goR \goO \goO \goSO \\
\goT \goR \goR \goR \goO \goO \goSO 
\end{eqnarray}
Observe now that the qubits have moved one step to the right from where they were in \eqref{ch4:g12start}. Finally, the last set of rules activates the leftmost of the qubits. Depending on the position this qubit is at, it will become either $\goA$ or $\goM$.
\begin{eqnarray}
	6a\: &:& \: 
		\goT \goR
		\goes
		\goX \goA
	\textrm{ for particles }(kn,kn+1), \\
	6b\: &:& \: 
		\goT \goR
		\goes
		\goX \goM
	\label{L12rule6}
\end{eqnarray}
The progression of states thus continues as
\begin{eqnarray}
\goT \goR \goR \goR \goO \goO \goSO \\
 \goX \goM \goR \goR \goO \goO \goSO  \label{mturn1}\\
 \goX \goD \goM \goR \goO \goO \goSO \\
 \goX \goD \goD \goM \goO \goO \goSO \\
 \goX \goD \goD \goD \goT \goO \goSO \\
 \goX \goD \goD \goD \goB \goO \goSO \\
 \goX \goD \goD \goB \goR \goO \goSO \\
 \goX \goD \goB \goR \goR \goO \goSO \\
 \goX \goB \goR \goR \goR \goO \goSO \\
 \goX \goT \goR \goR \goR \goO \goSO \\
 \goX \goX \goM \goR \goR \goO \goSO \label{mturn2}
\end{eqnarray}
and
\begin{eqnarray}
 \goX \goX \goM \goR \goR \goO \goSO \\
 \goX \goX \goD \goM \goR \goO \goSO \\
 \goX \goX \goD \goD \goM \goO \goSO \\
 \goX \goX \goD \goD \goD \goT \goSO \\
 \goX \goX \goD \goD \goD \goB \goSO \\
 \goX \goX \goD \goD \goB \goR \goSO \\
 \goX \goX \goD \goB \goR \goR \goSO \\
 \goX \goX \goB \goR \goR \goR \goSO \\
 \goX \goX \goT \goR \goR \goR \goSO \\
 \goX \goX \goX \goA \goR \goR \goSO \label{aturn}
\end{eqnarray}
Note that in steps \eqref{mturn1} and \eqref{mturn2} the last qubit is marked as a `messenger' $\goM$ by rule 6b. This rule is used at the left end of the qubit sequence until the qubits have moved $n$ steps to the right. Then, in \eqref{aturn}, the leftmost qubit is at position $kn+1$, which makes it become $\goA$ by rule 6a. The second round ($k=2$) of gate applications now begins.

\subsubsection{The Hamiltonian}
As in Section \ref{ch4:q3}, the Hamiltonian is an implementation of
\eqref{ch4:completeHam}.

The part checking the propagation of the computation is constructed from the above rules as
\begin{eqnarray}
	H = R^{(5)}_{(1)} + R^{(3)}_{(Kn)} 
		+ \sum_{s=1}^{nK-1} 
		  \sum_{r=1a}^{6b} 
			R^{(r)}_{(s,s+1)},
\end{eqnarray}
where each term $R^{(r)}$ is a projector acting on one or two neighboring qubits corresponding to the rule $r$. 

As an example, I write out the term $R^{(1a)}$ which checks the proper application of rule 1$a$.  This projector acts nontrivially on an 8-dimensional subspace of two neighboring particles spanned by $\goM \goR$ and $\goR \goM$.
Using the notation of \eqref{ch4:g12statespace}, this subspace can be thought of as
\begin{eqnarray}
	(q_1 \otimes l_1) \otimes (q_2 \otimes l_2),
\end{eqnarray}
where $q_1$ and $q_2$ are states of the two qubits, while their labels $(l_1, l_2)$ are in the state $(\hA,\hO)$ or 
$(\hX,\hA)$.
The corresponding projector (only for clock particles in position $kn+m,kn+m+1$) is 
\begin{eqnarray}
	R^{(1a)}_{(kn+m,kn+m+1)} &=& \half
			\ii_{q_1,q_2} \otimes
			\left(
			\ket{\hA\hO}\bra{\hA\hO} + 
			\ket{\hX\hA}\bra{\hX\hA} 
		\right)_{l_1 l_2}
		\\
		&-&	\half 
			\left(U_{k,m}\right)_{q_1,q_2} \otimes 
			\ket{\hX\hA}\bra{\hA\hO}_{l_1 l_2} \\
		&-& \half
			\left(U_{k,m}^\dagger\right)_{q_1,q_2} \otimes
			\ket{\hA\hO}\bra{\hX\hA}_{l_1 l_2}, 
\end{eqnarray}
where the gate $U_{k,m}$ is acting on the internal state of the two qubits.

The rest of the projectors are much simpler, therefore I write out only one of them. Rule 2$a$ involves the 4-dimensional subspace of two particles spanned by $\goA\goO$ and $\goX\goT$. 
\begin{eqnarray}
	P^{(2a)}_{(kn-1,kn)} &=& \half
			\left( \ii - \sigma_X \right)
\end{eqnarray}

\subsubsection{The low-lying spectrum of $H$}
What remains in the Hamiltonian construction are the legal clock state checking operators. First, analogously to the $\ket{01}\bra{01}$ check operators of the domain wall clock, I need
\begin{eqnarray}
	H_{check01} &=& 
		\sum_{A\in S_a}
		\sum_{B\in S_d}
		\ket{A}\bra{A} \otimes
		\ket{B}\bra{B}\\
		&+&
		\sum_{C\in S_r}
		\sum_{A\in S_a}
		\ket{C}\bra{C} \otimes
		\ket{A}\bra{A},
\end{eqnarray}
where
\begin{eqnarray}
	S_a &=& \{\goM,\goA,\goT,\goB\}, \\
	S_d &=& \{\goX,\goD\}, \\
	S_r &=& \{\goO,\goR\}.
\end{eqnarray}
Next, I need to check that there is an active site in the system at all by
\begin{eqnarray}
	H_{check1a0} &=& 
		\sum_{A\in S_d}
		\sum_{B\in S_r}
		\ket{A}\bra{A} \otimes
		\ket{B}\bra{B}.
\end{eqnarray}
Next, I need to check that there is not more than one active site in the system by adding a term
\begin{eqnarray}
	H_{check11} &=& 
		\sum_{A\in S_a}
		\sum_{B\in S_a}
		\ket{A}\bra{A} \otimes
		\ket{B}\bra{B}.
\end{eqnarray}
In fact, this is not sufficient to rule out that the two active sites wouldn't be at two places in the system that are spatially separated. However, Aharonov et al. \cite{CA:GottesmanLine} prove a {\em clairvoyance lemma}, which says that the expectation value of the complete Hamiltonian in a state with two active states is lower bounded by an inverse polynomial in $n$. This is because the two active sites have to move in the system, and thus will end up on neighboring sites.
Also, one needs to check whether the number of qubits in the system is right. Whether there are not too few of them can be checked by
\begin{eqnarray}
	H_{few} = \ket{\goO}\bra{\goO}_{c_n}
\end{eqnarray}
on the $n$-th particle from the left. On the other hand, whether there are not too many of them can be checked by
\begin{eqnarray}
	H_{many} = \ket{\goA \goR}\bra{\goA \goO}_{c_n,c_{n+1}}.
\end{eqnarray}
Finally, the initial state of the ancillae is checked by 
\begin{eqnarray}
	H_{init} &=& \sum_{k=m+1}^{n} \ket{1}\bra{1}_{\goO_{c_k}}
\end{eqnarray}
and the very first state in the sequence is initialized by  
\begin{eqnarray}
	H_{clockinit} &=& \ket{\goO}\bra{\goO}_{c_1} + 
	\ket{\goM}\bra{\goM}_{c_1}.
\end{eqnarray}
Altogether, for any quantum circuit $U$, Aharonov et. al. \cite{CA:GottesmanLine} constructed a Hamiltonian for $d=12$ dimensional particles on a line with nearest neighbor interactions, whose ground state energy encodes whether the quantum circuit $U$ can output $1$ with high probability on some state. Thus, Quantum (12,12)-SAT on a line is QMA$_1$ complete and Local Hamiltonian for $d=12$ particles on a line is QMA complete.


\subsection{Q-(11,11)-SAT on a line is QMA$_1$ complete.}\label{ch4:d11}
Here I present a modification of the above clock construction which requires only 11-dimensional qudits. The complexity of Q-(11,11)-SAT on the line is thus still QMA$_1$ complete (and Local Hamiltonian with $d=11$ qudits on a line is QMA complete). This result is my own previously unpublished work.

There are 8 dimensions required for the qubits in the previous clock construction. I needed two types of active qubits, one to send a message to the right $\goM$ and the other one $\goA$ to facilitate the application of gates. 
I also required two types of inactive qubits, one in the state `done' $\goD$ (to the left of the active spot), and the other one `ready' $\goR$ (to the right of the active spot). This can be reduced by having only one type of inactive qubit $\goR$, whose property (`ready' or `done') is determined by the parity of its position. To compensate for this, I introduce two more non-qubit states $\gox$ and $\goo$. The initial state is 
\begin{eqnarray}
\goSX \underbrace{\goA \goo \goR \goo \goR \goo}_{\textrm{3 qubits separated by }\goo} \goO \goO \goO \goO \goO \goO \goSO 
\label{d11start}
\end{eqnarray}
where I now follow each qubit by the non-qubit state $\goo$. The progression of the states $\ket{\Psi_t}$ becomes somewhat more involved. To move the qubits two spaces to the right, one goes through
\begin{eqnarray}
\goSX \goA \goo \goR \goo \goR \goo \goO \goO \goO \goO \goO \goO \goSO \label{d11start2}\\
\goSX \gox \goA \goR \goo \goR \goo \goO \goO \goO \goO \goO \goO \goSO \\
\goSX \gox \goR \goA \goo \goR \goo \goO \goO \goO \goO \goO \goO \goSO \\
\goSX \gox \goR \gox \goA \goR \goo \goO \goO \goO \goO \goO \goO \goSO \\
\goSX \gox \goR \gox \goR \goA \goo \goO \goO \goO \goO \goO \goO \goSO \\
\goSX \gox \goR \gox \goR \gox \goA \goO \goO \goO \goO \goO \goO \goSO \\
\goSX \gox \goR \gox \goR \gox \goR \goT \goO \goO \goO \goO \goO \goSO \\
\goSX \gox \goR \gox \goR \gox \goR \gox \goT \goO \goO \goO \goO \goSO \\
\goSX \gox \goR \gox \goR \gox \goR \gox \goB \goO \goO \goO \goO \goSO \\
\goSX \gox \goR \gox \goR \gox \goR \goB \goo \goO \goO \goO \goO \goSO \\
\goSX \gox \goR \gox \goR \gox \goB \goR \goo \goO \goO \goO \goO \goSO \\
\goSX \gox \goR \gox \goR \goB \goo \goR \goo \goO \goO \goO \goO \goSO \\
\goSX \gox \goR \gox \goB \goR \goo \goR \goo \goO \goO \goO \goO \goSO \\
\goSX \gox \goR \goB \goo \goR \goo \goR \goo \goO \goO \goO \goO \goSO \\
\goSX \gox \goB \goR \goo \goR \goo \goR \goo \goO \goO \goO \goO \goSO \\
\goSX \goB \goo \goR \goo \goR \goo \goR \goo \goO \goO \goO \goO \goSO \\
\goSX \goT \goo \goR \goo \goR \goo \goR \goo \goO \goO \goO \goO \goSO \\
\goSX \goX \goT \goR \goo \goR \goo \goR \goo \goO \goO \goO \goO \goSO \\
\goSX \goX \goX \goM \goo \goR \goo \goR \goo \goO \goO \goO \goO \goSO \label{d11mturn1}
\end{eqnarray}
Observe that in \eqref{d11mturn1}, the qubits have moved two spots to the right with respect to their starting position \eqref{d11start2}. This is an analogue of obtaining the state \eqref{mturn1} from the state \eqref{ch4:g12start} in the previous Section. After the qubits have moved $2n$ spots to the right, the leftmost of them becomes $\goA$ again, and the next round of gate applications commences.

The formal rules governing this progression of states are:
\begin{eqnarray}
	1a\: &:&\: 
		\goA \goo
		\goes
		\gox \goA
	\\
	1b\: &:&\: 
		\goM \goo
		\goes
		\gox \goM
	\label{L11rule1}
\end{eqnarray}
shifting the active qubit position. Next, when a qubit marked `apply gates' meets another qubit, depending on their position, a unitary gate is applied to the internal states of the two qubits as the second qubit gets the `apply gates' label. 
\begin{eqnarray}
	2a\: &:&\: 
		\goA \goR
		\goes
		U_{k,m} \left( \goD \goA \right) \\
		&&\textrm{ on particles }(2kn+2m,2kn+2m+1), \nonumber
\end{eqnarray}
where $k$ and $m$ are integers. Meanwhile, the transition rule for a `messenger' qubit meeting another qubit is just
\begin{eqnarray}
	2b\: &:&\: 
		\goM \goR
		\goes
		\goR \goM .
	\label{L11rule2}
\end{eqnarray}
When the qubit with the `apply gates' label is at the right end of the qubit sequence, it changes according to 
\begin{eqnarray}
	2a\: &:&\: 
		\goA \goO
		\goes
		\goD \goT \\
	&&\textrm{ on particles }(kn,kn+1),\nonumber
\end{eqnarray}
while everywhere else it changes according to
\begin{eqnarray}
	2b\: &:&\: 
		\goM \goO
		\goes
		\goD \goT.
	\label{L11rule2end} 
\end{eqnarray}
The reason for this distinction between rules 2$a$ and 2$b$ is that the rules applied backwards also have to be unique.
To send the active spot to the left now takes several steps:
\begin{eqnarray}
	3a\: &:&\: 
		\goT \goO
		\goes
		\gox \goT \\
	&&\textrm{ on particles }(2k+1,2k+2), \nonumber
\end{eqnarray}
on neighboring particles with (odd,even) locations, 
while on the other pairs of particles the transition rule is
\begin{eqnarray}
	3b\: &:&\: 
		\goT \goO
		\goes
		\goB \goO 
		\label{L11rule3}\\
	&&\textrm{ on particles }(2k+2,2k+3). 
	\nonumber
\end{eqnarray}
Next, the qubits are pushed to the right as the active symbol $\goB$ moves to the left. Then the active spot bounces back as it reaches the left end of the qubit sequence.
\begin{eqnarray}
	4a\: &:& \:
		\gox \goB
		\goes 
		\goB \goo 
\\
	4b\: &:& \:
		\goR \goB
		\goes 
		\goB \goR 
\label{L11rule4} \\
	5a\: &:& \:
		\goX \goB
		\goes
		\goX \goT
\\
	5b\: &:& \:
		\goT \goo
		\goes
		\goX \goT
\label{L11rule5} 
\end{eqnarray}
Finally, the leftmost qubit becomes active. When the particle pair involved is at position $(kn,kn+1)$, it means that the qubits have moved $2n$ spots to the right and another round of gate applications can start. This activation proceeds according to the rule
\begin{eqnarray}
	6a\: &:& \: 
		\goT \goR
		\goes
		\goX \goA\\
	&&\textrm{ on particles }(2kn-1,2kn), \nonumber
\end{eqnarray}
while everywhere else the last of the qubits is activated into the `messenger' state $\goM$ as
\begin{eqnarray}
	6b\: &:& \: 
		\goT \goR
		\goes
		\goX \goM
	\label{L11rule6}
\end{eqnarray}


\section{Quantum 2-SAT in general geometry (for qudits)}\label{ch4:q2general}
\subsection{The Triangle Clock and Q-(5,3)-SAT}
\label{ch4:triangle}
In a recent paper, Eldar and Regev \cite{lh:RegevTriangle}, present a novel idea. Their `triangle' clock construction abandons a simple linear progression of states $\ket{\Psi_t}$ encoding the progression of the quantum circuit $U$. This allows them to prove that certain variant of Quantum 2-SAT for higher spins is QMA$_1$ complete. 

The complexity of classical 2-Satisfiability with higher dimensional particles is known. Already the simplest variant, (3,2)-SAT, i.e. 2-SAT for a trit and a bit is NP-complete, as shown in Section \ref{ch1:sat23}. On the other hand, the complexity of Q-SAT for higher spins is an open question. In Section \ref{ch4:q3} I proved that Quantum (3,2,2)-SAT, i.e. Quantum 3-SAT for a qutrit and two qubits, is QMA$_1$ complete. Here I focus on the higher spin version of Quantum 2-SAT. Eldar and Regev's result which I now review is that Quantum (5,3)-SAT is QMA$_1$-complete. Here $(5,3)$ means that each projector in their Hamiltonian involves one cinquit (a particle with dimension $d=5$) and one qutrit ($d=3$).

\begin{figure}
	\begin{center}
	\includegraphics[width=4in]{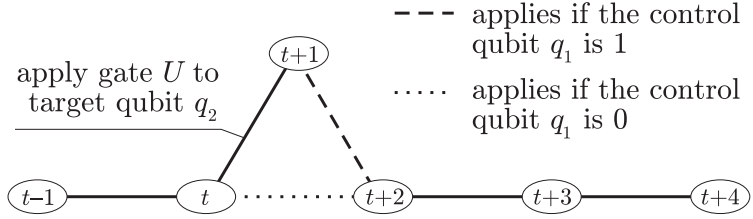} 
	\end{center}
	\caption{The `triangle' clock transition rules allowing to apply a controlled gate to two work qubits using interactions that involve only one work qubit at a time.
	\label{ch4:figuretriangle2}}
\end{figure}
The proof of QMA$_1$-hardness relies on the `triangle' clock construction depicted in Figure \ref{ch4:figuretriangle2}. 
The state 
\begin{eqnarray}
	\ket{\Psi_t} = \ket{\psi_t} \otimes \ket{t}_c
\end{eqnarray}
has an allowed transition to the state 
\begin{eqnarray}
	\ket{\Psi_{t+1}} = \left(\sigma_x^{(q_2)} \ket{\psi_t} \right) \otimes \ket{t+1}_c,
\end{eqnarray}
which has the operator $\sigma_x$ applied to a target work qubit $q_2$. Next, the transition from the state $\ket{\Psi_t}$ to the state
$\ket{\Psi_{t+2}}$ is conditioned on a control (work) qubit $q_1$ to be in the state $\ket{0}$. On the other hand, if the control qubit $q_1$ is in the state $\ket{1}$, the transition $\ket{\Psi_{t+1}}\goes\ket{\Psi_{t+2}}$ applies. In this fashion, a two-qubit controlled gate, in this case a CNOT, is applied to two work qubits, while interacting with only one of them at a time. The corresponding `triangle' Hamiltonian consists of three terms:
\begin{eqnarray}
	H_{\vartriangle} 
	&=& 
	\half \left[\ii\otimes(P_t + P_{t+1}) 
	- \sigma_x^{(q_2)} \otimes 
	\left(X_{t+1,t} + X_{t+1,t}^\dagger\right) \right]
	\label{ch4:Htriangle}
	\\
	&+& 
	\ket{0}\bra{0}_{q_1} \otimes 
		\half \left[P_t + P_{t+2}
	- 
	\left(X_{t+2,t} + X_{t+2,t}^\dagger\right) \right]
	\nonumber
	\\
	&+& 
	\ket{1}\bra{1}_{q_1} \otimes 
		\half \left[P_{t+1} + P_{t+2} 
		- 
	\left(X_{t+2,t+1} + X_{t+2,t+1}^\dagger\right) \right].
	\nonumber
\end{eqnarray}
The first term checks that for the transition between states $t$ and $t+1$, a $\sigma_x$ gate is applied to the target qubit $q_2$. The second and third terms check the transitions $\ket{\Psi_t} \rightarrow \ket{\Psi_{t+2}}$ and $\ket{\Psi_{t+1}} \rightarrow \ket{\Psi_{t+2}}$, conditioned on the state of the control work qubit $q_1$. The reader is welcome to check that the ground state energy of $H_{\vartriangle}$ is exactly zero, and the states in the ground state subspace of $H_{\vartriangle}$ are of the form
\begin{eqnarray}
	\ket{\Psi_{\vartriangle}} &=& c_{\vartriangle} \left(
	\dots + 
		\ket{\Psi_t} + \ket{\Psi_{t+1}} + \ket{\Psi_{t+2}}
		 +  \dots\right),
\end{eqnarray}
where
\begin{eqnarray}
	\ket{\Psi_t} &=& \ket{\psi}\otimes\ket{t}_c,\\
	\ket{\Psi_{t+1}} &=& \left(\sigma_{x}^{(q_2)} \ket{\psi} \right)\otimes\ket{t+1}_c,\\
	\ket{\Psi_{t+2}} &=& \left( \textrm{CNOT}_{q_1,q_2} \ket{\psi}\right)\otimes\ket{t+2}_c
\end{eqnarray}
and $\ket{\psi}$ is a state of the work register.

The goal of Eldar and Regev in \cite{lh:RegevTriangle} is to construct a Quantum 2-SAT Hamiltonian (for a qutrit and cinquit), and for that one needs to implement the clock in a 2-local fashion. The clock register is made of alternating particles with dimension $d=3$ and $d=5$. The 3-dimensional clock particle states are $\rO, \rA$ and $\rX$, while the 5-dimensional clock particle states are $\ro,\ra,\rb,\rc$ and $\rx$. The progression of clock states is similar to the domain wall combined with the pulse clock, as  discussed in Section \ref{ch4:clocks}:
\begin{eqnarray}
\ket{1}_c &=& \rA \ro \rO \ro \rO \ro \rO \ro \\
\ket{2}_c &=& \rX \ra \rO \ro \rO \ro \rO \ro \\
\ket{3}_c &=& \rX \rb \rO \ro \rO \ro \rO \ro \\
\ket{4}_c &=& \rX \rc \rO \ro \rO \ro \rO \ro \\
\ket{5}_c &=& \rX \rx \rA \ro \rO \ro \rO \ro \\
\ket{6}_c &=& \rX \rx \rX \ra \rO \ro \rO \ro \\
\ket{7}_c &=& \rX \rx \rX \rb \rO \ro \rO \ro \\
\ket{8}_c &=& \rX \rx \rX \rc \rO \ro \rO \ro \\
\ket{9}_c &=& \rX \rx \rX \rx \rA \ro \rO \ro \\
\ket{10}_c &=& \rX \rx \rX \rx \rX \ra \rO \ro \\
\ket{11}_c &=& \rX \rx \rX \rx \rX \rb \rO \ro \\
\ket{12}_c &=& \rX \rx \rX \rx \rX \rc \rO \ro \\
\ket{13}_c &=& \rX \rx \rX \rx \rX \rx \rA \ro \\
\ket{14}_c &=& \rX \rx \rX \rx \rX \rx \rX \ra
\end{eqnarray}
with a single active particle for a given state $\ket{t}_c$ (denoted by a black filled symbol). The three clock states $\ket{t}_c,\ket{t+1}_c$ and $\ket{t+2}_c$ involved in the triangle construction in Figure \ref{ch4:figuretriangle2} correspond to three consecutive states of the clock register in whose the active site is on one 5-dimensional particle, such as $\ket{2}_c,\ket{3}_c$ and $\ket{4}_c$. 

The propagation Hamiltonian $H_{prop}$ in \ref{ch4:completeHam} now consists of four types of terms. For a neighboring clock qutrit and cinquit I have
\begin{eqnarray}
	H_{prop(3,5)}^{t} &=& \half \left(
		\ket{\rA}\bra{\rA}_{(3)}\otimes \ii_{(5)} + 
		\ii_{(3)} \otimes \ket{\ra}\bra{\ra}_{(5)}\right) \\
		&-& \half \left( \ket{\rX \ra}\bra{\rA \ro}_{(3,5)} 
		+ \ket{\rA \ro}\bra{\rX \ra}_{(3,5)} 
		\right),
\end{eqnarray}
while for a neighboring clock cinquit and qutrit, the transition is checked by \begin{eqnarray}
	H_{prop(5,3)}^{t} &=& \half \left(
		\ket{\rc}\bra{\rc}_{(5)} \otimes \ii_{(3)} +
		\ii_{(5)} \otimes \ket{\rA}\bra{\rA}_{(3)}  \right) \\
		&-& \half \left(
		 \ket{\rx \rA}\bra{\rc \rO}_{(5,3)} 
		- \ket{\rc \rO}\bra{\rx \rA}_{(5,3)} 
		\right).
\end{eqnarray}

The third type of term in $H_{prop}$ is the above described `triangle' Hamiltonian \eqref{ch4:Htriangle}, implemented using the following projectors and transition operators:
\begin{eqnarray}
	P_{t} &=& \ket{\ra}\bra{\ra}_{(5)}, \label{ch4:clock35implement}\\
	P_{t+1} &=& \ket{\rb}\bra{\rb}_{(5)}, \nonumber\\
	P_{t+2} &=& \ket{\rc}\bra{\rc}_{(5)}, \nonumber\\
	X_{t+1,t} &=& \ket{\rb}\bra{\ra}_{(5)}, \nonumber\\
	X_{t+2,t} &=& \ket{\rc}\bra{\ra}_{(5)}, \nonumber\\
	X_{t+2,t+1} &=& \ket{\rc}\bra{\rb}_{(5)}. \nonumber
\end{eqnarray}
This term checks the application of a controlled-not (CNOT) gate on two work qubits in a 2-local fashion, as each of the three terms in $H_{\vartriangle}$ is a projector acting nontrivially on the Hilbert space of one cinquit $(5)$ and one qubit $(2)$.

So far, I have shown how Eldar and Regev check the application of a 2-local controlled gate on two work qubits by using projectors on one cinquit and one qutrit at a time. However, for universality one also needs to apply single qubit unitary gates. The final, fourth type of term in $H_{prop}$ is then
\begin{eqnarray}
	H_{prop(single)}^{t} &=& 
		\half \left( \ii \otimes (P_{t} + P_{t+1})
			- U \otimes X_{t+1,t} - \left(U \otimes X_{t+1,t}\right)^\dagger \right) \\
			&+&
			\ii \otimes \half \left( P_{t+1} + P_{t+2} -
			\left( X_{t+2,t+1} + X_{t+2,t+1}^\dagger\right) \right),
\end{eqnarray}
and it is implemented using \eqref{ch4:clock35implement}. The first line is a projector which checks the application of a unitary gate $U$ on a single work qubit between states $\ket{\Psi_{t}}$ and $\ket{\Psi_{t+1}}$. The second line is a projector checking a transition from $\ket{\Psi_{t+1}}$ to $\ket{\Psi_{t+2}}$ without changing the work qubits.

Finally, the legal clock subspace checking operators are the analogue of the operators $\ket{01}\bra{01}$ for the domain wall clock. They read
\begin{eqnarray}
	H_{clock(3,5)} &=& 
		\ket{\rX}\bra{\rX} \otimes \ket{\ro}\bra{\ro} \\
		&+& \ket{\rA}\bra{\rA} \otimes \left( \ii - \ket{\ro}\bra{\ro}\right) \\
		&+& \ket{\rO}\bra{\rO} \otimes \left( \ii - \ket{\ro}\bra{\ro}\right) 
\end{eqnarray}
for a neighboring qutrit and cinquit, while for a neighboring cinquit and qutrit they are
\begin{eqnarray}
	H_{clock(5,3)} &=& 
		\ket{\rx}\bra{\rx} \otimes \ket{\rO}\bra{\rO} \\
		&+& \left( \ii - \ket{\rx}\bra{\rx}\right) \otimes \ket{\rA}\bra{\rA} \\
		&+& \left( \ii - \ket{\rx}\bra{\rx}\right) \otimes \ket{\rX}\bra{\rX} 
\end{eqnarray}
Whether there is an active site in the clock register at all is taken care of by
\begin{eqnarray}
	H_{clockinit} = \ket{\rO}\bra{\rO}_{c_1} + \ket{\rX}\bra{\rX}_{c_L},
\end{eqnarray}
where $c_1$ and $c_L$ are the leftmost and the rightmost qutrits in the clock register.

Eldar and Regev then proceed to show that if the circuit $U$ accepts a state $\ket{\psi_{yes}}$, the history state corresponding to $\ket{\psi_{yes}}$ with extra terms for the states above the linear progression of states such as $\ket{\Psi_{t+1}}$ in Figure \ref{ch4:figuretriangle2} is the zero-energy ground state of $H$. Also, they show that in the `no' case, the ground state energy of $H$ is bounded from below by an inverse polynomial in $n$. Thus, Quantum (5,3)-SAT is QMA$_1$-complete. In the following sections, I build on this result.


\subsection{A QMA$_1$-complete 3-local Hamiltonian with restricted terms}
\label{ch4:triangleXZ}
Here I implement the 2-local {\em triangle clock} which originally uses cinquits and qutrits, using qubits. This way I obtain another QMA$_1$ complete 3-local Hamiltonian construction which does not require any large penalty terms. Moreover, all the terms involved are constructed from a restricted set of operators. This is an unpublished result obtained with Peter Love during the Computational Complexity of Quantum Hamiltonian Systems workshop in Leiden, Netherlands (July 2007).

As in Section \ref{ch4:q3}, I use a combined domain wall and pulse clock. First, replace each $d=3$ particle in the construction of the previous section with two qubits $w_1,w_2$ as
\begin{eqnarray}
	\rO &\goes& \mathtt{00}, \\
	\rA &\goes& \mathtt{10}, \\
	\rX &\goes& \mathtt{11},
\end{eqnarray}
as in the domain wall clock in Section \ref{ch4:clocks}. Next, replace each $d=5$ particle in the above construction with three qubits $p_1,p_2,p_3$ of a pulse clock as 
\begin{eqnarray}
	\ro &\goes& \mathtt{000}, \\
	\ra &\goes& \mathtt{100}, \\
	\rb &\goes& \mathtt{010}, \\
	\rc &\goes& \mathtt{001}, \\
	\rx &\goes& \mathtt{000}.
\end{eqnarray}
Note that the transitions between the active (pulse) states of the $d=5$ clock involve only two of the three qubits $p_i$. Therefore, the  clock register transitions between $\ket{t}_c$, $\ket{t+1}_c$ and $\ket{t+2}_c$ in the triangle construction of the previous section are now implemented 2-locally as 
\begin{eqnarray}
	\ra \goes \rb \qquad &:&
	\qquad
	\ket{\mathtt{10}}_{p_1,p_2}\goes 
	\ket{\mathtt{01}}_{p_1,w_2}, \\
	\ra \goes \rc \qquad &:&
	\qquad
	\ket{\mathtt{10}}_{p_1,p_3}\goes 
	\ket{\mathtt{01}}_{p_1,p_3}, \\
	\rb \goes \rc \qquad &:&
	\qquad
	\ket{\mathtt{10}}_{p_2,p_3}\goes 
	\ket{\mathtt{01}}_{p_2,p_3}.
\end{eqnarray}
Adding an interaction with a single work qubit at a time, this becomes 3-local. The propagation Hamiltonian is thus made from projectors on three qubits. This is true also for the transition from an active state of a qutrit clock particle to an active state of the neighboring cinquit clock particle. It is implemented 3-locally as
\begin{eqnarray}
	\rA \ro \goes \rX \ra \qquad &:& \qquad 
		\ket{\mathtt{10}}_{w_1,w_2} \ket{\mathtt{0}}_{p_1} \goes \ket{\mathtt{11}}_{w_1,w_2} \ket{\mathtt{1}}_{p_1},
\end{eqnarray}
while moving the active state from a cinquit clock particle to a qutrit is done as
\begin{eqnarray}
	\rc \rO \goes \rx \rA \qquad &:&
	\qquad
	\ket{\mathtt{1}}_{p_3} \ket{\mathtt{00}}_{w_1,w_2} \goes \ket{\mathtt{0}}_{p_3} \ket{\mathtt{10}}_{w_1,w_2}.
\end{eqnarray}

The legal clock state progression is now
\begin{eqnarray}
	\ket{0}_c &=& \ket{\mathtt{10\:\:000\:\:00\:\:000\:\:00\:\:000\:\:00}} \\
	\ket{1}_c &=& \ket{\mathtt{11\:\:100\:\:00\:\:000\:\:00\:\:000\:\:00}} \\
	\ket{2}_c &=& \ket{\mathtt{11\:\:010\:\:00\:\:000\:\:00\:\:000\:\:00}} \\
	\ket{3}_c &=& \ket{\mathtt{11\:\:001\:\:00\:\:000\:\:00\:\:000\:\:00}} \\
	\ket{4}_c &=& \ket{\mathtt{11\:\:000\:\:10\:\:000\:\:00\:\:000\:\:00}} \\
	\ket{5}_c &=& \ket{\mathtt{11\:\:000\:\:11\:\:100\:\:00\:\:000\:\:00}} \\
	\ket{6}_c &=& \ket{\mathtt{11\:\:000\:\:11\:\:010\:\:00\:\:000\:\:00}} \\
	\ket{7}_c &=& \ket{\mathtt{11\:\:000\:\:11\:\:001\:\:00\:\:000\:\:00}} \\
	\ket{8}_c &=& \ket{\mathtt{11\:\:000\:\:11\:\:000\:\:10\:\:000\:\:00}} \\
	\ket{9}_c &=& \ket{\mathtt{11\:\:000\:\:11\:\:000\:\:11\:\:100\:\:00}} \\
	\ket{10}_c &=& \ket{\mathtt{11\:\:000\:\:11\:\:000\:\:11\:\:010\:\:00}} \\
	\ket{11}_c &=& \ket{\mathtt{11\:\:000\:\:11\:\:000\:\:11\:\:001\:\:00}} \\
	\ket{12}_c &=& \ket{\mathtt{11\:\:000\:\:11\:\:000\:\:11\:\:000\:\:10}} 
\end{eqnarray}

However, this is not a proof that Quantum 3-SAT is QMA$_1$ complete. I have not found a way to make the terms checking the legal clock states 3-local projectors. If I want the term which ensures proper coupling of the domain wall and the pulse clock to be a non-negative operator, it necessarily becomes 5-local. On the other hand, I have found a 5-local positive operator made out of 2-local operators, which does the job. This is reminiscent of the idea presented in Section \ref{ch4:q3}. 

The operators checking whether a state is in a legal clock subspace are the following. First, for the domain-wall clock particles, the operator
\begin{eqnarray}
	H_{clock(wall)} &=& \sum_{\langle k,l \rangle} \ket{01}\bra{01}_{w_k,w_l}
\end{eqnarray}
where $w_k, w_l$ are consecutive domain wall qubits, checks the domain wall clock is correctly encoded. Next, I check whether the pulse clocks have at most one pulse in each of them by 
\begin{eqnarray}
	H_{clock(pulse)} &=& \sum_{k} 
	\left(
	\ket{11}\bra{11}_{p^{(k)}_1,p^{(k)}_2}
	+\ket{11}\bra{11}_{p^{(k)}_1,p^{(k)}_3}
	+ \ket{11}\bra{11}_{p^{(k)}_2,p^{(k)}_3}	
	\right),
\end{eqnarray}
where $p^{(k)}_1,p^{(k)}_2,p^{(k)}_3$ is the $k$-th pulse clock triplet. Finally, I need to check that the domain wall and the pulse clocks are correctly coupled. In other words, I need to check that the state in which the pulse clock was {\em not} initialized at the domain wall
\begin{eqnarray}
&&\ket{\dots \mathtt{11\:\:000\:\:00} \dots}
\end{eqnarray}
is illegal, while the states 
\begin{eqnarray}
&&\ket{\dots \mathtt{00\:\:000\:\:00} \dots} \\
&&\ket{\dots \mathtt{10\:\:000\:\:00} \dots} \\
&&\ket{\dots \mathtt{11\:\:100\:\:00} \dots} \label{pulseleft1}\\
&&\ket{\dots \mathtt{11\:\:010\:\:00} \dots} \\
&&\ket{\dots \mathtt{11\:\:001\:\:00} \dots} \\
&&\ket{\dots \mathtt{11\:\:000\:\:10} \dots} \label{pulseleft4}\\
&&\ket{\dots \mathtt{11\:\:000\:\:11} \dots} \label{pulseleft5}
\end{eqnarray}
are perfectly fine. For this task, I now construct two versions of a  5-local positive semidefinite operator $H_{clockpulse}$, the first made from 3-local and the second made from 2-local terms. In both cases, $H_{clockpulse}$ acts nontrivially on the space of three pulse qubits $p_i$ and two neighboring domain wall qubits $w_i$. 

In the first case, $H_{clockpulse}$ uses at most 3-local interactions:
\begin{eqnarray}
	H_{clockpulse}^{(3-loc)} &=& \ket{1}\bra{1}_{w_1} \otimes \left( 
		\sigma_{z}^{(p_1)} +
		\sigma_{z}^{(p_2)} +
		\sigma_{z}^{(p_3)} - 1
	\right) \otimes \ket{0}\bra{0}_{w_2} \nonumber\\
		&+& \sum_{i\neq j} \ket{11}\bra{11}_{p_i,p_j}.
\end{eqnarray}
It checks whether there is a single $1$ in the pulse clock, when the neighboring domain wall qubits are $1$ and $0$. 

The second variant uses only 2-local interactions:
\begin{eqnarray}
	H_{clockpulse}^{(2-loc)} &=& \ket{1}\bra{1}_{w_1} \otimes \left( 
		\sigma_{z}^{(p_1)} +
		\sigma_{z}^{(p_2)} +
		\sigma_{z}^{(p_3)} +
		\sigma_{z}^{(w_2)} - 2
	\right)  \nonumber \\
			&+& \sum_{\langle i,j \rangle} \ket{11}\bra{11}_{p_i,p_j} 
			+ \sum_{i} \ket{1}\bra{1}_{p_i} \otimes \ket{1}\bra{1}_{w_2},
\end{eqnarray}
checking that when a domain wall qubit to the left of a pulse clock is $1$, exactly one of the following 4 qubits (3 pulse and 1 domain wall) must be $1$, as seen in \eqref{pulseleft1}-\eqref{pulseleft5}.

Altogether, I constructed a Quantum 5-SAT Hamiltonian from at most 3-local interaction terms. Each of the $O(n)$ terms in this Hamiltonian has constant norm. Moreover, I retained the properties of the previous construction, as I only implemented the Hamiltonian \eqref{ch4:completeHam} in a different system. Therefore, Quantum 5-SAT made from 3-local interactions of a restricted type is QMA$_1$ complete. Moreover, 3-local Hamiltonian with constant norm terms of a restricted type is QMA-complete.


\section{Train Switch Construction: Universal Quantum Computation using a Quantum 3-SAT Hamiltonian}
\label{ch4:train}
In this Section, building on Regev's triangle construction idea, I prove that one can perform universal quantum computation using a Quantum 3-SAT Hamiltonian $H_{3S}$ (for qubits). It is suitable for an Adiabatic Quantum Algorithm where I slowly change the Hamiltonian from a simple starting Hamiltonian $H_B$ to $H_{3S}$, or for a Hamiltonian Computer model where I let an easily prepared starting state evolve with $H_{3S}$ for not too long a time.

\begin{figure}
	\begin{center}
	\includegraphics[width=4.5in]{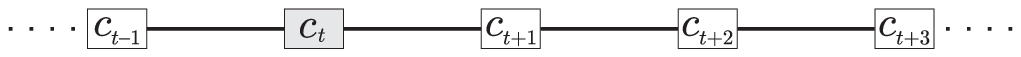} 
	\end{center}
	\caption{The `pulse' clock construction consists of $T+1$ clock qubits on a line, where only one of them is in the state $\ket{1}$ at a time.
	\label{ch4:figurepulse}}
\end{figure}
As I want to perform a quantum computation, as opposed to constructing a Hamiltonian with a unique ground state, I have the advantage that I can choose the initial state of the system. This allows me to use the pulse clock (see Section \ref{ch4:clocks} and Figure \ref{ch4:figurepulse}) 
\begin{eqnarray}
	\ket{t}_c = \ket{0_{c_0} \dots 0_{c_{t-1}} 1_{c_{t}} 0_{c_{t+1}} \dots 0_{c_L}}
\end{eqnarray}
as my starting point. The main innovation I add to it is the following {\em train switch} clock construction.
\begin{figure}
	\begin{center}
	\includegraphics[width=4.5in]{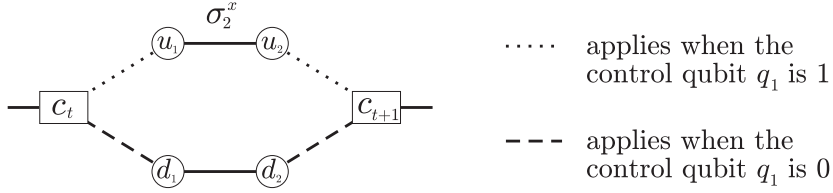} 
	\end{center}
	\caption{The `train switch' clock construction for the application of a CNOT gate using only 3-local terms. There are four extra qubits inserted between $c_t$ and $c_{t+1}$. The transitions from the original clock qubits to the gadget qubits are controlled by one of the work qubits. During the transition from $u_1$ to $u_2$, the target  work qubit is flipped. This gadget checks the application of a CNOT gate using 3-local projectors on qubits.
	\label{ch4:figuretrain}}
\end{figure}
My goal is to use a 3-local projector to check whether a 2-qubit controlled gate CNOT$_{q_1,q_2}$ was properly applied to the work register when transitioning from the state $\ket{\Psi_t}$ to $\ket{\Psi_{t+1}}$. For this, I replace two clock qubits $c_t$ and $c_{t+1}$ with a six-qubit gadget as in Figure \ref{ch4:figuretrain}. If the control qubit $q_1$ is $\ket{1}$, the active site in the clock register follows the upper rail, and takes the lower rail otherwise. Moreover, when the active site in the clock register moves from $u_1$ and $u_2$, the target qubit $q_2$ is flipped. This effectively applies a CNOT gate between $q_1$ and $q_2$, between the states $\ket{\Psi_t}$ and $\ket{\Psi_{t+1}}$. In detail, the transition checking operators are
\begin{eqnarray}
	H_{t,u_1} &=& \half \left( \ii \otimes 
		\left( P_{t} + P_{u_1} \right)
 			- \ket{1}\bra{1}_{q_1} \otimes 
 			\left( X_{u_1,t} + X_{u_1,t}^\dagger \right)
		\right), \\ 
	H_{u_1,u_2} &=& \half \left(
	\ii \otimes  
		\left( P_{u_1} + P_{u_2} \right)
			- \sigma_{x}^{(q_2)} \otimes 
			\left( X_{u_2,u_1} 
				+ X_{u_2,u_1}^\dagger \right)
		\right), \\
	H_{u_2,t+1} &=& \half \left(
	\ii \otimes 
		\left( P_{u_2} + P_{t+1} \right)
			- \ket{1}\bra{1}_{q_1} \otimes 
			\left( X_{t+1,u_2} + X_{t+1,u_2}^\dagger \right)
		\right)
\end{eqnarray}
for the upper rail, with the projectors $P$ and time-increase operators $X$ implemented as before for the pulse clock (see Section \ref{ch4:clocks}):
\begin{eqnarray}
	P_t &=& \ket{1}\bra{1}_{c_t}, \\
	X_{b,a} &=& \ket{10}\bra{01}_{b,a}.
\end{eqnarray}
Analogously, for the lower rail I write
\begin{eqnarray}
	H_{t,d_1} &=& \half \left(
		\ii \otimes 
		\left( P_{t} + P_{u_1} \right)
			-  \ket{0}\bra{0}_{q_1} \otimes 
			\left( X_{d_1,t} + X_{d_1,t}^\dagger \right)
		\right), \\
	H_{d_1,d_2} &=& \half \left(
		\ii \otimes 
		\left( P_{d_1} + P_{d_2} \right)
			-  \ii \otimes 
			\left( X_{d_2,d_1} 
				+ X_{d_2,d_1}^\dagger \right)
			\right), \\
	H_{d_2,t+1} &=& \half \left(
	\ii \otimes 
		\left( P_{d_2} + P_{t+1} \right)
			-  \ket{0}\bra{0}_{q_1} \otimes 
			\left( X_{t+1,d_2} + X_{t+1,d_2}^\dagger \right)
		\right).
\end{eqnarray}
Let me examine the action of this Hamiltonian on the state 
\begin{eqnarray}
	\ket{\Psi_t} = \ket{\psi_t}\otimes \ket{t}_c
		= \left(
			 a 
			\ket{0}_{q_1} \ket{\alpha}_{q_2,\dots}
		+ 	b 
			\ket{1}_{q_1} \kets{\beta}_{q_2,\dots}
		\right) 
		\otimes \ket{t}_c.
\end{eqnarray}
When I consider only the forward moving terms in 
\begin{eqnarray}
	H_{train(t,t+1)} = 
	  H_{t,u_1} 
	+ H_{u_1,u_2}
	+ H_{u_2,t+1}
	+ H_{t,d_1}
	+ H_{d_1,d_2}
	+ H_{d_2,t+1}
\end{eqnarray}
(forgetting the identity and time-decreasing terms), I obtain the following three states:
\begin{eqnarray}
	\kets{\Psi^{(u_1,d_1)}_t} 
	&=& 
		 a 
		\ket{0}_{q_1} \ket{\alpha}_{q_2,\dots}
		\otimes \ket{d_1}_c
		+
		b 
		 \ket{1}_{q_1} \kets{\beta}_{q_2,\dots}
		 \otimes \ket{u_1}_c, 
		\\
	\kets{\Psi^{(u_2,d_2)}_t}  
	&=& 
		 a 
		\ket{0}_{q_1} \ket{\alpha}_{q_2,\dots}
		\otimes \ket{d_2}_c
		+
				 b 
		 \ket{1}_{q_1} 
		 \kets{\beta'}_{q_2,\dots}
		 \otimes \ket{u_2}_c 
, 
		\\
	\kets{\Psi_{t+1}} 
	&=& \left(
			 a 
			\ket{0}_{q_1} \ket{\alpha}_{q_2,\dots}
		+ 	b 
			\ket{1}_{q_1} 
			\kets{\beta'}_{q_2,\dots}
		\right) 
		\otimes \ket{t+1}_c,
\end{eqnarray}
where 
\begin{eqnarray}
	\ket{\beta'}_{q_2,\dots} = 
	\sigma_{x}^{(q_2)} \kets{\beta}_{q_2,\dots}
\end{eqnarray} 
is the state obtained of the work qubits (besides $q_1$) after the work qubit $q_2$ was flipped. Observe that the work qubits of the state $\kets{\Psi_{t+1}}$ have the CNOT gate applied to $q_1,q_2$, as I wanted, therefore
\begin{eqnarray}
	\kets{\Psi_{t+1}} 
	&=& \left(
		\textrm{CNOT}_{q_1,q_2}\ket{\psi_t}
	\right)
	\otimes\ket{t+1}_c.
\end{eqnarray}

For a single qubit gate application, the transition checking is much simpler. At places where I do not insert the train switch gadget, I use the usual single qubit gate checking operator
\begin{eqnarray}
	H_{single(t)} = \half \left( 
		\ii \otimes (P_{t} + P_{t+1})
		- U \otimes X_{t+1,t}
		- \left(U \otimes X_{t+1,t}\right)^\dagger
	\right).
\end{eqnarray}

The Hamiltonian 
\begin{eqnarray}
	H_{3S} &=& 
		\sum_{t: U_t^{(q_1)}} H_{single(t)}
		+ \sum_{t: U_t^{(q_1,q_2)}} H_{train(t,t+1)},
		\label{ch4:h3s}
\end{eqnarray}
corresponding to a quantum circuit $U$ is a Quantum 3-SAT Hamiltonian, because it is composed of 3-local projectors on qubits. 

The progression of states $\ket{\Psi_t}$ with the occasional intermediate states $\kets{\Psi_{t}^{(u_1,d_1)}}$ and $\kets{\Psi_{t}^{(u_2,d_2)}}$, when a CNOT gate is on order, has the geometry of a line. Each of the states $\ket{\Psi_t}$ is connected by a transition only to two neighboring states. 
Let me call $\cH_{legal}$ the subspace of $\cH$ spanned by the states $\ket{\Psi_t}$. Because I constructed it so, the Hamiltonian $H_{3S}$ \eqref{ch4:h3s} does not induce transitions between the subspace $\cH_{legal}$ and $\cH_{legal}^\perp$. The time evolution of an initial state in $\cH_{legal}$ is thus governed solely by the restriction of $H_{3S}$ to $\cH_{legal}$.
Moreover, in the basis $\ket{\Psi_t}$ this restriction has the form
\begin{eqnarray}
	H_{3S}\Big|_{\cH_{legal}} &=& 
	\left[\begin{array}{rrrrrr}
		\half 	& -\half	& 				&  				&					&					\\
		-\half 	& 1				&	-\half 	&  				&					&					\\
						& -\half	& 1      	& -\half  &					&					\\
		 				& 			  & -\half 	& \ddots	& \ddots  &					\\
		 				& 			  &  				&	\ddots	& 1			 	&	-\half	\\
		 				& 			  &  				&	      	& -\half 	&	\half		
	\end{array}\right],
\end{eqnarray}
same as Kitaev's propagation Hamiltonian (see Section \ref{ch1:Kitaev5} and Appendix \ref{appProp}). It is a Hamiltonian for a quantum walk on a line of length $L' = poly(n)$, where $n$ is the number of qubits the quantum circuit $U$ acts on. Here $L' = 1 + L_1 + 3 L_2$ where $L_1$ is the number of single qubit gates in $U$ and $L_2$ is the number of CNOT's in the circuit. 
\begin{figure}
	\begin{center}
	\includegraphics[width=5.5in]{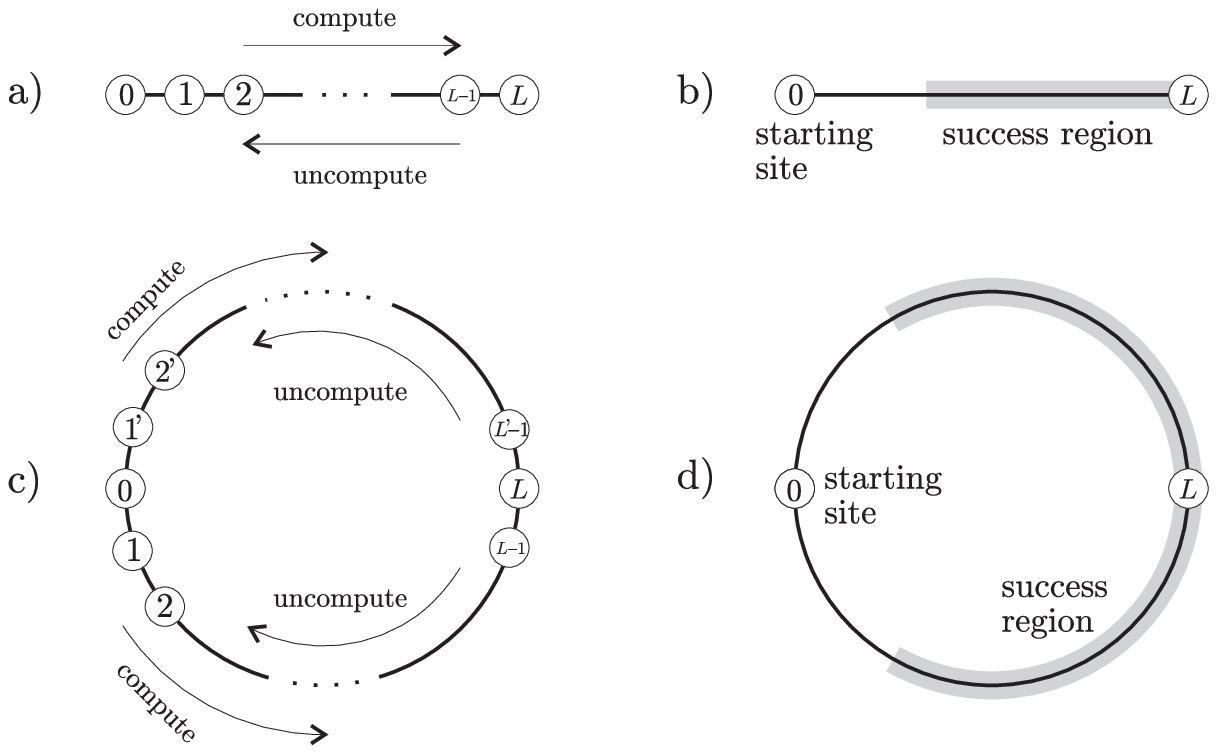} 
	\end{center}
	\caption{a) Computation on a line of states $\ket{\Psi_t}$. b)
	The line corresponding to a circuit $U$ padded with extra identity gates. The region where the computation is done is marked. c) Computation on a cycle of states $\ket{\Psi_t}$ coming from two copies of the computation on a line. d) The cycle corresponding to a padded circuit $U$.
	\label{ch4:figurecircle}}
\end{figure}
To avoid complications in the analysis of the required running time coming from the endpoints, I can change the `line' of states $\ket{\Psi_t}$ into a circle of length $2L'$ as in Figure \ref{ch4:figurecircle}. First, double the number of qubits in the clock register as
\begin{eqnarray}
	\ket{c_0 c_1 \dots c_L} \goes \ket{c_0 c_1 \dots c_L} \otimes \ket{c'_0 c'_1 \dots c'_L},
\end{eqnarray}
with the second `line' $\ket{c'_0 c'_1 \dots c'_L}$ having analogous transition rules. Second, identify the endpoint qubits, i.e. $c_0 \equiv c'_0$ and $c_L \equiv c'_L$. This gives the clock register the geometry of a cycle, with a unique state $\ket{0}_c = \ket{1_{c_{0}} 0 \dots 0}$ denoting time $t=0$. The active site (spin up) in the clock register can proceed towards $c_L$ both ways, as in Figure \ref{ch4:figurecircle}.
This new set of states $\ket{\Psi_t}$ with the geometry of a circle then defines the subspace $\cH_{legal}^\circ$. The Hamiltonian $H_{3S}$ restricted to this subspace is 
\begin{eqnarray}
	H_{3S}\Big|_{\cH_{legal}^\circ} = \half \left(\ii - B_\circ\right),
\end{eqnarray} 
where $B_\circ$ is the adjacency matrix for a cycle of length $2L$. The dynamics of this system is the quantum walk on a cycle, and I analyze it in detail in Appendix \ref{d20proof}. 
There I show that when starting from the state on site A of the line and letting the system evolve for a time chosen uniformly at random between zero and a number not larger than $O(L \log^2 L)$, the probability to find the state farther than $L/3$ from the starting point is close to $\frac{2}{3}$.
This corresponds to finding a state with a spin up on clock qubit $c_{t\geq L/3}$ (lower part of the circle) or $c_{t'\geq L/3}$ (upper part of the circle). In both cases, because the circuit $U$ was padded with identity gates, the state of the work qubits is then $\ket{\psi_L}$ and contains the output of the quantum circuit $U$. Therefore, one can simulate a quantum computation $U$ with $L_1$ single qubit gates and $L_2$ CNOT's with this system as follows:
\begin{enumerate}
\item Construct a clock register with $C=1+L_1 + 4L_2$ qubits and transition rules as described by the {\em train switch} construction.
\item Pad the clock register with extra $2C$ qubits (equivalent to padding the circuit $U$ with identity gates).
\item Double the clock register and make it into a cycle with length $L = 6C$.
\item Initialize the system in the state
\begin{eqnarray}
	\ket{\Psi_0} = \ket{00\dots 0}\otimes \ket{1_{c_0}0\dots 0}.
\end{eqnarray}
\item Let it evolve with $H_{3S}$ \eqref{ch4:h3s} for a time chosen uniformly at random between zero and
$\tau\leq O(L\log^2 L)$. 
\item Measure the clock register qubits in the success area (farther than $L/3$ from the initial site, see also Figure \ref{ch4:figurecircle}). With probability close to $\frac{2}{3}$ you will find the active site (spin up) there. The work register now contains the output of the quantum circuit $U$. Restart otherwise.
\end{enumerate}

With this I have shown that Quantum 3-SAT Hamiltonians are powerful enough to perform universal quantum computation in the Hamiltonian Computer model based on a quantum walk. 
When the number of qubits and gates involved in the circuit is $L$, what I need for my Quantum 3-SAT Hamiltonian computer are thus $O(L)$ qubits, a Hamiltonian with $O(L)$ projector terms with norm $O(1)$, resulting in $\norm{H}=O(L)$, and a running time of order $O(L\log^2 L)$. The rescaled required resources (time $\times$ energy) scale as $\tau \cdot \norm{H} = O(L^2 \log^2 L)$.
The initial part of this Hamiltonian, turning off the initial Hamiltonian pinning the initial state in $\ket{\Psi_0}$ can be turned off and $H_{3S}^{\circ}$ turned on adiabatically or quickly, as a quick change will not destroy the computation as discussed in Section \ref{ch2:hc}. Moreover, Seth Lloyd \cite{AQC:Lloydunpublished} showed that this Hamiltonian Computer model is protected by an energy gap between the legal subspace $\cH_{legal}^{\circ}$ and the subspace orthogonal to it. Noise in the system will induce transitions away from $\cH_{legal}^{\circ}$, but their energy cost is going to be scaling like $O(L^{-1})$. The computation is thus protected by an energy barrier of the order $O(L^{-1})$. 

\chapter{Hamiltonian Quantum Cellular Automata}
\label{ch5hqca}

Can universal quantum computation be performed by time evolving a simple system by a translationally invariant, time-independent Hamiltonian? I present ways how to do it in this chapter, largely based on the paper \cite{CA:d10}, so far posted on the arXiv.

\mypaper{Hamiltonian Quantum Cellular Automata in 1D}{Daniel Nagaj, Pawel Wocjan}{We construct a simple translationally invariant, nearest-neighbor Hamiltonian on a chain of $10$-dimensional qudits that makes it possible to realize universal quantum computing without any external control during the computational process. We only require the ability to prepare an initial computational basis state which encodes both the quantum circuit and its input. The computational process is then carried out by the autonomous Hamiltonian time evolution. 
After a time polynomially long in the size of the quantum circuit has passed, the result of the computation is obtained with high probability by measuring a few qudits in the computational basis.


This result also implies that there cannot exist efficient classical simulation methods for generic translationally invariant nearest-neighbor Hamiltonians on qudit chains, unless quantum computers can be efficiently simulated by classical computers (or, put in complexity theoretic terms, unless BPP=BQP).
}


This chapter is organized as follows. First, in Section \ref{ch5:hqcaintro} I introduce classical cellular automata, quantum cellular automata and the Hamiltonian Quantum Cellular Automaton model. Then, in Section \ref{d10construct} I present a HQCA in 1D with cell size $d=10$. I give another HQCA construction in 1D with cell size $d=20$ in Section \ref{d20construct}. Throughout the required runtime analysis, Appendix \ref{d20proof} on the quantum walk on a line and Appendix
\ref{d10proof} about diffusion of free fermions on a line is referenced.


\section{Introduction}
\label{ch5:hqcaintro}

One of the most important challenges in quantum information science is to identify quantum systems that can be controlled in such a way that they can be used to realize universal quantum computing. The quantum circuit model abstracts from the details of concrete physical systems and states that the required elementary control operations are: (i) initialization in basis states, (ii) implementation of one and two-qubit gates, and (iii) measurement of single qubits in basis states.  Meanwhile, many other models have been proposed such as measurement-based quantum computing \cite{CA:RB-oneway:00,CA:Nielsen-Measurement:03,CA:Leung-Measurement:04,CA:CLN-Measurement:05}, adiabatic quantum computing \cite{AQC:FarhiScience,AQC:AvDKLLR05}, or topological quantum computing \cite{CA:anyons1,CA:anyons2,CA:anyons3,CA:anyons4,CA:anyons5,CA:anyons6,CA:anyons7} that reduce or modify the set of elementary control operations.  However, the common principle underlying all these models is that the computation process is always driven by applying a sequence of control operations.

Instead, I consider a model that does not require any control during the computational process. This model consists of a quantum system with a Hamiltonian that makes it possible to realize universal quantum computing by the following protocol: (1) prepare an initial state in the computational basis that encodes both the program and input, (2) let the Hamiltonian time evolution act undisturbed for a sufficiently long time, and (3) measure a small subsystem in the computational basis to obtain the result of the computation with high probability.  I refer to this model as a {\em Hamiltonian quantum computer} and more specifically as a {\em Hamiltonian quantum cellular automaton} (HQCA) provided that the Hamiltonian acts on qudits that are arranged on some lattice, is invariant with respect to translations along the symmetry axis of the lattice, and contains only finite range interactions. Most natural Hamiltonians have these properties, so it is important to construct HQCA that are as close as possible to natural interactions.

Hamiltonian QCA are related to the more usual discrete-time QCA (for further review of the different types of quantum cellular automata I refer the reader to \cite{CA:PDC-ModelsQCA:05}). However, while the evolution of discrete-time QCA proceeds in discrete update steps (corresponding to tensor products of local unitary operations, see e.g. \cite{CA:Raussendorf-QCA:05,CA:ShepherdPRL:06}), the states of Hamiltonian QCA change in a continuous way according to the Schr\"odinger equation (with a time-independent Hamiltonian).
For this reason, Hamiltonian QCA are also called {\em continuous-time} QCA \cite{CA:PDC-ModelsQCA:05}. 
Also, in the HQCA model, all the couplings (interactions)
are present all the time, while for the the discrete-time QCA,
the execution of updates on overlapping cells is synchronized by external control.
Therefore, the nearest-neighbor interactions of a HQCA have to include a mechanism that ensures that the logical transformations are carried out in the correct order. 

The motivation to consider Hamiltonian computers is threefold.  First, it is a fundamental question in the thermodynamics of computation how to realize computational processes within a closed physical system.  Such Hamiltonian computers were presented and discussed by Benioff \cite{CA:Benioff:80}, Feynman \cite{lh:Feynman2}, and Margolus \cite{CA:Margolus:90}. Second, Hamiltonian quantum cellular automata could lead to new ideas for reducing the set of necessary control operations in current proposals for quantum computing by using the inherent computational power of the interactions.  HQCA are at one end of the spectrum of possible implementations; more realistic perspectives for quantum computing could arise by combining this model with more conventional models involving external control operations throughout the computation. Third, this model can show the limitations of current and future methods in condensed matter physics for simulating the time evolution of translationally invariant systems. If evolving with a certain Hamiltonian can realize universal quantum computing, then there cannot exist any classical method for efficiently simulating the corresponding time evolution unless classical computers are as powerful as quantum computers (BPP=BQP).

The first theoretical computational models based on a single time-independent Hamiltonian go back to \cite{CA:Benioff:80, lh:Feynman2, CA:Margolus:90}. However, these Hamiltonian computers were not explicitly designed for realizing universal {\em quantum} computing. Margolus' model \cite{CA:Margolus:90} has the attractive feature that it is laid out on a
$2$-dimensional lattice with translationally invariant, finite-range interactions. (In \cite{CA:Biafore:94} it was argued that the part of the Hamiltonian responsible for the synchronization in a $1$-dimensional variant is close to real interaction in solid states.) However,  this scheme does not satisfy the requirement (1) since its initial state has to be prepared in a superposition. Building upon Margolus' idea, a translationally invariant Hamiltonian 
universal for quantum computing even if the initial state is restricted to be a canonical basis state was given in \cite{CA:JW:05}. This model requires $10$-local, finite-range interactions among qubits on a $2$-dimensional rectangular lattice wrapped around a cylinder.
Subsequently, it was established in \cite{CA:Janzing:07} that nearest-neighbor interactions among qutrits on a $2$-dimensional lattice suffice. However, the Hamiltonian of \cite{CA:Janzing:07} is translationally invariant only when translated over several lattice sites.
A different approach was taken by Vollbrecht and Cirac in \cite{CA:VC-QCA:07}, showing that one can implement universal quantum computation with a translationally invariant, nearest-neighbor Hamiltonian on a chain of $30$-dimensional qudits. 

I present two different simplified HQCA constructions on one-dimensional qudit chains. In both models, I think of the qudit chain as composed of two registers, data and program. The work qubits I compute on are located at a static location in the data register. Driven by the autonomous Hamiltonian time evolution, the program sequence contained in the program register moves past the work qubits and the gates are applied to them. After I let the system evolve for a time not larger than a polynomial 
in the length of the program,
 I measure one or two qudits in the computational basis to read out the output of the computation with high probability.

My first construction is for a chain of $10$-dimensional qudits and is related to the ideas of \cite{CA:VC-QCA:07}. The mechanism behind the progress of the program sequence in this particular model can be thought of as the diffusion of a system of free fermions on a line.
My second construction uses qudits with dimension $d=20$ and is inspired by \cite{CA:JWZ-measuring:07}, utilizing a technique of \cite{CA:GottesmanLine} to transport the program. Here, the mechanism for the progress of the computation can be thought of as a quantum walk on a line.


\section{The HQCA in 1D with cell size $d=10$}
\label{d10section}

I present a simple universal HQCA on a chain of qudits with dimension $d=10$. First, I encode the progression of a quantum circuit $U$ on $N$ qubits into a set of states $\kets{\varphi_{\sigma}}$ of a chain of qudits with length $L=poly(N)$. Second, I give a translationally invariant nearest-neighbor Hamiltonian on this chain of qudits, which induces a quantum walk on the set of states $\kets{\varphi_{\sigma}}$. Finally, using a mapping to a system of free fermions in 1D, I prove that when I initialize the qudit chain in an easily determined computational basis state and let the system evolve for a time $\tau\leq \tau_{10} = O(L\log L)$ chosen uniformly at random, I can read out the result of the quantum circuit $U$ with probability
$p_{10}\geq \frac{5}{6} - O\left(\frac{1}{\log L}\right)$ by measuring one of the qudits in the computational basis. I then show that this is enough to ensure universality of my HQCA for the class BQP.


\subsection{The Construction}
\label{d10construct}

The gate set \{Toffoli, Hadamard\} is universal for quantum computation \cite{CA:Shi:02}. With only polynomial overhead, one can simulate a circuit consisting of these gates using only the gate $W$ (controlled $\frac{\pi}{2}$ rotation about the $y$-axis) 
\begin{eqnarray}
	W = \left[
		\begin{array}{cccc}
			1 & 0 & 0 & 0\\
			0 & 1 & 0 & 0 \\
			0 & 0 & \frac{1}{\sqrt{2}} & -\frac{1}{\sqrt{2}} \\
			0 & 0 & \frac{1}{\sqrt{2}} & \frac{1}{\sqrt{2}}
		\end{array}
	\right]
\end{eqnarray}
if it can be applied to any pair of qubits.
Let me consider implementing universal quantum computation on a qubit chain using only nearest neighbor gates. Let me also restrict the use of the $W$ gate so that the control qubit has to be to the left of the target qubit. Using only polynomially many additional swap gates $S$, one can still do universal quantum computation on a qubit chain. 
Thus given a quantum circuit $U'$ on $N'$ qubits with $poly(N')$ generic two-qubit gates, I can transform it into a circuit $U$ on a chain of $N=poly(N')$ qubits with nearest neighbor gates $W$ (with control on the left) and $S$ without loss of universality. I then add identity to my gate set and further transform the circuit $U$ to have the following form (see Figure \ref{CircuitFigure}). Rewrite the circuit as $K$ sequences of nearest neighbor gates $U_{k,g} \in \{W,S,I\}$,
where gate $U_{k,g}$ belongs to the $k$-th sequence and acts on the pair of qubits $(g,g+1)$:
\begin{eqnarray}
	U = (U_{K,N-1}\dots U_{K,1}) \cdots (U_{1,N-1}\dots U_{1,1}).
	\label{circuitsequence}
\end{eqnarray}
\begin{figure}
	\begin{center}
	\includegraphics[width=4in]{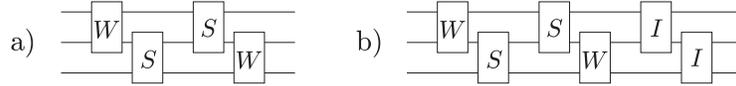}
	\caption{a) A quantum circuit consisting of two sequences of gates acting on nearest neighbors. b) The previous circuit with a third sequence of identity gates added.}
	\label{CircuitFigure}
	\end{center}
\end{figure}

%
%
%
I wish to encode the progression of the circuit $U$ into the states of a chain of qudits with dimension $10$, with length $L=poly(N)$.
The basis states of each qudit $\ket{q} = \ket{p}\otimes \ket{d}$ are 
constructed as a tensor product of a $5$-dimensional program register $p \in\{ \bul,\gat,W,S,I\}$ and a $2$-dimensional data register $d \in \{0,1\}$.
I start by writing the initial product state 
\begin{eqnarray}
	\kets{\varphi} = \bigotimes_{j=1}^{L} \left(\ket{p_j} \otimes \ket{d_j}\right)_j,
	\label{start10product}
\end{eqnarray}
with $p_j$ and $d_j$ as follows
(here I give an example for the circuit in Figure \ref{CircuitFigure}a):
\begin{eqnarray}
	\begin{array}{c|cccccccccccc}
		j  &   1      & & & & \cdots
		               &  M    & \cdots	
		              & & & &  & 2M \\
		\hline
		p_{j}  & \bul &  \bul & \gat & \bul & \bul & \gat &
		\iga & \wga & \sga & 
		\iga & \sga & \wga 
		\\		
		d_{j}  & 0      & 0    & 0    & 0    & 0    & 0    & 
		w_1  & w_2  & w_3  &
		0    & 0    & 0        		
 	\end{array} \label{startstate10}
\end{eqnarray}
The qudit chain has length 
\begin{eqnarray}
	L = 2M = 2KN,
\end{eqnarray}
where $K$ is the number of gate sequences in \eqref{circuitsequence}.
The left half of the top (program) register contains $K$ pointer symbols $\gat$ at positions $kN$ for $k=1\dots K$ and empty symbols $\bul$ everywhere else. The right half holds the program in the form
\begin{eqnarray}
	I \underbrace{U_{1,1} \dots U_{1,N-1}}_{\textrm{gate sequence 1}}
	I \underbrace{U_{2,1} \dots U_{2,N-1}}_{\textrm{gate sequence 2}}
	I \dots
	I \underbrace{U_{K,1} \dots U_{K,N-1}}_{\textrm{last sequence}}\, ,
	\label{program10}
\end{eqnarray} 
with $U_{k,g} \in \{W,S,I\}$ and each sequence preceded by an identity gate. The bottom (data) register contains $N$ work qubits (labeled $w_n$ in the table)
at positions $M+n$ for $n=1\dots N$ and qubits in the state $\ket{0}$ everywhere else. I designate $w_N$ as the readout qubit.

I now give two simple rules, many applications of which generate the set of states $\{\ket{\varphi_{\sigma}}\}$ from the initial state $\kets{\varphi}$. I label the states by $\sigma$, the description of the sequence of rules I choose to apply to the initial state. I invite the reader to work out what happens to $\ket{\varphi}$ \eqref{startstate10} as the rules are applied, noting that there are usually several possible rules that can be applied to a given state.
The first rule says that the symbols $A\in\{W,S,I\}$ (from now on I call them gates and think of them as particles) in the program register can move one step to the left, if there is an empty spot there:
\begin{eqnarray}
	\begin{array}{rccc}
	1\,:& 
		\band{\bul}{\aga}
		&\goes&
		\band{\aga}{\bul}
	\end{array}	\label{rule10x1}
\end{eqnarray}
The second rule concerns what happens when a gate meets a pointer symbol: 
\begin{eqnarray}
	\begin{array}{rccc}
	2\,:& 
		\four{\gat}{\aga}{x}{y}
		&\goes&
		\flour{\aga}{\gat}{A(x,y)}
	\end{array}	\label{rule10x2}
\end{eqnarray}
In this case, the gate moves to the left, the pointer $\gat$ gets pushed to the right, and the gate $A\in\{W,S,I\}$ is applied to the qubits in the data register below.

The initial state has $K$ pointers, one for each sequence of gates in the circuit. I constructed the initial state \eqref{startstate10} in such a way that as a gate $U_{k,g}$ from the $k$-th gate sequence moves to the left, it meets the $k$-th (counting from the right) pointer $\gat$ exactly above the work qubits to which the gate was intended to be applied ($g,g+1$). However, one also needs to consider what happens when a gate meets a pointer for a different sequence. If this happens over a pair of extra qubits in the state $\ket{0}\ket{0}$, the qubits stay unchanged because the gate is either the controlled gate $W$, a swap gate or the identity. The second possibility is that a gate meets a pointer above the boundary of the work qubits (i.e. $\ket{0}_{M}\ket{w_1}_{M+1}$ or $\ket{w_N}_{M+N}\ket{0}_{M+N+1}$). 
I ensured that it is going to be the identity gate by inserting one in front of each gate sequence in \eqref{program10}. This implies that the leftmost (or the rightmost) work qubit again stays unchanged. Therefore, I am sure that the state $\ket{\varphi_{\sigma}}$ in which the gate particles have all moved to the left half of the chain contains the result of the quantum computation $U$ in the state of the work qubits, while the additional qubits in the data register remain in the state $\ket{0}$.

Let me now allow all the rules to be applied backwards as well, opening the possibility of returning back to the initial state $\ket{\varphi}$ (undoing the computation). 
Although the number of possible sequences of rule (and backward rule) applications then becomes infinite, the space of states $\{\ket{\varphi_{\sigma}}\}$ is nevertheless finite-dimensional.
In every state $\ket{\varphi_{\sigma}}$, the $M$ gates $\{W,S,I\}$ in the program register that started at positions $\{M+1, \dots, 2M\}$ occupy some combination $C=\{a_{1}^{(C)},\dots,a_{M}^{(C)}\}$ of the $L$ sites of the chain. Given the rules \eqref{rule10x1}-\eqref{rule10x2}, the order of the gates in the program register cannot change, i.e. $a_{k}^{(C)}<a_{m}^{(C)}$ for $k<m$. 
Because the positions of the pointers 
$\gat$ and the state of the data register are also uniquely determined by $\{a_{1}^{(C)},\dots,a_{M}^{(C)}\}$, I can label the states I constructed by $\ket{\varphi_{C}}$ with 
$C=1\dots \combine{L}{M}$.

I now give the universal translationally invariant Hamiltonian 
as a sum of translationally invariant terms
\begin{eqnarray}
	H_{10} = - \sum_{j=1}^{L-1} \left( R+R^\dagger
		\right)_{(j,j+1)},
		\label{ourH10}
\end{eqnarray}
where $R$ corresponds to the rules \eqref{rule10x1}-\eqref{rule10x2}
and acts on two neighboring qudits as
\begin{eqnarray}
	R = 
		\sum_{A\in\{W,S,I\}}
		\left[
			 \ket{\aga \bul}\bra{\bul \aga}_{p_1, p_2}  
			 \otimes 
			 \ii_{d_1,d_2}
		+
			 \ket{\aga\gat}\bra{\gat\aga}_{p_1,p_2}		
			 \otimes
			 A_{d_1,d_2}
		 \right],
\end{eqnarray}
where $p$ stands for the program register and $d$ for the data register of the respective qudit. 

Recall that my model of computation consists of initializing the qudit chain in the state $\kets{\varphi}$ and 
evolving the system for a time $\tau \leq \tau_{10}$ which I will determine in Section \ref{d10time}. Finally, I want to show that when I measure the output qubit $w_N$ in the data register, I will read out the result of the quantum computation $U$ with high probability.  


\subsection{Required Time Analysis}
\label{d10time}

The time evolved state $\kets{\varphi(\tau)}$ (obtained from
$\kets{\varphi}$ by evolving with $H_{10}$ for time $\tau$)
is a superposition of the states $\ket{\varphi_{C}}$. 
I can write it as 
\begin{eqnarray}
	\kets{\varphi(\tau)} = 
		\sum_{C=1}^{\combine{L}{M}} \varphi_{C}(\tau) 
		\,\kets{\pi_C}_{program} \otimes
		 \kets{\theta_C}_{work} \otimes \kets{\alpha}_{extra},
		 \label{evolved10a}
\end{eqnarray}
where $\kets{\pi_C}$ is the state of the program register of the chain in the state $\kets{\varphi_C}$, the corresponding state of the work qubits is $\kets{\theta_C}$ and $\ket{\alpha}$ is the state (all zero) of the extra data qubits.
The state $\kets{\theta_C}$ of the work qubits holds the output state of the computation $U$, if all of the gate ``particles'' have moved to the left of the work qubits.
Let me now choose some number $f$ 
and pad the qubit chain with $(f-1)M$ empty sites on the left and $M$ sites containing $I$ in the program register on the right:
\begin{eqnarray}
	\kets{\varphi^{pad}} &=& 
		\left[\begin{array}{c}
		\bul\\0
		\end{array}\right]^{\otimes (f-1)M}
		\otimes \kets{\varphi} \otimes 
		\left[\begin{array}{c}
		\iga\\0
		\end{array}\right]^{\otimes M}.
	\label{startstate10padded}
\end{eqnarray}
The original chain had length $2M$, so the length of this padded chain is $L=(f+2)M$. The program register
of the initial state $\kets{\varphi^{pad}} = \kets{\varphi_{C=1}}$ now has $2M$ gate ``particles'' $\{W,S,I\}$ at positions $a_m^{(1)} = fM+m$ with $m=1\dots2M$. 
On this modified chain, every state $\ket{\varphi_{C}}$ 
in which the first $M$ gate ``particles'' are located in the first $fM$ sites of the chain ($a_m^{(C)} \leq fM$ for $m\leq M$) contains the finished computation in the state of its work qubits $\ket{\theta_C}$.
Note that for all these states, the state of the work qubits is the same and equal to $\ket{\theta_U} = U \ket{w_1 \dots w_N}$, as it does not change under the extra identity gates I added.
I can now rewrite the time evolved state \eqref{evolved10a} as
\begin{eqnarray}
	\kets{\varphi(\tau)} 
		&=&	\left( 
			\sum_{a_M^{(C)} > fM} \varphi_{C}(\tau) 
			\,\kets{\varphi_C} 
			\right)
		+ \kets{\pi}_{prog} \otimes
			 \kets{\theta_U}_{work} \otimes 
			 \kets{\alpha}_{extra},
		 \label{evolved10b} 
\end{eqnarray}
where the sum in the first term is over the set of positions of the $2M$ gate particles in which the $M$-th particle is still near the right end of the chain and the computation is thus not finished yet. Meanwhile,
\begin{eqnarray}		
	\kets{\pi}_{prog} = \sum_{a_M^{(C)} \leq fM} \varphi_{C}(\tau)\,
		\kets{\pi_C}_{prog}.
\end{eqnarray}
is a superposition of the program register states which correspond to an executed computation. 

Recall the definition of the quantum complexity class BQP (Definition \ref{defbqp} in Section \ref{ch1:qcomplex}). There is a verifier circuit $U$ associated with each problem. Denote $x$ a problem instance. As the problems in BQP are decision problems, $x\in (L_{yes} \cup L_{no})$, where $L_{yes}$ is the set of instances with the answer `yes'. Now, when $x\in L_{yes}$, the probability of the circuit $U$ outputting `yes' is not smaller than $\frac{2}{3}$. On the other hand, when $x\in L_{no}$, the probability of the circuit outputting `yes' is not greater than $\frac{1}{3}$.
Let me assume the worst case for the circuit $U$, i.e. that the circuit outputs `yes' on a good proof of a `yes' instance with probability $p_U = \frac{2}{3}$.
In the language of spins, the circuit $U$ outputs `yes' when I measure spin up on the output qubit. Therefore, the expected value of measuring $\sigma_{z}$ on output qubit of the circuit $U$ is bounded from below by
\begin{eqnarray}
	\langle \sigma^{(z)}_{w_N} \rangle^{circuit}_{yes} \geq 
	1\times p_U + (-1)\times(1-p_U) = 2p_U-1 = \frac{1}{3}
	\label{yesoutput}
\end{eqnarray}
when $x\in L_{yes}$. Analogously, when $x\in L_{no}$, it is bounded from above by
\begin{eqnarray}
	\langle \sigma^{(z)}_{w_N} \rangle^{circuit}_{no} \leq 
	-2p_U + 1
	= - \frac{1}{3}.
	\label{nooutput}
\end{eqnarray}

To solve BQP problems with my automaton, I need to distinguish the `yes' from the `no' cases, i.e. I need to show that the expectation value of measuring $\sigma^{(z)}$ on the output qubit of my automaton at a random time $\tau\leq \tau_{10}$ is greater than zero in the `yes' case, and smaller than zero in the `no' case. 
First, call $p_{10}$ the probability to find a state where the computation is finished.
Let me consider a $yes$ instance ($x\in L_{yes}$).
Recalling \eqref{evolved10b} and observing that $\ket{\pi}_{prog}$ is orthogonal to the states of the program register in which the computation is not finished, I have
\begin{eqnarray}
	\langle \sigma^{(z)}_{w_N} \rangle_{yes} =
	\bra{\varphi(\tau)} \sigma^{(z)}_{w_N} 	\ket{\varphi(\tau)}
		&=& p_{10} 
			\underbrace{
				\bra{\theta_U} 
				\sigma^{(z)}_{w_N} 
				\ket{\theta_U}_{work} 
			}_{\textrm{output of }U}
			+ (1-p_{10}) 
				\bra{\varphi'}\sigma^{(z)}_{w_N} 
				\ket{\varphi'},
		\label{expect10}
\end{eqnarray}
where $\ket{\varphi'}$ is the normalized first term in \eqref{evolved10b}.
The first term is the circuit output \eqref{yesoutput}, therefore
\begin{eqnarray}
	\bra{\theta_U} 
			\sigma^{(z)}_{w_N} 
			\ket{\theta_U}_{work}
	=
	\langle \sigma^{(z)}_{w_N} \rangle^{circuit}_{yes}
	\geq 
		2p_U - 1.
\end{eqnarray}
The second term can be bounded from below (adversarially, i.e. for every time the computation is not finished, the output qubit gives the opposite of the correct answer) by
\begin{eqnarray}
	\bra{\varphi'}\sigma^{(z)}_{w_N}	\ket{\varphi'} \geq -1. 
\end{eqnarray}
Putting it together, the expectation value is bounded by
\begin{eqnarray}
	\langle \sigma^{(z)}_{w_N} \rangle_{yes} 
		&\geq& p_{10} (2p_U - 1) - (1-p_{10})
		= 2p_{10}p_{U} - 1.
	\label{expect10done}
\end{eqnarray}
Analogously, for the $x\notin L$ case, I obtain
\begin{eqnarray}
	\langle \sigma^{(z)}_{w_N} \rangle_{no} 
		&\geq& - 2p_{10}p_{U} + 1.
	\label{expect10doneNO}
\end{eqnarray}


I will now prove that when I choose the time $\tau$ uniformly at random in $(0,\tau_{10})$, with $\tau_{10} = poly(M)$, the probability of finding a state with the computation executed (with $a_M\leq fM$) is $p_{10} \geq \frac{5}{6} -  O\left(\frac{L}{\tau_{10}}\right)$
with $L=(f+2)M$.

Let me analyze the time evolution of $\kets{\varphi^{pad}}$ \eqref{startstate10padded} under $H_{10}$ \eqref{ourH10}. I can restrict the analysis to the program register of the chain, as the content of the data register in the time-evolved state $\ket{\varphi(\tau)}$ is completely determined by the content of the program register.
The data register does not hinder the time evolution of the program register in any way. In fact, there exist bases, in which $H_{10}$ is identical (as a matrix) to $H_{10}$ restricted to the program register. Moreover, let me consider a further mapping of the system restricted to the program register to a line of qubits with length $L=(f+2)M$ as follows. Map the states $\{\gat,\bul\}$ to the state $\ket{0}$, and the states $\{W,S,I\}$ to the state $\ket{1}$.
The mapping of $H_{10}$ to this system is a sum of hopping terms 
\begin{eqnarray}
	H_q = - \sum_{j=1}^{L-1} (\ket{10}\bra{01}+\ket{01}\bra{10})_{j,j+1}.
	\label{spinham}
\end{eqnarray}
Using the Wigner-Jordan transformation I can define the
operators
\begin{eqnarray}
	b^{\dagger}_j &=& \sigma^{z}_1 \dots \sigma^{z}_{j-1} \otimes 
		\ket{1}\bra{0}_j \otimes \ii_{j+1, \dots, L}, 
			\label{wigner1} \\
	b_j &=& \sigma^{z}_1 \dots \sigma^{z}_{j-1} \otimes 
		\ket{0}\bra{1}_j \otimes \ii_{j+1, \dots, L}. 
			\label{wigner2}
\end{eqnarray}
As $b^{\dagger}_j$ and $b_j$ have the required properties $\{b_i,b^{\dagger}_j\} = \delta_{ij} \ii$ and $b_j^2 = b_j^{\dagger 2} = 0$,
they can be viewed as the creation and annihilation operators for a fermion at site $j$. Rewriting \eqref{spinham} in terms of \eqref{wigner1}-\eqref{wigner2}, I obtain 
\begin{eqnarray}
	H_f = - \sum_{j=1}^{L-1} b^{\dagger}_j b_{j+1} + h.c.,
	\label{Hf}
\end{eqnarray}
a Hamiltonian for a system of free fermions in second quantization.
Following the mapping, the initial state $\kets{\varphi^{pad}}$ of the qudit chain thus corresponds to the state of the fermionic system $\ket{\Psi} = b^{\dagger}_{fM+1} \dots b^{\dagger}_{(f+2)M} \ket{0}$ with $2M$ fermions on the right end of the line (here $\ket{0}$ is the state with no fermions).
I now use the following Lemma proved in Appendix \ref{d10proof}:

\begin{lemma}
\label{difflemma}
	Consider the state $\ket{\Psi}$ of $2M$ fermions on the right end of a line with $L=(f+2)M$ sites. Let the system evolve for a time chosen uniformly at random between $0$ and $\tau_{10}$ with the Hamiltonian given by \eqref{Hf} and measure the number of fermions in the region $1\leq x \leq fM$. The probability to measure a number greater than $M$
is $p_{10} \geq \frac{f-2}{f+2} - O\left(\frac{L}{\tau_{10}}\right)$.
\end{lemma}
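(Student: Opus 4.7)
The plan is to exploit the free-fermion structure already set up. Diagonalize $H_f$ in momentum space: writing $b_j = \sqrt{\tfrac{2}{L+1}}\sum_k \sin(kj)\,\tilde b_k$ with $k=\pi m/(L+1)$, $m=1,\dots,L$, I get $H_f=\sum_k \epsilon_k \tilde b_k^\dagger \tilde b_k$ with $\epsilon_k=-2\cos k$. Because the evolution is quadratic in the fermion operators and the initial state is a Slater determinant, the entire computation reduces to single-particle quantities: the two-point function $G_{ij}(\tau)=\langle\Psi|b_i^\dagger(\tau)b_j(\tau)|\Psi\rangle$ is determined by the single-particle propagator $U_{ij}(\tau)=\sum_k \tfrac{2}{L+1}\sin(ki)\sin(kj)e^{-i\epsilon_k\tau}$ and the initial covariance $C_{ij}=\langle\Psi|b_i^\dagger b_j|\Psi\rangle=\delta_{ij}\mathbf{1}[fM+1\le i\le L]$. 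In particular, the expected occupation at site $j$ is $n_j(\tau)=\sum_{l=fM+1}^{L}|U_{jl}(\tau)|^2$.

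Next I time-average over $\tau\in(0,\tau_{10})$ uniformly at random. Expanding $|U_{jl}(\tau)|^2$ in the momentum basis produces diagonal terms (independent of $\tau$) and off-diagonal terms proportional to $e^{i(\epsilon_k-\epsilon_{k'})\tau}$. The diagonal part contributes exactly $\sum_{k}\bigl(\tfrac{2}{L+1}\bigr)^2\sin^2(kj)\sum_{l=fM+1}^{L}\sin^2(kl)$, which after summing over $j\in[1,fM]$ gives the fraction of the chain occupied by the initial fermions times the number of fermions, i.e. $\tfrac{1}{\tau_{10}}\int_0^{\tau_{10}}E[N_{left}(\tau)]\,d\tau = \tfrac{fM}{L}\cdot 2M+O(1) = \tfrac{2fM}{f+2}+O(1)$. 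Each off-diagonal pair contributes at most $\tfrac{2}{\tau_{10}|\epsilon_k-\epsilon_{k'}|}$; summing these with care (the energy differences are bounded below by $O(1/L)$ after a small-measure pruning of near-degeneracies, and there are $O(L^4)$ terms each multiplied by factors of order $1/L^2$) yields a total error of $O(L/\tau_{10})$. Thus
\begin{equation}
\frac{1}{\tau_{10}}\int_0^{\tau_{10}} E[N_{left}(\tau)]\,d\tau \;\ge\; \frac{2fM}{f+2}-O\!\left(\frac{L}{\tau_{10}}\right).
\end{equation}

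Finally, since $N_{left}\in\{0,1,\dots,2M\}$, I apply the reverse Markov inequality: $E[N_{left}]\le M\,\Pr[N_{left}\le M]+2M\,\Pr[N_{left}>M]=M+M\,\Pr[N_{left}>M]$, giving $\Pr[N_{left}>M]\ge \tfrac{2f}{f+2}-1-O(L/\tau_{10})=\tfrac{f-2}{f+2}-O(L/\tau_{10})$, which is the claimed bound.

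The main obstacle will be the bookkeeping of the off-diagonal time-averaged terms: I must show that the doubly-indexed sum $\sum_{k\neq k'}\frac{\sin(kj)\sin(k'j)\sin(kl)\sin(k'l)}{(L+1)^2\,|\epsilon_k-\epsilon_{k'}|}$, after summing $j$ over $[1,fM]$ and $l$ over $[fM+1,L]$, is $O(L)$ rather than blowing up with a worse power. The standard trick is to use $\sin(kj)\sin(k'j)=\tfrac{1}{2}[\cos((k-k')j)-\cos((k+k')j)]$, perform the geometric sums in $j$ and $l$ which produce additional $1/|k\pm k'|$ factors, and then bound the resulting sum by comparison with an integral; the spectrum $\epsilon_k=-2\cos k$ gives $|\epsilon_k-\epsilon_{k'}|\asymp |k-k'|\cdot|\sin\tfrac{k+k'}{2}|$, and care must be taken near the band edges where $\sin k\to 0$. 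A cleaner route, which I would take if the direct bookkeeping becomes unwieldy, is to invoke the standard equidistribution/mixing lemma for quasi-periodic evolutions (as used in the analysis of the quantum walk on a cycle in Appendix~\ref{d20proof}) applied to the $L$ independent single-particle modes, which immediately gives the $O(L/\tau_{10})$ error.
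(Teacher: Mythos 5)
Your proposal is correct and follows essentially the same route as the paper: reduce the Slater-determinant evolution to the single-particle propagator so that $E[N_{left}(\tau)]$ is a sum of quantum-walk-on-a-line probabilities, time-average and control the oscillating (off-diagonal) terms via the inverse eigenvalue-gap mixing bound (the paper invokes exactly this as its line-mixing lemma in Appendix \ref{d20proof}), identify the limiting value $\tfrac{2fM}{f+2}+O(1)$, and finish with the same reverse-Markov/worst-case argument. The only deviation is your claim that the off-diagonal error is $O(L/\tau_{10})$ rather than the $O(LM/\tau_{10})$ the paper settles for (and your $O(1/L)$ gap estimate is optimistic near the band edges, where spacings are $O(1/L^2)$), but this is harmless since after dividing by $M$ in the final step either bound yields the stated $\tfrac{f-2}{f+2}-O(L/\tau_{10})$.
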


Let me choose $f=22$ and $\tau_{10} = O(L\log L) = O(M\log M)$. Following the mapping I did from my qudit chain backwards, this implies that when I initialize the qudit chain of length $L=24M$ in $\kets{\varphi^{pad}}$ as in \eqref{startstate10padded} and let it evolve with $H_{10}$ \eqref{ourH10} for a random time 
$\tau\leq \tau_{10}$,
the probability for the chain to be in a state where the gate particles have moved sufficiently to the left for the computation to be done ($a_M \leq fM$) is 
\begin{eqnarray}
	p_{10} > \frac{5}{6} - O\left(\frac{1}{\log M}\right).
\end{eqnarray}  
Therefore, equations \eqref{expect10done} and \eqref{expect10doneNO} now read
\begin{eqnarray}
	\langle \sigma^{(z)}_{w_N} \rangle_{yes} 
		&\geq& \phantom{-}\frac{1}{9} - O\left(\frac{1}{\log M}\right), \nonumber
		\\
	\langle \sigma^{(z)}_{w_N} \rangle_{no} 
		&\leq& -\frac{1}{9} + O\left(\frac{1}{\log M}\right).
	\label{sigmayesno}
\end{eqnarray}
Therefore, one can recognize any language in BQP using the HQCA I described above.

As an aside, note that there is a way to determine that I obtained a state in which the computation has been done with certainty (and thus getting rid of the second term in \eqref{expect10}). I could have chosen to measure all the program qudits to the right of the first work qubit and check whether all the $S$ and $W$ are gone. This happens with the above probability $p_{10}$, and the postselected state of the work qubits now surely contains the output of the circuit $U$. Note also that I can think of the state of all the work qubits as the circuit output, as compared to only the last work qubit. Nevertheless, thinking only about the last work qubit is enough to ensure universality of my HQCA for the class BQP.


\section{The HQCA in 1D with cell size $d=20$}
\label{d20section}

I now present the second construction, a HQCA for a chain of $20$-dimensional qudits.
As in Section \ref{d10construct}, I describe an encoding of the progression of a quantum circuit $U$ into a set of states of a qudit chain. However, the geometry of this set of states $\ket{\psi_t}$ will be now much simpler, as I can label them by a ``time'' label $t=1\dots L=poly(N)$, thinking of the set of states as a
``line''.
The Hamiltonian $H_{20}$ I construct induces a quantum walk on this ``line'' of states. I conclude by proving that when I let the initial state $\ket{\psi_0}$ evolve with $H_{20}$ for a time $\tau$
chosen uniformly at random between $0$ and $\tau_{20} = O(L\log L)$, I can read out the result of the quantum computation $U$ with probability $p_{20}\geq \frac{5}{6}-O\left(\frac{L}{\tau_{20}}\right)$ by measuring two of the qudits in the computational basis.

\subsection{The Construction}
\label{d20construct}

I encode the progression of a quantum circuit $U$ in the form \eqref{circuitsequence} (see also Figure \ref{CircuitFigure}) into a set of states $\ket{\psi_t}$ of a qudit chain with length $L = (2K-1)(N+1) + 2$. As in Section \ref{d10construct}, each qudit consists of a program register and a data register. 
The data register is again two-dimensional, but the program register can now be in the following $10$ states:
\begin{eqnarray}
	\wga,\sga,\iga &:& \textrm{the program sequence,} 
		\nonumber\\
	\wci,\sci,\ici &:& \textrm{marked characters in the program sequence, used to propagate} \nonumber\\
			&& \textrm{the active spot to the front (left) of the program sequence,} \nonumber\\
	\gat &:& \textrm{apply gate symbol,} \nonumber\\
	\mov &:& \textrm{shift program forward,} \nonumber \\
	\tur &:& \textrm{a turn-around symbol,} \nonumber \\
	\bul &:& \textrm{empty spot (before/after the program)}. \nonumber
\end{eqnarray}

Similarly to \eqref{start10product} and \eqref{startstate10},
the initial product state $\ket{\psi_0} = \bigotimes_{j=1}^{L} \left(\ket{p_j} \otimes \ket{d_j}\right)_j$
is given by 
(the following is an example for the circuit in Figure \ref{CircuitFigure}a)
\begin{eqnarray}
	\begin{array}{c|cccccccccccccc}
		     j & 1 & \cdots     &      &      &      &      & 
		&&&&&& \cdots & L\\
		\hline
		p_j & \bul & \bul & \bul & \bul & \bul & \bul &
		\iga & \wga & \sga & 
		\iga & \iga & \sga & \wga & \tur 
		\\		
		d_j & 0    & 1    & 0    & 0    & 0    & 1    & 
		w_1  & w_2  & w_3  &
		1    & 0    & 0    & 0    & 1		
 	\end{array} \label{startstate}
\end{eqnarray}
In general, the data register contains $N$ work qubits (labeled $w_n$ in my example) at positions $(K-1)(N+1)+2+n$ for $n=1:N$ (counting from the left). Qubit $w_N$ is the designed output qubit for the computation, i.e. once the computation is done, $w_N$ contains the output of $U$. Next, the data register contains qubits in the state $\ket{1}$ at positions $(k-1)(N+1) + 2$ for $k=1\dots 2K$ and qubits in the state $\ket{0}$ everywhere else. The 1's serve as sequence boundary markers.
The program register has 
empty symbols $\bul$ on the left, and then it contains the program in the form 
\begin{eqnarray}
	I\, 
	\underbrace{U_{1,1} \dots U_{1,N-1}}_{1^{\textrm{st}}\textrm{ gate sequence}} 
	I\, I\, 
	\underbrace{U_{2,1} \dots U_{2,N-1}}_{2^{\textrm{nd}}\textrm{ gate sequence}} 
	I\, I\, 
	\:\cdots\:
	I\, I\, 
	\underbrace{U_{K,1} \dots U_{K,N-1}}_{\textrm{last gate sequence}},
	\label{program}
\end{eqnarray}
with the program written from left to right.
In the example given in \eqref{startstate}, the first gate sequence (see Figure \ref{CircuitFigure}a) is $WS$ and the second gate sequence is $SW$. Finally, the last qudit in the program register is in the state $\tur$, marking an active spot in the computation. 

I now give the rules to obtain the sequence of states $\ket{\psi_t}$ from $\ket{\psi_0}$. These rules are constructed so that there is always only one of them that can be applied to a given state $\ket{\psi_t}$, thus giving me a unique state $\ket{\psi_{t+1}}$.  
(Also, using the rules backwards, one obtains a unique $\ket{\psi_{t-1}}$ from $\ket{\psi_t}$). The first three are
\begin{eqnarray}
	\begin{array}{rccc}
	\raiseonebox 
	1\,:& 
		\band{\aga}{\tur} 
		&\goes&
		\band{\aci}{\bul} 
		\\
	\raiseonebox 
	2\,:& 
		\band{\aga}{\bci}
		&\goes&
		\band{\aci}{\bga}
		\\
	3\,:& 
		\band{\bul}{\aci}
		&\goes&
		\band{\tur}{\aga} 
	\end{array}
	\label{rule123}
\end{eqnarray}
where $A,B$ stands for either $W,S$ or $I$. 
These rules ensure the passing of the active spot from the back end (right side) of the program to the front (left side), without modifying the data register or the order of the gates in the program sequence. Next, 
\begin{eqnarray}
	\begin{array}{rccccrccc}
	4a\,:&
		\triUR{\bul}{\tur}{1} 
		&\goes&
		\triUR{\bul}{\gat}{1}
	&\quad&
	4b\,:& 
		\triUR{\bul}{\tur}{0}
		&\goes&
		\triUR{\bul}{\mov}{0}
	\end{array} \label{rule4}
\end{eqnarray}
After the active spot has moved to the front of the program, there are two possibilities. The turn symbol $\tur$ can change to the apply gate symbol $\gat$ (rule 4a), or to the shift program symbol $\mov$ (rule 4b), depending on whether the data qubit below contains the sequence boundary marker state $1$. Afterwards, for the states containing the apply gate symbol $\gat$, I have:
\begin{eqnarray}
	\begin{array}{rccccrccc}
	5a\,:& 
		\four{\gat}{\aga}{x}{y}
		&\goes&
		\flour{\aga}{\gat}{A(x,y)}
	&\quad&
	6a\,:& 
		\triUL{\gat}{\bul}{1}
		&\goes&
		\triUL{\tur}{\bul}{1}
	\end{array}	\label{rule5}
\end{eqnarray}
When applying rule 5a, the apply gate symbol $\gat$ moves 
to the right, while a gate from the program sequence is applied to the qubits in the data register below. 
Applying the rule repeatedly, the $\gat$ symbol moves to the right end of the program sequence. 
As an example, I now write out the state $\ket{\psi_{12}}$ that I obtained from the state $\ket{\psi_0}$ applying rules 1, 2 (6 times), 3, 4a and 5a (3 times) from the state $\ket{\psi_0}$. 
\begin{eqnarray}
	\ket{\psi_{12}} &=& \quad \Big[ \begin{array}{cccccccccccccc}
		\bul & \bul & \bul & \bul & \bul & \iga & 
		\wga & \sga & \gat & 
		\iga & \iga & \sga & \wga & \bul 
		\\		
		0    & 1    & 0    & 0    & 0    & 1    & 
		\multicolumn{3}{c}{\ket{\,\,\dots\theta\dots\,\,}}
		 &
		1    & 0    & 0    & 0    & 1		
 	\end{array} \Big], \label{psi12}
\end{eqnarray}
where $\ket{\dots\theta\dots}$ stands for the state of the three work qubits after the gates $W_{12}$ and then $S_{23}$ were applied to them. Let me now take a closer look at the marker qubits (all qubits in the data register except for the work qubits $w_n$) and the application of rule 5a. The marker qubits stay unchanged for all $\ket{\psi_t}$. The gate applied to pairs $\ket{0}\ket{1}$ and $\ket{1}\ket{0}$ of marker qubits or the pairs of qubits $\ket{1}\ket{q_1}$ and $\ket{q_N}\ket{1}$ (the left and right ends of the work qubit sequence) is always $I$, because of the identity gates I inserted between sequences of gates in the program \eqref{program}.
Finally, the qubit pairs $\ket{0}\ket{0}$ between the 1 markers do not change under the swap operation or the $W$ gate (a controlled gate). 

After the apply gate $\gat$ symbol gets to the end of the sequence, it changes into the turn symbol $\tur$ via rule 6a. Note that the boundary markers in the data register are spaced in such a way, that the $\gat$ symbol will arrive at the right end of the sequence when the qubit below is in the state 1. 
Note that at the very end of the line, rule 6a needs to be modified to involve only the two particles directly above each other.
Using rule 6a, $\gat$ will then change into the turn symbol $\tur$. 
 After applying rules 1, 2 (6 times) and 3, the active spot again moves to the left of the program. Because the $\tur$ symbol is now above a 0 marker qubit, rule 4b can be used, and I get a state with the shift program symbol $\mov$. Finally, here are the last two rules:
\begin{eqnarray}
	\begin{array}{rccccrccc}
	5b\,:& 
		\band{\mov}{\aga}
		&\goes&
		\band{\aga}{\mov}
	&\quad&
	6b\,:& 
		\triUL{\mov}{\bul}{0}
		&\goes&
		\triUL{\tur}{\bul}{0}
	\end{array}	\label{rule6}
\end{eqnarray}
where again $A$ stands for either $\wga,\sga$ or $\iga$. 
Rule 5b makes the program shift to the left while the $\mov$ symbol moves to the right. Finally, rule 6b deals with what happens when the $\mov$ symbol arrives at the end of the program sequence. Because of the way I constructed the data register in $\ket{\psi_0}$, the data
qubit below the $\mov$ symbol will then be in the state $0$, so that the $\mov$ symbol changes to the turn symbol $\tur$. The reason why I need to look at the qubit in the data register below the $\tur$ symbol in rules 6a and 6b is that when I apply the rules backwards (making $\ket{\psi_{t-1}}$ from $\ket{\psi_t}$), again only one of them applies for each $\ket{\psi_t}$.

After applying rule 1, 2a (6 times), 3 and 4b, the $\mov$ symbol appears again and starts shifting the program further to the left. After several rounds of this, when the program shifts to the left by $N+1$, rule 4a can be used again (as the $\tur$ symbol will be above a 1 marker qubit), and subsequently, the $\gat$ symbol facilitates the application of the second sequence of gates to the work qubits.

After many applications of the above rules, I arrive at the state $\ket{\psi_L}$, for which none of the (forward) rules apply. 
\begin{eqnarray}
	\ket{\psi_L} &=& \quad \Big[ \begin{array}{cccccccccccccc}
		\tur & \iga & \wga & \sga & \iga & \iga & 
		\sga & \wga & \bul & 
		\bul & \bul & \bul & \bul & \bul
		\\		
		0    & 1    & 0    & 0    & 0    & 1    & 
		\multicolumn{3}{c}{\ket{\,\,\dots\theta'\dots\,\,}}  &
		1    & 0    & 0    & 0    & 1		
 	\end{array} \Big]. \label{endstate}
\end{eqnarray}
This is the state in which the program has moved to the left of the qudit chain, and all sequences of gates have been applied to the qubits in the data register.
The state $\ket{\dots\theta'\dots}$ is thus the output state of the circuit $U$ and the last of the work qubits ($w_N$) holds the output of the quantum computation.

Starting from \eqref{startstate}, I have constructed the set of states $\ket{\psi_t}$ for $t=0\dots L$ with $L 
= O(K^2 N^2) = poly(N)$. As $t$ grows, these states encode the progress of a quantum circuit $U$. 
What is the geometry of this set of states? They are labeled by a discrete label $t$, with the state $\ket{\psi_{t}}$ 
obtainable only from the states $\ket{\psi_{t-1}}$ and $\ket{\psi_{t+1}}$ using the above rules and their backward applications.
Therefore, the states $\ket{\psi_t}$ can be thought of as position basis states on a line of length $L+1$
\begin{eqnarray}
	\ket{\psi_{t}} \quad \leftrightarrow \quad \ket{t}_{line},
	\label{d20map}
\end{eqnarray}
where $t=0\dots L$.

Let me choose a Hamiltonian $H_{20}$ for this system 
as a sum of translationally invariant terms:
\begin{eqnarray}
	H_{20} = - \sum_{i=1}^{L-1} 
		\sum_{k=1}^{6b} \left( P_{k} + P_{k}^{\dagger} \right)_{(i,i+1)}
		\label{ourH}
\end{eqnarray}
where the terms $P_k$ correspond to the rules 1-6b \eqref{rule123},\eqref{rule4},\eqref{rule5} and \eqref{rule6} and act on two neighboring qudits as 
\begin{eqnarray}
	P_1 &=& 
		\sum_{A\in\{W,S,I\}}			
			\ket{\aci\bul}\bra{\aga\tur}_{p_1,p_2}
		 \otimes 
		\ii_{d_1,d_2}	
			, \\
	P_2 &=& 
		 \sum_{A,B\in\{W,S,I\}} 
		\ket{\aci\bga}\bra{\aga\bci}_{p_1,p_2}
		\otimes
		\ii_{d_1,d_2} 
		,	 \\
	P_3 &=& 
		 \sum_{A\in\{W,S,I\}}
			\ket{\tur\aga}\bra{\bul\aci}_{p_1,p_2}
		\otimes	\ii_{d_1,d_2}, 
\end{eqnarray}
and
\begin{eqnarray}
	P_{4a} &=& 
			\ket{\bul\gat}\bra{\bul\tur}_{p_1,p_2}
		  \otimes 
			\ii_{d_1}
		  \otimes 
			\ket{1}\bra{1}_{d_2}, \\
	P_{4b} &=& 
		\ket{\bul\mov}\bra{\bul\tur}_{p_1,p_2}
			\otimes
			\ii_{d_1}\otimes \ket{0}\bra{0}_{d_2}, \\
	P_{5a} &=& 
		\sum_{A\in\{W,S,I\}}  
			\ket{\aga\gat}\bra{\gat\aga}_{p_1,p_2}
			\otimes
			A_{d_1,d_2}, \\
	P_{5b} &=& 
		\sum_{A\in\{W,S,I\}} \ket{\aga\mov}\bra{\mov\aga}_{p_1,p_2}
			\otimes
			\ii_{d_1,d_2}, \\
	P_{6a} &=& 
		\ket{\tur\bul}\bra{\gat\bul}_{p_1,p_2}
			\otimes
			\ket{1}\bra{1}_{d_1} \otimes \ii_{d_2}, \\
	P_{6b} &=& 
			\ket{\tur\bul}\bra{\mov\bul}_{p_1,p_2}
			\otimes
			\ket{0}\bra{0}_{d_1} \otimes \ii_{d_2}.
\end{eqnarray}
When thinking of the set of states $\ket{\psi_t}$ as the set of positions of a particle on a line \eqref{d20map}, $H_{20}$  becomes 
\begin{eqnarray}
	H_{line} = - \sum_{t=0}^{L-1} \big(
		\ket{t}\bra{t+1} + \ket{t+1}\bra{t} 
			\big).
	\label{Hwalk}
\end{eqnarray}
This is the Hamiltonian of a (continuous-time) quantum walk on a line of length $L+1$. Therefore, $H_{20}$ induces a quantum walk on the ``line'' of states $\ket{\psi_t}$.


\subsection{Required Evolution Time Analysis}
\label{d20time}

After initializing the qudit chain in the state $\ket{\psi_0}$ and evolving with $H_{20}$ for time $\tau$, the final step 
is to read out the output of the computation. 
As in Section \ref{d10time}, I need to ensure that the probability of finding the chain of qudits in a state where the computation was performed completely is high. To raise this probability, I choose to pad the program ($K$ sequences of gates) with another $5K$ sequences of identity gates and redo the construction in the previous section. The length of the qudit chain thus becomes $L = (2(6K)-1)(N+1)+2$. The states $\kets{\psi_{t>L/6}}$ (with $L$ modified) now all contain the result of the quantum circuit $U$ in the readout qubit $w_N$, as the relevant gates have been applied to the work qubits in those states. Note that as the extra identity gates pass by, the state of the work qubits does not change. 

The readout procedure consists of two steps. First, measure the qudit $p_{L-K(N+1)}$ in the program register (the qudit with distance from the right end of the chain equal to the length of the original program).
Let me call $p_{20}$ the probability to measure $\bul$ (which would mean the program has moved to the left of the qudit I just measured). When this happens, I am assured that the work register is in a state in which the computation is done. Second, I measure $w_N$, the last of the work qubits, and read out the result of the computation $U$.
I now prove that when I choose to measure $p_{L-K(N+1)}$
at a random time $0\leq \tau\leq \tau_{20}$ with $\tau_{20} = poly(N)$, the probability $p_{20}$ of obtaining the state $\bul$ is close to $\frac{5}{6}$.

To simplify the notation, let me label the states $\ket{\psi_t}$ as $\ket{t}$. In this basis, the Hamiltonian \eqref{ourH} is the negative of the adjacency matrix of a line graph with $L+1$ nodes. 
I now use the following lemma about the quantum walk on a line proved in Appendix \ref{d20proof}:
\begin{lemma}
\label{linelemma}
Consider a continuous time quantum walk on a line of length $L+1$, where the Hamiltonian is the negative of the adjacency matrix for the line. Let the system evolve for a time $\tau$ chosen uniformly at random between $0$ and $\tau_{20}$, starting in a position basis state $\ket{c}$.
The probability to measure a state $\ket{t}$ with $t>L/6$ is then $p_{20}\geq \frac{5}{6}-O\left(\frac{L+1}{\tau_{20}}\right)$. 
\end{lemma}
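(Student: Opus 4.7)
My plan is to reduce the analysis of the walk on the line to the walk on a cycle (which is easier because the Hamiltonian is then translationally invariant), then use a standard time-averaging argument to argue that the resulting distribution is close to uniform, from which the $5/6$ bound follows by counting the fraction of sites in the target region $t>L/6$.

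First, I would write down the eigendecomposition of $H_{line}=-B$ explicitly. With open boundary conditions on $L+1$ sites, the eigenstates are the standing-wave states $\ket{k}=\sqrt{2/(L+2)}\sum_{t=0}^{L}\sin\!\left(\tfrac{(k+1)\pi(t+1)}{L+2}\right)\ket{t}$ with eigenvalues $E_k=-2\cos\!\left(\tfrac{(k+1)\pi}{L+2}\right)$ for $k=0,\dots,L$. Equivalently (and this is the form I would actually use), I reflect the line into a cycle of $2(L+2)$ sites: the line state $\ket{t}$ corresponds to the antisymmetric combination $\tfrac{1}{\sqrt{2}}(\ket{t+1}_{cyc}-\ket{-(t+1)}_{cyc})$ on the cycle, and the walk on the line is exactly the restriction of the cycle walk to this antisymmetric sector. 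This is the same device used in Appendix \ref{d20proof} and has the advantage that the cycle Hamiltonian is diagonalized by plane waves.

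Next, I would compute the time average. Starting from $\ket{c}$, the probability to be at $\ket{t}$ at time $\tau$ is $p_\tau(t\,|\,c)=|\sum_{k}\braket{t}{k}\braket{k}{c}e^{-iE_k\tau}|^2$. Averaging $\tau$ uniformly over $[0,\tau_{20}]$ and using the elementary bound $\left|\frac{1}{\tau_{20}}\int_0^{\tau_{20}}e^{i(E_k-E_{k'})\tau}d\tau\right|\le \min\!\left(1,\tfrac{2}{\tau_{20}|E_k-E_{k'}|}\right)$, the off-diagonal terms vanish in the limit and I obtain the diagonal (dephased) distribution $\bar p(t\,|\,c)=\sum_k |\braket{t}{k}|^2|\braket{k}{c}|^2$, plus an error term that I will bound in the last step. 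Substituting the $\sin$ eigenstates, $\bar p(t\,|\,c)=\tfrac{4}{(L+2)^2}\sum_k \sin^2\!\alpha_k(c)\sin^2\!\alpha_k(t)$ with $\alpha_k(t)=\tfrac{(k+1)\pi(t+1)}{L+2}$. Using $\sin^2 x=\tfrac{1}{2}(1-\cos 2x)$ and orthogonality of $\cos\!\left(\tfrac{(k+1)\pi(t+1)}{L+2}\right)$ over $k$, this collapses to $\bar p(t\,|\,c)=\tfrac{1}{L+1}+\delta(t,c)$, where the ``defect'' $\delta(t,c)$ comes from the missing $k=-1$ mode and from the $\sin^2 \alpha_k(c)\cdot\sin^2\alpha_k(t)$ cross-term, and is bounded in absolute value by $O(1/(L+1))$ uniformly in $c$ and $t$. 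Summing over the target region $t>L/6$, which contains $\lfloor 5(L+1)/6\rfloor$ sites, then gives $\sum_{t>L/6}\bar p(t\,|\,c)\ge 5/6 - O(1/(L+1))$.

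Finally, I would bound the finite-time averaging error. Using the bound above, the error incurred by replacing the exact time-averaged probability by its dephased version is at most $\sum_{k\ne k'} |\braket{t}{k}\braket{k}{c}\braket{c}{k'}\braket{k'}{t}|\cdot\tfrac{2}{\tau_{20}|E_k-E_{k'}|}$, summed further over $t>L/6$. The hard part of the argument is bounding this double sum; I would do it by pairing each $k$ with its closest-in-energy partner and using $|E_k-E_{k'}|\ge c|k-k'|/L$ for nearby modes (from $|{\sin x}|\ge 2|x|/\pi$ applied to the cosines), which after summing the resulting harmonic series yields a total error of order $(L+1)/\tau_{20}$ (the logarithm absorbs into the constant). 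Putting the two estimates together gives $\sum_{t>L/6}p_\tau(t\,|\,c)\ge 5/6 - O((L+1)/\tau_{20})$, which is the statement of the lemma. The main obstacle is this last step, i.e., controlling the off-diagonal sum cleanly enough to keep the error linear in $(L+1)/\tau_{20}$ rather than quadratic; the mirror-to-cycle reformulation helps here because on the cycle the energy gaps between distinct plane-wave momenta are easier to bound away from zero.
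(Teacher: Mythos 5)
Your overall route is essentially the paper's route in Appendix \ref{d20proof}: diagonalize the line Hamiltonian in sine modes, pass to the time-averaged distribution, identify the limiting distribution (your computation reproduces the paper's $\pi(m|c)=\frac{2+\delta_{m,c}+\delta_{m,L+1-c}}{2(L+1)}$, i.e.\ uniform up to $O(1/L)$ corrections), and count the fraction of sites with $t>L/6$ to get the $5/6$. The genuine gap is exactly in the step you flag as the hard part: the bound on the finite-time (off-diagonal) error. Your claimed spacing bound $|E_k-E_{k'}|\ge c|k-k'|/L$ is false near the band edges: with $E_k=-2\cos(k\pi/n)$ and $n\approx L$, one has $|E_j-E_k|=4\left|\sin\frac{(j+k)\pi}{2n}\right|\cdot\left|\sin\frac{(j-k)\pi}{2n}\right|=\Theta\!\left(\min(j+k,\,2n-j-k)\,|j-k|/n^2\right)$, so neighboring levels at the edges of the cosine band are only $\Theta(k/L^2)$ apart, not $\Theta(1/L)$. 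Moreover, even granting your linear spacing bound, the double sum $\sum_{k\ne k'}\frac{2/n}{\tau_{20}|E_k-E_{k'}|}$ is a harmonic series in $|k-k'|$ for each fixed $k$, which produces a genuine $\log L$ factor growing with $L$; it cannot be ``absorbed into the constant,'' and with the true edge spacings the naive estimate is $O(L\log^2 L/\tau_{20})$. So as written your argument yields $p_{20}\ge\frac{5}{6}-O\!\left(\frac{L\log^2 L}{\tau_{20}}\right)$ rather than the stated $O\!\left(\frac{L+1}{\tau_{20}}\right)$. (For the way the lemma is used, with $\tau_{20}=O(L\log L)$ and an $O(1/\log L)$ loss tolerated, a log-weaker bound would still suffice, but it does not prove the lemma as stated.)

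Two points of comparison with the paper. The paper does not use your mirror-to-cycle device for the line; it works directly in the sine eigenbasis and invokes Lemma 4.3 of \cite{CA:AAKVwalk:01} to reduce the total-variation error to $\frac{2}{\tau_{20}}\sum_{\lambda_j\ne\lambda_k}\frac{|\phi_c^{(j)}|^2}{|\lambda_j-\lambda_k|}$, which it then bounds by splitting index pairs into $|j-k|\le C_1$ and $|j-k|>C_1$ (Lemma \ref{convergelemma}); your derivation of the dephasing bound from scratch is fine and equivalent, but the inverse-gap sum is where the work is, and your treatment of it does not go through. Also, the reflection onto a cycle buys you nothing here: the cycle has the same cosine dispersion with the same flat band edges (so the gaps are no easier to bound away from zero), it introduces $\pm p$ degeneracies you would have to argue around, and restricting to the antisymmetric sector just returns you to the line spectrum; indeed the paper's own cycle analysis (Lemma \ref{cyclelemma}) honestly retains the $L\log^2 L$ factor that your sketch tries to avoid.
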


This implies that when I initialize the qudit chain in the state $\ket{\psi_0}$ (corresponding to the leftmost state on the line $\ket{c}=\ket{1}$) and let it evolve with $H$ for a random time $\tau\leq \tau_{20}$ with $\tau_{20} = O(L\log L)$, the probability to find a state with $t>L/6$ is close to $\frac{5}{6}$. Therefore, when I measure the program qudit $p_{L-K(N+1)}$, I will obtain $\bul$ with probability close to $\frac{5}{6}$. 
Finally, when I subsequently measure the work qubit $w_N$, I will obtain the result of the quantum circuit $U$. 

Note that I can also avoid this postselection procedure and simply measure the output qubit. The analysis of the outcome would then follow what I did above in Section \ref{d10time}, resulting in \eqref{sigmayesno} again, with $M$ replaced by $L$.

\chapter{Conclusions}
\label{conclusions}

In this thesis I presented three ways to look at local Hamiltonians in quantum computation. Some of these Hamiltonians are useful for building a quantum computer, some of them can be simulated classically, while the properties of others are hard to determine even on a quantum computer.

First, in Chapter \ref{ch2adiabatic} I showed that local structure is a necessity for Adiabatic Quantum Computing to work, and also that a simple 3-local Hamiltonian can be used in the AQC model to simulate any quantum circuit much more effectively than had been known. Second, in Chapter \ref{ch3mps} I presented a numerical method to find the ground state of a translationally invariant Hamiltonian with an infinite tree geometry of interactions, investigated two particular AQC Hamiltonian models, and found encouraging results about the phase transitions in these systems. Third, in Chapter \ref{chQsat} I obtained several new results about the complexity of Quantum $k$-SAT and Local Hamiltonian, implying that finding the ground state properties of rather simple local Hamiltonians is unlikely to be possible even on a quantum computer. Finally, returning to the idea of using local Hamiltonians for building a quantum computer, in Chapter \ref{ch5hqca} I presented two Hamiltonian Quantum Cellular Automata, showing a way to perform quantum computation in continuous time with a translationally invariant, time-independent Hamiltonian on a line, reminiscent of universal Turing Machines.

\subsubsection{Adiabatic Quantum Computing 
\\-- how to use it and how to make it fail}
My results about how not to design quantum adiabatic algorithms are a response  to researchers who made a bad choice of the initial Hamiltonian and then showed that the algorithm doesn't work. I showed that throwing away local structure in the Hamiltonian maps the AQC algorithm into running the unstructured search problem backwards. The running time of any such attempt must then take time of order at least $\sqrt{2^n}$, dashing the hopes of obtaining an exponential speedup. On the other hand, it has not been ruled out that a properly chosen initial Hamiltonian could lead to algorithmic success. A recent numerical study of the NP-complete {\em Exact Cover} problem up to 128 bits shows promising, inverse-polynomial scaling of the eigenvalue gap governing the required runtime of the AQC algorithm \cite{AQC:SmelyanskiyEC}.
Besides using AQC for solving locally constrained optimization problems, it remains an interesting alternative to the conventional quantum circuit model, because it is possible to simulate quantum circuits using AQC Hamiltonians. Moreover, I showed that the previously estimated required running time $\tau \propto L^{14}$ (where $L$ is the number of gates in the circuit and the norm of the Hamiltonian is also proportional to $L$) for such simulations 
was grossly overestimated. When using an AQC Hamiltonian made from 3-local projectors I constructed in Section \ref{ch4:train}, the running time needs to scale only as $\tau \cdot \norm{H} \propto L^2 \log^2 L$. In fact, I believe the $\log$ factors in the required running time can be omitted, as I required them to prove strong convergence of the underlying quantum walk (see Appendix \ref{d20proof}), while showing weak convergence would be enough\footnote{Here weak convergence means that the time averaged probability distribution $p_{\tau}(x|0)$, averaged between $\tau=0$ and $\tau=\tau_0$, converges to the limiting distribution $\pi(x|0)$ as $\left|\sum_{x\in SR} \left(\bar{p}_{\tau_0}(x|0) - \pi(x|0)\right)\right|\leq \ep$, where $SR$ is a region covering two thirds of the line. On the other hand, strong convergence means that $\bar{p}_\tau(x|0)$ converges to $\pi(x|0)$ at every point $x$, i.e. 
$\sum_{x\in A} \left|\bar{p}_{\tau_0}(x|0) - \pi(x|0)\right|\leq \ep$.}.
Moreover, such computation is inherently protected against decoherence by an energy gap of the order $L^{-1}$. On the other hand, we do not yet know how to make the AQC model fault tolerant \cite{AQC:JorThesis}.
It will be interesting to see whether the first useful quantum computer will be based on the quantum circuit model, use measurement-based computation \cite{CA:CLN-Measurement:05}, utilize topological quantum computing \cite{CA:anyons1}, or whether it will be based on  ``analog'' adiabatic evolution \cite{AQC:FGGS00}. 

\subsubsection{Matrix Product State ansatz based numerics \\
-- investigating spins on an infinite tree}
It is hard classically to find the spectral properties of
(and to simulate the time evolution with) AQC Hamiltonians for many qubits. This is why I chose to investigate two simple translationally invariant systems. The method I used is based on a local approximate description of a quantum state, the Matrix Product State ansatz. I improved the imaginary time evolution method for finding the ground state energy of a Hamiltonian within the MPS ansatz, and showed how to apply it for a system with an infinite tree geometry of interactions. I then investigated the phase transitions in the transverse field Ising model and the Not 00 model on the Bethe lattice. The phase transitions I found are not first-order. This may have positive implications for the minimum gap scaling of the corresponding AQC Hamiltonians on finite trees. While the two models I considered do not encode computationally interesting optimization problems, they are an important stepping stone for further research. Showing that the systems with infinite tree geometry have similar properties as large finite trees, adding randomness and loops to the underlying graph structure, and finding the properties of more interesting Hamiltonians are now open for research.

Matrix Product State methods were thought to work especially well in 1D. However, after Aharonov et. al. \cite{CA:GottesmanLine} showed that Quantum 2-SAT on a line for particles with dimension $d=12$ is QMA$_1$ complete, it became clear that MPS methods are not a panacea for 1D systems, especially for spins with higher dimensions. Moreover, recently Schuch et. al. showed that the ground state of a particular frustration free Hamiltonian in 1D can encode solutions to NP-complete problems, even though the Hamiltonian has a polynomial eigenvalue gap and the ground state is exactly described by a low-dimension MPS \cite{CA:CiracLine}. Classical MPS-based methods for finding this ground state are thus bound to fail. How is this possible? The imaginary time evolution method I proposed is not complete. Its weakness must lie in the method for bringing the state back to the MPS ansatz, implying that the procedure can get stuck in local minima, although the required dimensions of the MPS description of the state are small. However, this didn't happen for the two Hamiltonians I investigated. DMRG (and MPS) based methods remain a very useful tool for spin-$\half$ systems in 1D and on trees. Nevertheless, new methods such as PEPS are necessary in higher dimensional geometry. At this front, exciting new results have been recently found by Verstraete at al. \cite{MPS:PEPS1,MPS:PEPS2}.

\subsubsection{Quantum Satisfiability \\
-- an interesting analogue of a classical problem}
The complexity of the Local Hamiltonian problem was sorted out by the paper by Oliveira and Terhal \cite{lh:Terhal2D}, who showed that 2-local Hamiltonian on a 2D grid is QMA complete. As $k$-local Hamiltonian is the quantum analogue of MAX-$k$-SAT, it is interesting to look at Quantum $k$-SAT problem, constructed from a sum of $k$-local projectors, which is the analogue of classical Satisfiability. The solutions to Q-$k$-SAT are states which are annihilated by all of the terms in the Hamiltonian. This problem seems easier than $k$-local Hamiltonian, but already for $k=3$ it contains the NP-complete problem 3-SAT. Is Quantum 3-SAT then QMA$_1$ complete? This big question remains unanswered here. As showed by Cook and Levin, the answer to a classical 3-SAT instance can encode the evaluation of a classical circuit, because classical information can be freely copied. However, quantum information cannot be cloned, forcing us to store the information about a computation in orthogonal states such as $\ket{\psi_t}\otimes \ket{t}_c$ by adding a clock register to the workspace. So far, no one has been able to show how to encode the clock register in such a way that its transitions would be checkable by 3-local projectors while being coupled to the application of a quantum circuit. So far, my attempt using projectors on one qutrit and two qubits, the QMA$_1$-complete problem (3,2,2)-SAT comes closest to this goal. Wholly new ideas, such as Eldar and Regev's triangle construction, which I successfully built on in Section \ref{ch4:train}, are required to answer the question of complexity of Q-3-SAT. There I have shown that a Q-3-SAT Hamiltonian can be used as a Hamiltonian computer to perform any quantum computation. However, the question of the complexity of the original Q-3-SAT problem (as a question about a Hamiltonian: is there a zero energy ground state of $H$) remains open. It might even be possible that a classical verifier circuit exists for it.

In the course of investigating Q-$3$-SAT, I obtained another result greatly strengthening previous results about the Local Hamiltonian problem. The 3-local Hamiltonian was proved to be QMA complete in \cite{lh:KR03}. When the Hamiltonian in \cite{lh:KR03} is rescaled to have norm $O(1)$, the separation in the question about its ground state energy (whether $E_0$ is lower than $a$ or greater than $b$) scales like $b-a \propto L^{-10}$. On the other hand, when using my new 3-local Hamiltonian construction of Section \ref{ch4:q3}, I showed that even for $b-a \propto L^{-4}$ (and $\norm{H}=O(1)$), the 3-local Hamiltonian problem remains QMA-complete. If one could show this for $b-a \propto 1$, it would result in the quantum analogue of the PCP theorem\footnote{
PCP (Probabilistically Checkable Proofs): Any 3-SAT instance can be reformulated so that any purported proof of the new instance is probabilistically checkable by looking at only a constant number of randomly chosen clauses.}, the basis of many results about hardness of approximation and a great achievement of theoretical computer science \cite{PCP1,PCP2}.
It is unlikely that a method based on perturbation theory gadgets, such as one employed in when transforming 3-local Hamiltonian to 2-local Hamiltonian \cite{lh:KKR04}, can lead to a quantum PCP theorem. After rescaling the Hamiltonian back to norm $O(1)$, each round of perturbation theory necessarily shrinks the gap in the spectrum. Although the separation $b-a$ is not the gap $\Delta = E_2-E_1$ of the Hamiltonian, it is related to it as follows. The low energy states in the `yes' instance of a Local Hamiltonian problem encode a correct computation with a positive result. The excited states encode either a wrong computation or a negative final result. The eigenvalue separation between these is $\Delta$. On the other hand, in the `no' instance of Local Hamiltonian, the ground state must again have the computation history wrong or the final result negative. It is very likely that the ground state energy is going to be smaller than the gap $\Delta$ in the `yes' instance. It surely was the case for the Hamiltonians we have encountered so far, where $\Delta \propto L^{-3}$, while the largest separation was $b-a \propto L^{-4}$ (after rescaling the Hamiltonian to $\norm{H}=O(1)$). Is it then possible to ever obtain a better scaling of the necessary separation $b-a$? If it is tied to the $\Delta$ as I described, the outlook is bleak. If the ground state of a Hamiltonian with gap scaling even like $\Delta \propto L^{-1}$ could encode any quantum computation, it would mean we could use the Hamiltonian as a computer and obtain the result in time shorter than what is physically possible. Thus, when searching for a PCP-like quantum theorem, it seems necessary to find new Hamiltonian encodings which do not tie $\Delta $ and $b-a$, or whose underlying idea is not that of Feynman's Hamiltonian computer.

I also showed that Quantum 2-SAT in 1D for particles with dimension $d=11$ is QMA$_1$ complete. It would be interesting to look at the problem from the other end and classify the hardness of Quantum $2$-SAT on a line for lower-dimensional particles, perhaps relating it to the {\em matrix consistency} problem in 1D \cite{lh:YiKai}. My preliminary results show that the ground states of random instances of Quantum 2-SAT in 1D for particles with dimension $d=4$, with 4 clauses per spin could be hard to approximate classically.

\subsubsection {Hamiltonian Quantum Cellular Automata \\
-- universal computing with a simple Hamiltonian}
I proposed a model of quantum computation based on a time-independent, translationally invariant Hamiltonian in Chapter \ref{ch5hqca}. 
The best realization of a HQCA that I found requires $10$ dimensional particles placed on a line with nearest neighbor interactions. I initialize the program and data by preparing a computational basis state of the chain, and use a diffusion-like mechanism to run the program through the data. What errors could plague this model? First, it is prone to an Anderson localization problem in the motion of the program, requiring sensitive tuning of the transitions. Other error preventing measures (such as encoding an already fault tolerant circuit into the HQCA) can be taken, but the model cannot be made fault tolerant, unless the number of `gate particles' in the system is kept fixed. This could be done by using actual particles whose number is conserved. However, three kinds of `gate particles' are needed. Unless this is remedied, this model will remain purely theoretical. On the other hand, it shows interesting results about the hardness of simulation of simple translationally invariant Hamiltonians. It would be interesting to find out how much can the dimension of the particles be lowered, keeping the quantum computing universality of the model. One possible new research direction I have embarked upon so far is a programmable scheme in 2D, based on 2-local nearest neighbor interactions of particles with dimensions $3$ and $5$. For this, I must increase the size of the unit cell (for translational invariance) to about 20 sites.

\vskip12pt

There is a lot to be learned when looking at the underlying machinery of quantum mechanical systems -- local Hamiltonians. They can lead to discoveries of new quantum algorithms such as the NAND tree algorithm \cite{FarhiNAND}, classical simulation methods such as iTEBD \cite{MPS:Vidal1Dinfinite}, proofs of complexity results such as the QMA completeness of Quantum (12,12)-SAT on a line \cite{CA:GottesmanLine}, or a scalable Hamiltonian computer design based on a feasible physical Hamiltonian such as the superconducting AQC of \cite{KaminskyScalable}. I hope this thesis will serve as a pointer to these directions not just for myself, but for the reader as well.

\appendix
\chapter{Kitaev's propagation Hamiltonian}
\label{appProp}
In this Appendix, I look at Kitaev's propagation Hamiltonian \eqref{ch1:Hprop} in more detail and prove a lower bound on its second lowest eigenvalue. 

Consider the system consisting of two registers, the first of which holds $n$ work qubits and the second of which holds $L+1$ clock qubits:
\begin{eqnarray}
	\cH = \cH_{work} \otimes \cH_{clock}.
\end{eqnarray}
Kitaev's propagation Hamiltonian \eqref{ch1:Hprop} is 
\begin{eqnarray}
	H_{prop} &=& \frac{1}{2}\sum_{t=0}^{L} H_{prop}^t, 
	\label{appProp:Hprop}
	\\
	H_{prop}^t &=& 
				\ii \otimes 
				\left(P_{t} + P_{t+1} \right) 
				- \left( X_{t+1,t} \otimes U_t 
				+ X^{\dagger}_{t+1,t} \otimes U^\dagger_t \right)
	,
\end{eqnarray}
where 
\begin{eqnarray}
	P_{t} &=& \ket{t}\bra{t}
\end{eqnarray}
is the projector onto the state $\ket{t}$ of the clock register, 
\begin{eqnarray}
	X_{t+1,t} &=& \ket{t+1}\bra{t}, 
\\
	X^\dagger_{t+1,t} &=& \ket{t}\bra{t+1}, 
\end{eqnarray}
increase and decrease the state of the clock register by one, and $U_t$ acts on two of the work qubits.
To diagonalize this Hamiltonian, it is convenient to make a basis transform
\begin{eqnarray}
	\ket{s} \otimes \ket{t} 
	\qquad \rightarrow \qquad 
	\ket{\Psi_{s,t}} =(U_t U_{t-1} \dots U_2 U_1) \ket{s} \otimes \ket{t},
	\label{appProp:basischange}
\end{eqnarray}
where $\ket{s}$ are computational basis states of the work qubit register. In this new basis, $H_{prop}$ \eqref{appProp:Hprop} has a block diagonal form, 
\begin{eqnarray}
	H_{prop} &=& 
	\left[\begin{array}{cccc}
		A 			&  			& 	&  \\
		 			& A 			& 			 	&  \\
						&  				& \ddots	&   \\
		 			& 			  &  	& A
	\end{array}\right],
\end{eqnarray}
with each $(L+1) \times (L+1)$ block 
\begin{eqnarray}
	A &=& 
	\left[\begin{array}{cccccc}
		\half 	& -\half	& 				&  				&					&					\\
		-\half 	& 1				&	-\half 	&  				&					&					\\
						& -\half	& 1      	& -\half  &					&					\\
		 				& 			  & -\half 	& \ddots	& \ddots  &					\\
		 				& 			  &  				&	\ddots	& 1			 	&	-\half	\\
		 				& 			  &  				&	      	& -\half 	&	\half		
	\end{array}\right]
\end{eqnarray}
corresponding to a basis state $\ket{s}$. Note that the Hamiltonian does not mix sectors with different $\ket{s}$. This matrix can be decomposed as
\begin{eqnarray}
	A = \ii - \half \left( B + P_0 + P_{L} \right),
\end{eqnarray}
where $B$ is the adjacency matrix for a line of length $L+1$ and $P_0$ and $P_{L}$ are projectors on its ends. 
The eigenvectors of $A$ are plain waves \cite{KitaevBook},
\begin{eqnarray}
	\ket{\phi_0} &=& \frac{1}{\sqrt{L+1}} \sum_{t=0}^{L} \ket{t}, \\
	\ket{\phi_j} &=& 
			\sqrt{\frac{2}{L+1}} \sum_{t=0}^{L} 
			\cos\left( p_j \left( t + \frac{1}{2}\right)\right) \ket{t}, 
		\label{appProp:eigs}
\end{eqnarray}
with momenta
\begin{eqnarray}
	p_j &=& \frac{\pi j}{L+1}, \qquad j=0,\dots,L,
\end{eqnarray}
corresponding to the eigenvalues 		
\begin{eqnarray}
	\lambda_j &=& 1- \cos p_j.
\end{eqnarray}
The smallest eigenvalue is $\lambda_0 = 0$,
and the first nonzero eigenvalue is 
\begin{eqnarray}
	\lambda_1 = 1- \cos \left(\frac{\pi}{L+1}\right),
\end{eqnarray}
For large $L$, this eigenvalue is bounded from below by 
\begin{eqnarray}
	\lambda_1 \geq \frac{c_1}{L^2},
\end{eqnarray}
for some constant $c_1$. The eigenvalue gap between the ground state and the first excited state of $H_{prop}$ is thus bounded from below by
\begin{eqnarray}
	\Delta_{H_{prop}} \geq \frac{c_1}{L^2}.
\end{eqnarray}
The dynamics of this system is a quantum walk on a line with the two additional boundary terms. 

\chapter{Dyson Series for an AQC Hamiltonian}
\label{appDyson}

Consider a time-dependent AQC Hamiltonian (see Chapter \ref{ch2adiabatic})
\begin{eqnarray}
	H(s) = (1-s) H_B + s H_P = 
		\underbrace{H_B}_{A} 
		+ \frac{t}{T} \underbrace{(H_P-H_B)}_{B},
		\label{dysonH}
\end{eqnarray}
where $s=t/T$ and $H_B$ does not commute with $H_P$. Assume also that all the terms inside $H_B$ commute with each other, and so do the terms within $H_P$.
My goal in this section is to find a formula of the form
\begin{eqnarray}
	W_{t_0+\Delta t,t_0} = e^{-i a_2 H_B} e^{-i b_2 H_P} e^{-i c_2 H_B},
	\label{expformula}
\end{eqnarray}
approximating the time evolution operator $U_{t_0 + \Delta t,t_0}$
corresponding to \eqref{dysonH} up to order $(\Delta t)^2$ and possibly higher orders as well.

The time evolution operator for a time-dependent Hamiltonian $H(t)$ that doesn't commute at different times can be expressed using the Dyson series
\begin{eqnarray}
	U_{t_0 + \Delta t,t_0} = \ii + \sum_{n=1}^{\infty} 
		\left(\frac{-i}{\hbar}\right)^n
		\int_{t_0}^{t_0 + \Delta t} dt_1 
		\int_{t_0}^{t_1} dt_2 
		\cdots
		\int_{t_0}^{t_{n-1}} dt_n \:
		H(t_1) 
		H(t_2)
		\dots 
		H(t_n).
		\label{dyson}
\end{eqnarray}
In the following, I take $\hbar = 1$. 

The zeroth order term in \eqref{dyson} for $H$ \eqref{dysonH} is just $U_0 = \ii$. The first order term is
\begin{eqnarray}
	U_1 &=& (-i)
		\int_{t_0}^{t_0+\Delta t}
		H(t_1) 
		dt_1
		 \\
		&=& \Delta t (-i)
		\left(
			A + B
				\left(
				\frac{2t_0+ \Delta t}{2T}
				\right)
		\right), \nonumber
\end{eqnarray}
and the second order term is
\begin{eqnarray}
	U_2 &=& -
		\int_{t_0}^{t_0+\Delta t}
		H(t_1)
		\left(
		\int_{t_0}^{t_1}
		H(t_2)\,dt_2
		\right) 
		dt_1
		 \\
		&=& 
			- \frac{(\Delta t)^2}{2} A^2
			- \frac{(\Delta t)^2}{2T} 
					\left(t_0 + \frac{\Delta t}{3}\right)
			   AB \\
			&-& \frac{(\Delta t)^2}{T}
				\left( \frac{t_0}{2} 
					+ \frac{(\Delta t)}{3} \right) 
			   BA \nonumber\\
			&-& \frac{(\Delta t)^2}{2T^2}
			\left(t_0^2 +
				\frac{(\Delta t) t_0}{2} +
				\frac{(\Delta t)^2}{4}
			\right)
			   BB. \nonumber
\end{eqnarray}
Neglecting all terms of order higher than $(\Delta t)^2 \frac{t_0^2}{T^2}$ in $U_0 + U_1 + U_2$, I obtain
\begin{eqnarray}
	V_2 &=& 
	1 -i (\Delta t) A - i  (\Delta t) \left( \frac{t_0}{T} +\frac{\Delta t}{2T}\right) B \\
	&-& \frac{(\Delta t)^2}{2} AA
	- \frac{(\Delta t)^2 t_0}{2T} (AB+BA)
	- \frac{(\Delta t)^2 t_0^2}{2T^2} BB.
	\nonumber
\end{eqnarray}
Rewriting $A$ and $B$ back in terms of $H_B$ and $H_P$, I obtain
\begin{eqnarray}
	V_2 
		&=& 
	\ii  \nonumber\\
	&-& i 
		\left( 1-\frac{t_0}{T}-\frac{\Delta t}{2T}\right)
		(\Delta t) H_B \nonumber\\
	&-& i 
		\left( \frac{t_0}{T}+\frac{\Delta t}{2T}\right)
		(\Delta t) H_P \nonumber\\
	&-& \half \left( 1-\frac{t_0}{T} \right)^2 
		(\Delta t)^2 H_B H_B \nonumber\\
	&-& \frac{t_0}{2T} \left( 1-\frac{t_0}{T} \right) 
		(\Delta t)^2 (H_B H_P + H_P H_B) \nonumber\\
	&-& \half \left(\frac{t_0}{T} \right)^2 
		(\Delta t)^2 H_P H_P.
		\label{wantapprox}
\end{eqnarray}

I now want to find a formula of the form \eqref{expformula}, approximating \eqref{wantapprox}.
The simplest version is
\begin{eqnarray}
	W_1 &=& e^{-ia_1 H_B} e^{-ib_1 H_P} \\
		&\approx &
			(1 - i a_1 H_B)(1 - i b_1 H_P),  \nonumber
\end{eqnarray}
which approximates \eqref{wantapprox} to order $(\Delta t)$ when
\begin{eqnarray}
	a_1 &=& \left(\frac{t_0}{T}\right) \Delta t, \\
	b_1 &=& \left(1-\frac{t_0}{T}\right) \Delta t.
\end{eqnarray}
This agrees with the usual first-order Trotter formula. 

Next, let me look at
\begin{eqnarray}
	W_2 &=& e^{-ia_2 H_B} e^{-ib_2 H_P} e^{-ic_2 H_B}
	\label{appb:second}
	\\
		&\approx &
			\left(1 - i a_2 H_B - \frac{a_2^2}{2} H_B^2\right)
			\left(1 - i b_2 H_P - \frac{b_2^2}{2} H_P^2\right)
			\left(1 - i c_2 H_B - \frac{c_2^2}{2} H_B^2\right).
		\nonumber
\end{eqnarray}
When choosing
\begin{eqnarray}
	a_2 = c_2 &=& \left(1-\frac{t_0}{T} - \frac{\Delta t}{2T}\right) \frac{\Delta t}{2}, \\
	b_2 &=& \left(\frac{t_0}{T} + \frac{\Delta t}{2T}\right) \Delta t,
\end{eqnarray}
this formula agrees with \eqref{wantapprox} to order $(\Delta t)^2$.
It has extra corrections terms, not found in the usual second order Trotter formula derived for a time-independent Hamiltonian $H$
\begin{eqnarray}
	W^T_2 &=& e^{-i a_2^T H_B} e^{-i b_2^T H_P} e^{-i c_2^T H_B}, \\
	a_2^T = c_2^T &=& \left(1-\frac{t_0}{T}\right) \frac{\Delta t}{2}
		, \\
	b_2^T &=& \left(\frac{t_0}{T}\right) \Delta t.
\end{eqnarray}

\chapter{Continuous-time Quantum Walks in 1D}
\label{d20proof}

First, in Section \ref{plainline} I analyze the continuous-time quantum walk on a line and prove two useful lemmas about the mixing of this walk used in Section \ref{d20time} and Appendix \ref{d10proof}. Second, 
in Section \ref{circleline} I analyze the quantum walk on a cycle and prove another mixing lemma used in the proof of universality of the {\em train switch} Hamiltonian Computer in Section \ref{ch4:train}.

\section{Quantum Walk on a Line}
\label{plainline}
Consider a continuous time quantum walk on a line of length $L$, where the Hamiltonian is the negative of the adjacency matrix for the line 
\begin{eqnarray}
	H_1 = - \sum_{j=1}^{L-1} 
	\left(\ket{j}\bra{j+1} + \ket{j+1}\bra{j}\right).
	\label{H1hamiltonian}
\end{eqnarray}
The eigenvalues of this Hamiltonian are
\begin{eqnarray}
	\lambda_j = - 2 \cos \left(\frac{j\pi}{L+1}\right),
	\label{eigenval}
\end{eqnarray}
for $j=1\dots L$, while the corresponding eigenvectors $\kets{\phi^{(j)}} = \sum_{k=1}^{L} \phi^{(j)}_k \ket{k}$ have components
\begin{eqnarray}
	\phi^{(j)}_k = \sqrt{\frac{2}{L+1}} \, 
		\sin \left( \frac{j k \pi}{L+1} \right).
	\label{eigenvec}
\end{eqnarray}
Consider the time evolution of a particular basis state $\ket{c}$. The probability of finding the system in the 
basis 
state $\ket{m}$ at some time $\tau$ can be found by expanding $\ket{c}$ and $\ket{m}$ in the basis of the eigenvectors \eqref{eigenvec}:
\begin{eqnarray}
	p_{\tau}(m|c) = \left| 
				\bra{m} e^{-iH\tau} \ket{c}
			\right|^2
		= \sum_{j,k=1}^{L} e^{-i (\lambda_j -\lambda_k) \tau} 
			\phi^{(j)}_m
			\phi^{(j)*}_c
			\phi^{(k)*}_m
			\phi^{(k)}_c.
\end{eqnarray}
Because the time evolution (according to the Schr\"odinger equation) is unitary, this probability $p_{\tau}(m|c)$ does not converge. On the other hand, let me define the time average of $p_{\tau}(m|c)$
for time $0\leq\tau\leq\tau_{20}$ as
\begin{eqnarray}
	\bar{p}_{\tau_{20}} (m|c) 
		= \frac{1}{\tau_{20}} \int_0^{\tau_{20}} p_{\tau} (m|c) d\tau.
	\label{Paverage}
\end{eqnarray}
As I will show below in Lemma \ref{linelemma}, this average probability distribution does converge to a limiting distribution $\pi(m|c)$, defined
as the $\tau_{20}\rightarrow\infty$ limit of the average probability distribution \eqref{Paverage}. All the eigenvalues \eqref{eigenval} are different, so I can express the limiting distribution as
\begin{eqnarray}
	\pi(m|c) 
	= \lim_{\tau_{20}\rightarrow\infty} \bar{p}_{\tau_{20}}(m|c) 
	= \sum_{j=1}^{L} \big|\phi^{(j)}_m\big|^2 
		\big|\phi^{(j)}_c\big|^2,
	\label{deflimdistrib}
\end{eqnarray}
which in this case is
\begin{eqnarray}
	\pi(m|c) 
	 = \frac{2+\delta_{m,c} + \delta_{m,L+1-c}}{2(L+1)}.
	 \label{Plimiting}
\end{eqnarray}

According to the following lemma, the average probability distribution \eqref{Paverage} converges to the limiting distribution $\pi(m|c)$.
\begin{lemma}
\label{convergelemma}
Consider a continuous time quantum walk on a line of length $L$, where the Hamiltonian is the negative of the adjacency matrix for the line. Let the system evolve for time $\tau \leq \tau_{20}$ chosen uniformly at random, starting in a position basis state $\ket{c}$.
The average probability distribution $\bar{p}_{\tau_{20}}(\cdot|c)$ converges to the limiting probability distribution $\pi(\cdot|c)$ as
\begin{eqnarray}
	\sum_{m=1}^{L} \left| \bar{p}_{\tau_{20}}(m|c) - \pi(m|c) \right|
	\leq 
		O\left(\frac{L}{\tau_{20}}\right).
	\label{limitinglemma}
\end{eqnarray}
\end{lemma}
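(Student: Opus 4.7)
The plan is to expand $\bar p_{\tau_{20}}(m|c)-\pi(m|c)$ in the energy eigenbasis, perform the $\tau$-integration termwise, and then reduce the $L_1$ convergence claim to bounding an explicit trigonometric double sum.

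First I would use the spectral expansion of $p_\tau(m|c)$ together with the observation that all eigenvalues $\lambda_j$ in \eqref{eigenval} are distinct to separate the $j=k$ diagonal piece (which equals exactly $\pi(m|c)$, cf.\ \eqref{deflimdistrib}) from the off-diagonal remainder. Averaging $e^{-i(\lambda_j-\lambda_k)\tau}$ over $\tau\in[0,\tau_{20}]$ produces the factor $(1-e^{-i(\lambda_j-\lambda_k)\tau_{20}})/[i(\lambda_j-\lambda_k)\tau_{20}]$, whose modulus is bounded by $2/(|\lambda_j-\lambda_k|\tau_{20})$. Applying the triangle inequality termwise yields
\begin{eqnarray*}
|\bar p_{\tau_{20}}(m|c)-\pi(m|c)|\le \frac{2}{\tau_{20}}\sum_{j\neq k}\frac{|\phi_m^{(j)}\phi_c^{(j)}\phi_m^{(k)}\phi_c^{(k)}|}{|\lambda_j-\lambda_k|}.
\end{eqnarray*}

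Next I would sum over $m$, interchange the order of summation, and apply the Cauchy--Schwarz inequality with orthonormality of $\{\phi^{(j)}\}$ to get $\sum_m|\phi_m^{(j)}\phi_m^{(k)}|\le 1$. This collapses the $L_1$ distance to
$$\sum_m|\bar p_{\tau_{20}}(m|c)-\pi(m|c)|\le \frac{2}{\tau_{20}}\,S(c),\qquad S(c):=\sum_{j\ne k}\frac{|\phi_c^{(j)}\phi_c^{(k)}|}{|\lambda_j-\lambda_k|}.$$
Using $|\phi_c^{(j)}|\le\sqrt{2/(L+1)}$ and the identity $\lambda_j-\lambda_k=4\sin\!\big(\tfrac{(j+k)\pi}{2(L+1)}\big)\sin\!\big(\tfrac{(k-j)\pi}{2(L+1)}\big)$ from \eqref{eigenval}, together with $\sin x\ge 2x/\pi$ on $[0,\pi/2]$, reduces the estimate of $S(c)$ to a purely combinatorial double sum indexed by $d=k-j$ and $s=k+j$.

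The main obstacle is obtaining the sharp $S(c)=O(L)$ bound required to match the $O(L/\tau_{20})$ claim: a direct application of the crude amplitude bound $|\phi_c^{(j)}|\le\sqrt{2/(L+1)}$ loses logarithmic factors coming from the near-resonant pairs $(j,k)$ at the band edges, where the spectral gap shrinks like $L^{-2}$. The key step is therefore to retain the actual sine amplitudes $\sin(jcp)\sin(kcp)$ in the numerator and exploit the cancellation between these small band-edge amplitudes and the small spectral gaps, e.g.\ by a product-to-sum rewriting that puts the ratio $\phi_c^{(j)}\phi_c^{(k)}/(\lambda_j-\lambda_k)$ into a form in which the $s$-sum telescopes and leaves only the $O(L)$-many $d$-terms. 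Once $S(c)=O(L)$ is established, the lemma follows directly from the bound above.
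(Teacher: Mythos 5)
Your setup is sound and is essentially the same route the paper takes: time-averaging the spectral expansion and bounding the off-diagonal terms by $\tfrac{2}{\tau_{20}}\sum_{j\neq k}|\phi^{(j)}_c\phi^{(k)}_c|/|\lambda_j-\lambda_k|$ is exactly the content of Lemma 4.3 of the Aharonov--Ambainis--Kempe--Vazirani paper, which the paper simply cites; your re-derivation of it (including the Cauchy--Schwarz step $\sum_m|\phi^{(j)}_m\phi^{(k)}_m|\leq 1$) is fine. The gap is that you never actually establish the one estimate that carries the lemma, namely $S(c)=O(L)$: you correctly observe that the crude bound $|\phi^{(j)}_c|^2\leq 2/(L+1)$ only gives $S(c)=O(L\log^2 L)$ because of the near-degenerate band-edge pairs, and then you defer the fix to an unspecified ``product-to-sum rewriting'' with a telescoping sum over $s=j+k$. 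That step is the whole lemma, and as proposed it cannot work for an arbitrary starting site $c$.

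Concretely, your cancellation mechanism relies on the band-edge amplitudes being small, but $|\phi^{(j)}_c|^2=\tfrac{2}{L+1}\sin^2\!\big(\tfrac{jc\pi}{L+1}\big)$ is small at small $j$ only when $c$ is near an end of the chain. Take $c=(L+1)/2$: then $\sin^2(jc\pi/(L+1))=1$ for every odd $j$, including the band-edge ones, and a direct lower bound (using $\sin x\leq x$ in $|\lambda_j-\lambda_k|=4\sin\tfrac{(j+k)\pi}{2(L+1)}\sin\tfrac{|j-k|\pi}{2(L+1)}$ and summing over odd $j<k\leq (L+1)/2$) shows $S(c)=\Omega(L\log^2 L)$. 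Since all terms in $S(c)$ are positive, no rearrangement or telescoping of this triangle-inequality bound can bring it down to $O(L)$; the most this route yields uniformly in $c$ is $O(L\log^2 L/\tau_{20})$, which is the same quality of bound the paper itself proves for the cycle in Appendix C. To recover the stated $O(L/\tau_{20})$ you must use more than $|\phi^{(j)}_c|\leq\sqrt{2/(L+1)}$ at the band edges — e.g.\ restrict to the starting states actually used (for $c$ at or near an endpoint one has $|\phi^{(j)}_c|^2=O(j^2/L^3)$ for small $j$, and then the paper's two-regime split of the pair sum, $|j-k|\leq C_1$ versus $|j-k|>C_1$, goes through) — or accept the extra $\log^2 L$ factor, which is harmless for the application since $\tau_{20}$ is only required to be polynomial in $L$. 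As written, the proposal leaves the decisive estimate unproven and the suggested mechanism for it fails for general $c$.
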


\begin{proof}
First, recall Lemma 4.3 of \cite{CA:AAKVwalk:01} for the total variation distance of the probability distribution $\bar{p}_{\tau_{20}}$ from the limiting distribution, saying
\begin{eqnarray}
	\sum_m \left| \bar{p}_{\tau_{20}}(m|c) - \pi(m|c) \right| \leq 
		\frac{2}{\tau_{20}} \sum_{\lambda_j\neq \lambda_k} 
		\frac{\big|\phi^{(j)}_c\big|^2}{ |\lambda_j-\lambda_k|}.
	\label{convergebound}
\end{eqnarray}
Using \eqref{eigenval} and \eqref{eigenvec}, I can bound the expression on the right of \eqref{convergebound}. When $j$ is close to $k$, i.e. $|j-k|\leq C_1$, I can write
\begin{eqnarray}
	\frac{\big|\phi^{(j)}_c\big|^2}{|\lambda_j-\lambda_k|} < 2.
\end{eqnarray}
On the other hand, for $|j-k|>C_1$ I can bound
\begin{eqnarray}
	\frac{\big|\phi^{(j)}_c\big|^2}{|\lambda_j-\lambda_k|} < \frac{C_2}{L+1},
\end{eqnarray}
with $C_1$ and $C_2$ constants independent of $L$. 
Inserting into \eqref{convergebound}, I have
\begin{eqnarray}
	\sum_{m=1}^{L} \left| \bar{p}_{\tau_{20}}(m|c) - \pi(m|c) \right|
	\leq 
		\frac{8 C_1 L}{\tau_{20}} + \frac{C_2 L}{\tau_{20}}
		= O\left(\frac{L}{\tau_{20}}\right).
\end{eqnarray}
\end{proof}

Using Lemma \ref{convergelemma}, I now prove a useful result utilized in the time analysis of the $d=20$ HQCA in Section \eqref{d20section}. 

\begin{lem4}
Consider a continuous time quantum walk on a line of length $L$, where the Hamiltonian is the negative of the adjacency matrix for the line. Let the system evolve for a time $\tau \leq \tau_{20}$ chosen uniformly at random, starting in a position basis state $\ket{c}$.
The probability to measure a state $\ket{t}$ with $t>L/6$ is then bounded from below as $p_{20}\geq\frac{5}{6}-O\left(\frac{L}{\tau_{20}}\right)$. 
\end{lem4}

\begin{proof}
The probability to measure a state $\ket{t}$ with $t>L/6$ at time $\tau\leq \tau_{20}$ chosen uniformly at random is
\begin{eqnarray}
	p_{20} = \sum_{m>\frac{L}{6}} 
			 \bar{p}_{\tau_{20}}(m|c).
\end{eqnarray}
Starting with \eqref{limitinglemma}, I have
\begin{eqnarray}
	O\left(\frac{L}{\tau_{20}}\right) 
	&\geq& 
		\sum_{m=1}^{L} 
			\left| \bar{p}_{\tau_{20}}(m|c) - \pi(m|c) \right| \\
	&\geq& 
		\sum_{m>\frac{L}{6}} 
			\left| \bar{p}_{\tau_{20}}(m|c) - \pi(m|c) \right| \\
	&\geq& 
		\left| \sum_{m>\frac{L}{6}} 
			 \bar{p}_{\tau_{20}}(m|c) 
		-
		\sum_{m>\frac{L}{6}} 
				\pi(m|c) \right| \\
	&=& 
		\left| p_{20} - \frac{5}{6} + O\left(\frac{1}{L}\right) 
			\right|.
\end{eqnarray} 
Therefore, the probability of finding the chain in state $\ket{\psi_{t>L/6}}$ at a random time $\tau \leq \tau_{20}$ is thus bounded from below by
\begin{eqnarray}
p_{20} \geq \frac{5}{6} - O\left(\frac{L}{\tau_{20}}\right).
\end{eqnarray}
\end{proof}

Also, Lemma \ref{linelemma} can be easily generalized for any desired probability $q$. Measuring a state which started in a position basis state $\ket{c}$ at a random time $\tau \leq \tau_{q}$ chosen uniformly at random, the probability to measure a state $\ket{t}$ with $t>(1-q)L$ is then bounded from below by $p \geq q - O\left(\frac{L}{\tau_{q}}\right)$. 



\section{Quantum Walk on a Circle}
\label{circleline}

If the geometry of the system is a closed loop of length $L$ instead of a line, the Hamiltonian \eqref{H1hamiltonian} gets an additional wrap-around term.
\begin{eqnarray}
	H_{loop} = 
	-\left(\ket{L}\bra{1} + \ket{1}\bra{L}\right)
	- \sum_{j=1}^{L-1} 
	\left(\ket{j}\bra{j+1} + \ket{j+1}\bra{j}\right).
	\label{HLoophamiltonian}
\end{eqnarray}
The eigenvalues of this Hamiltonian are
\begin{eqnarray}
	\lambda_j = - 2 \cos \left(p_j\right),
	\label{cycleeigenval}
\end{eqnarray}
corresponding to plain waves with momenta
\begin{eqnarray}
	p_j = \frac{2\pi j}{L},
	\label{cyclemomenta}
\end{eqnarray}
for $j=0\dots L-1$. The corresponding eigenvectors $\kets{\phi^{(j)}} = \sum_{k=1}^{L} \phi^{(j)}_k \ket{k}$ have components
\begin{eqnarray}
	\phi^{(0)}_k &=& \frac{1}{\sqrt{L}}, \\
	\phi^{(j)}_k &=& \sqrt{\frac{2}{L}} e^{ip_j k}, \qquad j=1,\dots,L-1,
	\label{cycleeigenvec}
\end{eqnarray}
These can be combined to make real eigenvectors. For my analysis, it will be enough to consider a line with even length, and only the cosine plain waves:
\begin{eqnarray}
	\phi^{(0)}_k &=& \frac{1}{\sqrt{L}}, \\
	\phi^{(j)}_k &=& \sqrt{\frac{2}{L}} \cos(i p_j k), \qquad j=1,\dots,\frac{L}{2}.
	\label{cycleeveneigenvec}
\end{eqnarray}

The limiting distribution on the cycle when starting from site $c$ is 
\begin{eqnarray}
	\pi(m|c) 
	 = \frac{1}{L}- \frac{2}{L^2}.
	 \label{Pcyclelimiting2}
\end{eqnarray}
for all points $m$ except for $m=c$ (return back) and $m=L+1-c$
(the point across the cycle), where I have
\begin{eqnarray}
	\pi(m|c) 
	 = \frac{2}{L} - \frac{2}{L^2}.
	 \label{Pcyclelimiting1}
\end{eqnarray}
As in Section \ref{plainline}, I will again utilize Lemma 4.3 of \cite{CA:AAKVwalk:01} to prove the convergence of the time-averaged probability distribution to this limiting distribution. On the right side of \eqref{convergebound}, I now have
\begin{eqnarray}
	\big|\phi_0^{(j)}\big|^2 \leq \frac{2}{L}.
\end{eqnarray}
The sum over the non-equal eigenvalues
\begin{eqnarray}
	S_{\circ} = \sum_{\lambda_j\neq \lambda_k} 
		\frac{1}{ |\lambda_j-\lambda_k|}
		\label{cycleeigsum}
\end{eqnarray}
is now more complicated, because of the degeneracy of the spectrum.
\begin{figure}
	\begin{center}
	\includegraphics[width=3.5in]{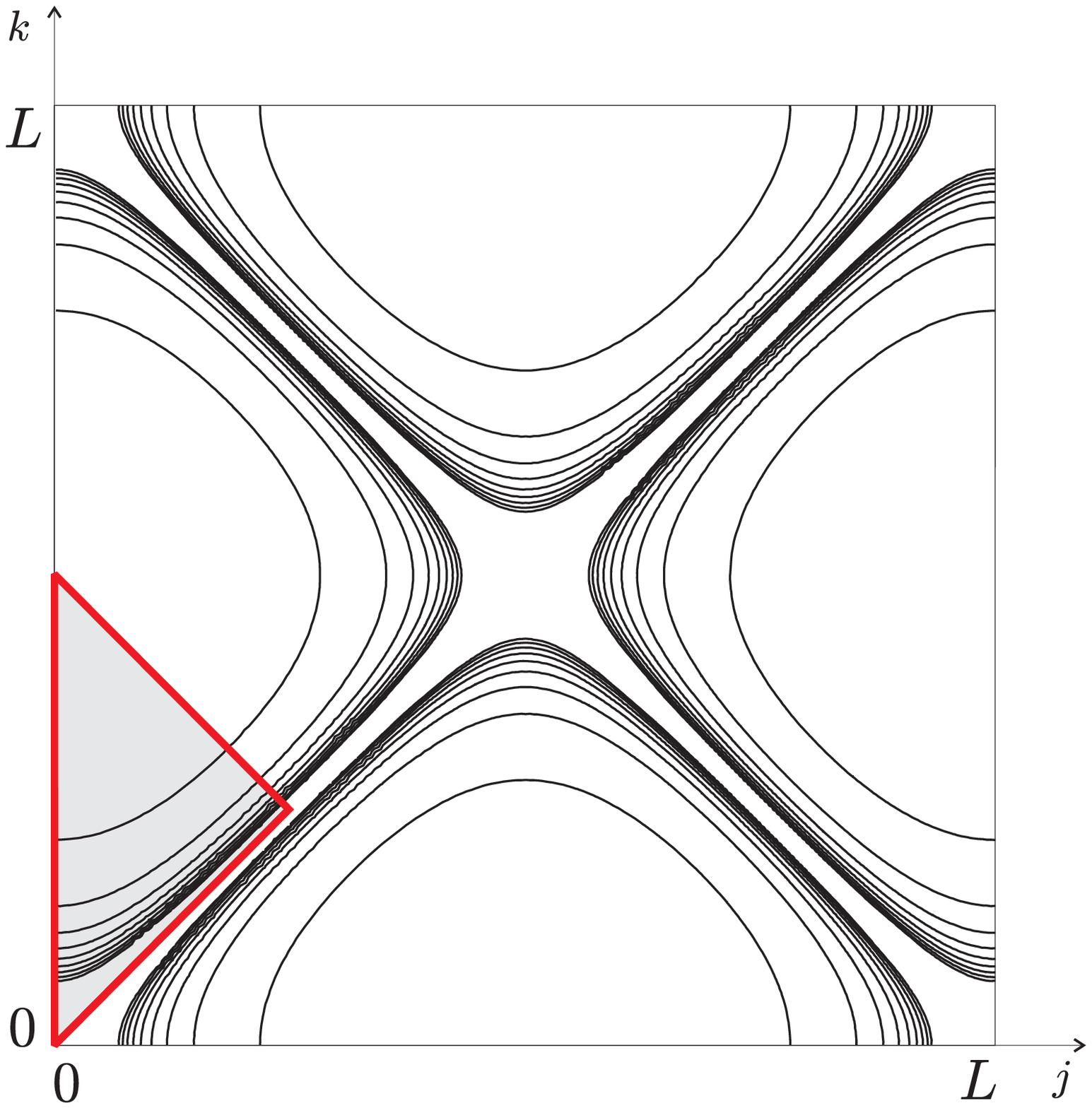}
	\end{center}
	\caption{A contour plot of $\left|\lambda_j - \lambda_k\right|^{-1}$ with $\lambda_j$ from \eqref{cycleeigenval}. The sum in \eqref{cycle16} is over the marked region.}
	\label{appc:figcircleeig}
\end{figure}
I plot $|\lambda_j-\lambda_k|^{-1}$ in Figure \ref{appc:figcircleeig}.
Because of the symmetrical way $\lambda_j$ \eqref{cycleeigenval} arise, assuming $L$ is divisible by 4, I can express \eqref{cycleeigsum} as
\begin{eqnarray}
	S_{\circ} 
	&=& 
	\sum_{\lambda_j\neq\lambda_k} 
		\frac{1}{ |\lambda_j-\lambda_k|} \\
	&\leq& 
	16 \sum_{j=0}^{L/4} \sum_{k=j+1}^{L/2-j}
		\frac{1}{ |\lambda_j-\lambda_k|}
		\label{cycle16}
	\\
	&=& 
	16 \underbrace{\sum_{k=1}^{L/2}
		\frac{1}{ |\lambda_0-\lambda_k|}}_{A_0}
		+
	16 \underbrace{\sum_{j=1}^{L/4} \sum_{k=j+1}^{L/2-j}
		\frac{1}{ |\lambda_j-\lambda_k|}}_{A_1}.
		\label{cyclesumbounds}
\end{eqnarray}
The term $A_0$ comes from $j=0$. I can bound the sum by an integral, taking $x=\frac{\delta k}{L}$, obtaining
\begin{eqnarray}
	A_0 = \frac{1}{2}\sum_{k=1}^{L/2}
		\frac{1}{ 1-\cos\left(2\pi \frac{k}{L}\right)} 
	\leq \frac{L}{2} \int_{\frac{1}{L}}^{\frac{1}{2}}
		\frac{dx}{ 1-\cos(2\pi x)}.
\end{eqnarray}
For $x \in \left[0,\half\right]$, I can bound
\begin{eqnarray}
	1-\cos{2 \pi x} \geq 2 \pi x^2,
\end{eqnarray}
resulting in
\begin{eqnarray}
	A_0 \leq \frac{L}{4\pi} \int_{\frac{1}{L}}^{\frac{1}{2}}
		\frac{dx}{x^2}
		= \frac{1}{4\pi} \left(L^2 - 2L\right) = O(L^2).
		\label{a0bound}
\end{eqnarray}
I bound the other term, $A_1$, in \eqref{cyclesumbounds} by an integral as well:
\begin{eqnarray} 
	A_1 &=& \frac{1}{2}\sum_{j=1}^{L/4} \sum_{k=j+1}^{L/2-j}
		\frac{1}{ \cos\left(2\pi\frac{k}{L}\right)
			- \cos\left(2\pi\frac{j}{L}\right)} \\
		&\leq&
		\frac{L^2}{2} \int_{\frac{1}{L}}^{\frac{1}{4}-\frac{1}{L}}
		dy 
		\int_{\frac{1}{L}+y}^{\frac{1}{2}-y}
		\frac{dx}{\cos\left(2\pi x\right)
			- \cos\left(2\pi y\right)}.		
\end{eqnarray} 
Again, I can lower bound the difference in eigenvalues for $y\in \left[0,\frac{1}{4}\right]$ and $x\in \left[y,\frac{1}{2}-y\right]$ by
\begin{eqnarray}
	\cos\left(2\pi x\right)
			- \cos\left(2\pi y\right) \geq 2\pi (x^2-y^2),
\end{eqnarray}
allowing me to write
\begin{eqnarray} 
	A_1 &\leq& 
		\frac{L^2}{4\pi} \int_{\frac{1}{L}}^{\frac{1}{4}-\frac{1}{L}}
		dy 
		\int_{\frac{1}{L}+y}^{\frac{1}{2}-y}
		\frac{dx}{x^2-y^2} \\
	&=&
		\frac{L^2}{4\pi} \int_{\frac{1}{L}}^{\frac{1}{4}-\frac{1}{L}}
		dy 
		\left[
			\frac{\log\left(\frac{x-y}{x+y}\right)}{2y}
		\right]_{\frac{1}{L}+y}^{\frac{1}{2}-y} \\
	&=&
		\frac{L^2}{4\pi} \int_{\frac{1}{L}}^{\frac{1}{4}-\frac{1}{L}}
		\frac{dy}{2y} 
		\underbrace{\left[
			\log\left(1-4y\right)
			+ 
			\log\left(1+2yL\right)
		\right]}_{R}.
		\label{a1final}
\end{eqnarray} 
As $y$ in \eqref{a1final} is at most $\frac{1}{4}-\frac{1}{L}$,
I can bound $R$ by
\begin{eqnarray}
	|R| \leq \log L.
\end{eqnarray}
Finally, this results in
\begin{eqnarray}
	A_1 \leq \frac{L^2 \log L}{2\pi} \int_{\frac{1}{L}}^{\frac{1}{4}-\frac{1}{L}}
		\frac{dy}{2y} 
		= \frac{L^2 \log L}{2\pi} 
		\underbrace{\log \left(\frac{\frac{1}{4}-\frac{1}{L}}{\frac{1}{L}}\right)}_{\leq \log L}
		\leq O(L^2 \log^2 L).
		\label{a1bound}
\end{eqnarray}
Putting \eqref{a0bound} and \eqref{a1bound} into \eqref{cyclesumbounds}, I obtain
\begin{eqnarray}
	S_{\circ} \leq O(L^2 \log^2 L).
\end{eqnarray}
Lemma 4.3 of \cite{CA:AAKVwalk:01} (see \eqref{convergebound}) then reads
\begin{eqnarray}
	\sum_m \left| \bar{p}_{\tau_{20}}(m|c) - \pi(m|c) \right| \leq 
		\frac{2}{\tau_{\circ}} \frac{1}{L} O(L^2 \log^2 L)
		= O\left(\frac{L \log^2 L}{\tau_\circ}\right).
	\label{convergeboundcycle}
\end{eqnarray}
Thus, for $\tau_{circ} = \ep O(L \log^2 L)$, the time-averaged distribution becomes $\ep$-close to the limiting distribution. Using the bound on the total variation distance I just proved, it is straightforward to obtain the following lemma which I use in Section \ref{ch4:train}:
\begin{lemma}
	\label{cyclelemma}
Consider a continuous time quantum walk on a cycle of length $L$ (divisible by 4), where the Hamiltonian is the negative of the adjacency matrix for the cycle. Let the system evolve for a time $\tau \leq \tau_{\circ}$ chosen uniformly at random, starting in a position basis state $\ket{0}$.
The probability to measure a position state $\ket{t}$ farther than $L/6$ from the starting point (the farther two thirds of the cycle) is then bounded from below as $p_{\circ}\geq\frac{2}{3}-\frac{1}{3L}-O\left(\frac{L \log^2 L}{\tau_{\circ}}\right)$. 
\end{lemma}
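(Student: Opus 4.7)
}

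The plan is to follow the same template used to derive Lemma \ref{linelemma} from Lemma \ref{convergelemma}, but now on the cycle. All the hard analytic work has already been done in Section \ref{circleline}: the bound \eqref{convergeboundcycle} already establishes
\begin{eqnarray}
  \sum_{m=1}^{L} \left| \bar{p}_{\tau_{\circ}}(m|0) - \pi(m|0) \right| \leq O\!\left(\frac{L\log^2 L}{\tau_{\circ}}\right),
\end{eqnarray}
so I only need to turn this total variation bound into a statement about the probability of ending up in the success region $SR = \{m : d(m,0) > L/6\}$, where $d$ denotes arc distance on the cycle.

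First I would identify $SR$ combinatorially: the complement consists of the starting site together with the sites within arc distance $L/6$ on either side, giving roughly $2\lfloor L/6\rfloor + 1$ close sites and $|SR| \approx 2L/3$ far sites. Since $L$ is divisible by $4$, the antipodal site $L/2$ lies in $SR$ (for $L\geq 4$) and is the only site in $SR$ whose limiting probability is elevated to $\frac{2}{L}-\frac{2}{L^2}$ by \eqref{Pcyclelimiting1}; all other $m\in SR$ have $\pi(m|0)=\frac{1}{L}-\frac{2}{L^2}$ by \eqref{Pcyclelimiting2}. Adding these contributions gives
\begin{eqnarray}
  \sum_{m\in SR} \pi(m|0) \;\geq\; \frac{2}{3} - \frac{1}{3L},
\end{eqnarray}
where the exact constants come from plugging in $|SR|$ and the one extra mass at the antipode, after rounding according to $\lfloor L/6\rfloor$.

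Next I would apply the triangle inequality to restrict \eqref{convergeboundcycle} to $SR$:
\begin{eqnarray}
  \left| \sum_{m\in SR} \bar{p}_{\tau_{\circ}}(m|0) - \sum_{m\in SR} \pi(m|0) \right| \;\leq\; \sum_{m=1}^{L}\left| \bar{p}_{\tau_{\circ}}(m|0) - \pi(m|0) \right| \;\leq\; O\!\left(\frac{L\log^2 L}{\tau_{\circ}}\right).
\end{eqnarray}
Since $p_\circ = \sum_{m\in SR} \bar{p}_{\tau_\circ}(m|0)$ is precisely the probability that a measurement at a uniformly random time $\tau\in[0,\tau_\circ]$ returns a site farther than $L/6$ from the starting site, combining the two displays yields the claimed bound $p_\circ \geq \frac{2}{3} - \frac{1}{3L} - O(L\log^2 L/\tau_\circ)$.

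This proof is essentially a bookkeeping argument, so there is no serious obstacle. The only mildly delicate step is the combinatorial count of $SR$ and the verification that the slight mass increase at the antipode (rather than being a problem) only helps the lower bound; the divisibility hypothesis $4\mid L$ is exactly what is needed to make the antipode a well-defined lattice site so that the closed-form limiting distribution \eqref{Pcyclelimiting1}–\eqref{Pcyclelimiting2} applies without modification.
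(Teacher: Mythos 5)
Your proposal is correct and follows essentially the same route as the paper: the paper likewise takes the total variation bound \eqref{convergeboundcycle}, restricts it to the success region via the triangle inequality, and evaluates $\sum_{m\in SR}\pi(m|0)$ from the limiting distribution \eqref{Pcyclelimiting1}--\eqref{Pcyclelimiting2} to obtain the stated lower bound. The only difference is presentational (your explicit accounting of the antipodal site's extra mass), which does not change the argument.
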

\begin{proof}
The proof is analogous to the proof of Lemma \ref{convergelemma} in Section \ref{plainline}. 
Let me call the farther two thirds of the cycle (see Figure \ref{ch4:figurecircle} in Section \ref{ch4:train}) the success region (SR). When I choose the time $\tau\leq \tau_{\circ}$ uniformly at random, the probability to measure a state $\ket{t}$ with $t\in SR$ is\begin{eqnarray}
	p_{\circ} = \sum_{m\in SR} 
			 \bar{p}_{\tau_{\circ}}(m|c),
\end{eqnarray}
Using the bound on the total variation distance \eqref{convergeboundcycle} I just proved and the formulae for the limiting distribution \eqref{Pcyclelimiting2},\eqref{Pcyclelimiting1}, I have
\begin{eqnarray}
	O\left(\frac{L \log^2 L}{\tau_{\circ}}\right) 
	&\geq& 
		\sum_{m=1}^{L} 
			\left| \bar{p}_{\tau_{\circ}}(m|c) - \pi(m|c) \right| \\
	&\geq& 
		\sum_{m\in SR} 
			\left| \bar{p}_{\tau_{\circ}}(m|c) - \pi(m|c) \right| \\
	&\geq& 
		\left| \sum_{m\in SR} 
			 \bar{p}_{\tau_{\circ}}(m|c) 
		-
		\sum_{m\in SR} 
				\pi(m|c) \right| \\
	&=& 
		\left| p_{\circ} - \frac{2}{3} + \frac{1}{3L} + O\left(\frac{1}{L}\right) 
			\right|.
\end{eqnarray} 
Therefore, the probability of finding the chain in state $\ket{\psi_{t\in SR}}$ at a random time $\tau \leq \tau_{\circ}$ is thus bounded from below by
\begin{eqnarray}
p_{\circ} \geq \frac{2}{3} - \frac{1}{3L} - O\left(\frac{L \log^2 L}{\tau_{\circ}}\right).
\end{eqnarray}
\end{proof}
It is thus enough to wait a random time not larger than $O(L\log^2 L)$ to find the state of the system in the success region with probability close to two thirds. 
\chapter{Diffusion of Fermions on a Line}
\label{d10proof}
Here I prove Lemma \ref{difflemma}, a result about the mixing of a discrete free fermion gas, used in Section \ref{d10section}.

\begin{lem3}
	Consider the state 
\begin{eqnarray}
	\ket{\Psi_0} = b^{\dagger}_{fM+1} b^{\dagger}_{fM+2} \dots b^{\dagger}_{fM+2M} 
	\ket{0}.
\end{eqnarray}
of $2M$ fermions on the right end of a line with $L=(f+2)M$ sites. Let the system evolve for a time chosen uniformly at random between $0$ and $\tau_{10}$ with the Hamiltonian 
\begin{eqnarray}
	H_f = - \sum_{j=1}^{L-1} b_j^\dagger b_{j+1} + h.c.
\end{eqnarray}
and measure the number of fermions in the region $1\leq x \leq fM$. The probability to measure a number greater than $M$
is $p_{10} \geq \frac{f-2}{f+2} - O\left(\frac{L}{\tau_{10}}\right)$.
\end{lem3}

\begin{proof}
Let us start with the outline of the proof. We look at the fermionic system in both first and second quantization to obtain an expression for the time evolution of the creation and annihilation operators in the Heisenberg picture, mapping it to a quantum walk on a line. We then consider the observable $X$, the number of particles sufficiently far from the right end of the line. We will show that when we choose the time to measure $X$ uniformly at random between $0$ and $\tau_{10}$, the expected value we will obtain is approaching a number close to $2M$. To show this, we will express the expected value of $X$ in the time-averaged state of the system using the results from a quantum walk on a line. Finally, because the number of particles in the system is $2M$, we will deduce that the probability to measure a number less than $M$ is then small.

Observe that $H_f$ is the Hamiltonian of a free fermion gas on a line in second quantization (a special case of the XY model). 
The time evolution of the state $\ket{\Psi_0}$ can be obtained by looking at the problem back in the first quantization, where we write $\ket{\Psi_0}$ as
\begin{eqnarray}
	\ket{\Psi_0} = \Big[ \ket{\phi_{fM+1}}\otimes \ket{\phi_{fM+2}} 
		\otimes \cdots \otimes \ket{\phi_{fM+2M}} \Big]^{-},
\end{eqnarray}
with $\ket{\phi_j}=\ket{j}$ in the position basis and $[\,\cdot\,]^{-}$ the standard antisymmetrization operator. We first solve for the time evolution of the corresponding one-particle wavefunction $\ket{\phi_j(\tau)}$ with the Hamiltonian
\begin{eqnarray}
	H_1 = - \sum_{j=1}^{L-1} 
	\left(\ket{j}\bra{j+1} + \ket{j+1}\bra{j}\right), 
	\label{H1hamiltonianAppendix}
\end{eqnarray}
and then obtain the solutions for the many-particle problem by antisymmetrization as
\begin{eqnarray}
	\ket{\Psi(\tau)} = \Big[ \ket{\phi_{fM+1}(\tau)}\otimes 
		\ket{\phi_{fM+2}(\tau)}
		\otimes \cdots \otimes \ket{\phi_{fM+2M}(\tau)} \Big]^{-}.
\end{eqnarray} 
The eigenfunctions of $H_1$ (quantum walk on a line) are plain waves (as in \eqref{eigenval} and \eqref{eigenvec}), and the time evolved states $\ket{\phi_j(\tau)}$ thus readily available. Let us define the unitary matrix $u(\tau)$ by 
\begin{eqnarray}
	\ket{j(\tau)} = \sum_{k=1}^{L} u_{jk}(\tau) \ket{k}.
\end{eqnarray}
Returning to the second quantized system, the time evolution of the creation and annihilation operators in the Heisenberg picture is then
\begin{eqnarray}
	b^{\dagger}_j(\tau) = \sum_{k=1}^{L} 
			u_{jk}(\tau) b^{\dagger}_k, 
	\qquad \qquad
	b_j(\tau) = \sum_{k=1}^{L} 
		u^*_{jk}(\tau) b_k. 
		\label{uheisenberg}
\end{eqnarray}

Consider now the observable $X$, the number of particles in the first $fM$ sites of the line with length $L=(f+2)M$
\begin{eqnarray}
	X &=& \sum_{m=1}^{fM} \hat{n}_{m} \label{Xdef}.
\end{eqnarray}
Its expectation value at time $\tau$ is
\begin{eqnarray}
	E_{\tau}(X) &=& 
	\sum_{m=1}^{fM} \bra{\Psi(\tau)} \hat{n}_{m} \ket{\Psi(\tau)}. 
	\label{Xexpect}
\end{eqnarray}
The number operator for site $m$ is $\hat{n}_m = b_m^{\dagger} b_m$. We can go to the Heisenberg picture and use \eqref{uheisenberg} to write
\begin{eqnarray}
	\bra{\Psi(\tau)} \hat{n}_{m} \ket{\Psi(\tau)}
	&=& 
	\bra{\Psi_0} 
		 b_m^{\dagger}(\tau) b_m(\tau)
	\ket{\Psi_{0}} \\
	&=& 
	\sum_{c=1}^{L} \sum_{d=1}^{L} u_{mc}(\tau)u^*_{md}(\tau) \bra{\Psi_0} 
		  b_c^{\dagger} b_d
	\ket{\Psi_{0}} \\
&=& 
	\sum_{c=1}^{L} 
	 \left| u_{mc}(\tau) \right|^2 \bra{\Psi_0} 
		  b_c^{\dagger} b_c
	\ket{\Psi_{0}} \\
&=& 
	\sum_{c=fM+1}^{L} \underbrace{\left| u_{mc}(\tau) \right|^2
	}_{p_{\tau}(m|c)},
	\label{nexpect10A}
\end{eqnarray}
where each term $\left| u_{mc}(\tau) \right|^2 = p_{\tau}(m|c)$ can be thought of as the probability of finding a particle at site $m$ at time $\tau$ when it started from the site $c$ and performed a quantum walk on a line, according to \eqref{H1hamiltonianAppendix}. Inserting this into \eqref{Xexpect}, the expected number of particles not in the rightmost part of the chain at time $\tau$ is 
\begin{eqnarray}
	E_{\tau}(X) 
	=
	\sum_{c=fM+1}^{L} 
	\left( \sum_{m=1}^{fM} p_{\tau}(m|c) \right)
	.
	\label{nexpect10}
\end{eqnarray}

Let us now choose the time $\tau$ uniformly at random between $0$ and $\tau_{10}$. The average value of $X$ (the expectation value in the time-average state) is
\begin{eqnarray}
	\bar{E}_{\tau_{10}}(X) = \tav{E_{\tau}(X)}.
\end{eqnarray}
For a quantum walk on a line, the time-averaged probability \eqref{Paverage} of finding a particle that started at position $c$ at final position $m$ converges to the limiting distribution \eqref{Plimiting}
according to Lemma 1 \eqref{limitinglemma} proven in Appendix \ref{d20proof}.
Using this fact, we can show that the expectation value $\bar{E}_{\tau_{10}}(X)$ in the time-averaged state converges to the limiting expectation value 
\begin{eqnarray}
 	\bar{E}(X) = \sum_{m\leq fM} \sum_{c>fM} \pi(m|c)
\end{eqnarray}
as
\begin{eqnarray}
	\left|
		\bar{E}_{\tau_{10}}(X) 
		- 
	    \bar{E}(X) \right|
	\leq O\left(\frac{LM}{\tau_{10}}\right).
	\label{expectconverge}
\end{eqnarray}
Recalling the limiting probability distribution for a quantum walk on a line of length $L$ \eqref{Plimiting}, we have
\begin{eqnarray}
 	\bar{E}(X) 
	&=& \sum_{m\leq fM} \sum_{c>fM} \pi(m|c) \\
	&=& fM\times 2M \times 	\frac{2}{2(L+1)}
	+ 2M \times 	\frac{1}{2(L+1)} \\
	&=& 
	2M\left(\frac{f}{f+2}\right) + O\left(1\right)
	.
\end{eqnarray}
Putting this into \eqref{expectconverge}, the average value of $X$ when the time $\tau\leq \tau_{10}$ is chosen uniformly at random is bounded from below as
\begin{eqnarray}
 	\bar{E}_{\tau_{10}}(X) 
	\geq
	 2M\left(\frac{f}{f+2}\right) - O\left(\frac{LM}{\tau_{10}}\right)
	.
\end{eqnarray}
We want to find the probability of measuring $X>M$. 
First, the maximum possible value we could measure at any time is $2M$, the number of particles in the system. Second, the average value $\bar{E}_{\tau_{10}}(X)$ at time $\tau$ chosen randomly is close to $2M$. Therefore, the fraction $\Delta$ of times at which we measure a number significantly lower than $2M$ must be small. Let us bound $\Delta$ in the worst case scenario. This is when each unsuccessful measurement yields $X=M$, and each successful measurement gives us $2M$. 
We then have
\begin{eqnarray}
	\Delta M + (1-\Delta)  2M &\geq& \bar{E}_{\tau_{10}}(X), \\
	\Delta  &\leq& \frac{ 
		 2\bar{E}_{\tau_{10}}(X) - M}{M}.
\end{eqnarray}
Hence we arrive at the desired bound on the probability to measure $X>M$:
\begin{eqnarray}
	p_{10} &=& 1-\Delta 
	\geq \frac{ 
		 2M\left(\frac{f}{f+2}\right) - O\left(\frac{LM}{\tau_{10}}\right)
		  - M
		 }{
		 M
		 }
	= \frac{f-2}{f+2} - O\left(\frac{L}{\tau_{10}}\right).
\end{eqnarray}
\end{proof}

\chapter{Transitions in a two level system}
\label{appTwolevel}

In this Appendix I give the analysis of transitions in a simple two-level system referenced in Section \ref{ch2:projector}. Consider a two level system with Hamiltonian
\begin{eqnarray}
  H(s) = E_0(s) \ket{\phi_0(s)}\bra{\phi_0(s)} + E_1(s) \ket{\phi_1(s)}\bra{\phi_1(s)},
\end{eqnarray}
which varies smoothly with $s=t/T$. Here $\ket{\phi_0(s)}$ and $\ket{\phi_1(s)}$ are orthonormal for all $s$.
The Schr\"{o}dinger equation reads
\begin{eqnarray}
  i \deriv{}{s} \ket{\psi} = T H(s) \ket{\psi}.
\end{eqnarray}
The two energy levels in the system are separated by a gap 
\begin{eqnarray}
	g(s)=E_1(s)-E_0(s),
\end{eqnarray}
which I assume is always larger than $0$. Let me introduce $\theta$ (with the dimension of energy) as 
\begin{eqnarray}
	\theta(s) = \int_0^s g(s') \,\textrm{d}s',
\end{eqnarray}
and let
\begin{eqnarray}
	\ket{\psi(s)} = c_0(s) e^{-iT \int_0^s E_0(s')\,\textrm{d}s'}\ket{\phi_0(s)} + 
		c_1(s) e^{-iT \int_0^s E_1(s')\,\textrm{d}s'}\ket{\phi_1(s)}. 		\label{statesolution}
\end{eqnarray}
I pick the phases of $\ket{\phi_1(s)}$ and $\ket{\phi_0(s)}$ such that 
$\bra{\phi_1(s)}\deriv{}{s}\ket{\phi_1(s)}=\bra{\phi_0(s)}\deriv{}{s}\ket{\phi_0(s)}=0$.
Plugging \ref{statesolution} into the Schr\"odinger equation gives
\begin{eqnarray}
	\deriv{c_0}{s} &=& c_1 e^{-iT\theta} \Big\langle \phi_1 \Big| \deriv{}{s} \Big| \phi_0\Big\rangle^*, \\
	\deriv{c_1}{s} &=& -c_0 e^{iT\theta} \Big\langle \phi_1 \Big| \deriv{}{s} \Big| \phi_0\Big\rangle,
\end{eqnarray}
or equivalently,
\begin{eqnarray}
	\deriv{c_0}{\theta} &=& c_1 e^{-iT\theta} f^*, \\
	\deriv{c_1}{\theta} &=& -c_0 e^{iT\theta} f,
\end{eqnarray}
where 
\begin{eqnarray}
	f(\theta) \equiv \Big\langle \phi_1 \Big| \deriv{}{\theta} \Big| \phi_0\Big\rangle
		 = -\frac{1}{g} \Big\langle \phi_1 \Big| \deriv{H}{\theta} \Big| \phi_0\Big\rangle
		 = -\frac{1}{g^2} \Big\langle \phi_1 \Big| \deriv{H}{s} \Big| \phi_0\Big\rangle.
\end{eqnarray}
Now let $\theta(1)=\bar{\theta}$. 
I started with $c_1(0)=0$ and I want to find the transition amplitude at $s=1$ which is 
\begin{eqnarray}
   c_1(\bar{\theta}) &=& -\int_0^{\bar{\theta}} c_0 e^{iT\theta}f\,\textrm{d}\theta \\
 	&=& \left[-c_0 f \frac{e^{iT\theta}}{iT}\right]^{\bar{\theta}}_0 + 
   \frac{1}{iT} \int_0^{\bar{\theta}} e^{iT\theta} \left( \deriv{c_0}{\theta}f+ c_0 \deriv{f}{\theta}\right) \textrm{d}\theta \\
 	&=& \frac{1}{T}\left( \left[i c_0 f e^{iT\theta} \right]^{\bar{\theta}}_0 
    -i \int_0^{\bar{\theta}} \left( c_1 f f^*+ e^{iT\theta} c_0 \deriv{f}{\theta} \right) \textrm{d}\theta \right). 
\end{eqnarray}
Now $|c_0|\leq 1$ and $|c_1|\leq 1$. 
As long as the gap does not vanish $|f(\theta)|$ and $\left|\deriv{f}{\theta}\right|$ are bounded, therefore
$\left|c_1(\bar{\theta})\right| = O\left(\frac{1}{T}\right)$.
The probability of transition to the excited state for a two-level system with a nonzero gap is thus 
\begin{eqnarray}
     \left|c_1(\bar{\theta})\right|^2 = O(T^{-2}).
\end{eqnarray}


\begin{singlespace}
\bibliography{main}
\bibliographystyle{plain}
\end{singlespace}

\end{document}